\title{\vspace{-2em} An Identification-and Dimensionality-Robust Test for Instrumental Variables Models}
\author{\sc Manu Navjeevan%
    \thanks{Email: mnavjeevan@ucla.edu. Revised \today. The latest version of this paper can be found \href{https://navjeevan.dev/files/Papers/WeakDimRobust.pdf}{here}. I thank Denis Chetverikov for his guidance, numerous discussions and permanent support. I am also grateful to the other members of my dissertation committee, Andres Santos and Zhipeng Liao, and participants in UCLA's econometrics proseminar, Jinyong Hahn, Rosa Matzkin, Shuyang Sheng, and Daniel Ober-Reynolds for a half-decade of useful comments.}}
\shorttitle{An Identification and Dimensionality Robust Test}
\affil{\vspace{-0.9em}Texas A\&M University}
\date{\vspace{-1em}}
\begin{document}
\maketitle 

\thispagestyle{empty}
\vspace{-2em}
\begin{abstract}
    \noindent
    \textsc{Abstract.} 
    Using modifications of Lindeberg's interpolation technique, I propose a new identification-robust test for the structural parameter in a heteroskedastic instrumental variables model. While my analysis allows the number of instruments to be much larger than the sample size, it does not require many instruments, making my test applicable in settings that have not been well studied. Instead, the proposed test statistic has a limiting chi-squared distribution so long as an auxiliary parameter can be consistently estimated. This is possible using machine learning methods even when the number of instruments is much larger than the sample size. To improve power, a simple combination with the sup-score statistic of \citet{BCCH-2012} is proposed. I point out that first-stage F-statistics calculated on LASSO selected variables may be misleading indicators of identification strength and demonstrate favorable performance of my proposed methods in both empirical data and simulation study. 

    \vspace{1em}
    
    \textsc{Keywords:} Instrumental Variables, Weak Identification, High-Dimensional 
    
    \textsc{JEL Codes:} C12, C36, C55
\end{abstract}


%

\section{Introduction}
\label{sec:intro}

When instruments are suspected to be weak, researchers may want to test hypotheses about structural parameters using testing procedures that are robust to identification strength. These procedures all rely on some conditions on the rate of growth of the number of instruments, \(d_z\), in relation to the sample size, \(n\). The initial identification robust tests developed in \citet{StaigerStock-1997}, \citet{Moreira_2003}, and \citet{Kleibergen-2005} are shown by \citet{Andrews-Stock-2007} to control size under heteroskedasticity when the number of instruments cubes is small relative to the sample size, \(d_z^3/n\to 0\). Meanwhile, recent and interesting ``many-instrument'' tests (\citet{crudu_mellace_sandor_2021}, \citet{Mikusheva_Sun_2022}, \citet{Matsushita_Otsu_2022}, \citet{lim2022conditional}) allow the number of instruments to be proportional to the sample size, \(d_z/n\to \varrho \in [0,1)\), but require that the number of instruments itself be large, \(d_z \to\infty\).

In practice, these conditions can be difficult to interpret and the variety of tests available under alternate regimes may make it difficult for the researcher to know which test, if any, should be applied in her exact setting. As examples, consider settings such as that of \citet{derenoncourt-2022}, where \(d_z = 9\) and \(n = 130\), \citet{Paravisini-2014}, where \(d_z = 10\) and \(n = 5{,}995\), and \citet{gilchrist-glassberg-2016} where \(d_z = 52\) and \(n = 1{,}671\). In all three cases, the number of instruments cubed, \(d_z^3 = 729\), \(1{,}000\) and \(140{,}608\), respectively, cannot be treated as negligible relative to the sample size. Indeed, all of these papers use post-LASSO estimates of the first-stage, suggesting concern about the large number of instruments by the authors. However as the number of instruments is only moderately large in all of these settings, asymptotic approximations that rely on \(d_z \to \infty\) may not accurately resemble the finite sample distribution of the test statistic. This can make size control of many-instrument tests questionable.\footnote{\citet{Mikusheva_Sun_2022} point out that, when errors are homoskedastic, under fixed \(d_z\) their proposed test statistic weakly converges to a distribution whose 95th quantile is close to that of the standard normal. However, it is unclear what the behavior of the test would be when errors are heteroskedastic.}

By comparison, the test proposed in this paper can be applied in any of the settings listed above as well as when the instruments are high-dimensional, \(d_z \gg n\), a setting for which there has been little progress on identification robust testing to this point. The main problem in these settings has been that the exact limiting behaviors of the regularized first-stage estimators used when \(d_z \gg n\) are difficult to analyze and typically unknown. Existing analyses sidestep this issue by assuming strong identification and exploiting a certain orthogonality, termed ``Neyman Orthogonality'' by \citet{CCDDHNR-2018}, of the structural parameter estimate to the first-stage estimation error \citep{BCCH-2012}. This approach is explicitly not applicable under weak identification where the first-stage estimation error is on the same or lesser order as the signal from the instruments and thus relevant to asymptotic analysis. As such, the few existing proposals for identification robust
testing that allow \(d_z \gg n\) (\citet{BCCH-2012}, \citet{gautier2021highdimensional}, \citet{Mikusheva-2023-ManyTimeSeries}) either fail to incorporate first-stage information or rely on sample splitting, both of which may reduce power.

To construct the test statistic I borrow an idea from \citet{Kleibergen_2002,Kleibergen-2005} and leverage a conditional slope parameter, which can be simply estimated using regularized methods even when \(d_z \gg n\), to partial out the structural error from the endogenous variable. I then use this partialled out version of the endogenous variable to construct first-stage estimates. 
The key idea is that, if variables in the model followed a jointly Gaussian distribution, we could exploit the fact that uncorrelated jointly Gaussian random variables are independent to show that these first-stage estimates are independent of the structural error under the null. Thus, even under weak identification where their behavior becomes relevant to the distribution of the proposed test statistic, one could easily derive the limiting \(\chi^2\) distribution of the proposed test statistic by conditioning on these estimates. Deriving the limiting distribution of the test statistic in the general model then reduces to showing that one can treat all observations as if they followed a jointly Gaussian distribution.


When the number of instruments is small, this can be justified in large samples through central limit and continuous mapping theorems. However, when the number of instruments is large these standard tools cannot be applied. Instead, my asymptotic analysis proceeds using modifications of Lindeberg's interpolation argument \citep{Lindeberg1922}, which roughly proceeds by showing to be negligible the total change in distribution from a one-by-one replacement of each observation in the expression of the test statistic with a Gaussian version. The modifications of Lindeberg's original interpolation method are non-trivial and required to deal with the fact that derivatives of the test statistic with respect to individual observations may be unbounded generally, an issue arising from the ``divide-by-zero'' problem of weak identification \citep{ASS-2019}.\footnote{Existing interpolation results require these derivatives to be bounded while in this setting they may
not even have finite moments. Given this, these modified approaches may be of independent interest to a growing literature on direct Gaussian approximation techniques (\citet{Chatterjee-2006}, \citet{cck2013}, \citet{Pouzo-Quadratic}, \citet{CMY-2020}).} This interpolation argument requires some conditions on the first-stage estimates, in particular I require that the first-stage estimates take on a ``jackknife-linear'' form and suggest using a jackknife ridge regression in practice to allow \(d_z \gg n\). Interestingly, though, these first-stage estimates are not required to converge to limiting values which allows the researcher some flexibility if she wanted to use a different approach.


Through the Gaussian approximation result, I examine the power properties of my proposed testing procedure in local neighborhoods of the null. These local neighborhoods are characterized by a bounded local power index. In the case of a single endogenous variable I show that, under an additional regularity condition, the local power index diverging implies that the test is consistent. Unfortunately, the process of partialling out the structural error can introduce bias into the first-stage estimate under the alternative hypothesis. Against certain alternatives, this bias can be particularly pronounced and erase the first-stage signal from the instruments, a problem pointed out by \citet{Andrews-2016} in the context of the \citet{Kleibergen-2005} K-statistic. To address this, I propose a simple combination with the sup-score test of \citet{BCCH-2012}, which can also allow \(d_z \gg n\). As with the Anderson-Rubin test, while the
sup-score statistic does not incorporate first-stage information it does not face a power decline against particular alternatives.

Identification-robust testing procedures may be of particular interest in high-dimensional settings due to a lack of clarity on how to pretest for weak identification. Current empirical practice when using post-LASSO estimates of the first-stage appears to be to use standard t-test inference if the first stage F-statistic on the LASSO selected variables is larger than 10 (\citet{Paravisini-2014}, \citet{gilchrist-glassberg-2016}, \citet{derenoncourt-2022}). Using a simple numerical demonstration, I argue that first stage F-statistics on LASSO selected variables may not be reliable indicators of identification strength. Given uncertainty about the strength of identification I apply the newly proposed testing procedures to the data of \citet{gilchrist-glassberg-2016} and generate weak instrument-robust confidence intervals for the effect of social spillovers on movie consumption. The newly proposed confidence intervals are smaller than those obtained by inverting the sup-score and many
instrument tests across all specifications. To verify these results, I also generate confidence intervals in the data of \citet{AK-1991}, again finding that the newly proposed methods generate tighter confidence bands than the many instruments methods.  In \Cref{sec:power-properties}, I argue these improvements in power may be explained by my proposed test statistic's use of higher quality first-stage estimates and individual scores that are uncorrelated across observations.

Finally, I examine the applicability of the theoretical results in this paper through a simulation study. While existing tests seem to face size distortions in alternate regimes, the test based on my proposed test statistic has nearly exact size in a variety of settings. While this new test may have diminished power against certain alternatives, this deficiency is ameliorated through combination with the sup-score test. After combining the new test with the sup-score statistic, I find the newly proposed testing is also have favorable power properties compared to the many instruments and sup-score test in my simulation setup.

The outline of this paper is as follows. \Cref{sec:setup} formally defines the model considered and introduces the jackknife K-statistic and it's combination with the sup-score test. \Cref{sec:empirical} demonstrates the usefuleness of the proposed testing procedures in two empirical applications while \Cref{sec:simulations} provides evidence from simulation study. \Cref{sec:single-endogeneous} provides an overview of the Gaussian approximation approach and characterizes the limiting behavior of the test statistic. \Cref{sec:power-properties} uses this characterization to examine the power properties of the test and establishes the validity of the combination test. Proofs of the main results as well as the presentation of some auxiliary results are deferred to the Appendix.

\paragraph{Notation.} For any \(n \in \SN\) let \([n]\) denote the set \(\{1,\dots,n\}\). I work with a sequence of probability measures \(P_n\) on the data \(\{(y_i,x_i,z_i): i \in [n]\}\). To accommodate independent but not identically distributed observations, let \(\E_n[f_i] = n^{-1}\sum_{i=1}^n f_i\) denote the empirical expectation and \(\bar\E[f] = \E_n[\E[f_i]]\) denote the average expectation operator.

\section{Model and Setup}
\label{sec:setup}

Though the analysis below allows for exogenous regressors, to simplify the exposition I follow \citet{Mikusheva_Sun_2022} and assume that they have already been partialed out of both the outcome, \(y_i\), and the endogenous regressors, \(x_i\). As the controls are assumed to be of fixed dimension, this is without loss of generality.\footnote{For discussion refer to \Cref{sec:exogenous-controls}.} Along with the first stage, the IV model can then be written as a system of simultaneous equations:
\begin{equation}
    \label{eq:iv-model}
    \begin{split}
        y_i &= x_i'\beta + \varepsilon_i \\ 
        x_i &= \Pi_i + v_i
    \end{split}
\end{equation}
The researcher observes the outcome \(y_i \in \SR\), the endogenous variable \(x_i \in \SR^{d_x}\), and the instruments \(z_i \in \SR^{d_z}\) but neither the structural error \(\varepsilon_i \in \SR\) nor the first-stage errors \(v_i \in \SR^{d_x}\). The structural error is assumed to be conditional-mean independent of the instruments, \(\E[\varepsilon_i|z_i] = 0\). I denote \(\E[x_i|z_i]\) as \(\Pi_i \coloneqq \E[x_i|z_i]\) and make no assumptions about the functional form of the conditional expectation so the instruments are allowed to affect the endogenous variable in a nonlinear fashion.

The random variables \(\{(z_i,\varepsilon_i, v_i)\}_{i=1}^n\) are assumed to be independent across observations. Observations need not be identically distributed but the errors are assumed to have a common covariance structure conditional on the instruments \(z_i\):
\[
    \Var((\varepsilon_i, v_i)'|z_i) \coloneqq \Omega(z_i) = 
    \begin{pmatrix} \sigma^2_{\eps\eps}(z_i) & \Sigma_{v\eps}(z_i) \\ \Sigma_{\eps v}(z_i) & \Sigma_{vv}(z_i) \end{pmatrix} \in \SR^{(1 + d_x) \times (1 + d_x)}
\]
As \(\Omega(z_i)\) is otherwise left unrestricted, the errors are allowed to be heteroskedastic. All results in this paper hold conditionally on a realization of the instruments \(\vz := (z_1',\dots, z_n') \in \SR^{n\,\times\,d_z}\) so from this point forth they are treated as fixed and all expectations can be understood as conditional on the instruments.

Under this setup, the researcher wishes to test a two-sided restriction on the structural parameter:
\[
    H_0: \beta = \beta_0\vsbox H_1: \beta \neq \beta_0
\]
I am interested in constructing powerful tests for this null-alternate pair that are asymptotically valid under arbitrarily weak identification and with minimal restrictions on the number of instruments \(d_z\). To this end, define the null errors \(\varepsilon_i(\beta_0) \coloneqq y_i - x_i'\beta_0\). Using these, I construct a variable, \(r_i\), that is a ``partialed-out'' version of the endogenous variable satisfying \(\Cov(r_i,\eps_i(\beta_0))= 0\):
\begin{align*}
    r_i \coloneqq x_i - \rho(z_i)\eps_i(\beta_0),\;\;\; \rho(z_i) 
    &\coloneqq \frac{\Cov(\eps_i(\beta_0),x_i)}{\Var(\eps_i(\beta_0))}  \in \SR^{d_x} \\
    &\;= \frac{\Sigma_{v\eps}(z_i) + \Sigma_{vv}(z_i)(\beta - \beta_0)}{(1,\beta-\beta_0)'\Omega(z_i)'(1,\beta-\beta_0)}.
\end{align*}
Each element of the nuisance parameter \(\rho(z_i)\), \(\rho_\ell(z_i)\) for \(\ell = 1,\dots,d_x\), can be interpreted as the (conditional) slope coefficient from a simple linear regression of \(x_{\ell i}\) on \(\eps_i(\beta_0)\). Thus, if \(\rho_\ell(\cdot)\) falls in some function class \(\Phi\) it can be estimated directly under \(H_0\) by solving empirical analogs of:\footnote{Under \(H_1\), \(\rho_\ell(z_i)\) can be estimated directly by solving empirical analogs of \(\rho_\ell(z_i) = \arg\min_{\phi \in \Phi} \E[(x_{\ell i} - \eta_i(\beta_0)\phi(z_i))^2]\) where \(\eta_i(\beta_0) = \eps_i(\beta_0) - \E[\eps_i(\beta_0)|z_i]\). This requires an initial estimate of \(\E[\eps_i(\beta_0)|z_i]\), however.} 
\[
    \rho_\ell(z_i) = \arg\min_{\varphi\in\Phi} \bar\E[(x_{\ell i} - \eps_i(\beta_0)\varphi(z_i))^2].
\]
While other estimators of \(\rho(\cdot)\) are, in principle, possible (see \Cref{subsec:estimation-error-single}), I will focus on \(\ell_1\)-penalized/LASSO estimators. These estimators are consistent under the assumption that \(\rho(z_i)\) has an approximately sparse representation in some basis \(b(z_i) \coloneqq (b_1(z_i),\dots,b_{d_b}(z_i))' \in \SR^{d_b}\), that is \(\rho_\ell(z_i) = b(z_i)'\phi_\ell + \xi_{\ell i}\) where \(\xi_{\ell i}\) represents an approximation error that tends to zero with the sample size and \(\phi_\ell\) is sparse in the sense that many of its coefficients are zero. This allows for nesting of the low-dimensional case, where the number of instruments is fixed, and the high dimensional case, where the number of instruments is potentially much larger than the sample size, under a unified estimation procedure. Under homoskedasticity, \(\rho_\ell(z_i)\) is a constant function and thus has a spare representation in any basis that contains a constant term. In general, the aproximate sparsity assumption can either be interpreted as an assumption that there are only a few instruments that are
important for explaining variation in the covariance matrix \(\Omega(z_i)\) or as an assumption that the function \(\rho(z_i)\) can be accurately approximated using only a smaller set of basis terms in \(b(z_i)\).

The parameter \(\phi_\ell\) can be estimated via LASSO:
\begin{equation}
    \label{eq:gamma-hat-equation}
    \hat\phi_\ell = \arg\min_{\phi\in\SR^{d_b}} \E_n[(x_{\ell i} - \eps_i(\beta_0) b(z_i)'\phi)^2] + \lambda\|\phi\|_1,
\end{equation}
or via post-LASSO, refitting an unpenalized version of \eqref{eq:gamma-hat-equation} using only the basis terms associated with nonzero coeffecients in the inital LASSO regression. The estimating procedure in \eqref{eq:gamma-hat-equation} is a simple \(\ell_1\)-penalized regression of \(x_{\ell i}\) against \(\eps_i(\beta_0)b(z_i)\). It can be easily implemented using out-of-the-box software available on most platforms. Under standard conditions, this leads to a consistent estimate of \(\rho_\ell(z_i)\) as long as the sparsity condition \(s^2\log^M(d_bn)/n \to 0\) where \(s\) is the number of nonzero elements of \(\phi_\ell\) and \(M\) is a positive constant that depends on the moment bounds imposed. The estimation procedure is discussed in more detail in \Cref{subsec:estimation-error-single}. With \(\hat\rho(z_i) \coloneqq b(z_i)'\hat\phi_\ell\), I construct the estimated version of \(r_{\ell i}\), \(\hat r_{\ell i} \coloneqq x_i -
\hat\rho(z_i)\eps_i(\beta_0)\) for each \(\ell \in [d_x]\).

\subsection{Test Statistic}
\label{subsec:test-stat}

The test statistic is based on an arbitrary jackknife-linear estimate of the first stage,
\[
    \widehat\Pi_{\ell i} = \sum_{j\neq i} h_{ij}\hat r_{\ell j},\;\; \ell \in [d_x]
\]
for some ``hat'' matrix \(H = [h_{ij}] \in \SR^{n\times n}\). 
The phrase ``hat matrix'' is borrowed from ordinary least squares (OLS) where the projection matrix, \(\vz(\vz'\vz)^{-1}\vz'\), is sometimes referred to as the hat matrix in the sense that \(\hat x = \vz(\vz'\vz)^{-1}\vz' x\). In practice, the hat matrix,  \(H\), can be any matrix that depends only on \(\vz\). It is important to note that while \(\widehat\Pi_{\ell i}\) does not depend on \(\hat r_{\ell i}\), it may depend on \(z_i\) through the hat matrix \(H\). This gives the test power against alternatives where \(\E[\eps_i(\beta_0)z_i] \neq 0\).
For technical reasons, I will assume that \(h_{ii} = 0\) for each \(i \in [n]\) so that \(\widehat\Pi_{\ell i}\) can be written as \(\widehat\Pi_{\ell i} = \sum_{j=1}^n h_{ij} r_{\ell j}\).

Formally, the only structure I require on the hat matrix \(H\) is a balanced-design condition described in \Cref{sec:single-endogeneous}. However, for reasons explained in \Cref{sec:power-properties} it may be optimal to introduce some regularization in estimating the first-stage models \(\widehat\Pi_{\ell i}\) so I suggest using a jackknife ridge regression procedure setting:
\begin{equation}
    \label{eq:ridge-hat-matrix}
    \widehat\Pi_i = z_i'\hat\pi_{-(i)}(\lambda^\star) 
\end{equation}
where \(\hat\pi_{(-i)}(\lambda)\) is the coefficient from a ridge regression of \(\hat r\) on \(\vz\), leaving out observation \(i\) and with penalty parameter set equal to \(\lambda\).
\footnote{A ridge regression coefficient estimate from a regression of \(\tilde\vy \in \SR^n\) on \(\tilde\vx\in \SR^{k\times n}\) with penalty parameter \(\lambda \in \SR_+\) solves \(\hat\pi \in \arg\min_{\pi\in\SR^K} \|\tilde\vy - \pi'\tilde\vx\|_2^2 + \lambda\|\pi\|_2^2\). \citet{nyquist1988applications} shows how the fitted values from a jackknife ridge regression can be calculated without having to recompute \(\hat\pi_{(-i)}\) for each observation.  \citet{Angrist_Krueger_JIVE} provides a similar analysis for jackknife OLS.} Following recommendations in \citet{harrell_2015} and \citet{lecture_notes_ridge}, the penalty parameter \(\lambda^\star\) is set so that the effective degrees of freedom is no more than a fraction of the sample size:
\[
    \lambda^\star = \inf \{\lambda \geq 0: \trace(\vz(\vz'\vz + \lambda I_{d_z})^{-1}\vz') \leq n/5\}
\]
The jackknife ridge first-stage estimate has the benefit of being well defined even when the number of instruments is larger than the sample size. 
I stress, though, that the \(\widehat\Pi_{\ell i}\) estimators are not required to be consistent and the researcher may use any other hat matrices that she believes will lead to plausible first-stage estimates. Other possible choices of hat matrix include the jackknife OLS hat matrix of \citet{Angrist_Krueger_JIVE}, the deleted diagonal projection matrix introduced in \citet{Chao-2012-JIVE} and successfully used in \citet{KSS-2018,crudu_mellace_sandor_2021,Mikusheva_Sun_2022}, and \citet{Matsushita_Otsu_2022}, or hat matrices based on selecting instruments via some preliminary unsupervised technique such as principal component analysis (PCA). \Cref{rem:verifying-balanced-design} below discusses how the balanced-design condition may be verified for arbitrary choices of hat matrices.

For each \(i = 1,\dots,n\), define \(\widehat\Pi_i = (\widehat\Pi_{1i},\dots,\widehat\Pi_{d_xi})\in \SR^{d_x}\) and \(\widehat\Pi_{\eps i} = \eps_i(\beta_0)\widehat\Pi_i\). Collect these in the matrices 
\begin{equation}
    \label{eq:hatPi-hatPiEps}
    \begin{split}
        \varepsilon(\beta_0) 
        &= \big(\varepsilon_1(\beta_0),\dots,\varepsilon_n(\beta_0)\big)' \in \SR^n\\ 
        \widehat\Pi 
        &= \big(\widehat\Pi_1',\dots, \widehat\Pi_n'\big)' \in \SR^{n\,\times\,d_x} \\ 
        \widehat\Pi_\varepsilon 
        &= \big(\widehat\Pi_{\eps 1}',\dots,\widehat\Pi_{\eps n}'\big)' \in \SR^{n\,\times\,d_x}
    \end{split}
\end{equation}
The jackknife K-statistic can then be defined
\begin{align}
    \label{eq:test-stat}
    \JK(\beta_0) =  \eps(\beta_0)'\widehat\Pi\big(\widehat\Pi_\eps'\widehat\Pi_\eps\big)^{-1}\widehat\Pi'\eps(\beta_0)\times \bm{1}{\{\lambda_\text{min}(\widehat\Pi_\eps'\widehat\Pi_\eps) > 0\}}
\end{align}

I will show that, under appropriate moment bounds and conditions on the hat matrix, \(H\), the limiting distribution of \(\JK(\beta_0)\) under \(H_0\) is \(\chi^2_{d_x}\). For exposition, I will largely focus on the case where \(d_x = 1\), in which case the form of the test statistic simplifies to \(\JK(\beta_0) = \big(\sum_{i=1}^n \eps_i(\beta_0)\widehat\Pi_i\big)^2/\sum_{i=1}^n \eps_i^2(\beta_0)\widehat\Pi_i^2\). The extension to \(d_x > 1\) is not immediate but is possible under strengthened moment conditions.

\begin{remark}[]
    \label{rem:}
    While use of first-stage estimates that are uncorrelated with the structural error is inspired by \citet{Kleibergen_2002,Kleibergen-2005}, the form of the jackknife K-statistic is distinct from that of the original K-statistics. One major difference is in how both test statistics account for heteroskedasticity. The K-statistic of \citet{Kleibergen-2005} accounts for heteroskedastic errors using a \(d_z \times d_z\) matrix, which cannot be consistently estimated when \(d_z\) is large. 
    In contrast, the jackknife K-statistic uses the heteroskedasticity robust variance estimate \((\widehat\Pi_\eps'\widehat\Pi_\eps)^{-1} \in \SR^{d_x\,\times\,d_x}\). Showing that these variance estimates can be used to account for heteroskedasticity is a feature of the direct Gaussian approximation approach. Under weak identification the distribution of the variance estimate is relevant to the distribution of the test-statistic. However, even when \(d_z \ll n\), the distribution of this variance estimate would be difficult to analyze using traditional central limit theorems as it is not a continuous function of a sample mean or even of a quadratic form.
\end{remark}

\subsection{Combination with Sup-Score Test}

As will be discussed further in \Cref{sec:power-properties}, the test based on the \(\JK(\beta_0)\) statistic can have deficient power against certain alternatives. This loss of power is similar to that faced by tests based on the K-statistics of \citet{Kleibergen_2002,Kleibergen-2005} and derives from the fact that the process of partialling out the null errors, \(\eps(\beta_0)\), from the endgenous variables introduces bias into the first stage estimates, \(\widehat\Pi\), under the alternative hypothesis. Against certain alternatives, this bias can ``erase'' the first stage signal from the instruments.

The power deficiency in tests based on the K-statistic is typically adressed by combining the K-statistic with the Anderson-Rubin statistic based on a constructed conditioning variable. Prominent examples of such combinations include the celebrated conditional likelihood test of \citet{Moreira_2003}, GMM-M test of \citet{Kleibergen-2005}, and minimax regret tests of \citet{Andrews-2016}. I take a similar approach here in combining the newly proposed tests with tests based on the sup-score statistic of \citet{BCCH-2012},
\begin{equation}
    \label{eq:sup-score-statistic0}
    S(\beta_0) \coloneqq \sup_{\ell \in  [d_z]} \bigg|\frac{\sum_{i=1}^n \eps_i(\beta_0)z_{\ell i}}{(\sum_{i=1}^n z_{\ell i}^2)^{1/2}}\bigg|.
\end{equation}
which have correct asymptotic size even when the instruments is much larger than the sample size, \(d_z \gg n\). A level \(\alpha \in (0,1)\) test based on the sup-score statistic rejects whenever \(S(\beta_0) > c^S_{1-\alpha}\) where, for \(e_1,\dots,e_n\) iid standard normal and generated independently of the data, \(c^S_{1-\alpha}\) is the simulated multiplier bootstrap critical value:\footnote{This conditional quantile can also be approximated using an empirical bootstrap procedure as demonstrated by \citet{deng2020beyond}.}
\[
    c^S_{1-\alpha} \coloneqq (1-\alpha)\text{ quantile of } \sup_{1 \leq \ell \leq d_z} \bigg|\frac{\sum_{i=1}^n e_i \eps_i(\beta_0)z_{\ell i}}{(\sum_{i=1}^n z_{\ell i}^2)^{1/2}}\bigg|\text{ conditional on }\{(y_i,x_i,z_i)\}_{i=1}^n.
\]
As with the Anderson-Rubin test, tests based on the sup-score statistic may have suboptimal power properties in overidentified models as it does not incorporate first-stage information. However, the sup-score statistic does retain the benefit of directing power evenly in all directions, avoiding pitfalls of tests based on \(\JK(\beta_0)\) against certain alternatives. 

The combination test decides which test to run based on an attempt to detect whether the alternative \(\beta\) is such that \(\E[\widehat\Pi_{\ell,i}^I] = 0\) for all \(i \in [n]\) and \emph{some} \(\ell \in [d_x]\).  When this happens, tests based on the \(\JK(\beta_0)\) statistic have trivial power against deviations in the \(\ell\textsuperscript{th}\) coordinate of \(\beta\), so in local neighborhoods of these values of \(\beta\), tests based on the sup-score statistic may be preferable. Detection of whether \(\E[\widehat\Pi_{\ell,i}^I] = 0\) is based on the conditioning statistic:
\begin{equation}
    \label{eq:conditioning-statistic}
    \begin{split}
        C = \inf_{\ell \in [d_x]}\sup_{i \in [n]} \bigg|\frac{\sum_{j\neq i} h_{ij}\hat r_{\ell j}}{(\sum_{j\neq i} h_{ij}^2)^{1/2}}\bigg|.
    \end{split}
\end{equation}
Under the assumption that \(\E[\widehat\Pi_i^I] = 0\) for all \(i \in [n]\), quantiles of the conditioning statistic can be simulated analogously to the sup-score critical value. For a new set of \(\{(e_{\ell1},\dots,e_{\ell n}): \ell \in [d_x]\}\) iid standard normal and generated independently of the data, and for any \(\theta\in (0,1)\), define the conditional quantile 
\begin{equation}
    \label{eq:pre-test-quantile}
    c^C_{1-\theta} \coloneqq (1-\theta)\text{ quantile of }\inf_{\ell \in [d_x]}\sup_{i \in [n]} \bigg|\frac{\sum_{j\neq i}e_i h_{ij}\hat r_{\ell j}}{(\sum_{j\neq i} h_{ij}^2)^{1/2}}\bigg| \text{ conditional on }\{(y_i,x_i,z_i)\}_{i=1}^n
\end{equation}
The thresholding test decides which test to run by comparing the conditioning statistic \(C\) to a treshhold value \(\tau\), 
\begin{equation}
    \label{eq:conditioning-test0}
    T(\beta_0;\tau) = 
    \begin{cases}
        \bm{1}\{\JK(\beta_0) > \chi^2_{d_x, 1-\alpha} &\text{if }C \geq \tau \\
            \bm{1}\{S(\beta_0) > c_{1-\alpha}^S &\text{if } C < \tau
    \end{cases}.
\end{equation}
In principle, the thresholding statistic has correct size for any (preset) choice of parameter \(\tau\). In practice, however, I find that setting \(\tau = c^C_{0.75}\) leads to a reasonable balance of power between local and distant alternatives. 


\section{Empirical Application}
\label{sec:empirical}

I apply the testing procedures proposed in this paper to the data of \citet{gilchrist-glassberg-2016}, who examine the effect of social spillovers in movie ticket sales, and to the data of \citet{AK-1991}, who examine the returns to education. In both studies the number of instruments, \(d_z = 52\) and \(d_z = 180\), respectively, cannot be treated as negligible relative to the sample size (\(n = 1{,}671\) and \(n = 329{,}509\), respectively). To deal with the large number of instruments, \citet{gilchrist-glassberg-2016} employ a post-LASSO estimate of the first stage. This strategy is also shown to work well in the data of \citet{AK-1991} by \citet{angrist2022machine}. Using a simple simulation study, I demonstrate that the first-stage F-statistics on LASSO selected variables typically reported by authors can be misleading indicators of identification strength. When revisting the initial analyses using identification robust testing procedures, the confidence intervals constructed
by inverting the tests proposed in \Cref{sec:setup} are consistently narrower than those constructed from inverting the sup-score and many-instrument testing procedures.

\subsection{Application to Social Spillovers in Movie Consumption}

The \citet{gilchrist-glassberg-2016} sample consists of 1,671 opening weekend days between January 1, 2002 and January 1, 2012. For each opening weekend, the authors observe gross ticket sales for movies wide released in theaters in the United States with a run in theaters of at least six weeks.\footnote{An opening weekend day is a Friday, Saturday, or Sunday of opening weekend and a wide released movie is any movie that ever shows on 600 or more screens.} The data are obtained through Box Office Mojo, a subsidiary of the Internet Movie Database (IMDb).

The outcome variables of interest are gross ticket sales of movies that opened in a given weekend in the second through sixth weeks of their run, while the endogenous variable is the gross ticket sales of a movie in its opening weekend. To control for seasonal periodicity in both the supply of and demand for movies, a vector of date controls are included. Formally, the authors are interested in the parameters \(\beta_w\), \(w = 2,\dots,7\) from the linear IV model(s):
\begin{equation}
    \label{eq:gs-main}
    \text{Sales}_{wi}^\perp = \beta_w\text{Sales}^\perp_{1i} + \eps_{wi}
\end{equation}
where, for \(w = 1,\dots,6\), \(\text{Sales}_{wi}^\perp\) represents gross national ticket sales, after the partialing out of date controls and a constant, \(7(w-1)\) days after day \(i\), of movies that opened on the opening weekend of \(i\). The variable \(\text{Sales}_{7i}^\perp = \sum_{w = 1}^6 \text{Sales}_{wi}^\perp\) denotes the cumulative national ticket sales in the second through sixth running weekends of movies who opened in weekend \(i\), after the partialing out of date controls and a constant.
The parameter \(\beta_w\) represents the social spillover effect of strong opening weekend sales on sales in later weeks. 

To instrument for sales on opening weekend the authors employ a vector of nationally aggregated weather measures. These weather measures include the proportion of movie theaters experiencing maximum temperatures in \(5^\circ\) Fahrenheit bins on the interval \([10^\circ, 100^\circ]\), the proportion of movie theaters experiencing precipitation levels in 0.25 inch per hour increments on the interval \([0,1.5]\), and the proportions of theaters experiencing any type of snow and of theaters experiencing any type of rain. Since unusually poor weather may cause people to substitute away from outdoor activities and into watching a movie, these measures provide a source of exogenous variation in opening weekend sales that can be used to identify the effect of social spillovers. 

Putting together the nationally aggregated weather measures leaves \citet{gilchrist-glassberg-2016} with 48 linearly independent instrumental variables.
\footnote{There are 52 instruments in total, but four linearly dependent ones are ignored in the following.} 
To handle the large number of instruments, the authors  employ a post-LASSO estimate of the first stage \citep{BCCH-2012}; they set the first-stage penalty parameter so that the number of instrument selected is one, two, or three. The resulting first-stage F-statistics using the selected instrument(s), 38.80, 25.86, and 20.95, respectively, seem to indicate strong identification. However, the first-stage F-statistic on the full set of instrumental variables is only 3.80. Moreover, since the LASSO objective is an \(\ell_1\) penalized version of the OLS loss, using the variables selected by LASSO may mechanically lead to higher F-statistics even if the underlying relationship between the instruments and the endogenous variables is weak.


\Cref{tab:selected_fstat_sim} provides evidence from a simple simulation experiment to demonstrate this. For the simulation experiment I generate an iid sample of size \(n = 1000\). For each \(i \in [n]\), I generate 10 mutually independent instruments \(Z_{ki} \sim N(0,1)\) for \(1 \leq k \leq 10\). The endogenous variable is generated to only have a weak relationship with the instruments, \(X_i = \frac{2}{\sqrt{n}}\sum_{\ell = 1}^{10} Z_{\ell i}+ v_i\), and the outcome is generated \(Y_i = X_i + \eps_i\) where \((\eps_i,v_i)\) are independent standard normals. From the initial set of 10 instrumental variables I generate an additional 55 technical instruments by squaring and taking all interactions between variables in the initial set. These generated instruments are correlated with the initial instruments but do not directly enter the first stage.

I then set the LASSO penalty so that only a certain number of instruments are chosen and report the resulting average first stage F-statistics and 95\% confidence interval coverage over one thousand simulations. As a comparasion I also report the average first-stage F-statistics and 95\% confidence interval coverage from the oracle estimator, which only uses the relevant 10 initial instruments. Despite the fact that the first-stage F-statstic on selected instruments is more than double the first-stage F-statistic using the oracle first stage estimator, the coverage rate of 95\% confidence intervals based on LASSO selected instruments is significantly degraded compared to both the nominal coverage probability and the coverage probability using the oracle first-stage estimator. 

\begin{table}
    \centering\small
    \begin{tabular}{|c||cc||cc|}
        \hline
        & \multicolumn{2}{c||}{\it Selected Instruments} & \multicolumn{2}{c|}{\it Oracle Estimator} \\
        Number of Instruments & F-stat. & Coverage Prob. & F-stat. & Coverage Prob.\\
        \hline
        One Instrument & 12.539 & 0.302 & 4.911 & 0.904 \\  
        Two Instruments & 11.185 & 0.150 & 5.040 & 0.830 \\ 
        Three Instruments & 10.060 & 0.070 & 4.820 & 0.810 \\ 
        \hline
    \end{tabular}
    \captionsetup{width = 0.6\linewidth, font = small}
    \caption{Comparasions of first-stage F-statistics and 95\% confidence interval coverage Probability using selected and oracle instruments}
    \label{tab:selected_fstat_sim}
\end{table}

%

Given a lack of clarity on the strength of identification, I seek to validate the results of \citet{gilchrist-glassberg-2016} using the weak identification testing procedures proposed in this paper. The setting is particularly suitable for weak IV testing using the jackknife K-statistic. With 48 instruments and a sample size of 1671, \(d_z^3 = 110{,}592 \gg n \), making the tests of \citet{Moreira_2003,Moreira_2009,Kleibergen-2005}, and \citet{Andrews-2016} inapplicable. On the other hand, it is unclear whether asymptotic approximations based on \(d_z \to \infty\) will accurately describe the finite-sample distribution of test statistics with 48 instruments. Moreover, since fluctuations in movie theater attendance seem to be largely driven by either particularly cold or particularly hot weather  (see Figure 4 in \citet{gilchrist-glassberg-2016}), the nuisance parameter \(\rho(z_i)\) is plausibly approximately sparse.

I compare the 95\% confidence intervals based on the jackknife K-statistic to those based on the jackknife Lagrange-Multiplier (JLM) statistic of \citet{Matsushita_Otsu_2022} and the  sup-score statistic of \citet{BCCH-2012}. Confidence intervals based on the jackknife AR statistic of \citet{Mikusheva_Sun_2022} are not reported as they were empty for all specifications, a result that could indicate misspecification of the linear model. Similarly, confidence intervals based on the thresholding statistic, implemented as recommended in \Cref{sec:power-properties}, are also not reported as they always align with the those of jackknife K-statistic. The narrower confidence bands of the jackknife K-statistic in specifications where the sup-score confidence interval is non-empty indicates higher power from the jackknife K-test in this setting, so the combination test suggesting its use may be expected. The jackknife
K-statistic is implemented using the jackknife OLS hat matrix of \citet{Angrist_Krueger_JIVE}, that is by setting \(\lambda = 0\) in \eqref{eq:ridge-hat-matrix}.

Tables~\ref{table:CI_empirical} reports the \(95\%\) confidence intervals for \(\beta_1,\dots,\beta_7\) generated by weak-instrument robust confidence intervals for three sets of instruments: the first is the initial set of 48 instruments in \citet{gilchrist-glassberg-2016}, the second set includes only the temperature instruments for \(d_z = 36\), and the final includes the initial instruments as well as all interactions between the temperature instruments and the remaining instruments for \(d_z = 524\). 
For reference, I also provide point estimates and standard errors for \(\beta_1,\dots,\beta_7\) from \citet{gilchrist-glassberg-2016}, Table 2. To facilitate comparison, these point estimates and standard errors come from a specification that uses all the instruments in the first stage of a 2SLS procedure. 


Qualitatively, the results from the weak-instrument robust confidence intervals are similar to that of the author's original analysis; indeed the \citet{gilchrist-glassberg-2016} point estimates are always in the identification robust confidence intervals when using either the initial instrument set (\(d_z = 48\)) or the reduced instrument set (\(d_z = 36\)). However, when using the larger instrument set of \(d_z = 524\) we obtain confidence bands using the jackknife K-test that rule out the author's initial point estimates for the parameters \(\beta_5, \beta_6\) and \(\beta_7\) and suggest somewhat smaller social spillover effects in movie consumption. Across all specifications and instrument sets, confidence intervals obtained from inverting the jackknife K-test are consistently smaller than those obtained from inverting both the sup-score and jackknife Lagrange Multiplier test. These reductions in confidence interval length are most noticeable when using the complete set of
interactions, across all parameters the jackknife K-confidence intervals are nearly half the length of their jackknife Lagrange Multiplier counterparts.  As with the jackknife Anderson-Rubin test, the sup-score confidence intervals are also often empty which again could suggest misspecification of the linear IV model.

\renewcommand{\arraystretch}{1.75}
\begin{table}[h!]
    \centering \small 
    \resizebox{0.9\columnwidth}{!}{
    \begin{tabular}{|c|cccccc|}
        \hline 
        Parameter & \(\beta_2\) & \(\beta_3\) & \(\beta_4\) & \(\beta_5\) & \(\beta_6\) & \(\beta_7\) \\ 
        \hhline{|=|======|}
        \(\underset{\text{(s.e.)}}{\text{Estimate}}\) 
                  & \(\underset{\vphantom{\big(}(0.024)}{0.475}\) 
                  & \(\underset{\vphantom{\big(}(0.023)}{0.269}\) 
                  & \(\underset{\vphantom{\big(}(0.017)}{0.164}\) 
                  & \(\underset{\vphantom{\big(}(0.013)}{0.121}\) 
                  & \(\underset{\vphantom{\big(}(0.010)}{0.093}\) 
                  & \(\underset{\vphantom{\big(}(0.074)}{1.222}\)  \\
        \hhline{|=|======|} 
        & \multicolumn{6}{c|}{Initial instrument set, \(d_z = 48\)}\\
        \hhline{|=|======|}
        \(\JK(\beta_0)\) & \(\overset{\bflarrow 0.114 \flarrow}{[0.441, 0.555]}\) 
                         & \(\overset{\bflarrow 0.114 \flarrow}{[0.234, 0.348]}\)
                         & \(\overset{\bflarrow 0.074 \flarrow}{[0.127, 0.201]}\)
                         & \(\overset{\bflarrow 0.074 \flarrow}{[0.936, 0.167]}\) 
                         & \(\overset{\bflarrow 0.046  \flarrow}{[0.0736, 0.120]}\)
                         & \(\overset{\bflarrow 0.375 \flarrow}{[0.989, 1.365]}\)   \\ 
        \(S(\beta_0)\) & \(\emptyset\) 
                       & \(\overset{\bflarrow 0.033 \flarrow}{[0.294, 0.328]}\) 
                       & \(\emptyset\) 
                       & \(\emptyset\) 
                       & \(\emptyset\)
                       &  \(\overset{\bflarrow 0.561 \flarrow}{[0.989, 1.551]}\) \\  
        JLM & \(\overset{\bflarrow 0.140 \flarrow}{[0.428, 0.569]}\) 
            & \(\overset{\bflarrow 0.127 \flarrow}{[0.221, 0.348]}\)  
            & \(\overset{\bflarrow 0.087 \flarrow}{[0.134, 0.221]}\) 
            & \(\overset{\bflarrow 0.074 \flarrow}{[0.100, 0.174]}\)  
            & \(\overset{\bflarrow 0.060 \flarrow}{[0.080, 0.140]}\)
            & \(\overset{\bflarrow 0.441 \flarrow}{[0.989, 1.384]}\)  \\
        \hhline{|=|======|} 
        & \multicolumn{6}{c|}{Temperature instruments only, \(d_z = 36\)}\\    
        \hhline{|=|======|}
        \(\JK(\beta_0)\) & \(\overset{\bflarrow 0.147 \flarrow}{[0.462, 0.609]}\) 
                         & \(\overset{\bflarrow 0.134 \flarrow}{[0.268, 0.401]}\) 
                         & \(\overset{\bflarrow 0.107 \flarrow}{[0.158, 0.254]} \)
                         & \(\overset{\bflarrow 0.080 \flarrow}{[0.114, 0.194]} \) 
                         & \(\overset{\bflarrow 0.067 \flarrow}{[0.094, 0.161]}\)
                         & \(\overset{\bflarrow 0.455 \flarrow}{[1.117, 1.572]}\)   \\ 
        \(S(\beta_0)\) & \(\overset{\vphantom{\big(}}{\emptyset}\) 
                       & \(\overset{\vphantom{\big(}}{\emptyset}\) 
                       & \(\overset{\vphantom{\big(}}{\emptyset}\)     
                       & \(\overset{\vphantom{\big(}}{\emptyset}\) 
                       & \(\overset{\vphantom{\big(}}{\emptyset}\)
                       & \(\overset{\vphantom{\big(}}{\emptyset}\) \\  
        JLM & \(\overset{\bflarrow 0.161 \flarrow}{[0.448, 0.609]}\) 
            & \(\overset{\bflarrow 0.161 \flarrow}{[0.248, 0.408]}\)  
            & \(\overset{\bflarrow 0.107 \flarrow}{[0.147, 0.254]}\) 
            & \(\overset{\bflarrow 0.087 \flarrow}{[0.114, 0.200]}\) 
            & \(\overset{\bflarrow 0.067 \flarrow}{[0.094, 0.161]}\)
            & \(\overset{\bflarrow 0.542 \flarrow}{[1.070, 1.612]}\)  \\ 
        \hhline{|=|======|}
        &\multicolumn{6}{c|}{Initial instruments plus all interactions with temp. instruments, \(d_z = 524\)}\\
        \hhline{|=|======|}
        \(\JK(\beta_0)\) & \(\overset{\bflarrow 0.040 \flarrow}{[0.462, 0.502]}\) 
                         & \(\overset{\bflarrow 0.033 \flarrow}{[0.268, 0.301]}\) 
                         & \(\overset{\bflarrow 0.020 \flarrow}{[0.154, 0.174]}\) 
                         & \(\overset{\bflarrow 0.013 \flarrow}{[0.094, 0.107]}\) 
                         & \(\overset{\bflarrow 0.007 \flarrow}{[0.067, 0.074]}\) 
                         & \(\overset{\bflarrow 0.120 \flarrow}{[1.043, 1.164]}\)   \\ 
        \(S(\beta_0)\) & \(\overset{\bflarrow 0.047 \flarrow}{[0.415, 0.462]}\) 
                       & \(\overset{\vphantom{\big(}}{\emptyset}\) 
                       & \(\overset{\vphantom{\big(}}{\emptyset}\)     
                       & \(\overset{\bflarrow 0.207 \flarrow}{[0.040, 0.247]}\) 
                       & \(\overset{\bflarrow 0.060 \flarrow}{[0.161, 0.221]}\) 
                       &  \(\overset{\vphantom{\big(}}{\emptyset}\) \\  
        JLM & \(\overset{\bflarrow 0.080 \flarrow}{[0.441, 0.522]}\) 
            & \(\overset{\bflarrow 0.060 \flarrow}{[0.247, 0.308]}\)  
            & \(\overset{\bflarrow 0.040 \flarrow}{[0.147, 0.187]}\) 
            & \(\overset{\bflarrow 0.027 \flarrow}{[0.094, 0.120]}\)  
            & \(\overset{\bflarrow 0.013 \flarrow}{[0.067, 0.080]}\) 
            & \(\overset{\bflarrow 0.227 \flarrow}{[0.990, 1.217]}\) 
            \\ \hline
    \end{tabular}}
    \captionsetup{width = 0.95\linewidth, font = small}
    \caption{95\% Confidence Intervals and Interval Lengths in the data of \citet{gilchrist-glassberg-2016}.}
    \label{table:CI_empirical}
\end{table}
\renewcommand{\arraystretch}{1}

\subsection{Application to Returns to Education}

For additional comparison, I revisit the setting of \citet{AK-1991}, who study the effect of education on log-wages, instrumenting for education using various combinations of quarter-of-birth (QOB), year-of-birth (YOB), and place-of-birth indicators (POB). This dataset has the benefit of being used in the empirical studies of both \citet{Mikusheva_Sun_2022} and \citet{Matsushita_Otsu_2022}, facilitating an easy comparison of the performance of the newly proposed testing procedure to these existing ``many-instrument'' methods. Moreover, \citet{angrist2022machine} report improved power in this dataset when using post-LASSO estimates in the first stage, suggesting that high-dimensional or machine-learning techniques can be useful in this setting. 

\Cref{table:CI-AK} displays the resulting identification robust confidence intervals using two set of instruments. The first set, which contains all QOB \(\times\) YOB and all QOB \(\times\) POB interactions for \(d_z = 180\),  corresponds to the specification in Table VII of \citet{AK-1991}.  The second set of instruments, \(d_z = 1{,}530\), contains all QOB \(\times\) YOB \(\times\) POB interactions. The confidence intervals for the jackknife Anderson-Rubin (JAR) and jackknife Lagrange-Multiplier (JLM) tests are taken directly from \citet{Mikusheva_Sun_2022} and \citet{Matsushita_Otsu_2022}, respectively. To implement the jackknife K-test I follow a similar procedure to that of the prior empirical exercise, setting \(\lambda = 0\) when constructing the \(\widehat\Pi_i\) values. However, for computational reasons I opt to split the data into 11 pieces when estimating \(\widehat\Pi_i\) rather than using a true ``leave-one-out'' jackknife approach. Confidence intervals from the combination test are not reported as the sup-score confidence interval is empty in both specifications. 

The results in \Cref{table:CI-AK} are similar to those in \Cref{table:CI_empirical}. In both specifications the confidence intervals obtained from inverting the jackknife K-test are substantially smaller than those obtained by inverting the ``many-instrument'' JAR and JLM tests. Indeed, as before, the confidence intervals obtained from inverting the jackknife K-statistic are nearly half the length of those obtained from inverting the JLM test and less than a quarter the length of those obtained from inverting the JAR test. It is notable that the range of plausible returns to education values implied by the jackknife K-confidence interval with \(d_z = 1{,}530\) are strictly below that implied by the jackknife-K confidence interval with \(d_z = 180\). This may be related to misspecification of the linear model as indicated by the empty sup-score confidence interval. ``Downward bias'' of the confidence
intervals when using the larger instrument set is also seen for the JAR and JLM confidence intervals, though to a lesser extent. \Cref{rem:jlm-comparasion}, below, provides reasoning for why confidence intervals based on the \(\JK(\beta_0)\) statistic may be tighter than those based on the JAR and JLM statistics.  
\renewcommand{\arraystretch}{1.75}
\begin{table}[h!]
    \centering \small
    \begin{tabular}{|c|cccc|}
        \hline
        Number of Instruments & JAR & JLM & \(\JK(\beta_0)\) & \(S(\beta_0)\)\\ 
        \hhline{|=|====|} 
        \(\underset{(d_z = 180)}{\text{Initial Instrument Set}}\) & 
        \(\overset{\bflarrow 0.193 \flarrow}{[0.008, 0.201]}\) & \(\overset{\bflarrow 0.066 \flarrow}{[0.067, 0.133]}\) & \(\overset{\bflarrow 0.034 \flarrow}{[0.067, 0.101]}\) & 
        \(\emptyset\) \\ 
        \(\underset{(d_z = 1{,}530)}{\text{All Interactions}}\) & 
        \(\overset{\bflarrow 0.249 \flarrow}{[-0.047, 0.202]}\) & 
        \(\overset{\bflarrow 0.098 \flarrow}{[0.025, 0.123]}\) &  
        \(\overset{\bflarrow 0.034 \flarrow}{[0.008, 0.042]}\) & 
        \(\emptyset\)\\ 
                                                               \hline
    \end{tabular}
    \captionsetup{width = 0.95\linewidth, font = small}
    \caption{95\% Confidence Intervals and Interval Lengths in the data of \citet{AK-1991}.}
    \label{table:CI-AK}
\end{table}
\renewcommand{\arraystretch}{1}

As always, these are just particular empirical examples, and my results should not be interpreted as a critique of \citet{BCCH-2012}, \citet{Mikusheva_Sun_2022}, or \citet{Matsushita_Otsu_2022}, whose prior work I relied upon and was inspired by.

\section{Simulation Study}
\label{sec:simulations}

In a simple simulation study, I examine the performance of tests based on the \(\JK(\beta_0)\) statistic and compare it with that of other tests that may be used in settings where the number of instruments is nonnegligible as a fraction of sample size.  I consider a reduced-form data-generating process (DGP) similar to that of \citet{Matsushita_Otsu_2022}. The outcome variable, \(y_i\), and endogenous variable, \(x_i\), are generated according to
\begin{equation}
   \begin{split}
       y_i &= x_i + \eps_i \\ 
       x_i &=  \Pi_i + v_i
   \end{split} 
\end{equation}
where \(\Pi_i = \frac{1}{r_n}\sum_{k=1}^5 \frac{3}{4}\bar z_{ki} + \frac{1}{4}\bar z_{ki}^2 + \frac{1}{4}\bar z_{ki}^3\) is a (dense) transformation of an initial set of instruments \(\bar z_i \in \SR^{15}\) generated as described below. The value of \(r_n\) varies depending on the strength of identification considered; under weak identification, \(r_n = n^{-1/2}\) while for power curves I consider an intermediate identification strength, \(r_n = n^{-1/3}\).\footnote{Intermediate identification strength is considered to let the power curves come up to one at the boundaries of the considered range of \(\beta\) values. Power curves with \(r_n = n^{-1/2}\) look similar, but shrunk towards zero.} To model heteroskedasticity, the errors \((e_i,v_i)\) are generated \(\eps_i = (1 + \varrho_1(\bar z_{1i}^2 + \bar z_{2i}^2 + \bar z_{2i}\bar z_{3i}))e_{1i}\), and \(v_i = \varrho_2(1 + \bar z_{1i})\eps_i + (1 - \varrho_2)^2e_{2i}\) where \(e_{1i}\) and \(e_{2i}\) are generated independently of each other and other variables in the model according to a Laplace distribution with location parameter \(\mu = 0\) and scale parameter \(b = 1\). Since jackknife K-statistic has an exact \(\chi^2\) distribution when the errors are jointly Gaussian and \(\rho(z_i)\) is known, I purposefully avoid normally distributed errors to investigate the quality of asymptotic approximations.  The parameters \(\varrho_1\) and \(\varrho_2\) control the degree of heteroskedasticity and endogeneity, respectively.

\begin{table}[htb!]
    \centering\small
    \resizebox{0.9\columnwidth}{!}{
    \begin{tabular}{|cccc|ccccccc|}
    \hline
    \multicolumn{4}{|c|}{\textbf{DGP}} & \multicolumn{7}{c|}{\textbf{Testing Procedure}} \\
    \rule{0pt}{0.1em} & & & & & & & & & &  \\
    \(n\) & \(d_z\) & \(\varrho_1\) & \(\varrho_2\) & \(\JK(\beta_0)\) & \(S(\beta_0)\) & \(T(\beta_0; \tau_{0.3})\) & \(T(\beta_0;\tau_{0.75})\) & A.Rbn. & JAR & JLM \\
    \hhline{|====|=======|}
    200 & 15 & 0.2 & 0.3 &  0.0514 & 0.0356 & 0.0482 & 0.0454 & 0.0234 & 0.0454 & 0.0434 \\
    & & 0.2 & 0.6 &   0.0500 & 0.0376 & 0.0460 & 0.0412 & 0.0258 & 0.0728 & 0.0436 \\
    & & 0.5 & 0.3 &  0.0466 & 0.0384 & 0.0430 & 0.0402 & 0.0238 & 0.0784 & 0.0450 \\
    & & 0.5 & 0.6 & 0.0454 & 0.032 & 0.0432 & 0.0394 & 0.0220 & 0.0734 & 0.0458 \\ 
    \rule{0pt}{0.1em} & & & & & & & & & &  \\ 
    & 45 & 0.2 & 0.3 &  0.0430 & 0.0102 & 0.0372 & 0.0268 & 0.0062 & 0.0930 & 0.0386 \\
    & & 0.2 & 0.6 &   0.0422 & 0.0102 & 0.0406 & 0.0302 & 0.0078 & 0.0890 & 0.0414 \\
    & & 0.5 & 0.3 &  0.0446 & 0.0104 & 0.0372 & 0.0242 & 0.0074 & 0.1058 & 0.0306 \\
    & & 0.5 & 0.6 & 0.0452 & 0.0110 & 0.0414 & 0.0308 & 0.0040 & 0.1052 & 0.0342 \\ 
    \rule{0pt}{0.1em} & & & & & & & & & &  \\ 
    & 150 & 0.2 & 0.3 &  0.0490 & 0.0044 & 0.0416 & 0.0242 & 0.0000 & 0.1066 & 0.0460 \\
    & & 0.2 & 0.6 &   0.0480 & 0.0068 & 0.0422 & 0.0288 & 0.0000 & 0.1074 & 0.0408 \\
    & & 0.5 & 0.3 &  0.0482 & 0.0060 & 0.0424 & 0.0244 & 0.0000 & 0.1070 & 0.0458 \\
    & & 0.5 & 0.6 & 0.0434 & 0.0070 & 0.0404 & 0.0268 & 0.0000 & 0.1120 & 0.0414 \\ 
    \rule{0pt}{0.1em} & & & & & & & & & &  \\ 
    500 & 15 & 0.2 & 0.3 &  0.0540 & 0.0448 & 0.0510 & 0.0490 & 0.0374 & 0.0702 & 0.0512 \\
    & & 0.2 & 0.6 &   0.0516 & 0.0424 & 0.0478 & 0.0488 & 0.0368 & 0.0674 & 0.0444 \\
    & & 0.5 & 0.3 &  0.0474 & 0.0398 & 0.0452 & 0.0466 & 0.0294 & 0.0690 & 0.0488 \\
    & & 0.5 & 0.6 & 0.0490 & 0.0392 & 0.0466 & 0.0464 & 0.0320 & 0.0718 & 0.0446 \\ 
    \rule{0pt}{0.1em} & & & & & & & & & &  \\ 
    & 45 & 0.2 & 0.3 &  0.0554 & 0.0196 & 0.0480 & 0.0364 & 0.0198 & 0.0840 & 0.0340 \\
    & & 0.2 & 0.6 &   0.0496 & 0.0206 & 0.0456 & 0.0392 & 0.0202 & 0.0812 & 0.0378 \\
    & & 0.5 & 0.4 &  0.0552 & 0.0192 & 0.0514 & 0.0368 & 0.0166 & 0.0904 & 0.0330 \\
    & & 0.5 & 0.6 & 0.0518 & 0.0224 & 0.0472 & 0.0346 & 0.0188 & 0.0950 & 0.0328 \\ 
    \rule{0pt}{0.1em} & & & & & & & & & &  \\ 
    & 150 & 0.2 & 0.3 &  0.0476 & 0.0168 & 0.0456 & 0.0380 & 0.0044 & 0.0754 & 0.0432 \\
    & & 0.2 & 0.6 &   0.0456 & 0.0146 & 0.0428 & 0.0386 & 0.0036 & 0.0730 & 0.0426 \\
    & & 0.5 & 0.3 &  0.0540 & 0.0116 & 0.0486 & 0.0332 & 0.0052 & 0.0856 & 0.0380 \\
    & & 0.5 & 0.6 & 0.0456 & 0.0180 & 0.0436 & 0.0364 & 0.0036 & 0.0784 & 0.0418 \\ \hline
    \end{tabular}}
   \captionsetup{width = 0.95\linewidth, font = small}
   \caption{Simulated Size of Identification and Heteroskedasticity Robust Tests under Weak Identification. Each DGP is simulated 5000 times.}
   \label{table:weak_size}
\end{table}

I examine the size of the test under three different instrument regimes. In all three regimes, I begin with an initial set of instruments \(\bar z_i = (\bar z_{1i},\dots,\bar z_{15i})'\) generated independently across indices according to a multivariate Gaussian distribution with Toeplitz covariance structure, \(\Cov(\bar z_{\ell i},\bar z_{ki}) = 2^{-|\ell-k|}\). In the first regime, the instruments only include the initial set, \(z_i = \bar z_i\) so that \(d_z = 15\). In the second regime, the full set of instruments \(z_i\) additionally includes all quadratic and cubic terms, \((z_{\ell i}^2,z_{\ell i}^3)\), \(\ell = 1,\dots,15\) so that in total \(d_z = 45\). In the third regime, the full set of instrument includes the initial set of instruments, \(\bar z_i\),  all quadratic and cubic terms (30 additional terms) and interactions of the initial set of instruments (\(\binom{15}{2} = 105\) additional terms), so that in total \(d_z = 150\). Under each regime, the full set of instruments is passed to the test statistics with no indication about which instruments correspond to the initial set, and thus no indication about which instruments are relevant to the DGP.

\begin{figure}[ht!]
    \centering
    \includegraphics[width=0.9\linewidth]{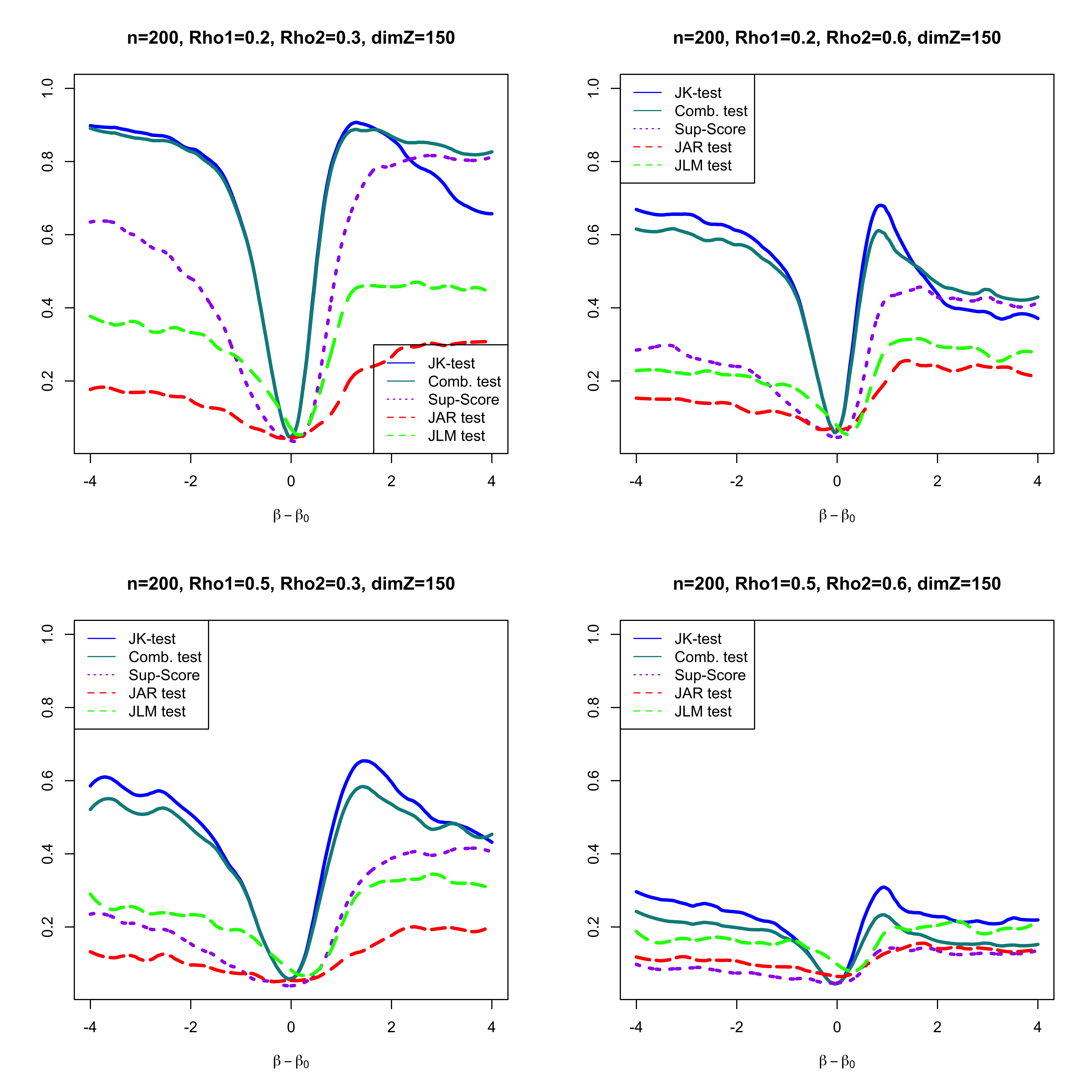}
    \captionsetup{font = small, width = 0.95\linewidth}
    \caption{Calibrated Local Power Curves under Intermediate Identification Strength and 150 Instruments. Sample size is 200 and rejection probability is calculated on a grid of 100 \((\beta_0 - \beta)\) points between -4 and 4. At each point the DGP is simulated 1000 times.}
    \label{fig:power65}
\end{figure}

In constructing the jackknife K-statistic, I opt to use the deleted diagonal ridge matrix, \(H = [h_{ij}]\) where \(h_{ij} = [\vz(\vz'\vz + \lambda I_{d_z})\vz']_{ij}\bm{1}\{i \neq j\}\), instead of a true jackknife ridge procedure in order to make the simulations compuationally tractable. Following reccomendations in \citet{lecture_notes_ridge}, the penalty paramter \(\lambda\) is set so that effective degrees of freedom of the resulting hat matrix is no more than \(n/5\).\footnote{Precisely, the penalty parameter is set \(\lambda = \max{0, (n/5)^{-1}d_1^2(\vz)}(d_z - n/5)\), where \(d_1^2(\vz)\) is the square of the maximum singular value of the design matrix \(\vz\).} To estimate the parameter \(\rho(z_i)\), I implement the default cross-validated \(\ell_1\)-penalized procedure of \eqref{eq:gamma-hat-equation} via the \texttt{glmnet} package in \texttt{R} \citep{glmnet}. I use the full vector of instruments as the basis to approximate \(\rho(z_i)\). 

I compare the simulated size of the jackknife K test and to the performance of the sup-score test, \(S(\beta_0)\), of \citet{BCCH-2012}, the thresholding test introduced in \Cref{subsec:combination-test}, the standard Anderson-Rubin (A.Rbn.) test of \citet{Anderson-Rubin-1949} and \citet{StaigerStock-1997}, the jackknife AR  test (JAR) of \citet{crudu_mellace_sandor_2021} and \citet{Mikusheva_Sun_2022}, and the jackknife LM test (JLM) of \citet{Matsushita_Otsu_2022}. Critical values of the sup-score and conditioning statistic are simulated with the procedures described in \Cref{sec:power-properties} with 1000 bootstrap replications. For the combination test cutoff, I report results using two different quantiles of the conditioning statistic under the assumption that \(\E[\widehat\Pi_i^I] = 0\) for all \(i \in [n]\); \(\tau_{0.3}\) corresponding to the 30\textsuperscript{th} quantile and \(\tau_{0.75}\) corresponding to the 75\textsuperscript{th} quantile. 

\Cref{table:weak_size} reports the simulated size for all tests under weak and strong identification, respectively. One can see that the \(\JK(\beta_0)\) statistic has nearly exact size in almost all the setups considered. In contrast, both the jackknife AR and jackknife LM test all exhibit moderate size distortions in various regimes. The jackknife AR test in particular appears to overreject in nearly all setups considered. This property also observed in the simulation study of \citet{Matsushita_Otsu_2022} and so may be driven by the similarity of this simulation setup to theirs. The jackknife LM statistic appears to have good size properties when \(d_z = 10\) and \(d_z = 150\) but is consistently (though only moderately) undersized in the intermediate regime where \(d_z = 45\). This size distortion does not improve when moving from \(n = 200\) to \(n = 500\), suggesting that the requirement that \(d_z \to \infty\) is important for the quality of finite-sample approximation by its limiting distribution. Though the good performance of the jackknife LM statistic when \(d_z = 10\) is notable, it should also be remarked that this is the setup with the least amount of correlation between the instruments.

The sup-score and Anderson Rubin test seem to be undersized in all regimes, with the Anderson-Rubin test nearly never rejecting when \(d_z = 150\). However, and in line with the theory, both of their size properties appear to improve when increasing the sample size from \(n = 200\) to \(n = 500\). It is possible that the size properties of the sup-score test could be improved by using an empirical bootstrap based approach to simulate the critical value, as proposed by \citet{deng2020beyond}, however I do not consider that approach here. The threshholding test appears to inherit the conservative nature of the sup-score test, though to a lesser degree due to the combination with the \(\JK(\beta_0)\) test.
\begin{figure}[ht!]
    \centering
    \includegraphics[width=0.9\linewidth]{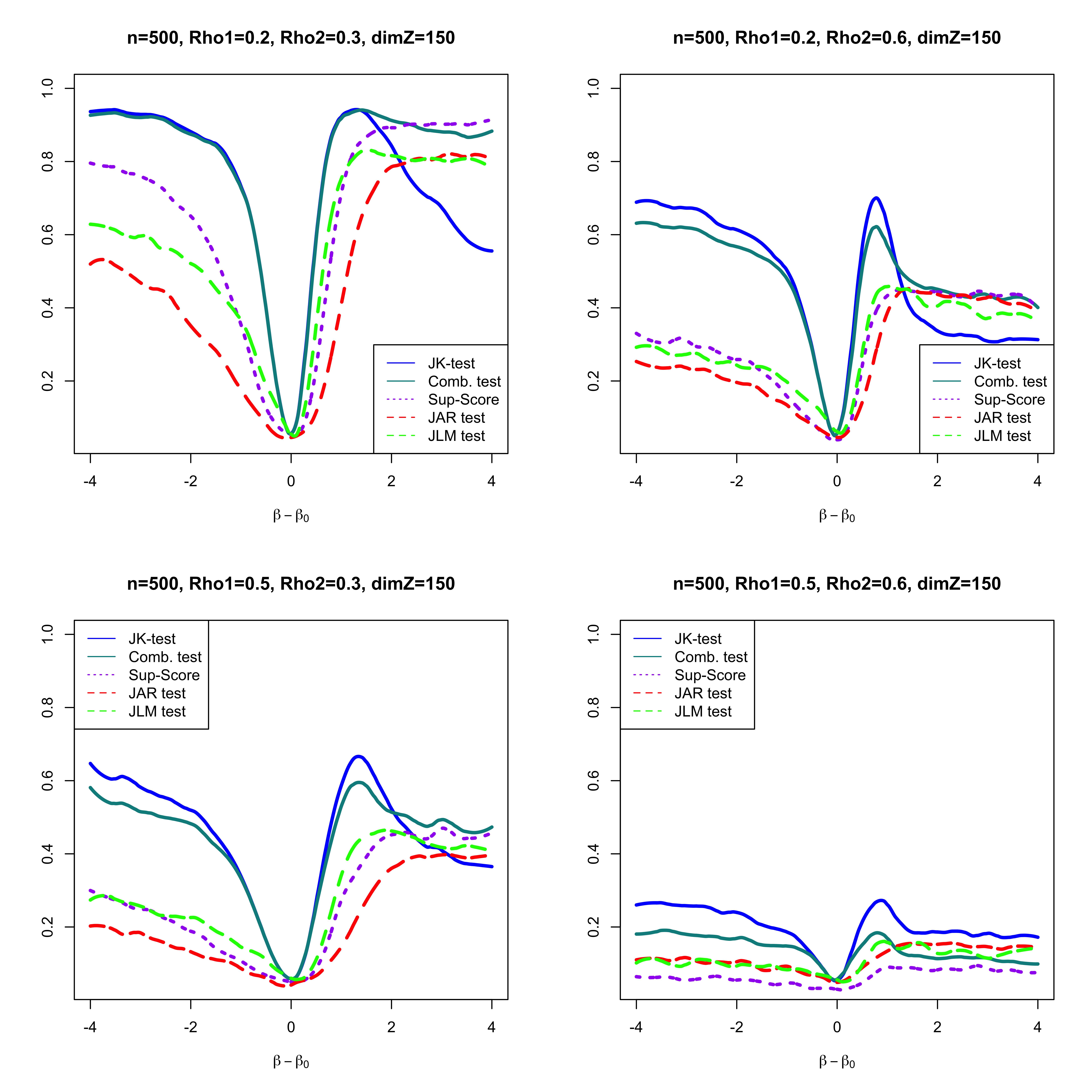}
    \captionsetup{font = small, width = 0.95\linewidth}
    \caption{Calibrated Local Power Curves under Intermediate identification Strength and 150 Instruments. Sample size is 500 and rejection power is calculated on a grid of 100 \((\beta_0 - \beta)\) points between -4 and 4. At each point the DGP is simulated 2000 times.}
    \label{fig:power75}
\end{figure}
In addition to examining the size properties of the given tests I also investigate the power properties of the tests in this setting.
\Cref{fig:power65,fig:power75} plot calibrated local power curves under an intermediate identification strength where the first stage is in a \(n^{-1/3}\) neighborhood of zero for \(n \in \{200, 500\}\), the number of instrument is plausible large, \(d_z = 150\), \(\varrho_1 \in \{0.2,0.5\}\) and \(\varrho_2 \in \{0.3,0.6\}\). The critical value of each test is set to simulated 95\textsuperscript{th} quantile of the distribution of the corresponding test-statistic under \(H_0\). I compare the calibrated local power curves of the \(\JK(\beta_0)\) test, the combination test with cutoff \(\tau_{0.75}\), the jackknife AR test, the Jackknife LM test, and the sup-score test.\footnote{For both the jackknife AR and jackknife LM tests, I use cross-fit estimates of test statistic variances proposed and shown to improve power by \citet{Mikusheva_Sun_2022}.}

In all setups considered, the jackknife K and combination tests have substantially stronger power than the jackknife AR, jackknife LM, and sup-score tests in local neighborhoods of the null as well as for negative values of \((\beta_0 - \beta)\). For values of \((\beta_0 - \beta)\) larger than 1.5, tests based on the jackknife K-statistic suffer from a loss of power as described in \Cref{sec:power-properties}. This power decline is largerly ameliorated by combining the jackknife K-statistic with the sup-score statistic and the thresholding test has good power properties over all alternatives considered. However, tests based on the jackknife AR or jackknife LM statistic can still provide better power than the threshholding test for very positive values of \((\beta_0 - \beta)\) in some setups.

\section[Single Endogeneous Variable]{Limiting Behavior of the Test Statistic}
\label{sec:single-endogeneous}

The limiting behavior of the test statistic is analyzed via a direct Gaussian approximation technique. In this section, I describe the approach and characterize the limiting behavior of the test statistic under local alternatives to \(H_0\). This direct approach has the advantage of not relying on any particular central limit theorem, which allows a great deal of flexibility in the choice of hat matrix \(H\) used to construct the first stage estimates. When there is only a single endogenous variable, \(d_x = 1\), the approach can be considerably simplified. The extension to \(d_x > 1\) requires a more involved argument which relies on stronger moment conditions.

Formally, I show that quantiles of the jackknife K-statistic can be approximated by analogous quantiles of the Gaussian statistic:
\begin{equation}
    \label{eq:gaussian-stat}
    \JK_G(\beta_0) \coloneqq \tilde\eps(\beta_0)\tilde\Pi(\tilde\Pi_\eps' \tilde\Pi_\eps)^{-1}\tilde\Pi'\tilde\eps(\beta_0);
\end{equation}
where, for each \(i \in [n]\), \((\tilde\eps_i(\beta_0),\tilde r_i')'\) are generated independently across indices following a Gaussian distribution with the same mean and covariance matrix as \((\eps_i(\beta_0),r_i')'\). Further, for \(\tilde\Pi_{\ell i} = \sum_{j\neq i} h_{ij} \tilde r_{\ell j}\) define \(\tilde\Pi_i \coloneqq (\tilde\Pi_{1i},\dots,\tilde\Pi_{d_xi})'\in\SR^{d_x}\), \(\tilde\Pi_{\eps i} \coloneqq (\E[\eps_i^2(\beta_0)])^{1/2}\tilde\Pi_i\),
\begin{align*}
    \tilde\eps(\beta_0) 
    &\coloneqq (\tilde\eps_1(\beta_0),\dots,\tilde\eps_n(\beta_0))' \in \SR^n,\\ 
    \tilde\Pi &\coloneqq (\tilde\Pi_1,\dots,\tilde\Pi_n)' \in \SR^{n\times d_x},\\ 
    \text{and }\;\tilde\Pi_\eps &\coloneqq (\tilde\Pi_{\eps 1},\dots,\tilde\Pi_{\eps n})' \in \SR^{n\times d_x}.
\end{align*}
As uncorrelated jointly Gaussian random variables are independent, under \(H_0\) the vector \(\tilde\eps(\beta_0)\) is mean zero and independent of \((\tilde\Pi, \tilde\Pi_\eps)\). Conditional on any realization of \((\tilde\Pi,\tilde\Pi_\eps)\) the \(\JK_G(\beta_0)\) statistic then follows a \(\chi^2_{d_x}\) distribution, and thus, its unconditional distribution is also \(\chi^2_{d_x}\).

To outline the approach, consider functions \(\varphi_\gamma(\cdot) \in C_b^3(\SR)\) that approximate the indicators \(\bm{1}\{\cdot \leq a\}\), where \(a \in \SR\) is arbitrary and as \(\gamma \to 0\) the quality of the approximation improves but the derivatives of \(\varphi_\gamma\) become larger in magnitutde. A primary goal is to show, for a sequence \(\gamma_n\) tending to zero, that
\begin{equation}
    \label{eq:multiple-goal}
    \E[\varphi_{\gamma_n}(\JK_I(\beta_0)) - \varphi_{\gamma_n}(\JK_G(\beta_0))] \to 0
\end{equation}
for a version of the test statistic, \(\JK_I(\beta_0)\), that could be constructed if \(\rho(\cdot)\) was known to the researcher.
In order to establish \eqref{eq:multiple-goal}, existing interpolation methods cannot be applied as they require the derivative of the test statistic with respect to individual obserations are bounded. In this setting, the derivative of the test statistics with respect to terms in the denominator matrix, \(\widehat\Pi_\eps'\widehat\Pi_\eps\), may be as large as the inverse of the minimum eigenvalue of the denominator matrix. When identification is sufficiently weak, the eigenvalues of the denominator matrix can be arbitrarily close to zero and thus inverse of its minimum eigenvalue may not even have finite moments. 

To get around this, I modify the argument by considering a ``data-dependent'' choice of approximation parameter \(\gamma_n\). This choice of approximation parameter inversely scales with the determinant of the denominator matrix and thus, since the determinant is the product of the eigenvalues, inversely scales with the minimum eigenvalue.\footnote{The determinant has the benefit of being a smooth function of elements of the matrix. This makes it nicer to work with than the minimum eigenvalue itself, which loses differentiability when the dimension of its eigenspace is larger than one.} Geometrically, this approach can be thought of as ``stretching out'' the function \(\varphi_{\gamma_n}(\cdot)\) in directions where the minimum eigenvalue of the denominator matrix is close to zero. Through the chain rule, this allows for control of the overall derivative of \(\varphi_{\gamma_n}(\JK_I(\beta_0))\) with respective to an
individual observation. 

\subsection{High Level Assumptions}

I now detail the assumptions needed for the argument, starting with the assumptions that are commmon to the cases with \(d_x = 1\) and \(d_x > 1\).
In what comes below \(c  > 1\) can be considered an arbitrary constant that may be updated upon each use but that does not depend on sample size \(n\). 

\begin{assumption}[Balanced Design]
    \label{assm:balanced-design}
    (i) Let \(s_{\ell, n}^{-2} = \max_{1 \leq i \leq n}\E[(\widehat\Pi_{\ell i}^I)^2]\) for each \(\ell \in [d_x]\); then, the minimum eigenvalue of the following matrix is bounded away from zero:
    \[
        c^{-1} \leq \lambda_{\min}\E\begin{pmatrix} \frac{s_{\ell,n}s_{k,n}}{n}\sum_{i=1}^n (\widehat\Pi_{\ell i}^I)(\widehat\Pi_{k i}^I) \end{pmatrix}_{\substack{1 \leq \ell \leq d_x \\ 1 \leq k \leq d_x}}
    \] 
    (ii) \(\max_i s_n \sum_{j\neq i} h_{ji}^2 \leq c\); and 
    (iii)  the following ratio is bounded away from zero: \(\frac{\sum_{k=2}^n\lambda_k^2(HH')}{\sum_{k=1}^n\lambda_k^2(HH')} \geq c^{-1}\) where \(\lambda_k(HH')\) represents the \(k^{\text{th}}\) largest eigenvalue of the matrix \(HH'\).
 \end{assumption}

\Cref{assm:balanced-design}(i) requires that the average second moment of the infeasible first-stage estimators be on the same order as the maximum first-stage estimator second moment. This is imposed mainly to rule out hat matrices that are all zeroes or nearly all zeros so that the effective number of observations used to test the null is growing with the sample size. \Cref{rem:verifying-balanced-design} provides further intuition for this assumption and below discusses how this assumption and \Cref{assm:balanced-design}(ii) may be verified in practice. \Cref{rem:balanced-design}  compares this balanced design assumption to that in the many-instruments literature \citep{crudu_mellace_sandor_2021,Mikusheva_Sun_2022,Matsushita_Otsu_2022,lim2022conditional}, noting that their balanced design neither implies nor is implied by the one in this paper.

\Cref{assm:balanced-design}(ii) requires that the maximum leverage of any observation be bounded. When \(H\) is symmetric, it is automatically satisfied.\footnote{To see this for \(d_x = 1\) notice that \(s_n^{-2} = \max_i\E[(\widehat\Pi_i^I)^2] \geq \max_i \Var(\widehat\Pi_i^I) = \max_i \sum_{j\neq i} h_{ij}^2 \Var(r_j)\), while \(\Var(r_j)\) will be assumed bounded from below by \(c^{-1}\). Inverting this chain of inequalities yields that \(s_n^2 \sum_{j\neq i} h_{ij}^2\) is bounded from above uniformly over all \(i \in [n]\).} \Cref{assm:balanced-design}(iii) can be viewed as a mild technical requirement that there be more than one ``effective'' instrument in the hat matrix.\footnote{In the case of a standard projection matrix (no deleted diagonal), \Cref{assm:balanced-design}(iii) would be satisfied whenever \(\rank(z(z'z)^{-1}z) > 1\), which occurs whenever there are at least two linearly independent instruments.} This condition can be easily verified in practice by examining the eigenvalues of \(HH'\).

Next, I make a high level assumption that the estimation error in \(\hat \rho(\cdot)\) can be treated as negligible in both the numerator and denominator. Later, I will verify this assumption for the particular choice of \(\hat \rho(\cdot)\) described in \Cref{sec:setup}. For each \(\ell \in [d_x]\) define \(\widehat\Pi^I_{\ell,i} \coloneqq \sum_{j\neq i} h_{ij}r_{\ell j}\), the version of the first stage estimates that could be constructed if \(\rho(\cdot)\) was known to the researcher. Using these, define the magnitude of estimation error in the numerator and denominator as
\begin{align*}
    \Delta_{N} &\coloneqq \max_{\ell \in [d_x]} \big|\frac{s_{\ell,n}}{\sqrt{n}}\sum_{i=1}^n \eps_i(\beta_0)\big(\widehat\Pi_{\ell,i} - \widehat\Pi_{\ell,i}^I\big) \big|\\ 
    \Delta_{D,\ell} &\coloneqq \max_{\ell \in [d_x]} \frac{s_{\ell,n}^2}{n}\sum_{i=1}^n \eps_i^2(\beta_0)\big(\widehat\Pi_{\ell,i}- \widehat\Pi_{\ell,i}\big)^2
\end{align*}

\begin{assumption}[Estimation Error]
    \label{assm:estimation-error}
    Estimation error in both the numerator and denominator of the test statistic can be treated as negligible, \((\Delta_N, \Delta_D)\to_p 0\). 
\end{assumption}
Showing that \((\Delta_N, \Delta_D)\to_p 0\) implies that estimation error can be treated as negligible for the test statistic \(\JK(\beta_0)\) requires some care. In a standard approach, this would straightforwardly follow from application of the continous mapping theorem. However, this approach requires that the scaled numerator and denominator each have well defined distributional limits, something that is not required by the direct Gaussian approximation. Instead, I establish and make use of anticoncentration bounds to show that the basic results of the continuous mapping theorem still apply even when the numerator and denominator do not have weak limits.

Finally, in addition to characterizing the limiting distribution of \(\JK(\beta_0)\) under \(H_0\), I also examine the behavior of \(\JK(\beta_0)\) in local neighborhoods of the null. These local neighborhoods are characterized by the local power index \(P\), defined below, as well as an additional regularity condition that restricts the size of \(\E[\eps_i(\beta_0)]\) relative to \(\E[r_{\ell i}]\).

\begin{assumption}[Local Identification]
    \label{assm:local-identification}
    (i) The local power index is bounded \(P \leq c\) for
    \[
        P = \sum_{\ell = 1}^{d_x} \E\bigg[\bigg(\frac{s_{\ell,n}}{\sqrt{n}}\sum_{i=1}^n \widehat\Pi_{\ell i}^I\Pi_i'(\beta - \beta_0)\bigg)^2\bigg]
    \]
    (ii) \(\E[(s_{n,\ell}\sum_{j\neq i} h_{ji} \eps_j(\beta_0))^2] \leq c\) for all \(\ell = 1,\dots,d_x\).
\end{assumption}
Under \(H_0\), \Cref{assm:local-identification} is trivially satisfied since \((\beta - \beta_0) = 0\) and \(\sum_{j\neq i} s_{\ell,n}^2h_{ji}^2 \leq c\) for each \(\ell \in [d_x]\). The local power index is the second moment of the scaled numerator of the test statistic is a measure of the association between the true first stage \(\Pi_i\) and the first-stage estimates \(\widehat\Pi_i\). In \Cref{sec:power-properties}, I discuss how the strength of this association is related to the power of the test under local alternatives. \Cref{assm:local-identification}(ii) can be roughly interpreted as requiring the local neighborhoods of \(H_0\) considered to be those in which the means of \((\eps_1(\beta_0),\dots,\eps_n(\beta_0))\) are of the same or lesser order than the means of \((r_1,\dots,r_n)\).

\subsection{Limiting Behavior of the Test Statistic}

Having detailed the assumptions needed for the argument, I now present results showing that, in local neighborhoods of \(H_0\), the distribution of the test statistic can be uniformly approximated by the distribution of the Gaussian statistic described in \eqref{eq:gaussian-stat}. As mentioned, the argument can be simplified to require lighter moment bounds when \(d_x = 1\). These moment bounds will be made on the model primitives, \(\eta_i \coloneqq (\beta - \beta_0)v_i + \eps_i = \eps_i(\beta_0) - \E[\eps_i(\beta_0)]\) and \(\zeta_{\ell i} \coloneqq v_{\ell i} - \rho_\ell(z_i)\eta_i = r_{\ell i} - \E[r_{\ell i}]\) for each \(\ell \in [d_x]\). 

\begin{theorem}[Single Endogenous Variable]
    \label{thm:feasible-local-power}
    Suppose that \Cref{assm:balanced-design,assm:estimation-error,assm:local-identification} hold. In addition suppose that (i) \(\{|\Pi_i| + |\rho(z_i)| + |(\beta - \beta_0)|\} \leq  c\) and (ii) for any \(r,s \in \SZ_+\) satisfying \(r + s \leq 6\) \(c^{-1} \leq \E[|\eta_i|^r|\zeta_i|^s] \leq c\). Then, for \(d_x = 1\),

    \[
        \sup_{a\in\SR} \big|\Pr(\JK(\beta_0) \leq a) - \Pr(\JK_G(\beta_0) \leq a)\big|\to 0.
    \]
    In particular, under \(H_0\), \(\JK(\beta_0) \rightsquigarrow \chi^2_1\).
\end{theorem}

In the case of \(d_x = 1\), I additionally show that the test based on the \(\JK_I(\beta_0)\) statistic is consistent whenever the power index diverges, \(P\to\infty\), and \Cref{assm:local-identification}(ii) holds. 
\begin{theorem}[Consistency]
    \label{thm:consistency}
    Suppose that Assumptions \ref{assm:balanced-design}, \ref{assm:estimation-error}, and~\ref{assm:local-identification}(ii) hold along with the additional conditions of \Cref{thm:feasible-local-power}. Then, if \(P \to \infty\) the test based on \(\JK(\beta_0)\) is consistent; i.e for any fixed \(a \in \SR\), \(\Pr(\JK(\beta_0) \leq a) \to 0\).
\end{theorem}
The dependence of the consistency result on \Cref{assm:local-identification}(ii) is a nontrivial restriction because of the bias taken on in constructing \(r_i\). In particular, against certain alternatives it is possible that \(\E[\widehat\Pi_i^I] = 0\) for all \(i \in [n]\) even under strong identification. This is an extreme case, however. In general, bias in \(\E[r_i]\) does not imply a violation of \Cref{assm:local-identification}(ii), which requires only that the \emph{size} of \(\E[r_i]\) be of a weakly greater order than that of \(\E[\eps_i(\beta_0)]\). Moreover, as discussed in \Cref{rem:consistency}, \Cref{thm:consistency} does not necessarily rule out consistency when \(P \to \infty\) but \Cref{assm:local-identification}(ii) fails.

Regardless, bias taken on in constructing \(r_i\) has consequences for the power of the test in finite samples.  The combination with the sup-score test described in \Cref{sec:setup} is an attempt to rectify this. While this attempt is not perfect, it appears to work well both in the empirical application to the data of \citet{gilchrist-glassberg-2016} and in the simulation study of \Cref{sec:simulations}.

The argument when \(d_x > 1\) is considerably more involved than the case where \(d_x = 1\) and requires strengthened moment condition on the variables \(\eta_i\) and \(\zeta_i\). Given a random variable \(X\) and \(\upsilon > 0\) the Orlicz (quasi-)norm is defined
\[
    \|X\|_{\psi_\upsilon} \coloneqq \inf\{t > 0: \E\exp(|X|^\upsilon/t^\upsilon) \leq 2\}
\]
Random variables with a finite Orlicz norm for some \(\upsilon \in (0,1] \cup\{2\}\) are termed \(\alpha\)-sub-exponential random variables \citep{Gotze-subexponential-polynomial,sambale2022notes}. This class of  encompasses a wide range of potential distributions including all bounded and sub-Gaussian random random variables (with \(\upsilon = 2\)), all sub-exponential random variables such as Poisson or noncentral \(\chi^2\) random variables (with \(\upsilon = 1\)), as well as random variables with ``fatter'' tails such as Weibull distributed random variables with shape parameter \(\upsilon \in (0,1]\). Thus, while imposing that the variables \(\eta_i\) and \(\zeta_i\) are \(\alpha\)-sub-exponential is notably stronger than the finite sixth moments required by \Cref{thm:feasible-local-power}, it may still be plausible in a wide range of empirical settings.

\begin{theorem}[Uniform Approximation]
    \label{thm:multiple-feasible-local-power}
    Suppose that \Cref{assm:balanced-design,assm:estimation-error,assm:local-identification} hold. In addition suppose that (i) \(c^{-1} \leq \lambda_{\min}(\E[\eta_i\eta_i']) \leq \lambda_{\max}(\E[\eta_i\eta_i']) \leq c\) and (ii) for some \(\nu \in (0,1]\cup\{2\}\) both \(\|\eta_i\|_{\psi_\nu} \leq c\) and \(\|\zeta_{\ell i}\|_{\psi_\nu} \leq c\). Then,
    \[
        \sup_{a\in\SR}|\Pr(\JK(\beta_0) \leq a) - \Pr(\JK_G(\beta_0) \leq a)|\to 0
    \]
    In particular, under \(H_0\), \(\JK(\beta_0)\rightsquigarrow \chi^2_{d_x}\).
\end{theorem}

While \(\JK_G(\beta_0)\) does not have a fixed distribution, examining its behavior is still tractable and allows for insight into the power properties of the jackknife K-test. In \Cref{sec:power-properties}, I use this result to analyze the local power of the proposed test. To improve power against certain alternatives, I suggest a combination with the sup-score statistic of \citet{BCCH-2012}.

\subsection{Controlling Estimation Error}
\label{subsec:estimation-error-single}

The final step is to show that estimation error in \(\hat \rho\) can be treated as negligible, that is verify \Cref{assm:estimation-error}. Establishing that this holds even when identification is weak makes use of the fact that estimation error in \(\hat \rho(\cdot)\) enters the test statistic only through its interaction with the implied error \(\eps_i(\beta_0)\). When identification is weak, the implied error is nearly conditional mean independent of the instruments and thus the product, \(\eps_i(\beta_0)\hat\rho(\cdot)\), is insensitive to small pertubations in \(\hat\rho(\cdot)\).\footnote{In the langauge of \citet{CCDDHNR-2018}, this is termed ``Neyman Orthogonality.'' Under strong identification, Neyman orthogonality allows for the use of many machine learning techniques to estimate the first stage. This (approximate) orthogonality also holds under structural parameter sequences that are local to the null.}

In theory, this orthogonality could be combined with cross-fitting to allow for the use of other machine learning methods to estimate \(\hat\rho(\cdot)\), as in \citet{CCDDHNR-2018}. This possibility is explored briefly in \Cref{sec:cross-fit}. The use of other machine learning methods may be useful if sparsity of \(\rho(\cdot)\) is not a plausible assumption in the researcher's empirical setting as estimators such as random forests and neural networks can be consistent in high dimensional settings under alternate assumptions.\footnote{\citet{chi2022asymptotic} provide results for random forests under a ``sufficient impurity decrease'' assumption while \citet{farrellNeuralNetwork2021} and \citet{Schmidt-Hieber-2020-neural-networks} provide results for neural networks under a generalized heirarchical model assumption.} However, I focus on the \(\ell_1\)-penalized procedure proposed in \Cref{sec:setup} both for expositional simplicity and because the sparsity assumption required for the consistency of this procedure mirrors that needed for the popular post-Lasso first-stage estimator. 

\begin{assumption}[Estimation Error]
    \label{assm:estimation-error2} 
    For each \(\ell \in [d_x]\)
    (i) there is a fixed constant \(\upsilon \in (0,1]\cup\{2\}\) such that \(\|\eta_i\|_{\psi_\upsilon} \leq c\);
    (ii) the basis terms \(b(z_i)\) are bounded, \(\|b(z_i)\|_\infty \leq  c\);
    (iii) the approximation error satisfies \((\E_n[\xi_{\ell i}^2])^{1/2} = o(n^{-1/2})\);
    (iv) the researcher has access to an estimator \(\widehat\phi\) of \(\phi\) satisfying \(\log(d_b n)^{2/(\upsilon\wedge 1)}\|\widehat\phi_\ell- \phi_\ell\|_1 \to_p 0\); 
    (v) the following moment bounds hold 
    \begin{enumerate}[(va)]
        \item \(\max_{1 \leq l \leq d_b} \big|\E\big[\frac{s_n}{\sqrt{n}}\sum_{i=1}^n \sum_{j\neq i}  h_{ij}\eps_i(\beta_0)b_l(z_j)\eps_j(\beta_0)\big]\big| \leq c \)
        \item  \(\max_{\substack{1 \leq i \leq n \\ 1 \leq l \leq d_b}} |\E[s_n\sum_{j\neq i} h_{ij}b_l(z_j)\eps_j(\beta_0)]| \leq c.\)
    \end{enumerate}
\end{assumption}

\Cref{assm:estimation-error}(i) ensures that \(\eta_i\) has finite exponential moments  (i.e has a well defined moment generating function), which is required to allow the number of basis terms used to approximate \(\rho(\cdot)\) to grow at a near exponential rate compared to the sample size (\(d_b \gg n\)). When using fewer basis terms this assumption may be relaxed. \Cref{assm:estimation-error}(ii) is a standard condition in \(\ell_1\)-penalized estimation. At the cost of extra notation, it can be relaxed and the sup-norm of the basis terms can be allowed to grow slowly with the sample size to accommodate bases such as normalized b-splines or wavelets. \Cref{assm:estimation-error}(iii) is a bound on the rate of decay of
the approximation error, similar to the approximate sparsity condition of \citet{BCCH-2012}.

\Cref{assm:estimation-error}(iv) is a high-level condition on the rate of consistency of the parameter estimate \(\hat\phi\) in the \(\ell_1\) norm. This can be verified under approximate sparsity for both the LASSO estimator in \eqref{eq:gamma-hat-equation} or post-LASSO procedures based on refitting an unpenalized version of \eqref{eq:gamma-hat-equation} only using the basis terms selected in a LASSO first stage. See \citet{BCCH-2012,vanDerGreer2016,Tan-2017}, and \citet{CS-2021} for references under various choices of penalty parameter. This condition allows for the dimensionality of the basis terms, \(d_b\), to grow near exponentially as a function of the sample size. Following the analysis of \citet{Tan-2017}  this may be satisfied as long as \(s^2\log^{2(\upsilon + 1)/\upsilon}(d_b n)/n\to 0\), where the sparsity index \(s\) denotes the number of nonzero elements of \(\phi\). 

\Cref{assm:estimation-error}(v) is a strengthening of the definition of local neighborhoods and can be interpreted similarly to \Cref{assm:local-identification}(ii). Since the moment conditions in \Cref{assm:estimation-error}(va,vb) hold with \(b_\ell(z_j)\eps_j(\beta_0)\) replaced with \(r_j\),  \Cref{assm:estimation-error}(v) can be interpreted as requiring that \(|\E[\sum_{j\neq i} h_{ij}b_\ell(z_j)\eps_j(\beta_0)]|\) is on the same order as \(|\E[\sum_{j\neq i} h_{ij}r_j]|\) for all \(i =1,\dots,n\) and \(\ell = 1,\dots,d_b\). As with \Cref{assm:local-identification}(ii), it is trivially satisfied under \(H_0\) or, using the fact that \(\max_i \sum_{j\neq i} s_n^2 h_{ij}^2 \leq c\), whenever \(\E[\eps_i(\beta_0)] = \Pi_i(\beta - \beta_0)\) is in a \(\sqrt{n}\)-neighborhood of zero. 

\begin{theorem}[Estimation Error]
    \label{thm:estimation-error}
    Suppose that \Cref{assm:balanced-design,assm:estimation-error2} hold. Then \((\Delta_N,\Delta_D) \to_p 0\).
\end{theorem}

\begin{remark}
    \label{rem:verifying-balanced-design}
    When \(d_x = 1\), a sufficient condition for \Cref{assm:balanced-design}(i) is that there is some fixed quantile \(q \in (0,100)\) such that \((cq)^{-1} \leq \frac{q^\text{\tiny th}\text{-quantile of }\E[(\widehat\Pi_i^I)^2]}{\max_i \E[(\widehat\Pi_i^I)^2]}\). In practice this can be verified by checking that there is some quantile \(q\) such that both
    \begin{equation}
        \label{eq:verifying-balanced-design}
        \frac{q^{\text{\tiny th}}\text{-quantile of }\sum_{j\neq i} h_{ij}^2}{\max_i \sum_{j\neq i} h_{ij}^2} \andbox \frac{q^{\text{\tiny th}}\text{-quantile of }(\sum_{j\neq i} h_{ij} \hat r_j)^2}{\max_i (\sum_{j\neq i} h_{ij} \hat r_j)^2 }     
    \end{equation}
    are bounded away from zero. Similarly, \Cref{assm:balanced-design}(ii) can be verified by checking that \(\max_i \sum_{j\neq i} h_{ji}^2 / \max_i \sum_{j\neq i} h_{ij}^2\) is bounded from above. The scaling factor \(s_n\) captures both the ``size'' of the elements in the hat matrix \(H\) and the strength of identification. If elements of the hat matrix are on the same order as a constant, one would expect \(s_n = O(n^{-1})\) under strong identification (\(\Pi_i\propto 1\)) while \(s_n = O(n^{-1/2})\) under weak identification (\(\Pi_i\lesssim n^{-1/2}\)).
\end{remark}

\begin{remark}[]
    \label{rem:balanced-design}
    The balanced-design condition in \Cref{assm:balanced-design}(i) is neither weaker nor stronger than that in the many instruments literature \citep{crudu_mellace_sandor_2021,Mikusheva_Sun_2022,Matsushita_Otsu_2022,lim2022conditional}. These papers require that the projection matrix \(P = \vz(\vz'\vz)^{-1}\vz'\) satisfies \([P]_{ii} \leq \delta \leq 1\) for some value \(\delta\) and all \(i \in [n]\). Since \(P\) is idempotent, \([P]_{ii} = 1\) for some \(i \in [n]\) implies that \([P]_{ij} = 0\) for \(j \neq i\).\footnote{Since \(P\) is idempotent, \([P]_{ii} = \sum_{i=1}^n [P]_{ij}^2 = [P]_{ii}^2 + \sum_{j\neq i} [P]_{ij}^2\).} This would not violate \Cref{assm:balanced-design} if one were to take \(H\) such that \(h_{ij} = [P]_{ij}\bm{1}\{i \neq j\}\); \(\E[(\widehat\Pi_i^I)^2] = 0\) is allowed for a constant share of \(i \in [n]\). Conversely, if the instruments are fixed or grow
    slowly, it is possible to construct a projection matrix \(P\)
    of rank \(d_z\) where \([P]_{ii}\) is bounded away from one for all \(i\in [n]\), but ``most'' of the rows are zero. I view this as a theoretical edge case, however, that seems unlikely to result from real data.
\end{remark}

\begin{remark}
    \label{rem:infeasible-error}
    The modified Lindeberg interpolation method allows me to give a nearly uniform explicit bound on the Gaussian approximation error in the case where \(d_x = 1\). In particular, I show that for any fixed value \(\Delta > 0\);
    \begin{align*}
        \sup_{a \leq \Delta} \big| \Pr(\JK_I(\beta_0) \leq a) - \Pr(\JK_G(\beta_0) \leq a)\big| \leq C n^{-2/13}
    \end{align*}
    where \(C\) is a constant that depends only on \((c,\Delta)\) and \(\JK_I(\beta_0)\) is the version of the test statistic that could be constructed if \(\rho(\cdot)\) was known to the researcher. While it does not account for estimation error in \(\hat\rho(\cdot)\), obtaining an explicit bound reflects an improvement over the original analyses of K-statistics in \citet{Kleibergen_2002,Kleibergen-2005}. These original studies rely on continuous mapping theorems to obtain the limiting chi-squared distributions, making the rate of decay of the approximation error difficult to analyze. 
\end{remark}

\begin{remark}[]
    \label{rem:jackknife}
    The interpolation argument relies on the fact that the first and second moments of \((\tilde\eps_i(\beta_0),\tilde r_i)\) are the same as the first and second moments of \((\eps_i(\beta_0),r_i)\) to match the first and moments of one-step deviations with Gaussian analogs. Without the jackknife form of \(\widehat\Pi_i^I\), these one step deviations would additionally contain cross-terms such as \(h_{ii}r_i\eps_i(\beta_0)\), for \(i \in [n]\). While the first moment of this cross-term is matched by the first moment of the Gaussian analog, \(h_{ii}\tilde\eps_i(\beta_0)\tilde r_i\), the second moment is not matched. This is manageable, however, so long as the terms \(h_{ii}\) are ``small.'' An example of when the \(h_{ii}\) terms are small is when \(H\) is taken to be the OLS projection matrix, \(H = \vz(\vz'\vz)^{-1}\vz\), and the number of instruments satisfies \(d_z^3/n \to 0\). See \Cref{sec:main-proofs,sec:exogenous-controls} for details.
\end{remark}

\begin{remark}[]
    \label{rem:consistency}
    \Cref{thm:consistency} does not necessarily rule out that a test based on \(\JK_I(\beta_0)\) is consistent when \(P\to\infty\) but \Cref{assm:local-identification}(ii) fails to hold. There is reason to believe that this issue can be overcome, \citet{AMS-2004} show that the K-statistic of \citet{Kleibergen_2002} is consistent against fixed alternatives under strong identification. However, a full consistency result is not pursued here and left to future work. 
\end{remark}

\begin{remark}[]
\label{rem:approximate-sparsity}
    Approximate sparsity of \(\rho(z_i)\) may be a particularly palatable
    assumption in cases where the instrument set is generated by functions of a smaller initial set of instruments, as in \citet{AK-1991,Paravisini-2014,gilchrist-glassberg-2016}, and \citet{derenoncourt-2022}. In these cases, the dimensionality of the basis, \(d_b\), may not need to be much larger than the dimensionality of the instruments, \(d_z\), to provide a good approximation of \(\rho(z_i)\). 
    Interestingly, if taking \(b(z_i) = z_i\) provides a good approximation of \(\rho(z_i)\), then consistency of \(\hat\phi\) in \(\ell_1\)-norm is achievable under \(d_z^2/n \to 0\) even if \(\phi\) is fully dense. This requirement is substantially weaker than the \(d_z^3/n\to 0\) requirement of the standard K-statistic.
\end{remark}

\section[Power Propreties and Improvements]{Local Power and Improvements}
\label{sec:power-properties}

Using the characterization of the limiting behavior of the test statistic derived in \Cref{sec:single-endogeneous}, I analyze the local power properties of the test. Unfortunately, against certain alternatives the test statistic may have trivial power, a deficiency shared with the K-statistics of \citet{Kleibergen_2002,Kleibergen-2005}. Finally, I describe how the the combination with the sup-score statistic, described in \Cref{sec:setup}, attempts to remedy this and formally establish its validity. 

\subsection{Local Power Properties}
\label{subsec:power-analysis}

For exposition, I focus on the case where \(d_x = 1\).
In local neighborhoods of \(H_0\), as defined in \Cref{assm:local-identification,assm:estimation-error}, \Cref{thm:feasible-local-power} implies that the limiting behavior of \(\JK(\beta_0)\) can be analyzed by examining the behavior of the Gaussian analog statistic, \(\JK_G(\beta_0)\). Conditional on the vector \(\tilde r = (\tilde r_1,\dots,\tilde r_n)\), the distribution of \(\JK_G(\beta_0)\) is nearly non-central \(\chi^2_1\) with noncentrality parameter \(\mu(\tilde r)\), \(\JK_G(\beta_0) |\tilde r \sim A^2(\tilde r) \cdot \chi^2_1(\mu(\tilde r))\):
\begin{align*}
    A(\tilde r) &= \frac{\sum_{i=1}^n \Var(\eta_i)\tilde\Pi_i^2}{\sum_{i=1}^n \{\Pi_i^2(\beta - \beta_0)^2 + \Var(\eta_i)\}\tilde\Pi_i^2 }\\ 
    \mu^2(\tilde r) &=  (\beta - \beta_0)^2 \frac{\big(\sum_{i=1}^n \Pi_i\tilde\Pi_i\big)^2}{\sum_{i=1}^n \{\Pi_i^2(\beta - \beta_0)^2 + \Var(\eta_i)\}\tilde\Pi_i^2}.
\end{align*}
Under local alternatives, the terms \(\Pi_i^2(\beta - \beta_0)^2 \to 0\) so that \(A(\tilde r) \to 1\) and \(|\mu^2(\tilde r) - \mu_\infty^2(\tilde r)|\to 0\), where
\begin{equation}
    \label{eq:limiting-local-power}
    \mu^2_\infty(\tilde r) = (\beta - \beta_0)^2\frac{\big(\sum_{i=1}^n \Pi_i \tilde \Pi_i)^2}{\sum_{i=1}^n \Var(\eta_i)\tilde\Pi_i^2}.
\end{equation}
The numerator of \(\mu^2_\infty(\tilde r)\) suggests that power is maximized when the first-stage estimate \(\tilde \Pi_i\) is close to the true first stage value \(\Pi_i\). Indeed, when errors are homoskedastic \(\mu_\infty^2(\tilde r)\) is maximized by setting \(\tilde \Pi_i = \Pi_i\) reflecting the classical result of \citet{Chamberlain-1987}. The denominator of \(\mu^2_\infty(\tilde r)\) suggests that having first-stage estimates \(\tilde \Pi_i\) with low second moments may increase power. This guides the recommendation for the use of \(\ell_2\)-regularization in constructing the hat matrix, \(H\).

Unfortunately, estimators of \(\Pi_i\) based on \(r_i = x_i - \rho(z_i)\eps_i(\beta_0)\) may not be close to \(\Pi_i\) under \(H_1\). This is because the mean of \(r_i\) will in general differ from \(\Pi_i\)
\[
    \E[r_i] = \Pi_i - \rho(z_i)\Pi_i(\beta - \beta_0)
\]
This deficiency is inherited from the similarity of the \(\JK(\beta_0)\) statistic to the K-statistic. 
As pointed out by \citet{Moreira-2001}, this need not be an issue as long as there is a fixed constant \(C\neq 0\) such that \(\E[r_i] = C\Pi_i\) for all \(i \in [n]\). However, in general, this will introduce bias into the first-stage estimates \(\widehat\Pi_i\) under \(H_1\). The power implications of this bias are particularly pronounced when \(\rho(z_i)\) is a constant \((\beta - \beta_0) = 1/\rho(z_i)\). In this case, \(\E[r_i]\), and thus \(\E[\tilde\Pi_i]\), will equal zero for each \(i\in[n]\), and the \(\JK(\beta_0)\) statistic will select a direction completely at random to direct power into.\footnote{\citet{AMS-2006} and \citet{Andrews-2016} point out this deficiency in the context of the K-statistics of \citet{Kleibergen_2002,Kleibergen-2005}.}

\subsection{A Simple Combination Test}
\label{subsec:combination-test}

To combat this loss of power for tests based on the K-statistic,  a common strategy is to combine the K-statistic with the Anderson-Rubin statistic based on a conditioning statistic. While the Anderson-Rubin statistic does not have optimal power on its own, it has the benefit of directing power equally in all directions avoiding the pitfalls of the K-statistic which lacks power in certain directions. Prominent examples of such tests are the conditional likelihood ratio test of \citet{Moreira_2003}, the GMM-M test of \citet{Kleibergen-2005}, and the minimax regret tests of \citet{Andrews-2016}. These combinations make use of the fact that the Anderson-Rubin statistic is asymptotically independent of both the K-statistic and the conditioning statistic.

Unfortunately, the asymptotic validity of these tests under heteroskedasticity is based on the assumption that \(d_z^3/n\to 0\), which may not reasonably describe many settings discussed above. Instead, to improve the power of tests based on the jackknife K-statistic, I consider a simple combination with the sup-score statistic of \citet{BCCH-2012}. The test based on the sup-score statistic \eqref{eq:sup-score-statistic} is similar in spirit to the Anderson-Rubin test but controls size even when \(d_z\) grows near exponentially as a function of the sample size.   
\begin{equation}
    \label{eq:sup-score-statistic}
    S(\beta_0) \coloneqq \sup_{1 \leq \ell \leq d_z} \bigg|\frac{\sum_{i=1}^n \eps_i(\beta_0)z_{\ell i}}{(\sum_{i=1}^n z_{\ell i}^2)^{1/2}}\bigg|
\end{equation}
A size \(\theta \in (0,1)\) test based on the sup-score statistic rejects whenever \(S(\beta_0) > c^S_{1-\theta}\) where, for \(e_1,\dots,e_n\) iid standard normal and generated independently of the data, \(c^S_{1-\theta}\) is the simulated multiplier bootstrap critical value:
\[
    c^S_{1-\theta} \coloneqq (1-\theta)\text{ quantile of } \sup_{1 \leq \ell \leq d_z} \bigg|\frac{\sum_{i=1}^n e_i \eps_i(\beta_0)z_{\ell i}}{(\sum_{i=1}^n z_{\ell i}^2)^{1/2}}\bigg|\text{ conditional on }\{(y_i,x_i,z_i)\}_{i=1}^n.
\]
As with the Anderson-Rubin test, tests based on the sup-score statistic may have suboptimal power properties in overidentified models as it does not incorporate first-stage information. However, the sup-score statistic does retain the benefit of directing power evenly in all directions, avoiding pitfalls of tests based on \(\JK(\beta_0)\) against certain alternatives.

The combination test will be based on an attempt to detect whether the alternative \(\beta\) is such that \(\E[\widehat\Pi_{\ell,i}^I] = 0\) for all \(i = 1,\dots, n\) and some \(\ell \in [d_x]\). When this is the case, the researcher would prefer to test the null hypothesis using the sup-score statistic. As mentioned in \Cref{sec:setup}, detection of whether \(\E[\widehat\Pi_{\ell,i}^I] = 0\) for some \(i \in [n]\) is based on the conditioning statistic:
\begin{equation}
    \label{eq:conditioning-statistic2}
    \begin{split}
        C = \inf_{\ell \in [d_x]}\sup_{i \in [n]} \bigg|\frac{\sum_{j\neq i} h_{ij}\hat r_j}{(\sum_{j\neq i} h_{ij}^2)^{1/2}}\bigg|.
    \end{split}
\end{equation}
Under the assumption that \(\E[\widehat\Pi_i^I] = 0\) for all \(i \in [n]\), quantiles of the conditioning statistic can be simulated analogously to the sup-score critical value.
For a new set of \(e_1,\dots,e_n\) iid standard normal and generated independently of the data, and for any \(\theta\in (0,1)\), define the conditional quantile 
\begin{equation}
    \label{eq:pre-test-quantile}
    c^C_{1-\theta} \coloneqq (1-\theta)\text{ quantile of }\inf_{\ell \in [d_x]}\sup_{i \in [n]} \bigg|\frac{\sum_{j\neq i}e_i h_{ij}\hat r_j}{(\sum_{j\neq i} h_{ij}^2)^{1/2}}\bigg| \text{ conditional on }\{(y_i,x_i,z_i)\}_{i=1}^n
\end{equation}

Depending on the value of the conditioning statistic, the thresholding test decides whether the test based on \(\JK(\beta_0)\) or one based on \(S(\beta_0)\) should be run.
\begin{equation}
    \label{eq:thresholding-statistic}
    \begin{split}
        T(\beta_0; \tau) = 
        \begin{cases}
            \bm{1}\{\JK(\beta_0) > \chi^2_{1;1-\alpha}\} &\text{if } C \geq \tau  \\
            \bm{1}\{S(\beta_0) > \;c_{1-\alpha}^S\} &\text{if }C < \tau\\ 
        \end{cases}
    \end{split}
\end{equation}
for some cutoff \(\tau\), which I take in the simulation study and empirical exercise to be the 75\textsuperscript{th} quantile of the distribution of \(C\) under the assumption that \(\E[\widehat\Pi_i^I] = 0, \forall i \in [n]\).

To show that the thresholding test controls size, I compare the rejection probability to that of a Gaussian analog.  In addition to \(\JK_G(\beta_0)\), defined in \Cref{sec:single-endogeneous}, define the Gaussian analogs of \(\S(\beta_0)\) and the conditioning statistic \(C\):
\begin{align*}
    S_G(\beta_0) &\coloneqq \sup_{\ell \in [d_z]} \bigg|\frac{\sum_{i=1}^n \tilde\eps_i(\beta_0)z_{\ell i}}{(\sum_{i=1}^n z_{\ell i}^2)^{1/2}} \bigg|
               &
    C_G &\coloneqq \inf_{\ell\in[d_x]}\sup_{ i \in [n]} \bigg|\frac{\sum_{j\neq i} h_{ij} \tilde r_j}{(\sum_{j\neq i} h_{ij}^2)^{1/2}}\bigg|
\end{align*}
where, as in \Cref{sec:single-endogeneous}, \((\tilde\eps_i(\beta_0),\tilde r_i)'\) are generated independently of each other and the data following a Gaussian distribution with the same mean and covariance matrix as \((\eps_i(\beta_0),r_i)\). Since \(\Cov(\tilde\eps_i(\beta_0),\tilde r_i) = 0\) under \(H_0\), the statistics \(C_G\) and \(S_G(\beta_0)\) are independent under the null. Similarly, the null distribution of \(\JK_G(\beta_0)\) is the same conditional on any realization of \((\tilde r_1,\dots,r_n)\); it is also independent of \(C_G\) under the null. The Gaussian analog thresholding test decides whether the researcher should run a test based on \(S_G(\beta_0)\) or \(\JK_G(\beta_0)\) depending on the value of \(C_G\) as in \eqref{eq:thresholding-statistic}.

The test statistics \(\JK_G(\beta_0)\) and \(S_G(\beta_0)\) are only marginally independent of the conditioning statistic \(C_G\) under the null. This limits the ways in which the test statistics can be combined using the conditioning statistic while still controlling size. This marginal independence in the Gaussian limit is enough, however, for the asymptotic validity of the thresholding test, \(T(\beta_0;\tau)\). To establish that the behavior of the pairs \((C,\JK(\beta_0))\) and \((C,S(\beta_0))\) can be approximated by the behavior of \((C_G,\JK_G(\beta_0))\) and \((C_G,S_G(\beta_0))\), respectively, I rely on the following assumption: 
\begin{assumption}[Combination Conditions]
    \label{assm:comb-conditions}
    Assume that for each \(\ell \in [d_x]\) (i) there is a \(\upsilon \in (0,1]\cup\{2\}\) such that \(\|\zeta_{\ell i}\|_{\psi_\upsilon} \leq c\); (ii) \(\max_{i,j} |\frac{h_{ij}}{(\E_n[h_{ij}^2])^{1/2}}| + \max_{l,i}| \frac{z_{li}}{(\E_n[z_{li}^2])^{1/2}} | \leq  c\); and (iii) \(\log^{7 + 4/\upsilon}(d_zn)/n\to 0\).
\end{assumption}

\Cref{assm:comb-conditions}(i) is a strengthening of the moment bound on \(r_i\) similar to that of \Cref{assm:estimation-error}(i). As discussed, while more restrictive than the condition in \Cref{thm:feasible-local-power}, this still allows for a wide range of potential distributions for \(r_i\). \Cref{assm:comb-conditions}(ii) requires that the number of observations used to test \(\E[\widehat\Pi_i] = 0\) via the conditioning statistic and the number of observations used to test the null hypothesis via the sup-score test are both growing with the sample size. It can be verified by looking at the hat matrix \(H\) and the instruments. Finally, \Cref{assm:comb-conditions}(iii) is a light requirement on the number of instruments \(d_z\) needed for the validity of the sup-score test. It allows the number of instruments to grow near exponentially as a function of sample size.

\begin{theorem}[]
    \label{thm:joint-combination}
    Suppose the conditions of \Cref{thm:multiple-feasible-local-power} and \Cref{assm:comb-conditions} hold. Then,
    \begin{enumerate}
        \item the test based on \(T(\beta_0;\tau)\) has asymptotic size \(\alpha\) for any choice of cutoff \(\tau\), and 
        \item if \(\E[\widehat\Pi_i^I] = 0\) for all \(i \in [n]\), there exist sequences \(\delta_n \searrow 0\) and \(\beta_n \searrow 0\) such that with probability at least \(1 - \delta_n\),
        \[
            \sup_{\theta\in (0,1)}\big|\Pr_e(C \leq c^C_{1-\theta}) - (1-\theta)\big| \leq \beta_n,
        \] 
        where \(\Pr_e(\cdot)\) denotes the probability with respect to only the variables \(e_1,\dots,e_n\).
    \end{enumerate}
 \end{theorem}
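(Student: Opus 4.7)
I would attack both parts via Gaussian approximation, exploiting the fact that under $H_0$ we have $\Cov(\eps_i(\beta_0), r_i) = 0$, so the Gaussian analogs $(\tilde\eps_i(\beta_0))_i$ and $(\tilde r_i)_i$ are jointly Gaussian and independent. Consequently $C_G$ depends only on $(\tilde r_j)_j$ while $S_G(\beta_0)$ depends only on $(\tilde\eps_j)_j$, giving $C_G \perp S_G(\beta_0)$ under $H_0$; and conditional on $(\tilde r_j)_j$ the Gaussian statistic $\JK_G(\beta_0)$ is $\chi^2_1$. Therefore the Gaussian thresholding test has exact size
\[
    \Pr(C_G \geq \tau)\cdot \alpha \;+\; \Pr(C_G < \tau)\cdot \Pr(S_G(\beta_0) > c^S_{1-\alpha}) \;=\; \alpha
\]
for any $\tau$, provided $c^S_{1-\alpha}$ is a consistent estimate of the $(1-\alpha)$-quantile of $S_G(\beta_0)$. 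So Part 1 reduces to establishing (a) the joint Gaussian approximations $(C, \JK(\beta_0)) \rightsquigarrow (C_G, \JK_G(\beta_0))$ and $(C, S(\beta_0)) \rightsquigarrow (C_G, S_G(\beta_0))$ under $H_0$, and (b) validity of the multiplier bootstrap quantile $c^S_{1-\alpha}$.

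\textbf{Executing the steps.} For the second joint approximation in (a), I would apply the high-dimensional CLT for maxima of \citet{CCK-2018-HDCLT} to the $(n + d_z)$-dimensional centered sum whose coordinatewise maxima produce $(C, S(\beta_0))$ after plugging in $\hat r_j$; Assumption \ref{assm:comb-conditions}(i)--(iii) supplies the required Orlicz-norm and log-factor conditions, and \Cref{lemma:estimation-error-ell1} controls the effect of replacing $\rho$ by $\hat\rho$. The first joint approximation extends the interpolation argument underlying \Cref{thm:feasible-local-power}: I would replace $(\eps_i(\beta_0), r_i)$ by $(\tilde\eps_i(\beta_0), \tilde r_i)$ one index at a time and bound the one-step deviation of $\varphi(\JK_I^a(\beta_0))\psi(C)$ for smooth truncations $\varphi, \psi \in C_b^3$, reusing the decomposition trick that avoids differentiating the $\JK$ denominator and noting that $C$ depends linearly and boundedly (after self-normalization) on each $r_j$. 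Multiplier-bootstrap validity for (b) follows from \citet{CCK-2018-HDCLT,belloni2018highdimensional}. Part 2 is a direct application of these same bootstrap-consistency results: when $\E[\widehat\Pi_i^I]=0$, the vector $\bigl(\sum_{j\neq i}h_{ij}\zeta_j / (\sum_{j\neq i}h_{ij}^2)^{1/2}\bigr)_{i\in[n]}$ is centered and lies directly in the CCK framework by \Cref{assm:comb-conditions}; replacing $r_j$ with $\hat r_j$ in the bootstrap data is negligible by \Cref{lemma:estimation-error-ell1} and a union bound over the $n$ coordinates, so on an event of data-probability at least $1-\delta_n$ the bootstrap quantile $c^C_{1-\theta}$ matches the $(1-\theta)$-quantile of the Gaussian law of $C_G$ uniformly in $\theta$, and $C$ itself matches this Gaussian law to the same order.

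\textbf{Hardest step.} The main obstacle is the joint approximation $(C, \JK(\beta_0)) \rightsquigarrow (C_G, \JK_G(\beta_0))$, because the interpolation proof of \Cref{thm:feasible-local-power} is already delicate: $\JK$ has an unbounded derivative in its denominator and is handled through the decomposed statistic $\JK^a$ together with an anticoncentration bound on $\tilde D$. Adding the high-dimensional functional $C$ inside the same interpolation forces control of cross-derivatives of $\varphi(\JK_I^a)\psi(C)$ and requires simultaneous anticoncentration of $\tilde D$ and $C_G$ so that the $\psi$-smoothing can be undone. A cleaner alternative, which I would pursue if the direct joint interpolation becomes cumbersome, is to couple $C$ to $C_G$ first via \citet{CCK-2018-HDCLT} and then invoke \Cref{thm:feasible-local-power} conditionally on the coupled realization of $(\tilde r_j)_j$, using that the bound in \eqref{eq:interpolation-decomp} does not depend on the particular realization of the $\tilde r$-coordinates.
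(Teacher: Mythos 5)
Your proposal matches the paper's proof strategy closely: reduce Part 1 to two joint Gaussian approximations for \((\JK(\beta_0),C)\) and \((S(\beta_0),C)\), exploit the marginal independence of \((\JK_G(\beta_0), C_G)\) and \((S_G(\beta_0), C_G)\) under \(H_0\) to obtain exact Gaussian size for any cutoff \(\tau\), verify the \((S,C)\) approximation via the hyperrectangle result of \citet{CCK-2018-HDCLT} and \citet{belloni2018highdimensional}, verify the \((\JK,C)\) approximation via a joint Lindeberg interpolation, and obtain Part 2 from the multiplier-bootstrap validity theorem. Two small points: the paper controls the effect of replacing \(\rho\) by \(\hat\rho\) inside \(C\) via a dedicated lemma (verifying \Cref{assm:rectangle-gaussian-approximation-estimation}) rather than re-using \Cref{lemma:estimation-error-ell1} directly, and the joint interpolation in \Cref{sec:joint-proofs} works with the non-decomposed statistic through a data-dependent smoothing scale \(\gamma\lambda_{\min}^5(D)\), whereas you propose the decomposed \(\JK^a\) — the paper explicitly notes this alternative also works, so that choice is equivalent. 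Your fallback of coupling \(C\) to \(C_G\) first and then applying \Cref{thm:feasible-local-power} conditionally on the coupled \(\tilde r\) would need more care than you suggest (the interpolation replaces \((\eps_i,r_i)\) pairs jointly, and \Cref{thm:feasible-local-power} is an unconditional statement, so "conditioning on \(\tilde r\)" is not immediately licensed by \eqref{eq:interpolation-decomp}), but since that is framed only as a backup and your primary route is the one the paper takes, this is not a gap.
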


 The first part of \Cref{thm:joint-combination} establishes the asymptotic validity of the thresholding test \(T(\beta_0;\tau)\) for any choice of cutoff \(\tau\). While not explicitly stated in the statment of the theorem, this result is uniform in the choice of \(\tau\); for any sequence \(\{\tau_n\}\subset \SR_+\) the sequnce of testing procedures \(T(\beta_0;\tau_n)\) will also have asymptotic size \(\alpha\). The proof of this statement follows the logic outlined above. The second part of \Cref{thm:joint-combination} establishes the validity of the multiplier bootstrap procedure to approximate quantiles of the conditioning statistic. It follows directly from results in \cite{belloni2018highdimensional} after verifying that the conditions needed for error taken on from estimation of \(\rho(z_i)\) can treated as negligible under \Cref{assm:estimation-error}.

 In the case of a single endogenous variable, \(d_x = 1\), \Cref{thm:joint-combination} could be established under the lighter conditions of \Cref{thm:feasible-local-power} along with \Cref{assm:comb-conditions}. However, for brevity, I do not seperate the two cases here.

\begin{remark}[]
    \label{rem:jlm-comparasion} 
    It is useful to compare the \(\JK(\beta_0)\) statistic to the JLM statistic of \citet{Matsushita_Otsu_2022}, which also converges to a limiting \(\chi^2\) distribution when \(d_z \to \infty\). In the case where \(d_x = 1\) the JLM statistic can be expressed
    \begin{equation}
        \label{eq:jlm-stat}
        \text{JLM}(\beta_0) \coloneqq \bigg(\frac{\sum_{i=1}^n \eps_i(\beta_0) \sum_{j\neq i}P_{ij}x_j}{\sum_{i=1}^n\eps_i^2(\beta_0)\big(\sum_{j\neq i} P_{ij} x_j\big)^2 + \sum_{i=1}^n\sum_{j\neq i}P_{ij}^2\eps_i(\beta_0)\eps_j(\beta_0)x_ix_j}\bigg)^2
    \end{equation}
    where \(P = \vz(\vz'\vz)^{-1}\vz'\) is the standard OLS projection matrix. This expression looks simillar to that of the \(\JK(\beta_0)\) statistic with first stage estimates \(\widehat\Pi_i = \sum_{j\neq i} P_{ij}x_j\). 
    From this expression, one can posit two potential reasons for the increased power of tests based on the \(\JK(\beta_0)\) statistic seen in the empirical applications in \Cref{sec:empirical} and the simulation study of \Cref{sec:simulations}. 

    The first is that, when the bias of \(r_i\) is not too adverse, first stage estimates based on the ``true'' jackknife ridge or jackknife OLS described in \eqref{eq:ridge-hat-matrix} may be closer to the true first stage, \(\Pi_i\) than those based on the deleted-diagonal projection matrix. As seen in \Cref{subsec:power-analysis}, higher quality first stage estimates can improve the power of the test by increasing the correlation between these estimates and \(\eps_i(\beta_0)\) under \(H_1\). This loss in quality of first-stage estimates based on the deleted-diagonal projection matrix may be negligible when the diagonal elements, \(P_{ii}\), are small in which case the deleted diagonal estimates closely resemble standard OLS estimates. However, when either the number of instruments is large relative to the sample size or the instruments are highly correlated the diagonal elements \(P_{ii}\) will be large in which case estimates of
    \(\Pi_i\) based on the deleted-diagonal projection matrix may not be accurate. This pattern can be seen in both empirical applications in \Cref{sec:empirical}; in both the data of \citet{gilchrist-glassberg-2016} and \citet{AK-1991} the improvments in power from using the \(\JK(\beta_0)\) statistic become more pronounced as the number of instruments increases.

    A second potential reason for improved power is that the \(\JK(\beta_0)\) statistic uses individual scores, \(\eps_i(\beta_0)\widehat\Pi_i\), that are uncorrelated with each other under \(H_0\). That is, for \(j \neq i\), \(\E[\eps_i(\beta_0)\widehat\Pi_i\eps_j(\beta_0)\widehat\Pi_j] = 0\). Thus, the second term in the denominator of \eqref{eq:jlm-stat}, which accounts for the covariance between individual scores in the numerator of the JLM statistic, does not appear in the expression of \(\JK(\beta_0)\). Note that this second term has a positive expectation under both positive selection, \(\E[\eps_i(\beta_0)x_i] > 0\) for all \(i \in [n]\), and negative selection, \(\E[\eps_i(\beta_0)x_i] < 0\) for all \(i \in [n]\). If this term is large, it can substantially increase the denominator of the JLM statistic relative to that of the \(\JK(\beta_0)\) statistic, reducing its power. This again may be likely when the diagonal elements, \(P_{ii}\), are large due to idempotency of the
    projection matrix: \(P_{ii} = \sum_{j=1}^n P_{ij}^2\). Moreover, by construction  \(\Var(r_i) \leq \Var(x_i)\), so a large second term in the denominator of the JLM statistic may not be offset by a smaller first term, at least in local regions of \(H_0\).

    In sum, when bias taken on in constructing \(r_i\) is not too adverse, the \(\JK(\beta_0)\) statistic may have a larger numerator than the JLM statistic due to the use of higher quality first-stage estimates and a smaller denominator due to the use of uncorrelated individual scores.  Since both the \(\JK(\beta_0)\) and JLM statistics are compared to the same \(\chi^2\) quantile, both of these properties may lead to more likely rejection of tests based on the \(\JK(\beta_0)\) statistic under \(H_1\). \citet{Matsushita_Otsu_2022} note that the power properties of the JLM statistic are similar to those of the JAR statistic of \citet{Mikusheva_Sun_2022}, suggesting that the improvments in power compared to the JAR test seen in \Cref{sec:empirical} and \Cref{sec:simulations} may be explained similarly.
\end{remark}

\begin{remark}[]
    \label{rem:heteroskedasticity}
    As mentioned by \citet{Andrews-2016} in the context of the standard K-statistic, the attempt to rectify the power deficiency via this particular conditioning statistic is not perfect. In particular, under heteroskedasticity, the means of the partialed-out endogenous variables, \(\E[r_i]\), may not be scaled versions of the true first stages. However, as long as \(\E[r_i] \neq 0\), one can still expect \(\E[\widehat\Pi_i^I] = \sum_{j\neq i} h_{ij}\Pi_i + (\beta - \beta_0)\sum_{j\neq i} h_{ij}\rho(z_i)\Pi_i\) to be related to the true fist stage \(\Pi_i\) and for the test to have nontrivial power. Moreover, in light of the dependence of the consistency result in \Cref{thm:consistency} on \Cref{assm:local-identification}(ii), in the case where \(\E[\widehat\Pi_i] = 0\) for all \(i\in[n]\) it may be particularly important to avoid using the jackknife K-statistic to test \(H_0\).
\end{remark}

\section{Conclusion}
\label{sec:conclusion}

I propose a new test for the structural parameter in a linear instrumental variables model. This test is based on a jackknife version of the K-statistic and the limiting behavior of the test is analyzed via a novel direct Gaussian approximation argument. I show that, as long as an auxiliary parameter can be consistently estimated,  the test is robust to both the strength of identification and the number of instruments; the limiting distribution of the test statistic does not depend on either of these factors. Consistency of the auxiliary parameter can be achieved under approximate sparsity using simple-to-implement \(\ell_1\)-penalized methods.

I characterize the behavior of the jackknife K-statistic in local neighborhoods of the null. To address a power deficiency that tests based on jackknife K-statistic inherit from their non-jackknife namesakes, I propose a testing procedure that decides whether the researcher should run a test via the jackknife K-statistic or one via the sup-score statistic based on the value of a conditioning statistic. While this combination may not fully address the power decline, I show that it works well in a simulation study and leave further refinements to future work.

\bibliography{bibtex/mlci}

\begin{thebibliography}{}

\bibitem[\protect\citeauthoryear{Anderson and Rubin}{Anderson and Rubin}{1949}]{Anderson-Rubin-1949}
Anderson, T.~W. and H.~Rubin (1949).
\newblock {Estimation of the Parameters of a Single Equation in a Complete System of Stochastic Equations}.
\newblock {\em The Annals of Mathematical Statistics\/}~{\em 20\/}(1), 46 -- 63.

\bibitem[\protect\citeauthoryear{Andrews, Moreira, and Stock}{Andrews et~al.}{2004}]{AMS-2004}
Andrews, D.~W., M.~Moreira, and J.~H. Stock (2004, August).
\newblock Optimal invariant similar tests for instrumental variables regression.
\newblock ~(299).

\bibitem[\protect\citeauthoryear{Andrews and Stock}{Andrews and Stock}{2007}]{Andrews-Stock-2007}
Andrews, D.~W. and J.~H. Stock (2007).
\newblock Testing with many weak instruments.
\newblock {\em Journal of Econometrics\/}~{\em 138\/}(1), 24--46.
\newblock 50th Anniversary Econometric Institute.

\bibitem[\protect\citeauthoryear{Andrews, Moreira, and Stock}{Andrews et~al.}{2006}]{AMS-2006}
Andrews, D. W.~K., M.~J. Moreira, and J.~H. Stock (2006).
\newblock Optimal two-sided invariant similar tests for instrumental variables regression.
\newblock {\em Econometrica\/}~{\em 74\/}(3), 715--752.

\bibitem[\protect\citeauthoryear{Andrews}{Andrews}{2016}]{Andrews-2016}
Andrews, I. (2016).
\newblock Conditional linear combination tests for weakly identified models.
\newblock {\em Econometrica\/}~{\em 84\/}(6), 2155--2182.

\bibitem[\protect\citeauthoryear{Andrews, Stock, and Sun}{Andrews et~al.}{2019}]{ASS-2019}
Andrews, I., J.~H. Stock, and L.~Sun (2019).
\newblock Weak instruments in instrumental variables regression: Theory and practice.
\newblock {\em Annual Review of Economics\/}~{\em 11\/}(Volume 11, 2019), 727--753.

\bibitem[\protect\citeauthoryear{Angrist and Frandsen}{Angrist and Frandsen}{2022}]{angrist2022machine}
Angrist, J.~D. and B.~Frandsen (2022).
\newblock Machine labor.
\newblock {\em Journal of Labor Economics\/}~{\em 40\/}(S1), S97--S140.

\bibitem[\protect\citeauthoryear{Angrist, Imbens, and Krueger}{Angrist et~al.}{1999}]{Angrist_Krueger_JIVE}
Angrist, J.~D., G.~W. Imbens, and A.~B. Krueger (1999).
\newblock Jackknife instrumental variables estimation.
\newblock {\em Journal of Applied Econometrics\/}~{\em 14\/}(1), 57--67.

\bibitem[\protect\citeauthoryear{Angrist and Krueger}{Angrist and Krueger}{1991}]{AK-1991}
Angrist, J.~D. and A.~B. Krueger (1991).
\newblock Does compulsory school attendance affect schooling and earnings?
\newblock {\em The Quarterly Journal of Economics\/}~{\em 106\/}(4), 979--1014.

\bibitem[\protect\citeauthoryear{Belloni, Chen, Chernozhukov, and Hansen}{Belloni et~al.}{2012}]{BCCH-2012}
Belloni, A., D.~Chen, V.~Chernozhukov, and C.~Hansen (2012).
\newblock Sparse models and methods for optimal instruments with an application to eminent domain.
\newblock {\em Econometrica\/}~{\em 80\/}(6), 2369--2429.

\bibitem[\protect\citeauthoryear{Belloni, Chernozhukov, Chetverikov, Hansen, and Kato}{Belloni et~al.}{2018}]{belloni2018highdimensional}
Belloni, A., V.~Chernozhukov, D.~Chetverikov, C.~Hansen, and K.~Kato (2018).
\newblock High-dimensional econometrics and regularized gmm.

\bibitem[\protect\citeauthoryear{Celentano, Montanari, and Wu}{Celentano et~al.}{2020}]{CMY-2020}
Celentano, M., A.~Montanari, and Y.~Wu (2020, 09--12 Jul).
\newblock The estimation error of general first order methods.
\newblock In J.~Abernethy and S.~Agarwal (Eds.), {\em Proceedings of Thirty Third Conference on Learning Theory}, Volume 125 of {\em Proceedings of Machine Learning Research}, pp.\  1078--1141. PMLR.

\bibitem[\protect\citeauthoryear{Chamberlain}{Chamberlain}{1987}]{Chamberlain-1987}
Chamberlain, G. (1987).
\newblock Asymptotic efficiency in estimation with conditional moment restrictions.
\newblock {\em Journal of Econometrics\/}~{\em 34\/}(3), 305--334.

\bibitem[\protect\citeauthoryear{Chao, Swanson, Hausman, Newey, and Woutersen}{Chao et~al.}{2012}]{Chao-2012-JIVE}
Chao, J.~C., N.~R. Swanson, J.~A. Hausman, W.~K. Newey, and T.~Woutersen (2012).
\newblock Asymptotic distribution of jive in a heteroskedastic iv regression with many instruments.
\newblock {\em Econometric Theory\/}~{\em 28\/}(1), 42--86.

\bibitem[\protect\citeauthoryear{Chatterjee}{Chatterjee}{2006}]{Chatterjee-2006}
Chatterjee, S. (2006).
\newblock {A generalization of the Lindeberg principle}.
\newblock {\em The Annals of Probability\/}~{\em 34\/}(6), 2061 -- 2076.

\bibitem[\protect\citeauthoryear{Chernozhukov, Chetverikov, Demirer, Duflo, Hansen, Newey, and Robins}{Chernozhukov et~al.}{2018}]{CCDDHNR-2018}
Chernozhukov, V., D.~Chetverikov, M.~Demirer, E.~Duflo, C.~Hansen, W.~Newey, and J.~Robins (2018, 01).
\newblock Double/debiased machine learning for treatment and structural parameters.
\newblock {\em The Econometrics Journal\/}~{\em 21\/}(1), C1--C68.

\bibitem[\protect\citeauthoryear{Chernozhukov, Chetverikov, and Kato}{Chernozhukov et~al.}{2013}]{cck2013}
Chernozhukov, V., D.~Chetverikov, and K.~Kato (2013).
\newblock {Gaussian approximations and multiplier bootstrap for maxima of sums of high-dimensional random vectors}.
\newblock {\em The Annals of Statistics\/}~{\em 41\/}(6), 2786 -- 2819.

\bibitem[\protect\citeauthoryear{Chernozhukov, Chetverikov, and Kato}{Chernozhukov et~al.}{2017}]{CCK-2018-HDCLT}
Chernozhukov, V., D.~Chetverikov, and K.~Kato (2017).
\newblock Central limit theorems and bootstrap in high dimensions.
\newblock {\em The Annals of Probability\/}~{\em 45\/}(4), 2309--2352.

\bibitem[\protect\citeauthoryear{Chetverikov and Sørensen}{Chetverikov and Sørensen}{2021}]{CS-2021}
Chetverikov, D. and J.~R.-V. Sørensen (2021).
\newblock Analytic and bootstrap-after-cross-validation methods for selecting penalty parameters of high-dimensional m-estimators.
\newblock {\em ArXiv\/}~{\em NA}, 1--50.

\bibitem[\protect\citeauthoryear{Chi, Vossler, Fan, and Lv}{Chi et~al.}{2022}]{chi2022asymptotic}
Chi, C.-M., P.~Vossler, Y.~Fan, and J.~Lv (2022).
\newblock Asymptotic properties of high-dimensional random forests.
\newblock {\em The Annals of Statistics\/}~{\em 50\/}(6), 3415--3438.

\bibitem[\protect\citeauthoryear{Crudu, Mellace, and Sándor}{Crudu et~al.}{2021}]{crudu_mellace_sandor_2021}
Crudu, F., G.~Mellace, and Z.~Sándor (2021).
\newblock Inference in instrumental variable models with heteroskedasticity and many instruments.
\newblock {\em Econometric Theory\/}~{\em 37\/}(2), 281–310.

\bibitem[\protect\citeauthoryear{Deng and Zhang}{Deng and Zhang}{2020}]{deng2020beyond}
Deng, H. and C.-H. Zhang (2020).
\newblock Beyond gaussian approximation.
\newblock {\em The Annals of Statistics\/}~{\em 48\/}(6), 3643--3671.

\bibitem[\protect\citeauthoryear{Derenoncourt}{Derenoncourt}{2022}]{derenoncourt-2022}
Derenoncourt, E. (2022, February).
\newblock Can you move to opportunity? evidence from the great migration.
\newblock {\em American Economic Review\/}~{\em 112\/}(2), 369--408.

\bibitem[\protect\citeauthoryear{Farrell, Liang, and Misra}{Farrell et~al.}{2021}]{farrellNeuralNetwork2021}
Farrell, M.~H., T.~Liang, and S.~Misra (2021).
\newblock Deep neural networks for estimation and inference.
\newblock {\em Econometrica\/}~{\em 89\/}(1), 181--213.

\bibitem[\protect\citeauthoryear{Friedman, Tibshirani, and Hastie}{Friedman et~al.}{2010}]{glmnet}
Friedman, J., R.~Tibshirani, and T.~Hastie (2010).
\newblock Regularization paths for generalized linear models via coordinate descent.
\newblock {\em Journal of Statistical Software\/}~{\em 33\/}(1), 1--22.

\bibitem[\protect\citeauthoryear{Gautier and Rose}{Gautier and Rose}{2021}]{gautier2021highdimensional}
Gautier, E. and C.~Rose (2021).
\newblock High-dimensional instrumental variables regression and confidence sets.

\bibitem[\protect\citeauthoryear{Gilchrist and Sands}{Gilchrist and Sands}{2016}]{gilchrist-glassberg-2016}
Gilchrist, D.~S. and E.~G. Sands (2016).
\newblock Something to talk about: Social spillovers in movie consumption.
\newblock {\em Journal of Political Economy\/}~{\em 124\/}(5), 1339--1382.

\bibitem[\protect\citeauthoryear{G{\"o}tze, Naumov, Spokoiny, and Ulyanov}{G{\"o}tze et~al.}{2019}]{anticoncentration-bernoulli-gotze-et-al}
G{\"o}tze, F., A.~Naumov, V.~Spokoiny, and V.~Ulyanov (2019).
\newblock {Large ball probabilities, Gaussian comparison and anti-concentration}.
\newblock {\em Bernoulli\/}~{\em 25\/}(4A), 2538 -- 2563.

\bibitem[\protect\citeauthoryear{Gotze, Sambale, and Sinulis}{Gotze et~al.}{2021}]{Gotze-subexponential-polynomial}
Gotze, F., H.~Sambale, and A.~Sinulis (2021).
\newblock {Concentration inequalities for polynomials in alpha-sub-exponential random variables}.
\newblock {\em Electronic Journal of Probability\/}~{\em 26\/}(none), 1 -- 22.

\bibitem[\protect\citeauthoryear{Harrell}{Harrell}{2015}]{harrell_2015}
Harrell, F.~E. (2015).
\newblock {\em Regression Modeling Strategies}.
\newblock Spring Series in Statistics. Springer Cham.

\bibitem[\protect\citeauthoryear{Horn and Johnson}{Horn and Johnson}{2012}]{horn2012matrix}
Horn, R. and C.~Johnson (2012).
\newblock {\em Matrix Analysis}.
\newblock Cambridge University Press.

\bibitem[\protect\citeauthoryear{Kleibergen}{Kleibergen}{2002}]{Kleibergen_2002}
Kleibergen, F. (2002, 02).
\newblock Pivotal statistics for testing structural parameters in instrumental variables regression.
\newblock {\em Econometrica\/}~{\em 70}, 1781--1803.

\bibitem[\protect\citeauthoryear{Kleibergen}{Kleibergen}{2005}]{Kleibergen-2005}
Kleibergen, F. (2005).
\newblock Testing parameters in gmm without assuming that they are identified.
\newblock {\em Econometrica\/}~{\em 73\/}(4), 1103--1123.

\bibitem[\protect\citeauthoryear{Kline, Saggio, and Sølvsten}{Kline et~al.}{2020}]{KSS-2018}
Kline, P., R.~Saggio, and M.~Sølvsten (2020).
\newblock Leave-out estimation of variance components.
\newblock {\em Econometrica\/}~{\em 88\/}(5), 1859--1898.

\bibitem[\protect\citeauthoryear{Lim, Wang, and Zhang}{Lim et~al.}{2022}]{lim2022conditional}
Lim, D., W.~Wang, and Y.~Zhang (2022).
\newblock A conditional linear combination test with many weak instruments.

\bibitem[\protect\citeauthoryear{Lindeberg}{Lindeberg}{1922}]{Lindeberg1922}
Lindeberg, J.~W. (1922).
\newblock Eine neue herleitung des exponentialgesetzes in der wahrscheinlichkeitsrechnung.
\newblock {\em Mathematische Zeitschrift\/}~{\em 15}, 211--225.

\bibitem[\protect\citeauthoryear{Matsushita and Otsu}{Matsushita and Otsu}{2022}]{Matsushita_Otsu_2022}
Matsushita, Y. and T.~Otsu (2022).
\newblock A jackknife lagrange multiplier test with many weak instruments.
\newblock {\em Econometric Theory\/}, 1–24.

\bibitem[\protect\citeauthoryear{Mikusheva}{Mikusheva}{2023}]{Mikusheva-2023-ManyTimeSeries}
Mikusheva, A. (2023).
\newblock Many weak instruments in time series econometrics.
\newblock {\em Working Paper\/}.

\bibitem[\protect\citeauthoryear{Mikusheva and Sun}{Mikusheva and Sun}{2021}]{Mikusheva_Sun_2022}
Mikusheva, A. and L.~Sun (2021, 12).
\newblock Inference with many weak instruments.
\newblock {\em The Review of Economic Studies\/}~{\em 89\/}(5), 2663--2686.

\bibitem[\protect\citeauthoryear{Moreira}{Moreira}{2009}]{Moreira_2009}
Moreira, M. (2009, 10).
\newblock Tests with correct size when instruments can be arbitrarily weak.
\newblock {\em Journal of Econometrics\/}~{\em 152}, 131--140.

\bibitem[\protect\citeauthoryear{Moreira}{Moreira}{2001}]{Moreira-2001}
Moreira, M.~J. (2001).
\newblock {\em Tests with correct size when instruments can be arbitrarily weak}.
\newblock Citeseer.

\bibitem[\protect\citeauthoryear{Moreira}{Moreira}{2003}]{Moreira_2003}
Moreira, M.~J. (2003).
\newblock A conditional likelihood ratio test for structural models.
\newblock {\em Econometrica\/}~{\em 71\/}(4), 1027--1048.

\bibitem[\protect\citeauthoryear{Nazarov}{Nazarov}{2003}]{Nazarov2003}
Nazarov, F. (2003).
\newblock {\em On the Maximal Perimeter of a Convex Set in $R^n$ with Respect to a Gaussian Measure}.
\newblock Berlin, Heidelberg: Springer Berlin Heidelberg.

\bibitem[\protect\citeauthoryear{Nyquist}{Nyquist}{1988}]{nyquist1988applications}
Nyquist, H. (1988).
\newblock Applications of the jackknife procedure in ridge regression.
\newblock {\em Computational Statistics \& Data Analysis\/}~{\em 6\/}(2), 177--183.

\bibitem[\protect\citeauthoryear{Paravisini, Rappoport, Schnabl, and Wolfenzon}{Paravisini et~al.}{2014}]{Paravisini-2014}
Paravisini, D., V.~Rappoport, P.~Schnabl, and D.~Wolfenzon (2014, 09).
\newblock {Dissecting the Effect of Credit Supply on Trade: Evidence from Matched Credit-Export Data}.
\newblock {\em The Review of Economic Studies\/}~{\em 82\/}(1), 333--359.

\bibitem[\protect\citeauthoryear{Petersen and Pedersen}{Petersen and Pedersen}{2012}]{matrix-cookbook}
Petersen, K.~B. and M.~S. Pedersen (2012, nov).
\newblock The matrix cookbook.
\newblock Version 20121115.

\bibitem[\protect\citeauthoryear{Pouzo}{Pouzo}{2015}]{Pouzo-Quadratic}
Pouzo, D. (2015).
\newblock {Bootstrap consistency for quadratic forms of sample averages with increasing dimension}.
\newblock {\em Electronic Journal of Statistics\/}~{\em 9\/}(2), 3046 -- 3097.

\bibitem[\protect\citeauthoryear{Sambale}{Sambale}{2022}]{sambale2022notes}
Sambale, H. (2022).
\newblock Some notes on concentration for $\alpha$-subexponential random variables.

\bibitem[\protect\citeauthoryear{Schmidt-Hieber}{Schmidt-Hieber}{2020}]{Schmidt-Hieber-2020-neural-networks}
Schmidt-Hieber, J. (2020, 08).
\newblock Nonparametric regression using deep neural networks with relu activation function.
\newblock {\em Annals of Statistics\/}~{\em 48}, 1875--1897.

\bibitem[\protect\citeauthoryear{Staiger and Stock}{Staiger and Stock}{1997}]{StaigerStock-1997}
Staiger, D. and J.~H. Stock (1997).
\newblock Instrumental variables regression with weak instruments.
\newblock {\em Econometrica\/}~{\em 65\/}(3), 557--586.

\bibitem[\protect\citeauthoryear{Tan}{Tan}{2017}]{Tan-2017}
Tan, Z. (2017).
\newblock Regularized calibrated estimation of propensity scores with model misspecification and high-dimensional data.
\newblock {\em ArXiv\/}~{\em NA}, 1--60.

\bibitem[\protect\citeauthoryear{van~der Greer}{van~der Greer}{2016}]{vanDerGreer2016}
van~der Greer, S. (2016).
\newblock {\em Estimation and Testing under Sparsity}.
\newblock Lecture Notes in Mathematics. Springer, New York, NY.

\bibitem[\protect\citeauthoryear{van Wieringen}{van Wieringen}{2023}]{lecture_notes_ridge}
van Wieringen, W.~N. (2023).
\newblock Lecture notes on ridge regression.

\end{thebibliography}

\startappendix
\section{Proof of \Cref{thm:feasible-local-power}}
\label{sec:main-proofs}

\Cref{thm:feasible-local-power} follows from the following two main technical lemmas, the proofs of which will comprise the majority of this appendix section. Let \(\JK_I(\beta_0)\) be the version of the test statistic that could be constructed if \(\rho(\cdot)\) was known to the researcher, defined in more detail shortly.

\begin{lemma}[Infeasible Uniform Approximation]
    \label{thm:local-power}
    Suppose that \Cref{assm:balanced-design,assm:local-identification} hold as well as the moment bounds of \Cref{thm:feasible-local-power}. Then,
    \[
        \sup_{a \in \SR}\big|\Pr(\JK_I(\beta_0) \leq a) - \Pr(\JK_G(\beta_0) \leq a)\big| \to 0
    \]
\end{lemma}

\begin{lemma}[Estimation Error]
    \label{lemma:estimation-error-high-level}
    Suppose that \Cref{assm:balanced-design} and \Cref{assm:local-identification} hold as well as the moment bounds of \Cref{thm:feasible-local-power}. Then, if \((\Delta_N,\Delta_D)\to_p0\),
    \[
        \sup_{a \in \SR} \big|\Pr(\JK(\beta_0) \leq a) - \Pr(\JK_I(\beta_0) \leq a)\big|\to_p 0
    \]
\end{lemma}

\subsection{Proof of \Cref{thm:local-power}}
\label{subsec:local-power-proof}

Before proceeding, we will introduce some notation. Let \(\tilde H = s_n H\) and \(\tilde h_{ij} = s_n h_{ij}\), where \(s_n\) is as in \Cref{assm:balanced-design}. Recall that \(\tilde h_{ii} = 0\) and define
\begin{align*}
    N &:= \frac{1}{\sqrt{n}} \sum_{i=1}^n \eps_i(\beta_0) \sum_{j=1}^n \tilde h_{ij} r_j &  \tilde N &:=  \frac{1}{\sqrt{n}}\sum_{i=1}^n \tilde\eps_i(\beta_0)\sum_{j=1}^n \tilde h_{ij} \tilde r_j\\ 
    D &:= \frac{1}{n}\sum_{i=1}^n \eps_i^2(\beta_0)\big(\sum_{j=1}^n \tilde h_{ij}r_j\big)^2 & \tilde D &:= \frac{1}{n}\sum_{i=1}^n \kappa_i^2(\beta_0)\big(\sum_{j=1}^n \tilde h_{ij}\tilde r_j\big)^2
\end{align*}
where \((\tilde\eps_i(\beta_0),\tilde r_i)\) are jointly Gaussian with the same mean and covariance matrix as \((\eps_i(\beta_0),r_i)\) and \(\kappa_i^2(\beta_0) = \E[\eps_i^2(\beta_0)]\).
Under this notation we can write \(\JK_I(\beta_0) = \frac{N^2}{D}\bm{1}_{\{D  > 0\}}\) and \(\JK_G(\beta_0) = \frac{\tilde N^2}{\tilde D}\). Dealing with these forms of the statistics is difficult for the interpolation argument, since the denominator is random. Instead, we will notice that since \(D = 0 \implies N = 0\) and \(\Pr(\tilde D > 0) = 1\), for any \(a \geq 0\) we can rewrite the events
\begin{equation}
    \label{eq:equivalent-events}
    \begin{split}
        \{\JK_I(\beta_0) \leq a\} = \{N^2 - a D \leq 0\} \andbox \{\JK_G(\beta_0) \leq a\} \overset{\text{a.s}}{=} \{\tilde N^2 - a \tilde D \leq 0\}
    \end{split}
\end{equation}
With this in mind define
\begin{align*}
    \JK^a := N^2 - a D \andbox \tilde \JK^a := \tilde N^2 - a \tilde D
\end{align*}
Showing \Cref{thm:local-power} is then equivalent to showing that \(\sup_a|\Pr(\JK^a \leq 0) - \Pr(\tilde \JK^a \leq 0)|\to 0\). The statement \(\sup_{a < 0}\big|\Pr(\JK_I(\beta_0) \leq a) - \Pr(\JK_G(\beta_0) \leq a)\big| = 0\) is immediate since both \(\JK_I(\beta_0)\) and \(\JK_G(\beta_0)\) are always weakly positive.
It thus suffices to show
\[
    \sup_{a \geq 0} \big|\Pr(\JK_I(\beta_0) \leq a) - \Pr(\JK_G(\beta_0) \leq a)\big| \to 0
\]
We do so in a few lemmas, the final result being shown in \Cref{lemma:approximate-distribution} at the bottom of this subsection.

\begin{lemma}[Lindeberg Interpolation]
    \label{lemma:lindeberg-interpolation}
    Suppose that \Cref{assm:balanced-design,assm:local-identification} hold along with the conditions of \Cref{thm:feasible-local-power}. Let \(\varphi(\cdot):\SR\to\SR\) be such that \(\varphi(\cdot) \in C_b^3(\SR)\) with \(L_2(\varphi) = \sup_x|\varphi''(x)|\) and \(L_3(\varphi) = \sup_x|\varphi'''(x)|\). Then, there is a constant \(M\) that depends only on the constant \(c\) such that:
    \[
        |\E[\varphi(\JK^a) - \varphi(\tilde \JK^a)]| \leq \frac{M(a^3 \vee 1)}{\sqrt{n}}(L_2(\varphi) + L_3(\varphi))
    \]
\end{lemma}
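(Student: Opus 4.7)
The plan is to execute the classical Lindeberg interpolation, replacing the pairs $(\eps_i(\beta_0), r_i)$ by their Gaussian analogs $(\tilde\eps_i(\beta_0), \tilde r_i)$ one index at a time and bounding each one-step deviation via a third-order Taylor expansion. Let $W_k$ denote the statistic $N^2 - aD$ built from hybrid data in which the first $k$ pairs are Gaussian analogs and the remaining are original, so that $W_0 = \JK^a$, $W_n = \tilde\JK^a$, and the target telescopes as $\E[\varphi(\JK^a) - \varphi(\tilde\JK^a)] = \sum_{k=1}^n \E[\varphi(W_{k-1}) - \varphi(W_k)]$.

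The key structural observation, which distinguishes this from a textbook CLT Lindeberg argument, is that for each $k$ the statistic $\JK^a$, viewed as a function of the pair $(\eps_k(\beta_0), r_k)$ with all other pairs held fixed, is a polynomial of degree exactly two. This is a direct consequence of the jackknife condition $\tilde h_{kk} = 0$: $N$ is affine in $(\eps_k, r_k)$, and in $D$ the only $\eps_k$-dependence comes from the $i = k$ summand (proportional to $\eps_k^2$ and free of $r_k$), while the only $r_k$-dependence comes from the $i \neq k$ summands (free of $\eps_k$). Let $g_k: \SR^2 \to \SR$ denote this quadratic polynomial and set $F_k = \varphi \circ g_k$. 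Writing $N_0$ for $N$ evaluated at $(\eps_k, r_k) = 0$, direct differentiation yields $\partial_{\eps_k} g_k|_0 = 2 N_0 \cdot n^{-1/2}\sum_{j \neq k}\tilde h_{kj} r_j$ (independent of $a$) and $\partial_{r_k} g_k|_0 = 2 N_0 \cdot n^{-1/2}\sum_{i\neq k}\tilde h_{ik}\eps_i - (2a/n) \sum_{i\neq k}\eps_i^2 \tilde h_{ik}\sum_{j\neq k} \tilde h_{ij} r_j$, which is affine in $a$; the components of $\nabla^2 g_k(0)$ are likewise affine in $a$. Consequently $\|\nabla g_k(0)\|$ and $\|\nabla^2 g_k(0)\|$ are bounded by a constant times $(1 + |a|)$, which is the source of the $a^3 \vee 1$ factor in the final bound once the gradient is cubed.

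For each one-step deviation I Taylor expand $F_k$ to third order around the origin; since $(\eps_k(\beta_0), r_k)$ and its Gaussian analog are independent of all other pairs and share their first two moments, conditioning on the other pairs and taking expectations kills the zeroth, first, and second-order contributions, leaving only the third-order remainder $\E[R_k(\eps_k, r_k) - R_k(\tilde\eps_k, \tilde r_k)]$. Because $g_k$ is exactly quadratic, $\nabla^3 g_k \equiv 0$, and Fa\`a di Bruno yields the favorable bound $\|\nabla^3 F_k(\xi)\| \leq C \bigl[L_3(\varphi)\|\nabla g_k(\xi)\|^3 + L_2(\varphi)\|\nabla g_k(\xi)\|\,\|\nabla^2 g_k(\xi)\|\bigr]$, with no $\varphi'$ contribution---this is why only $L_2$ and $L_3$ appear in the statement. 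Combined with $\|\nabla g_k(\xi)\| \leq \|\nabla g_k(0)\| + \|\nabla^2 g_k(0)\|\, \|\xi\|$ and $\|\xi\| \leq \max(\|(\eps_k, r_k)\|, \|(\tilde\eps_k, \tilde r_k)\|)$, the remainder bound involves moments of $(\eps_k, r_k)$ up to sixth order---exactly matching the content of \Cref{assm:moment-assumptions}(ii).

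The main obstacle is the bookkeeping to show $\sum_{k=1}^n \E\bigl|R_k(\eps_k, r_k) - R_k(\tilde\eps_k, \tilde r_k)\bigr| \leq M(a^3 \vee 1) n^{-1/2}$. The heuristic is that each component of $\nabla g_k(0)$ and $\nabla^2 g_k(0)$ is typically of order $n^{-1/2}$: for instance $n^{-1/2}\sum_{j\neq k} \tilde h_{kj} r_j$ has variance $n^{-1}\sum_j \tilde h_{kj}^2 \Var(r_j) \leq c/n$ by \Cref{assm:balanced-design}(ii), and the double-sum $n^{-1}\sum_{i\neq k}\eps_i^2 \tilde h_{ik}\sum_{j\neq k}\tilde h_{ij} r_j$ has a similar order because the row sums and column sums of $\tilde H$ are both controlled by that same balance condition. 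Cubing gives $n^{-3/2}$ per term and summing $n$ terms delivers the target rate. Making this rigorous requires careful control of the deterministic contributions to $N_0$ and to the sums above; these contributions are managed by \Cref{assm:local-identification}, which limits the size of $\E\eps_i(\beta_0) = \Pi_i(\beta-\beta_0)$ relative to $\E r_i$ through the local power index $P$ and its accompanying technical condition, ensuring that under local alternatives the nonzero means of $\eps_i(\beta_0)$ do not inflate the bound.
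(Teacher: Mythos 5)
Your high-level structure matches the paper's: a telescoping replacement scheme, cancellation via matched first and second moments, and a per-step bound of order $n^{-3/2}$. Your observation that $\JK^a$ is a degree-two polynomial in $(\eps_k, r_k)$ once $\tilde h_{kk}=0$ is used — and that consequently $\partial^3 g_k\equiv 0$ so the third-order remainder via Fa\`a di Bruno carries no $\varphi'$ term — is a genuinely cleaner way to see why only $L_2$ and $L_3$ appear than the paper's bookkeeping, which arrives at the same conclusion by explicitly listing matched moments of $\Delta_{1i}$ and $\Delta_{2i}$ and cancelling the $\varphi'$ contribution term by term.

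There is, however, a concrete gap: your telescoping endpoint $W_n$ is not $\tilde\JK^a$. In the paper's construction the Gaussian analog statistic is
\[
    \tilde\JK^a = \tilde N^2 - a\tilde D, \qquad \tilde D = \frac{1}{n}\sum_{i=1}^n \kappa_i^2(\beta_0)\Big(\sum_{j=1}^n \tilde h_{ij}\tilde r_j\Big)^2,
\]
with $\kappa_i^2(\beta_0)=\E[\eps_i^2(\beta_0)]$ a \emph{constant}, not $\tilde\eps_i^2(\beta_0)$. This is needed so that $\JK_G(\beta_0)$ is exactly $\chi^2_1$ conditional on $\tilde r$ under $H_0$. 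The paper's hybrid variable $\ddot\eps_\ell^2(\beta_0)$ therefore interpolates between $\eps_\ell^2(\beta_0)$ and $\kappa_\ell^2(\beta_0)$, not between $\eps_\ell^2(\beta_0)$ and $\tilde\eps_\ell^2(\beta_0)$, and the one-step deviation $\tilde\Delta_{2i}^a$ has $\kappa_i^2(\beta_0)$ in place of $\eps_i^2(\beta_0)$. Your pure pair replacement $(\eps_k,r_k)\to(\tilde\eps_k,\tilde r_k)$ arrives at a statistic with $\tilde\eps_i^2$ in the denominator, so $W_n\neq\tilde\JK^a$. The moment matching itself still goes through for your scheme ($\E[\tilde\eps_i^2]=\kappa_i^2$, so first moments agree either way), but you prove the bound for the wrong Gaussian target. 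To close the gap you would either need to adopt the paper's asymmetric replacement ($\eps_i\to\tilde\eps_i$ and $r_i\to\tilde r_i$ in the numerator and inner sums, but $\eps_i^2\to\kappa_i^2$ in the denominator prefactor, which the paper's $\dot/\ddot$ notation encodes), or add a second interpolation pass replacing $\tilde\eps_i^2\to\kappa_i^2$ one index at a time; the latter is tractable because the statistic is affine in each $\tilde\eps_i^2$, $\E[\tilde\eps_i^2-\kappa_i^2]=0$, and the residual second-order term contributes only $O(L_2 a^2/n)$, but you would need to carry out that estimate rather than assert the endpoint.
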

\begin{proof}[Proof of \Cref{lemma:lindeberg-interpolation}]
   Begin by defining the leave-one-out numerator, denominator, and decomposed statistics
   \begin{align*}
       N_{-i} &:= \frac{1}{\sqrt{n}}\sum_{j \neq i} \dot\eps_j(\beta_0)\sum_{\ell \neq i} \tilde h_{j\ell }\dot r_\ell  & 
       D_{-i} &:= \frac{1}{n}\sum_{j \neq i}\ddot\eps_{j}^2(\beta_0)\big(\sum_{\ell\neq i} \tilde h_{j\ell} \dot r_\ell\big)^2 \\ 
               \JK_{-i} &:= N_{-i}^2 - aD_{-i}  &
   \end{align*}
   where for each \(\ell \in [n]\), \(\dot\eps_\ell(\beta_0)\) is equal to \(\eps_\ell(\beta_0)\) if \(\ell > i\) and \(\tilde \eps_\ell(\beta_0)\) if \(\ell < i\), \(\dot r_\ell\) is equal to \(r_\ell\) if \(\ell > i\) and \(\tilde r_\ell\) if \(\ell < i\), and \(\ddot\eps_\ell^2(\beta_0)\) is equal to \(\kappa_\ell^2(\beta_0)\) if \(\ell < i\) and \(\eps_\ell^2(\beta_0)\) if \(\ell > i\). While the definitions of \(\dot\eps_\ell, \dot r_\ell,\) and \(\ddot \eps_\ell\) depend on \(i\) because we will be considering only one deviation at a time, we will supress the dependence of these variables on \(i\) to simplify notation.

   Next, define the one-step deviations
   \begin{equation}
   \label{eq:one-step-deviations}
   \begin{split}
       \Delta_{1i}
       &:=  \eps_i(\beta_0)\sum_{j=1}^n \tilde h_{ij} \dot r_j + r_i \sum_{j=1}^n \tilde h_{ji} \dot\eps_j(\beta_0)\\
       \tilde \Delta_{1i}
       &:=  \tilde\eps_i(\beta_0)\sum_{j=1}^n \tilde h_{ij} \dot r_j + \tilde r_i \sum_{j=1}^n \tilde h_{ji} \dot\eps_j(\beta_0) \\ 
       \Delta_{2i} &:= \underbrace{a\eps_i^2(\beta_0)(\sum_{j=1}^n \tilde h_{ij} \dot r_j)^2 + a r_i^2 \sum_{j=1}^n \tilde h_{ji}^2 \ddot \eps_j^2(\beta_0)}_{\Delta_{2i}^a} + \underbrace{2a r_i\sum_{j=1}^n \ddot \eps_j^2(\beta_0)\sum_{\ell \neq i} \tilde h_{j \ell}\tilde h_{ji}\dot r_\ell}_{\Delta_{2i}^b}\\
       \tilde\Delta_{2i} &:= \underbrace{a\kappa_i^2(\beta_0)(\sum_{j=1}^n \tilde h_{ij} \dot r_j)^2 + a \tilde r_i^2 \sum_{j=1}^n \tilde h_{ji}^2 \ddot \eps_j^2(\beta_0)}_{\Delta_{2i}^a} + \underbrace{2a \tilde r_i\sum_{j=1}^n \ddot \eps_j^2(\beta_0)\sum_{\ell \neq i} \tilde h_{j\ell} \tilde h_{ji}\dot r_\ell}_{\tilde\Delta_{2i}^b}
   \end{split}
   \end{equation}
   These one-step deviations contain all the terms associated with observation \(i\) in the expression of the numerator and denominator of the test statistics. To demonstrate, note that these one-step deviations satisfy \(N_{-1} + n^{-1/2}\Delta_{11} = N\) and \(aD_{-1} + n^{-1}\Delta_{21} = aD\) as
   \begin{align*}
       N &= \frac{1}{\sqrt{n}}\sum_{i=1}^n \eps_i(\beta_0) \sum_{j=1}^n \tilde h_{ij} r_j \\ 
         &= \frac{1}{\sqrt{n}}\sum_{j > 1} \eps_j(\beta_0) \sum_{\ell = 1}^n \tilde h_{j\ell} r_j + \eps_1(\beta_0) \frac{1}{\sqrt{n}}\sum_{j > 1} \tilde h_{1j}r_j \\ 
         &= \frac{1}{\sqrt{n}}\sum_{j > 1}\eps_j(\beta_0)\Big\{\tilde h_{j1} r_1 + \sum_{\ell > 1} h_{j\ell} r_\ell \Big\} + \eps_1(\beta_0) \frac{1}{\sqrt{n}}\sum_{j > 1}\tilde h_{1j}r_j \\ 
         &= \underbrace{\frac{1}{\sqrt{n}}\sum_{j > 1}\eps_j(\beta_0)\sum_{\ell > 1}h_{j\ell} r_\ell}_{N_{-1}} + \underbrace{\eps_1(\beta_0)\frac{1}{\sqrt{n}}\sum_{j > 1}\tilde h_{1j} r_j + r_1 \frac{1}{\sqrt{n}}\sum_{j > 1}\tilde h_{j1}\eps_j(\beta_0)}_{n^{-1/2}\Delta_{11}}
    \intertext{and}
       D &= \frac{1}{n} \sum_{i=1}^n \eps_i^2(\beta_0)\big(\sum_{j=1}^n \tilde h_{ij} r_j)^2 \\ 
         &= \frac{1}{n}\sum_{j > 1}\eps_j^2(\beta_0)\big(\sum_{\ell = 1}^n \tilde h_{j\ell}r_\ell\big)^2 + \eps_1^2(\beta_0)\frac{1}{n}\big(\sum_{j > 1} \tilde h_{1j}r_j\big)^2\\ 
         &= \frac{1}{n}\sum_{j > 1}\eps_j^2(\beta_0) \big(\tilde h_{j1}r_1 + \sum_{\ell \neq 1} \tilde h_{\ell j}r_\ell\big)^2 + \eps_1^2(\beta_0)\frac{1}{n}\big(\sum_{j > 1}\tilde h_{1j}r_j\big)^2\\ 
         &= \underbrace{\frac{1}{n}\sum_{j > 1}\eps_j^2(\beta_0)\big(\sum_{\ell > 1}\tilde h_{\ell, j}r_\ell\big)^2}_{D_{-1}} \\ 
         &\;\;\;\;\;\;+ \underbrace{\eps_1^2(\beta_0)\frac{1}{n}\big(\sum_{j > 1} \tilde h_{1j}r_j\big)^2 + r_1^2\frac{1}{n}\sum_{j > 1} \tilde h_{j1}^2 \eps_j^2(\beta_0) + 2r_1\frac{1}{n}\sum_{j > 1}\eps_j^2(\beta_0)\sum_{\ell > 1}\tilde h_{\ell j} r_\ell}_{(an)^{-1}\Delta_{21}}
   \end{align*}
   Using the one-step deviations, write the difference \(\E[\varphi(K^a) - \varphi(\tilde K^a)]\) as a telescoping sum, one by one replacing \((\Delta_{1i},\Delta_{2i})\) with \((\tilde\Delta_{1i},\tilde\Delta_{2i})\) in the expressions of \(\JK^a = N^2 - aD\) until we arrive at \(\tilde \JK^a = \tilde N^2 - a \tilde D\).
   \begin{equation}
       \label{eq:telescoping-sum}
       \begin{split}
           \E[\varphi(\JK^a) - \varphi(\tilde \JK^a)] 
           &= \sum_{i=1}^n \E[\varphi(\JK_{-i} + n^{-1/2}N_{-i}\Delta_{1i} + n^{-1}\Delta_{1i}^2 - n^{-1}\Delta_{2i})] \\ 
           &\hphantom{=\sum_{i=1}^n\;\;} - \E[\varphi(\JK_{-i} + n^{-1/2}N_{-i}\tilde\Delta_{1i} + n^{-1}\tilde\Delta_{1i}^2 - n^{-1}\tilde\Delta_{2i})]
       \end{split}
   \end{equation}
   Via a second-order Taylor expansion, we can write each term inside the summand
   \begin{align*}
       \E[\text{Term}_i] 
       &= \E[\varphi'(\JK_{-i})\{2n^{-1/2}N_{-i}(\Delta_{1i}-\tilde\Delta_{1i}) + n^{-1}(\Delta_{1i}^2 - \Delta_{1i}^2) - n^{-1}(\Delta_{2i} - \tilde\Delta_{2i})\}] \\ 
       &+ \E[\varphi''(\JK_{-i})\{4n^{-1}N_{-i}^2(\Delta_{1i}^2 - \tilde\Delta_{1i}^2) + n^{-2}(\Delta_{1i}^4 - \tilde\Delta_{1i}^4) - n^{-2}(\Delta_{2i}^2 - \Delta_{2i}^2)\}] \\
       &+ \E[\varphi''(\JK_{-i})\{4n^{-3/2}N_{-i}(\Delta_{1i}^3 - \tilde\Delta_{1i}^3) + 4n^{-3/2}N_{-i}(\Delta_{1i}\Delta_{2i} - \tilde\Delta_{1i}\tilde\Delta_{2i})\}] \\ 
       &+ \E[\varphi''(\JK_{-i})\{2n^{-2}(\Delta_{1i}^2 \Delta_{2i} - \tilde\Delta_{1i}^2 \tilde\Delta_{2i})\}]  + R_i  + \tilde R_i
   \end{align*}
   where \(R_i\) and \(\tilde R_i\) are remainder terms to be examined later. Let \(\calF_{-i}\) denote the sigma algebra generated by all random variables whose index is not equal to \(i\). Since (a) for each \(i \in [n]\) the mean and covariance matrix of \((\eps_i(\beta_0), r_i)\) is the same as the mean and covariance matrix of \((\tilde\eps_i(\beta_0), \tilde r_i)\), (b) \(\E[\eps_i^2(\beta_0)] = \kappa_i^2(\beta_0)\), and (c) random variables are independent across indices, we have that 
   \begin{equation}
        \label{eq:interpolation-matched-moments}
        \begin{split}
            \E[\Delta_{1i} - \tilde\Delta_{1i}|\calF_{-i}] 
            &= \E[\Delta_{1i}^2 - \tilde\Delta_{1i}^2|\calF_{-i}] = \E[\Delta_{2i} - \tilde\Delta_{2i}|\calF_{-i}] \\ &= \E[\Delta_{2i}^b - \tilde\Delta_{2i}^b|\calF_{-i}]  = \E[\Delta_{1i}\Delta_{2i}^b - \tilde\Delta_{1i}\tilde\Delta_{2i}^b|\calF_{-i}] = 0
        \end{split}
    \end{equation}
   Using this we can simplify the prior display 
   \begin{align*}
       \E[\text{Term}_i] 
       &= \underbrace{n^{-2}\E[\varphi''(\JK_{-i})(\Delta_{1i}^4 - \Delta_{1i}^4)]}_{\vA_i} - \underbrace{n^{-2}\E[\varphi''(\JK_{-i})((\Delta_{2i}^a)^2 - (\tilde\Delta_{2i}^a)^2)]}_{\vB_i} \\ 
       &- \underbrace{2n^{-2}\E[\varphi''(\JK_{-i})(\Delta_{2i}^a \Delta_{2i}^b  - \tilde\Delta_{2i}^a \tilde\Delta_{2i}^b)]}_{\vC_i} + \underbrace{4n^{-3/2}\E[\varphi''(\JK_{-i})N_{-i}(\Delta_{1i}^3 - \tilde\Delta_{1i}^3)]}_{\vD_i} \\ 
       &+ \underbrace{4n^{-3/2}\E[\varphi''(\JK_{-i})N_{-i}(\Delta_{1i}\Delta_{2i}^a - \tilde\Delta_{1i}\tilde\Delta_{2i}^a)]}_{\vE_i} + \underbrace{2 n^{-2}\E[\varphi''(\JK_{-i})(\Delta_{1i}^2 \Delta_{2i} - \tilde\Delta_{1i}^2 \tilde\Delta_{2i})}_{\vF_i} \\ 
       &+ R_i + \tilde R_i
   \end{align*}
   where for some \(\bar \JK_{1i}\) and \(\bar \JK_{2i}\) we can write
   \begin{align*}
       R_i &= \E[\varphi'''(\bar \JK_{1i})\{n^{-1/2}N_{-i}\Delta_{1i} + n^{-1}\Delta_{1i}^2 + n^{-1}\Delta_{2i}\}^3]\\
       \tilde R_i &= \E[\varphi'''(\bar \JK_{2i})\{n^{-1/2}N_{-i}\tilde\Delta_{1i} + n^{-1}\tilde\Delta_{1i}^2 + n^{-1}\tilde\Delta_{2i}\}^3]
   \end{align*}
   Applications of \Cref{lemma:delta-moment-bounds,lemma:numerator-bound}, Cauchy-Schwarz, and the generalized  Hölder inequality,\footnote{\(\E[|fgk|]^3 \leq \E[|f|^3]\E[|g|^3]\E[|k|^3]\)} will allow us to bound for a fixed constant \(M\) that depends only on \(c\),
   \begin{align*}
       |\vA_i| &\leq  \frac{M}{n^2}L_2(\varphi) 
      & |\vB_i| &\leq \frac{Ma^2}{n^2}L_2(\varphi) 
      & |\vC_i| &\leq \frac{Ma^2}{n^{3/2}}L_2(\varphi)  \\ 
       |\vD_i| &\leq \frac{M}{n^{3/2}}L_2(\varphi) 
      & |\vE_i| &\leq \frac{M (a\vee 1)}{n^{3/2}}L_2(\varphi)  
      & |\vF_i| &\leq \frac{Ma^3}{n^{3/2}}L_2(\varphi)
   \end{align*}
   and
   \begin{align*}
       |R_i| + |\tilde R_i| &\leq \frac{M}{n^{3/2}}L_3(\varphi) + \frac{Ma^3}{n^3}L_3(\varphi)
   \end{align*}
   Combining these bounds and summing over \(n\) gives the result.
\end{proof}

\begin{lemma}[Gaussian Denominator Anti-Concentration]
    \label{lemma:denom-anti-concentration}
    Suppose that the conditions of \Cref{thm:feasible-local-power} and \Cref{assm:balanced-design} hold. Then, for any sequence \(\delta_n \searrow 0\), 
    \[
       \Pr(\tilde D \leq \delta_n) \to 0
    \]
\end{lemma}
\begin{proof}[Proof of \Cref{lemma:denom-anti-concentration}]
   Since \(\kappa_i^2(\beta_0) \in [c^{-1},c]\) for all \(i = 1,\dots,n\) we have that \(\tilde D \geq \frac{c^{-1}}{n}\sum_{i=1}^n (\sum_{j=1}^n \tilde h_{ij}r_j)^2\). Then
    \begin{align*}
        \Pr(\tilde D \leq \delta_n) 
            &\leq  \Pr\big(\frac{1}{cn}\sum_{i=1}^n\big(\sum_{j=1}^n \tilde h_{ij}\tilde r_j\big)^2 \leq \tilde \delta_n \big) \\ 
            &= \Pr\big(\|\tilde r' \bar H^{1/2}\|^2 \leq \delta_n\big) \numberthis\label{eq:anticoncentration-quadratic-form}
    \end{align*}
    where \(\tilde r := (\tilde r_1,\dots,\tilde r_n)'\in \SR^n\) and \(\bar H := \frac{1}{cn}\tilde H \tilde H' \in \SR^{n\times n}\). \(\bar H\) is symmetric and positive semidefinite so we can take \(\bar H^{1/2}\) to be its symmetric square root, which will also be symmetric and positive semidefinite (and thus not necessarily equal to \(\sqrt{\frac{c}{n}}\tilde H\)). I provide two bounds on \eqref{eq:anticoncentration-quadratic-form}, the first of which corresponds to the strong identification setting while the second corresponds to weak identification.

    \emph{First Bound.}   Since \(\delta_n \searrow 0\) we will eventually have that \(\delta_n < c^{-1}/2\). When this happens we can bound using Chebyshev's inequality and \(c^{-1} < \E[r'\bar H r] < c\):
    \begin{align*}
        \Pr(\tilde r'\bar H \tilde r \leq \delta_n) 
        &= \Pr(\tilde r'\bar H \tilde r - \E[\tilde r' \bar H \tilde r] \leq \delta_n - \E[\tilde r' \bar H \tilde r]) \\ 
        &\leq \Pr(\tilde r'\bar H \tilde r - \E[r'\bar H r] \geq \E[\tilde r'\bar H \tilde r] - \delta_n) \\
        &\leq \Pr(|\tilde r'\bar H \tilde r - \E[r'\bar H r]| \geq \frac{1}{2c}) \\ 
        &\leq 2c\Var(r'\bar H r)\numberthis\label{eq:anticoncentration-first-bound}
    \end{align*}
    Under strong identification we will expect \(\Var(r'\bar H r) \to 0\).

    \emph{Second Bound.} For the second bound, we will directly use bounds on the density of Gaussian quadratic forms from \citet{anticoncentration-bernoulli-gotze-et-al}. The vector \(r'\bar H^{1/2}\) is Gaussian with covariance matrix \(\Sigma_r = \bar H^{1/2}\mR \bar H^{1/2}\) where \(\mR = \diag(\Var(r_1),\dots,\Var(r_n))\). Let \(\Lambda_1 = \sum_{k=1}^n \lambda_k^2(\Sigma_r)\) and  \(\Lambda_2 = \sum_{k=2}^n\lambda_k^2(\Sigma_r)\). By \Cref{assm:balanced-design} and \Cref{lemma:quadratic-form-eigenvalues}, \(\Lambda_2/\Lambda_1\) is bounded away from zero. Using \Cref{thm:quadratic-density-bound} we can then bound for some constant \(C > 0\)
    \begin{equation}
        \label{eq:anticoncentration-second-bound}
        \begin{split}
            \Pr(\|r'H\|^{1/2} \leq \delta_n) \leq C\delta_n\Lambda_1^{-1} 
        \end{split}
    \end{equation}
    
    \emph{Combining Bounds.} To combine the bounds in \eqref{eq:anticoncentration-first-bound} and \eqref{eq:anticoncentration-second-bound}, first write
    \[
        \Var(\tilde r'\bar H\tilde r) = 2\trace(\mR\bar H\mR \bar H) + 4\mu_r\bar H \mR \bar H \mu_r
    \]
    for \(\mu_r = \E[r]\). Using the fact that \(\bar H^{1/2}\mR \bar H^{1/2}\) is symmetric positive definite we can bound:
    \begin{align*}
        \mu_r'\bar H \mR\bar H \mu_r 
        &= (\mu_r'\bar H^{1/2})'(\bar H^{1/2}\mR\bar H^{1/2})(\bar H^{1/2}\mu_r) \\ 
        &\leq \lambda_1(\bar H^{1/2}\mR\bar H^{1/2})\|\mu_r'\bar H^{1/2}\|^2 \\ 
        &= \sqrt{\lambda_1^2(\bar H^{1/2}\mR\bar H^{1/2})}\|\mu_r'\bar H^{1/2}\|^2\\ 
        &= \sqrt{\lambda_1(\bar H^{1/2}\mR\bar H \mR \bar H^{1/2})}\|\mu_r'\bar H^{1/2}\|^2\\ 
        &\leq \sqrt{\trace(\bar H^{1/2}\mR\bar H \mR \bar H^{1/2})}\|\mu_r'\bar H^{1/2}\|^2\\
        &= \sqrt{\trace(\mR\bar H \mR \bar H)}\|\mu_r'\bar H\|^2  
        \leq c^2\Lambda_1^{1/2}\numberthis\label{eq:anticoncentration-combination-bound}
    \end{align*}
    where the first equality uses the symmetric square root of \(\bar H\), the first inequality comes from Courant-Fischer minmax principle and the third equality uses the fact that the eigenvalues of \(A^2\) are the squares of the eigenvalues of \(A\), for any generic symmetric matrix \(A\). The second inequality comes from the fact that a matrix times its transpose is always positive semidefinite and that for \(M\) psd, \(\lambda_1(M) \leq \sqrt{\trace(M^2)}\) since the trace is the sum of the (weakly positive) eigenvalues. The final inequality uses \(\mu_r'\bar H \mu_r = \frac{c}{n}\sum_{i=1}^n(\E[\tilde\Pi_i])^2 \leq \frac{c}{n}\sum_{i=1}^n \E[(\tilde\Pi_i)^2] \leq c^2 \).

    Combining \eqref{eq:anticoncentration-first-bound},~\eqref{eq:anticoncentration-second-bound}, and \eqref{eq:anticoncentration-combination-bound} gives us
    \begin{equation}
        \label{eq:anticoncentration-final-bound}
        \Pr(\tilde D \leq \delta_n) \leq C\min\left\{\Lambda_1 + \Lambda_1^{1/2}, \delta_n \Lambda_1^{-1}  \right\}
    \end{equation}
    Regardless of the behavior of \(\Lambda_1\), this tends to zero as \(\delta_n\to 0\).
\end{proof}
\begin{remark}[Final Anticoncentration Bound]
    \label{rem:explicit-anticoncentration}
    To give an explicit bound on \eqref{eq:anticoncentration-final-bound} in terms of \(\delta_n\) we note that, if \(x^\star\) solves
    \[
        x^\star + \sqrt{x^\star} = \frac{c}{x^\star} 
    \]
    then for any \(x \geq 0\), \(\min\{x + \sqrt{x}, c/x\} \leq x^\star + \sqrt{x^\star}\). Using this, notice that \((x^\star)^2 + (x^\star)^{3/2} = c\) so that \(x^\star \leq \sqrt{c}\). This allows us to bound \eqref{eq:anticoncentration-final-bound}
    \begin{align*}
        \Pr(\tilde D \leq \delta_n) \leq C\min\{\Lambda_1 + \Lambda_1^{1/2},\delta_n\Lambda_1^{-1}\} \leq C(\delta_n^{1/2} + \delta_n^{1/4})
    \end{align*}
    
\end{remark}
\begin{lemma}
    \label{lemma:decomp-to-ratio}
    Let \(X_n\) and \(Y_n\) be two sequences of random variables and let \(W_n = X_n/Y_n\). Then for any \(c \in \SR\) and any \(\delta > 0\):
    \begin{align*}
        \Pr(0 \leq X_n - cY_n \leq  \delta) 
        &\leq \Pr(c \leq W_n \leq \delta^{1/2} + c) + \Pr(Y_n \leq \delta^{1/2}) 
        \intertext{and}
        \Pr( - \delta \leq X_n - cY_n \leq 0) &\leq \Pr( c - \delta^{1/2} \leq W_n \leq c) + \Pr(Y_n \leq \delta^{1/2})
    \end{align*}
\end{lemma}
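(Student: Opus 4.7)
The plan is to prove both bounds by the same two-step decomposition: split on whether $Y_n$ is bounded away from zero by the threshold $\delta^{1/2}$, and handle the small-$Y_n$ part crudely while converting the linear bound on $X_n - cY_n$ into a ratio bound on the complementary event. The choice of threshold $\delta^{1/2}$ is deliberate: on $\{Y_n > \delta^{1/2}\}$ we will have $\delta/Y_n < \delta^{1/2}$, which is exactly the slack needed to match the target bound on $W_n$.

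For the first inequality I would begin by writing
\[
    \Pr(0 \leq X_n - cY_n \leq \delta) \leq \Pr(0 \leq X_n - cY_n \leq \delta,\; Y_n > \delta^{1/2}) + \Pr(Y_n \leq \delta^{1/2}).
\]
On the event $\{Y_n > \delta^{1/2}\}$ the denominator is strictly positive, so I can divide $0 \leq X_n - cY_n \leq \delta$ through by $Y_n$ without reversing inequalities to obtain $0 \leq W_n - c \leq \delta/Y_n$. Combining with $Y_n > \delta^{1/2}$ gives $\delta/Y_n < \delta^{1/2}$, hence $W_n \in [c,\,c + \delta^{1/2}]$ on this event. Re-assembling the two pieces produces the first claimed inequality.

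The second inequality follows the same template. Decompose again according to $\{Y_n > \delta^{1/2}\}$ versus $\{Y_n \leq \delta^{1/2}\}$; on the favorable event, dividing $-\delta \leq X_n - cY_n \leq 0$ by the positive $Y_n$ yields $-\delta^{1/2} < -\delta/Y_n \leq W_n - c \leq 0$, so $W_n \in [c - \delta^{1/2},\,c]$. The residual piece is again absorbed into $\Pr(Y_n \leq \delta^{1/2})$.

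There is no genuine obstacle in this argument — the manipulations are pathwise and deterministic once the decomposition is in place. The only point that requires a little care is the positivity of $Y_n$ when dividing: if $Y_n$ is permitted to be negative, then $\{Y_n < 0\}$ is automatically a subset of $\{Y_n \leq \delta^{1/2}\}$ since $\delta^{1/2} > 0$, so the correction term $\Pr(Y_n \leq \delta^{1/2})$ still captures it, and on the complement $Y_n > \delta^{1/2} > 0$ the division is safe. This matches the intended use of the lemma, where $Y_n$ will be taken to be the scaled denominator $\tilde D \geq 0$ together with its small perturbations.
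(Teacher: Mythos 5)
Your proof is correct and follows essentially the same route as the paper's: both decompose on whether $Y_n$ exceeds the threshold $\delta^{1/2}$, divide the linear inequality through by $Y_n$ on the favorable event, and absorb the complement into $\Pr(Y_n \leq \delta^{1/2})$. The only cosmetic difference is that you split on $\{Y_n > \delta^{1/2}\}$ where the paper uses $\{Y_n \geq \delta^{1/2}\}$, and you add an explicit remark that negative $Y_n$ is automatically covered by the correction term — a small clarification the paper leaves implicit.
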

\begin{proof}
    Define the event \(\Omega = \{Y_n \geq \delta^{1/2}\}\). We can bound
    \begin{align*}
        \Pr(0 \leq X_n - cY_n \leq \delta) 
        &= \Pr(cY_n \leq X_n \leq \delta + cY_n) \\ 
        &\leq \Pr(\{cY_n \leq X_n \leq \delta + cY_n\} \cap \Omega) + \Pr(\Omega^c) \\ 
        &= \Pr(\{c \leq W_n \leq \delta/Y_n + c\}\cap\Omega) + \Pr(\Omega^c) \\ 
        &\leq \Pr(c \leq W_n \leq \delta^{1/2} + c) + \Pr(\Omega^c)
    \end{align*}
    The second statement of the lemma follows symmetrically.
\end{proof}

\begin{lemma}[]
    \label{lemma:Op1-comp}
    Suppose that \(X_n\) and \(Y_n\) are sequences of (real-valued) random variables such that \(Y_n = O_p(1)\) and for any \(x \in \SR\)
    \[
        |\Pr(X_n \leq x) - \Pr(Y_n \leq x)| \to 0
    \]
    Then \(X_n = O_p(1)\).
\end{lemma}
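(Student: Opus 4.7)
The plan is to exploit the standard tightness characterization of $O_p(1)$: a sequence $Y_n$ is $O_p(1)$ if and only if for every $\epsilon > 0$ there exists $M > 0$ such that $\Pr(|Y_n| > M) < \epsilon$ for all $n$ sufficiently large. I will transfer such a tightness bound from $Y_n$ to $X_n$ by applying the hypothesized pointwise CDF-approximation at the two endpoints $\pm M$.

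Given $\epsilon > 0$, the first step would be to use $Y_n = O_p(1)$ to pick $M > 0$ large enough that $\Pr(|Y_n| > M) < \epsilon/4$ for all $n$ sufficiently large (enlarging $M$ slightly if necessary so that $-M$ is not an atom of $Y_n$, which is harmless). This delivers both $\Pr(Y_n \leq M) > 1 - \epsilon/4$ and $\Pr(Y_n \leq -M) < \epsilon/4$. The second step would be to invoke the pointwise approximation hypothesis at the two fixed points $x = M$ and $x = -M$: for all $n$ sufficiently large, $|\Pr(X_n \leq M) - \Pr(Y_n \leq M)| < \epsilon/4$ and $|\Pr(X_n \leq -M) - \Pr(Y_n \leq -M)| < \epsilon/4$. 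Combining these with the tightness bounds on $Y_n$ yields $\Pr(X_n \leq M) > 1 - \epsilon/2$ and $\Pr(X_n \leq -M) < \epsilon/2$.

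The final step is a simple subtraction of CDFs. Since $\Pr(X_n < -M) \leq \Pr(X_n \leq -M)$,
\[
    \Pr(|X_n| \leq M) \geq \Pr(X_n \leq M) - \Pr(X_n \leq -M) > 1 - \epsilon,
\]
and, as $\epsilon$ was arbitrary, this is precisely the definition of $X_n = O_p(1)$. There is essentially no substantive obstacle here; the argument is a short $\epsilon$-budget exercise splitting the total slack into an $\epsilon/4$ piece for the tightness of $Y_n$ and $\epsilon/4$ pieces for each of the two pointwise CDF approximations. In particular, neither uniform convergence of the CDFs nor continuity of any limiting distribution is needed: the hypothesis is invoked only at the two fixed points $\pm M$ selected on the basis of $\epsilon$.
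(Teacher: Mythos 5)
Your proof is correct, and the mechanism is the same as the paper's: fix $\epsilon$, use tightness of $Y_n$ to choose $M$, transfer the tail bound to $X_n$ by invoking the pointwise CDF approximation at the fixed evaluation point(s). The one genuine difference is completeness: you control both tails by applying the hypothesis at $x = M$ and $x = -M$, whereas the paper's own proof only picks $M_{\epsilon/2}$ with $\Pr(Y_n > M_{\epsilon/2}) \leq \epsilon/2$ and deduces $\Pr(X_n > M_{\epsilon/2}) \leq \epsilon$, silently leaving the lower tail $\Pr(X_n < -M)$ unaddressed. That one-sided argument suffices in the paper's application (the lemma is invoked only for nonnegative statistics such as $\JK_I(\beta_0)$ and $\JK_G(\beta_0)$), but as a proof of the lemma as stated for arbitrary real-valued sequences it is incomplete; your two-sided treatment closes that gap. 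One small simplification you could make: rather than perturbing $M$ to avoid atoms of $Y_n$ at $-M$, just observe that $\Pr(Y_n \leq -(M+1)) \leq \Pr(|Y_n| > M) < \epsilon/4$, i.e.\ replace $M$ by $M+1$ before applying the pointwise hypothesis, which sidesteps any measure-theoretic bookkeeping.
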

\begin{proof}
    Pick any \(\eps > 0\), and let \(M_{\eps/2}\) be such that \(\Pr(Y_n > M_{\eps /2}) \leq \eps/2 \) for all \(n \geq N_\eps\). In addition, let \(\tilde N_\eps\) be such that \(|\Pr(X_n \leq M_{\eps/2}) - \Pr(Y_n \leq M_{\eps/2})| \leq \eps/2\) for all \(n \geq \tilde N_\eps\). Then for all \(n \geq N_\eps \vee \tilde N_{\eps/2}\), 
    \begin{align*}
    \Pr(X_n > M_{\eps/2}) 
    &\leq \Pr(Y_n > M_{\eps/2}) + |\Pr(X_n > M_{\eps/2}) - \Pr(Y_n > M_{\eps/2})| \\ 
    &\leq \eps/2 + |\Pr(Y_n \leq M_{\eps/2}) - \Pr(X_n \leq M_{\eps/2})| \\ 
    &\leq \eps/2 + \eps/2 = \eps
    \end{align*}
\end{proof}
\begin{lemma}[]
    \label{lemma:Op1-gc}
    Suppose that \(X_n\) and \(Y_n\) are sequences of (real-valued) random variables such that \(Y_n = O_p(1)\) and for any \(\Delta \in \SR\)
    \[
        \sup_{x \leq \Delta}|\Pr(X_n \leq x) - \Pr(Y_n \leq x)| \to 0
    \]
    Then \(\sup_{x\in\SR}|\Pr(X_n \leq x) - \Pr(Y_n \leq x)| \to 0\).
\end{lemma}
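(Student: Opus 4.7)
The plan is to leverage the tightness of $Y_n$ to reduce the supremum over the entire real line to a supremum over a half-line $x \leq \Delta$ (which the hypothesis already controls), handling the residual range $x > \Delta$ by a crude tail bound. The argument is a standard P\'olya-type truncation except that uniformity on the half-line is supplied as a hypothesis rather than derived from continuity of a limiting CDF.

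Fix $\epsilon > 0$. Since $Y_n = O_p(1)$, I would first pick $\Delta_\epsilon > 0$ so large that $\Pr(Y_n > \Delta_\epsilon) \leq \epsilon/3$ uniformly in $n$ (absorbing any finitely many initial indices into a larger choice of $\Delta_\epsilon$ if necessary). The hypothesis applied to this $\Delta_\epsilon$ then yields $N_\epsilon$ such that $\sup_{x \leq \Delta_\epsilon}|\Pr(X_n \leq x) - \Pr(Y_n \leq x)| \leq \epsilon/3$ for all $n \geq N_\epsilon$. Evaluating this bound at the single point $x = \Delta_\epsilon$ transfers the tail control from $Y_n$ to $X_n$:
\[
\Pr(X_n > \Delta_\epsilon) \leq \Pr(Y_n > \Delta_\epsilon) + \epsilon/3 \leq 2\epsilon/3.
\]

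Next, for $x > \Delta_\epsilon$, monotonicity of survival functions gives
\[
|\Pr(X_n \leq x) - \Pr(Y_n \leq x)| = |\Pr(X_n > x) - \Pr(Y_n > x)| \leq \max\{\Pr(X_n > \Delta_\epsilon),\, \Pr(Y_n > \Delta_\epsilon)\} \leq 2\epsilon/3.
\]
Combining with the half-line bound for $x \leq \Delta_\epsilon$ yields $\sup_{x \in \mathbb{R}}|\Pr(X_n \leq x) - \Pr(Y_n \leq x)| \leq 2\epsilon/3 < \epsilon$ for all $n \geq N_\epsilon$, which proves the claim since $\epsilon$ was arbitrary. There is no meaningful obstacle here; the only minor subtlety is that tightness of $Y_n$ at $+\infty$ alone is used, since the left tail is already absorbed into the supremum over $x \leq \Delta_\epsilon$.
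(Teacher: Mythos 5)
Your proof is correct and takes essentially the same approach as the paper's: control the body $x \leq \Delta_\epsilon$ with the half-line hypothesis and handle the tail $x > \Delta_\epsilon$ with a crude bound derived from tightness. The only differences are cosmetic — you inline the argument that $X_n$ inherits tightness from $Y_n$ (which the paper delegates to a separate lemma, \Cref{lemma:Op1-comp}), and you bound the tail discrepancy by a $\max$ of survival probabilities rather than a sum.
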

\begin{proof}
    Pick an \(\eps > 0\). By \Cref{lemma:Op1-comp}, \(X_n = O_p(1)\). Pick a constant \(M_{\eps/3}\) such that \(\Pr(X_n > M_{\eps/3}) \leq \eps/3\) and \(\Pr(Y_n > M_{\eps/3}) \leq \eps/3\). Then for any \(x \in \SR\) we can bound \(|\Pr(X_n \leq x) - \Pr(Y_n \leq x)|\) by considering two cases:

    \textbf{Case 1.} If \(x \leq M_{\eps /3}\), then,
    \begin{equation}
        \label{eq:op1-gc-1}
        |\Pr(X_n \leq x) - \Pr(Y_n \leq x)| \leq \sup_{x \leq M_{\eps / 3}}|\Pr(X_n \leq x) - \Pr(Y_n \leq x)|
    \end{equation}
    by hypothesis, there is an \(N_{\eps}\) such that for \(n \geq N_{\eps}\) the RHS of \eqref{eq:op1-gc-1} is less than \(\eps\).

    \textbf{Case 2.} If \(x > M_{\eps/3}\) we can bound
    \begin{align*}
        |\Pr(X_n \leq x) - \Pr(Y_n \leq x)| 
        &\leq |\Pr(X_n \leq M_{\eps/3}) - \Pr(Y_n \leq M_{\eps/3})| \\ 
        &\;\;\;+ |\Pr(M_{\eps/3} < X_n \leq x) - \Pr(M_{\eps/3} < Y_n \leq x)| \\ 
        &\leq  |\Pr(X_n \leq M_{\eps/3}) - \Pr(Y_n \leq M_{\eps/3})| + \eps/3 + \eps/3 \numberthis\label{eq:op1-gc-2}
    \end{align*}
    By hypothesis, there is an \(N_{\eps/3}\) such that \(|\Pr(X_n \leq M_{\eps/3}) - \Pr(Y_n \leq N_{\eps/3})| \leq \eps/3\).

    WLOG \(N_{\eps/3} \geq N_\eps\). Combining the bounds in \eqref{eq:op1-gc-1} and \eqref{eq:op1-gc-2}, for any \(n \geq N_{\eps/3}\) and any \(x \in \SR\), 
    \[
        |\Pr(X_n \leq x) - \Pr(Y_n \leq x)| \leq \eps 
    \]
    Since this holds for all \(x\), this gives the result.
\end{proof}
\begin{lemma}[Approximate Distribution]
    \label{lemma:approximate-distribution}
    Under \Cref{assm:balanced-design,assm:local-identification} and the conditions of \Cref{thm:feasible-local-power}
    \[
        \sup_{a \in \SR}|\Pr(\JK_I(\beta_0) \leq a) - \Pr(\JK_G(\beta_0) \leq a)|\to 0
    \]
\end{lemma}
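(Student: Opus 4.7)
The plan is to exploit the event equivalence in \eqref{eq:equivalent-events} to recast the problem as approximating $\Pr(\JK^a \leq 0)$ by $\Pr(\tilde\JK^a \leq 0)$ uniformly in $a \geq 0$ (both events are empty for $a < 0$), then to attack this via smoothed Lindeberg interpolation (Lemma~\ref{lemma:lindeberg-interpolation}) paired with the Gaussian denominator anti-concentration (Lemma~\ref{lemma:denom-anti-concentration}). I would first establish the locally uniform bound on $[0,\Delta]$ for each fixed $\Delta > 0$, and then promote it to a global bound via Lemma~\ref{lemma:Op1-gc}.

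For the locally uniform step, pick a sandwiching family $\{\varphi_\delta\}_{\delta > 0} \subset C_b^3(\SR)$ with $\mathbf{1}\{x \leq -\delta\} \leq \varphi_\delta(x) \leq \mathbf{1}\{x \leq 0\}$ and $L_2(\varphi_\delta) + L_3(\varphi_\delta) \leq C\delta^{-3}$. Bracketing the indicator from above and below gives, for each $a \in [0, \Delta]$,
\[
|\Pr(\JK^a \leq 0) - \Pr(\tilde\JK^a \leq 0)| \leq |\E[\varphi_\delta(\JK^a) - \varphi_\delta(\tilde\JK^a)]| + \Pr(\tilde\JK^a \in (-\delta, 0])
\]
together with a symmetric bound using the lower bracket. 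Lemma~\ref{lemma:lindeberg-interpolation} controls the first term by $M(\Delta^3 \vee 1)\delta^{-3}/\sqrt{n}$ uniformly over $a \in [0, \Delta]$. For the smoothing error, apply Lemma~\ref{lemma:decomp-to-ratio} with $X_n = \tilde N^2$, $Y_n = \tilde D$, $c = a$ to obtain
\[
\Pr(\tilde\JK^a \in (-\delta, 0]) \leq \Pr\bigl(a - \delta^{1/2} \leq \JK_G(\beta_0) \leq a\bigr) + \Pr(\tilde D \leq \delta^{1/2}),
\]
where the second summand vanishes as $\delta \to 0$ by Lemma~\ref{lemma:denom-anti-concentration}. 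Choosing $\delta_n \to 0$ slowly enough that $\delta_n^{-3}/\sqrt n \to 0$ (say $\delta_n = n^{-1/7}$) drives the interpolation error to zero.

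The main obstacle is the $\chi^2$-type term $\Pr(a - \delta^{1/2} \leq \JK_G \leq a)$, since conditionally on $\tilde r$ the statistic $\JK_G$ is a scaled noncentral $\chi^2_1$ whose density is unbounded near zero, so no uniform density bound on $[0, \Delta]$ is available. I would handle this by splitting into two regimes: for $a \leq \delta^{1/2}$ bound the probability above by $\Pr(\JK_G \leq \delta^{1/2})$, which is $O(\delta^{1/4})$ by a direct $\chi^2$-tail computation (using that the noncentrality parameter of the conditional distribution has expectation bounded by a constant multiple of $P$, hence finite by Assumption~\ref{assm:local-identification}); for $a > \delta^{1/2}$ the density of $\JK_G$ is uniformly bounded on $[\delta^{1/2}, \infty)$ and a telescoping $\sqrt{a} - \sqrt{a - \delta^{1/2}}$ estimate again gives $O(\delta^{1/4})$. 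This finishes the locally uniform bound. Finally, since $\JK_G$ is tight (as a noncentral $\chi^2_1$ with noncentrality of bounded expectation it is $O_p(1)$), Lemma~\ref{lemma:Op1-gc} upgrades the locally uniform approximation to the claimed uniform one over $\SR$.
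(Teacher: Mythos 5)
Your proposal follows essentially the same route as the paper's proof: pass to the decomposed events $\{\JK^a\le 0\}$ and $\{\tilde\JK^a\le 0\}$, sandwich the indicator between smooth brackets, invoke Lemma~\ref{lemma:lindeberg-interpolation} on the decomposed statistics, dispatch the smoothing error via Lemma~\ref{lemma:decomp-to-ratio} together with the denominator anticoncentration of Lemma~\ref{lemma:denom-anti-concentration}, and finally promote the locally uniform bound to a global one using the tightness of $\JK_G(\beta_0)$ and Lemma~\ref{lemma:Op1-gc}.

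The one place you deviate is in bounding $\Pr\bigl(a - \delta^{1/2} \le \JK_G(\beta_0) \le a\bigr)$. You correctly flag that the conditional density of $\JK_G(\beta_0)$ (a scaled noncentral $\chi^2_1$) is unbounded near the origin, and then resolve this with a two-regime split at $a = \delta^{1/2}$, handling the small-$a$ regime by a direct $\chi^2$ tail estimate and the large-$a$ regime by a density bound plus a $\sqrt{a}-\sqrt{a-\delta^{1/2}}$ estimate. This works, but the paper avoids the split entirely with a slightly cleaner observation: write the event as $\{\sqrt{a} \le |\tilde N/\tilde D^{1/2}| \le \sqrt{a + \delta}\}$ and note that the relevant density to control is that of the \emph{unsquared} ratio $\tilde N/\tilde D^{1/2}$, which conditionally on $\tilde r$ is Gaussian with variance bounded below (Lemma~\ref{lemma:approximation-error-bound}) and hence has a uniformly bounded density over all of $\SR$. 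The elementary inequality $\sqrt{a+\delta} - \sqrt{a} \le \sqrt{\delta}$, valid for all $a \ge 0$, then yields the rate $O(\delta^{1/2})$ in a single uniform step. Both arguments give the same polynomial rate and the same conclusion; the paper's version is just shorter because it never confronts the unbounded density of the squared statistic in the first place.
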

\begin{proof}[Proof of \Cref{lemma:approximate-distribution}]
    First, fix a \(\Delta \geq 0\) and consider any \(a \leq \Delta\). As in \Cref{lemma:denom-anti-concentration}, let \(\tilde \varphi(\cdot): \SR \to \SR\) be three times continuously differentiable with bounded derivatives up to the third order such that \(\tilde\varphi(x)\) is 1 if \(x \leq 0\), \(\tilde\varphi(x)\) is decreasing if \(x \in (0,1)\), and \(\tilde\varphi(x)\) is zero if \(x \geq 1\). Consider a sequence \(\gamma_n \searrow 0\) slowly enough such that \((\gamma_n^{-2} + \gamma_n^{-3})/\sqrt{n} \to 0\) and define \(\varphi_n(x) = \tilde\varphi(\frac{x}{\gamma_n})\).

    By \Cref{lemma:lindeberg-interpolation} we can write for some constant \(M\) that depends only on \(\Delta\):
    \begin{align*}
        \Pr(\JK_I(\beta_0) \leq a) 
        = \Pr(\JK^a \leq 0) 
        &\leq \E[\varphi_n(\JK^a)] \\ 
        &\leq \E[\varphi_n(\tilde \JK^a)] + \frac{M}{\sqrt{n}}(\gamma_n^2 + \gamma_n^{-3}) \\ 
        &\leq \Pr(\tilde\JK^a \leq 0) + \Pr(0 \leq \tilde N^2 - a \tilde D \leq \gamma_n) + \frac{M}{\sqrt{n}}(\gamma_n^2 + \gamma_n^{-3})
        \intertext{Applying \Cref{lemma:decomp-to-ratio} and \(\{\tilde\JK^a \leq 0\} = \{\JK_G(\beta_0) \leq a\}\) gives:}
        &\leq \Pr(\JK_G(\beta_0) \leq a) + \underbrace{\Pr(a \leq \tilde N^2/\tilde D \leq a + \gamma_n^{1/2})}_{\vA} \\
        &\;\;\;+ \underbrace{\Pr(\tilde D \leq \gamma_n^{1/2})}_{\vB}+ \frac{M}{\sqrt{n}}(\gamma_n^{-2} + \gamma_n^{-3}) 
    \end{align*}
    By \Cref{lemma:approximation-error-bound}, we can bound \(\vA \leq M\gamma_n^{1/2}\)  while by \Cref{lemma:denom-anti-concentration} and \Cref{rem:explicit-anticoncentration}, \(\vB \leq M \gamma_n^{1/4}\). Since \(\gamma_n\) is chosen such that \(\frac{M}{\sqrt{n}}(\gamma_n^{-2} + \gamma_n^{-3}) \to 0\) we can conclude that \(\Pr(\JK_I(\beta_0) \leq a) \leq \Pr(\JK_G(\beta_0) \leq a) + o(1)\). A symmetric argument with \(\varphi_n(x) = \tilde\varphi(1 - \frac{x}{\gamma_n})\) gives a lower bound so that, in total 
    \[
        \Pr(\JK_G(\beta_0) \leq a) - \ve \leq \Pr(\JK_I(\beta_0) \leq a) \leq \Pr(\JK_G(\beta_0) \leq a) + \ve
    \]
    where 
    \[
        \ve = M\big(\frac{\gamma_n^{-2} + \gamma_n^{-3}}{\sqrt{n}} + \gamma_n^{1/2} + \gamma_n^{1/4}\big) = o(1)
    \]
    Since the constant M depends only on \(\Delta\), this gives us that for any fixed \(\Delta > 0\)
    \begin{equation}
        \label{eq:almost-uniform-bound}
        \sup_{a \leq \Delta}\big|\Pr(\JK_I(\beta_0) \leq a) - \Pr(\JK_G(\beta_0) \leq a)\big| \leq C\big(\frac{\gamma_n^{-2} + \gamma_n^{-3}}{\sqrt{n}} + \gamma_n^{1/2} + \gamma_n^{1/4}\big) = o(1)
    \end{equation}
    where \(C\) is a constant that depends only on \(\Delta\).
    Noting that the numerator \(\JK_G(\beta_0)\) is \(O_p(1)\) under \Cref{assm:local-identification} while the inverse of the denominator of \(\JK_G(\beta_0)\) is \(O_p(1)\) by \Cref{lemma:denom-anti-concentration}, we can apply \Cref{lemma:Op1-gc}. This step shows that the result in \eqref{eq:almost-uniform-bound} implies that the approximation error tends to zero uniformly over the real line, which is the desired result. Optimizing over \(\gamma_n\) in the expression of \eqref{eq:almost-uniform-bound} yields the rate of decay in \Cref{rem:infeasible-error}.
\end{proof}

\subsection{Proof of \Cref{lemma:estimation-error-high-level}}
\label{subsec:estimation-error}

\begin{proof}[Proof of \Cref{lemma:estimation-error-high-level}]
     For \(N\) and \(D\) defined at the top of \Cref{subsec:local-power-proof} define \(\widehat N = N + \Delta_N\) and \(\widehat D = D + \Delta_D\). We can then write \(\JK(\beta_0) = \widehat N^2 / \widehat D\) and rewrite
\[
    \JK(\beta_0) - \JK_I(\beta_0) = \frac{2ND\Delta_N + D\Delta_N - N^2\Delta_D}{D^2 + D\Delta_D} 
\]
Apply \Cref{lemma:numerator-bound} to see that \(N^2 = O_p(1)\) while under \Cref{assm:balanced-design}, \(D = O_p(1)\). Thus, \(2ND\Delta_n + D\Delta_n - N^2\Delta_D = o_p(1)\). Meanwhile, by \Cref{lemma:denom-anti-concentration-regular}, \(\Pr(D^2 \leq  \delta_n) \to 0\) for any sequence \(\delta_n \to 0\). Apply \Cref{lemma:estimation-error-fraction} to obtain that \(|\JK(\beta_0) - \JK_I(\beta_0)|\to_p 0\).

Finally, apply \Cref{lemma:probability-to-distribution} with \(X_n = \JK(\beta_0)\), \(Y_n = \JK_I(\beta_0)\) and \(Z_n = \JK_G(\beta_0)\). The density of \(Z_n\) is uniformly bounded by \Cref{lemma:approximation-error-bound} to show that the distribution of \(\JK(\beta_0)\) may be uniformly approximated by the distribution of \(\JK_G(\beta_0)\).
\end{proof}

\begin{lemma}[]
    \label{lemma:estimation-error-fraction}
    Let \(A_n, B_n\) and \(Y_n\) be sequences of random variables such that \(A_n = o_p(1)\) and \(B_n = o_p(1)\). If \(Y_n\) is such that for any sequence \(\delta_n \to 0\), \(\Pr(|Y_n| \leq \delta_n)\to 0\), then,
    \[
        \bigg|\frac{A_n}{Y_n + B_n}\bigg| = o_p(1)
    \]
\end{lemma}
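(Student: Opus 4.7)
The plan is to reduce the claim to showing that $1/|Y_n + B_n| = O_p(1)$. Once that is established, the pointwise inequality $|A_n/(Y_n + B_n)| \leq |A_n| \cdot (1/|Y_n + B_n|)$, together with the product rule for stochastic order (an $o_p(1)$ sequence times an $O_p(1)$ sequence is $o_p(1)$), yields the conclusion immediately.

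The main step is therefore to prove $1/|Y_n + B_n| = O_p(1)$; equivalently, to show that for every $\epsilon > 0$ there exists a fixed constant $c > 0$ (not depending on $n$) with $\Pr(|Y_n + B_n| < c) \leq \epsilon$ for all sufficiently large $n$. The reverse triangle inequality $|Y_n + B_n| \geq |Y_n| - |B_n|$ gives the union-bound decomposition
\[
\Pr(|Y_n + B_n| < c) \leq \Pr(|Y_n| < 2c) + \Pr(|B_n| \geq c).
\]
The second term tends to zero for any fixed $c > 0$ by the hypothesis $B_n = o_p(1)$, so it suffices to choose $c$ small enough that the first term is at most $\epsilon/2$ eventually.

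The only genuinely delicate step is converting the sequential anticoncentration hypothesis on $Y_n$ (stated for all deterministic $\delta_n \to 0$) into a statement with a single fixed threshold $c$. I would do this by contradiction: if no such $c > 0$ achieved $\Pr(|Y_n| < 2c) \leq \epsilon/2$ eventually, then for each positive integer $k$ one could find $n_k$ (chosen strictly increasing in $k$) with $\Pr(|Y_{n_k}| < 1/k) > \epsilon/2$. Defining a deterministic sequence $\delta_n = 1/k$ for $n_k \leq n < n_{k+1}$ (and, say, $\delta_n = 1$ for $n < n_1$) produces $\delta_n \to 0$ while $\Pr(|Y_{n_k}| \leq \delta_{n_k}) > \epsilon/2$ for every $k$, contradicting the hypothesis that $\Pr(|Y_n| \leq \delta_n) \to 0$ along every null sequence. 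Combining the two bounds gives $\Pr(|Y_n + B_n| < c) \leq \epsilon$ for large $n$, which establishes $1/|Y_n + B_n| = O_p(1)$ and completes the proof; I expect this sequential-to-uniform translation to be the only nonroutine point.
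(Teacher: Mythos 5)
Your proof is correct and takes essentially the same approach as the paper: both rest on the reverse triangle inequality $|Y_n + B_n| \geq |Y_n| - |B_n|$ together with a translation of the sequential anticoncentration hypothesis on $Y_n$ into a usable threshold. You package this as proving $1/|Y_n + B_n| = O_p(1)$ via a contradiction argument and then invoking the $o_p(1)\cdot O_p(1)$ product rule, while the paper extracts a sequence $\eps_n \searrow 0$ from the helper Lemma~\ref{lemma:op1-sequence} and bounds the ratio directly on an intersection of three high-probability events, but the mathematical content is identical.
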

\begin{proof}
    Fix any \(\eps > 0\). We show that
    \[
        \bigg|\frac{A_n}{Y_n + B_n}\bigg| \leq \eps
    \]
    on an intersection of events whose probability tends to one. By \Cref{lemma:op1-sequence} there is a sequence \(\eps_n \searrow 0\) such that
    \[
        \Pr(|A_n| \leq \eps_n) \to 1 \andbox \Pr(\eps|B_n| \leq \eps_n)\to 1
    \]
    Consider the intersection of events \(\Omega_1 \cap\Omega_2\cap\Omega_3\) where
    \[
        \Omega_1 := \{\eps|Y_n| \geq 2\eps_n\},\;\;\Omega_2 := \{\eps|B_n| \leq \eps_n\},\;\;\Omega_3 := \{|A_n| \leq \eps_n\}
    \]
    By assumption, \(\Pr(\Omega_1\cap\Omega_2\cap\Omega_3) \to 1\). On this event \(|Y_n + B_n| \geq \eps_n/\eps > 0\) and \(|A_n| \leq \eps_n\) so that \(|A_n/(Y_n + B_n)| \leq |\eps_n /(\eps_n/\eps)| \leq \eps\).
\end{proof}

\begin{lemma}[Denominator Interpolation]
    \label{lemma:denominator-interpolation}
    Suppose that the moment bounds of \Cref{thm:feasible-local-power} and \Cref{assm:balanced-design} hold. Let \(\varphi(\cdot):\SR \to \SR\) be such that \(\varphi(\cdot) \in C_b^3(\SR)\) with \(L_2(\varphi) = \sup_x|\varphi''(x)|\) and \(L_3(\varphi) = \sup_x|\varphi'''(x)|\). Then there is a constant \(M\) that depends only on the constant \(c\) such that: 
    \[
        |\E[\varphi(D) - \varphi(\tilde D)]| \leq \frac{M}{\sqrt{n}}(L_2(\varphi) + L_3(\varphi))
    \]
\end{lemma}
\begin{proof}[Proof of \Cref{lemma:denominator-interpolation}]
    We inherit the definitions of \(D_{-i}\), \(\Delta_{2i}^a\), \(\Delta_{2i}^b\), \(\tilde\Delta_{2i}^a\), and \(\tilde\Delta_{2i}^b\) from the proof of \Cref{lemma:lindeberg-interpolation} with \(a = 1\). Then, as before we can write 
    \begin{align*}
        \E[\varphi(D) - \varphi(\tilde D)] 
        &= \sum_{i=1}^n \E[\varphi(D_{-i} + n^{-1}\Delta_{2i}^a + n^{-1}\Delta_{2i}^b)] \\ 
        &\hphantom{=\sum_{i=1}^n }-\E[\varphi(D_{-i} + n^{-1}\tilde\Delta_{2i}^a + n^{-1}\tilde\Delta_{2i}^b)]
    \end{align*}
    We examine each term via a second-order Taylor expansion around \(D_{-i}\)
    \begin{align*}
        \E[\text{Term}_i] 
        &= \frac{1}{n}\E[\varphi'(D_{-i})\{(\Delta_{2i}^a - \tilde\Delta_{2i}^a) + (\Delta_{2i}^b - \tilde\Delta_{2i}^b)\}] \\ 
        &+ \frac{1}{2n^2}\E[\varphi''(D_{-i})\{((\Delta_{2i}^a)^2 - (\tilde\Delta_{2i}^a)^2) + 2(\Delta_{2i}^a \Delta_{2i}^b - \tilde\Delta_{2i}^a \tilde\Delta_{2i}^b) + ((\Delta_{2i}^b)^2 - (\Delta_{2i}^b)^2)\}] \\ 
        &+ R_i + \tilde R_i
    \end{align*}
    where \(R_i\) and \(\tilde R_i\) are remainder terms to be analyzed later. Using the restrictions in \eqref{eq:interpolation-matched-moments} we can simplify the above display:
    \begin{align*}
        \E[\text{Term}_i]
        &= \underbrace{0.5n^{-2}\E[\varphi''(D_{-i})((\Delta_{2i}^a)^2 - (\tilde\Delta_{2i}^a)^2)]}_{\dot\vA_i} + \underbrace{n^{-2}\E[\varphi''(K_{-i})(\Delta_{2i}^a \Delta_{2i}^b - \tilde\Delta_{2i}^a \tilde\Delta_{2i}^b)}_{\dot\vB_i} \\ 
        &+ R_i + \tilde R_i
    \end{align*}
    Using \Cref{lemma:delta-moment-bounds} we can bound
    \begin{align*}
        |\vA_i| &\leq \frac{M}{n^2}L_2(\varphi)  & |\vB_i| &\leq \frac{M}{n^{3/2}}L_2(\varphi)
    \end{align*}
    For some \(\bar D_{1i}\) and \(\bar D_{2i}\) we can express 
    \begin{align*}
        R_i &= \E[\varphi'''(\bar D_{1i})\{n^{-1}\Delta_{2i}^a + \Delta_{2i}^b\}^3] \leq \frac{M}{n^{3/2}}L_3(\varphi) + \frac{M}{n^3}L_3(\varphi)\\
        R_i &= \E[\varphi'''(\bar D_{2i})\{n^{-1}\tilde\Delta_{2i}^a + \tilde\Delta_{2i}^b\}^3] \leq \frac{M}{n^{3/2}}L_3(\varphi) + \frac{M}{n^3}L_3(\varphi)
    \end{align*}
    where the inequalities again come from applications of \Cref{lemma:delta-moment-bounds}. Combining these bounds and summing over the \(n\) terms gives the result.
\end{proof}

\begin{lemma}[Denominator anti-concentration]
    \label{lemma:denom-anti-concentration-regular}
    Suppose that the moment bounds of \Cref{thm:feasible-local-power} and \Cref{assm:balanced-design} hold. Then, for any sequence \(\delta_n \searrow 0\),
    \[
        \Pr(D \leq \delta_n) \to 0
    \]
\end{lemma}
\begin{proof}[Proof of \Cref{lemma:denom-anti-concentration-regular}]
  Let \(\tilde \varphi(\cdot): \SR \to \SR\) be three times continuously differentiable with bounded derivatives up to the third order such that \(\tilde\varphi(x)\) is 1 if \(x \leq 0\), \(\tilde\varphi(x)\) is decreasing if \(x \in (0,1)\), and \(\tilde\varphi(x)\) is zero if \(x \geq 1\). 
    Consider a second sequence \(\gamma_n \searrow 0\) slowly enough such that \((\gamma_n^{-2} + \gamma_n^{-3})/\sqrt{n} \to 0\). Take \(\varphi_n(x) = \tilde\varphi(\frac{x- \delta_n}{\gamma_n})\).
    By \Cref{lemma:denominator-interpolation} and since \(\tilde\varphi(\cdot)\) has bounded derivatives up to the third order, there is a fixed constant \(M_1 > 0\) that depends only on \(c\) such that 
    \begin{align*}
        \Pr(D \leq \delta_n) 
        &\leq \Pr(\tilde D \leq \delta_n + \gamma_n) + \frac{M_1}{\sqrt{n}}(\gamma_n^{-2} + \gamma_n^{-3})
    \end{align*}
    Let \(\gamma_n\) be a sequence tending to zero such that \((\gamma_n^{-2} + \gamma_n^{-3})/\sqrt{n} \to 0\) and conclude by applying \Cref{lemma:denom-anti-concentration}.
\end{proof}

\begin{lemma}[]
    \label{lemma:probability-to-distribution}
    Let \(X_n\), \(Y_n\), and \(Z_n\) be sequences of random variables such that \(|X_n - Y_n| \to_p 0\),  the distribution of \(Z_n\) is absolutely continuous with respect to Lebesgue measure and the density functions of \(Z_n\) are uniformly bounded and \(\sup_{a\in\SR}|\Pr(Y_n \leq a) - \Pr(Z_n \leq a)|\to 0.\) Then \(\sup_{a \in \SR}|\Pr(X_n \leq a) - \Pr(Z_n \leq a)|\to 0
    .\)
\end{lemma}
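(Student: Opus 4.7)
The plan is to run the standard three-piece sandwich argument that converts approximation in probability on the inside into uniform approximation of CDFs on the outside, using the density bound on $Z_n$ to control the residual interval probability. Fix an arbitrary $\delta > 0$. For any $a\in\mathbb{R}$, I would write
\[
    \{X_n \leq a\} \subseteq \{Y_n \leq a + \delta\} \cup \{|X_n - Y_n| > \delta\},
\]
and similarly $\{Y_n \leq a - \delta\} \subseteq \{X_n \leq a\} \cup \{|X_n - Y_n| > \delta\}$. Taking probabilities gives the two-sided bound
\[
    \Pr(Y_n \leq a - \delta) - \Pr(|X_n - Y_n| > \delta) \leq \Pr(X_n \leq a) \leq \Pr(Y_n \leq a + \delta) + \Pr(|X_n - Y_n| > \delta).
\]

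Next I would use the uniform CDF approximation hypothesis to replace $\Pr(Y_n \leq a \pm \delta)$ by $\Pr(Z_n \leq a \pm \delta)$ at the cost of an $o(1)$ term independent of $a$, and then use the uniform density bound $M$ on $Z_n$ to control the shift: since $Z_n$ has a density bounded by $M$, we have $\Pr(a < Z_n \leq a + \delta) \leq M\delta$ and $\Pr(a - \delta < Z_n \leq a) \leq M\delta$ uniformly in $a$. This yields
\[
    \bigl|\Pr(X_n \leq a) - \Pr(Z_n \leq a)\bigr| \leq M\delta + \sup_{b\in\mathbb{R}}\bigl|\Pr(Y_n \leq b) - \Pr(Z_n \leq b)\bigr| + \Pr(|X_n - Y_n| > \delta)
\]
uniformly in $a$.

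To conclude, fix $\varepsilon > 0$, choose $\delta = \varepsilon/(3M)$, then use $\sup_b|\Pr(Y_n\leq b) - \Pr(Z_n \leq b)| \to 0$ and $|X_n - Y_n| \to_p 0$ to make each of the last two terms smaller than $\varepsilon/3$ for all sufficiently large $n$. Since the bound holds uniformly in $a$, taking the supremum over $a$ and then letting $\varepsilon \to 0$ yields the desired conclusion. There is no real obstacle here: the argument is entirely elementary, and the only ingredient doing any work is the uniform density bound on $Z_n$, which is supplied by \Cref{lemma:approximation-error-bound} when this lemma is invoked in the proof of \Cref{thm:feasible-local-power}.
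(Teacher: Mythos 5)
Your proof is correct and follows essentially the same sandwich argument as the paper: bound $\{X_n\leq a\}$ by $\{Y_n\leq a\pm\delta\}$ up to $\{|X_n-Y_n|>\delta\}$, swap $Y_n$ for $Z_n$ uniformly, and absorb the $\delta$-shift using the uniform density bound. The only cosmetic difference is that you fix $\delta$ and optimize at the end (a standard $\varepsilon$-$\delta$ closing), whereas the paper instead selects a vanishing sequence $\varepsilon_n\to 0$ with $\Pr(|X_n-Y_n|>\varepsilon_n)\to 0$ via its \Cref{lemma:op1-sequence}; both devices are equivalent and your version avoids the auxiliary lemma.
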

\begin{proof}
   For any \(a \in \SR\) and \(\eps > 0\) we have that \(\{X_n \leq a\} \subseteq \{Y_n \leq a + \eps\} \cup \{|X_n - Y_n| > \eps\}\); thus, by applying union bound and rearranging we obtain:
   \begin{align*}
       \Pr(X_n \leq a) &\leq \Pr(Y_n \leq a + \eps) + \Pr(|Y_n - X_n| > \eps) \\ 
                       &\leq \Pr(Z_n \leq a + \eps) + |\Pr(Y_n \leq a + \eps) - \Pr(Z_n \leq a + \eps)| \\ 
                       &\;\;\;\;\;\;\;\;\;+ \Pr(|Y_n - X_n| > \eps) 
        \intertext{so that}
        \Pr(X_n \leq a) - \Pr(Z_n \leq a) 
                       &\leq \Pr(a < Z_n \leq a+\eps) + |\Pr(Y_n \leq a + \eps) - \Pr(Z_n \leq a + \eps)|\\ 
                       &\;\;\;\;\;\;\;\;\;+ \Pr(|Y_n - X_n| >\eps)
   \end{align*}
   Let \(\eps_n \to 0\) be a sequence tending to zero such that \(\Pr(|X_n - Y_n| > \eps_n) \to 0\) (\Cref{lemma:op1-sequence}). Applying a supremum to the above display yields
   \begin{align*}
       \sup_{a\in \SR} \Pr(X_n \leq a) - \Pr(Z_n \leq a) 
       &\leq \sup_{a\in \SR} \Pr(a < Z_n \leq a + \eps_n) \\ 
       &\;+ \sup_{a \in \SR} |\Pr(Y_n \leq a + \eps_n) - \Pr(Z_n \leq a + \eps_n)|\\ 
       &\;\;+ \Pr(|Y_n - X_n| >\eps_n)
   \end{align*}
   The first term goes to zero as \(\eps_n \to 0\) since \(Z_n\) has a uniformly bounded density; the second term goes to zero by \(\sup_{a \in \SR} |\Pr(Y_n \leq a) - \Pr(Z_n \leq a)| \to 0\) and the third term goes to zero by definition of \(\eps_n\) and \(|Y_n - X_n| \to_p 0\).

   We can apply a symmetric argument to show that \(\sup_{a\in\SR}\Pr(Z_n \leq a) - \Pr(X_n \leq a) \leq o(1)\) which completes the claim of the lemma.
\end{proof}

\section{Proof of \Cref{thm:consistency}}
\begin{proof}[Proof of \Cref{thm:consistency}]
    As at the top of \Cref{subsec:local-power-proof}, recall that \(\tilde h_{ii} = 0\), and define
    \begin{align*}
        N &= \frac{1}{\sqrt{n}}\sum_{i=1}^n \eps_i(\beta_0)\sum_{j=1}^n \tilde h_{ij} r_j &
        D &= \frac{1}{n}\sum_{i=1}^n \eps_i^2(\beta_0)(\sum_{j=1}^n \tilde h_{ij} r_j)^2
    \end{align*}
    where \(\tilde h_{ij} = s_n h_{ij}\). A primary goal is to show that tests based on the infeasible statistic, \(\JK_I(\beta_0)\), are consistent. That is, \(\Pr(\JK_I(\beta_0) \leq a) \to 0\) for any fixed \(a \in \SR_+\). The event \(\{\JK_I(\beta_0) \leq a\}\) is equivalently expressed \(\{N^2 - a D \leq 0\}\) so that \(\Pr(\JK(\beta_0) \leq a) = \Pr(N^2 - aD \leq 0)\).
    Under the moment bounds of \Cref{thm:feasible-local-power} and \Cref{assm:balanced-design}, \(aD = O_p(1)\) so by \Cref{lemma:difference-op1} it suffices to show that \(\Pr(|N| \leq M) \to 0\) for any fixed \(M \geq 0\). By assumption \(P = \E[N^2] \to \infty\) so we move to show that \(\Var(N) = O(1)\) and then apply \Cref{lemma:second-diverges-variance-bounded} to conclude. To this end, recall the definition of \(\eta_i = \eps_i(\beta_0) - \E[\eps_i(\beta_0)]\), define \(\mu_i= \E[\eps_i(\beta_0)] = \Pi_i(\beta - \beta_0)\), and let
    \begin{align*}
        N_1 &\coloneqq \frac{1}{\sqrt{n}}\sum_{i=1}^n \eta_i \sum_{j=1}^n \tilde h_{ij} r_j & 
        N_2 &\coloneqq \frac{1}{\sqrt{n}}\sum_{i=1}^n \mu_i \sum_{j=1}^n \tilde h_{ij} r_j
    \end{align*}
    Notice that \(N = N_1 + N_2\).  To show that \(\Var(N_1) = O(1)\), define \(\va_i = \eta_i \sum_{j=1}^n  \tilde h_{ij} r_j\). Since \(\E[\eta_i r_i] = 0\), we have that \(\Cov(\va_i,\va_j) = 0\) for \(i \neq j\). Thus,
    \[
        \Var(N_1) = \Var(\sum_{i=1}^n \va_i/\sqrt{n}) = n^{-1}\sum_{i=1}^n \Var(\va_i) = n^{-1}\sum_{i=1}^n  \Var(\eta_i)\E[(\sum_{j=1}^n \tilde h_{ij} r_j)^2] \leq c^2
    \]
    where the final inequality follows from the upper bound on \(\Var(\eta_i)\) and by definition of \(\tilde h_{ij} = s_n h_{ij}\) from \Cref{assm:balanced-design}. 

    To show that \(\Var(N_2) = O(1)\) let \(\vb_i = \sum_{j=1}^n \tilde h_{ji}\tilde\Pi_j(\beta - \beta_0)\) and rewrite \(N_2 = \frac{1}{\sqrt{n}}\sum_{i=1}^n r_i \vb_i\). Under \Cref{assm:local-identification}(ii), \(|\vb_i| = |\E[\sum_{j=1}^n \tilde h_{ji}\eps_j(\beta_0)]| \leq c^{1/2}\), so we can bound
    \[
        \Var(N_2) = \Var(\sum_{i=1}^n  r_i \vb_i /\sqrt{n}) = n^{-1}\sum_{i=1}^n \vb_i^2  \Var(r_i) \leq c^2
    \] 
    Since \(\Var(N) \leq 2\Var(N_1) + 2\Var(N_2)\), we can conclude that tests based on \(\JK_I(\beta_0)\) are consistent.

    Finally, we want to show that this fact, along with \((\Delta_N,\Delta_D)\to_p0\) implies that tests based on \(\JK(\beta_0)\) are consistent. To do this, notice that we can write 
    \[
        \JK(\beta_0) = \frac{(N + \Delta_N)^2}{D + \Delta_D}
    \]
    and thus that \(\JK(\beta_0) \leq a\) if and only if
    \[
        \widehat\JK_a \coloneqq (N + \Delta_N)^2 - a(D + \Delta_D)  = N^2 - aD + 2N\Delta_N + \Delta_N^2 - a\Delta_D \leq 0.
    \]
    Define \(\JK_a = N^2 - aD\). Using that \(\{\widehat\JK_a \leq 0\} \subseteq \Big\{\frac{\widehat\JK_a}{\JK_a}\JK_a \leq 0\Big\} \cup \big\{\JK_a \leq 0\}\) we can write 
    \begin{align*}
        \Pr(\widehat\JK_a \leq 0) 
        &\leq \Pr\Big(\frac{\widehat\JK_a}{\JK_a} \JK_a\leq 0\Bigg) + \Pr(\JK_a \leq 0)  \\
        &\leq 2\Pr(\JK_a \leq 0) + \Pr\Big(\frac{\widehat\JK_a}{JK_a} \leq \frac{1}{2}\Big)
    \end{align*}
    By consistency of the test based on the infeasible \(\JK_I(\beta_0)\) statistic, we have that \(\Pr(\JK_a \leq 0)\to 0\). Thus, it only remains to show that \(\Pr(\widehat\JK_a /\JK_a \leq 1/2) \to 0\). This, in turn, follows if
    \[
        \frac{\widehat\JK_a - \JK_a}{\JK_a} = \frac{2N\Delta_N + \Delta_N^2 - a\Delta_D}{N^2 - aD}\to_p 0.
    \]
    The above results can be used to show that \(\Pr(|N^2 - aD| \leq \delta_n) \to 0\) for any sequence \(\delta_n \searrow 0\) so that \(\frac{1}{\JK_a} = O_p(1)\). Combined with \((\Delta_N,\Delta_D)\to_p0\) this implies that \(\{\Delta_N^2 - a\Delta_D\}/\JK_a \to_p0\). What remains is to show that \(2N\Delta_N/(N^2 - aD) \to_p 0\). Write
    \[
        \frac{2N\Delta_N }{N^2 - aD} = \frac{\frac{2N\Delta_N}{N^2}}{1 - a\frac{D}{N^2}}.
    \]
    Since \(D = O_p(1)\) while \(\Pr(N^2 \leq M) \to 0\) for any \(M\) we have that \(D/N^2 \to_p 0\). Moreover, \(\Pr(|N| \leq M) \to 0\) for any fixed M implies \(N/N^2 = O_p(1)\) so that \(2N\Delta_N/N^2 \to_p0\). We can apply continuous mapping theorem to conclude.
\end{proof}
\begin{lemma}[]
    \label{lemma:second-diverges-variance-bounded}
    Suppose that \(X_n\) is a sequence of random variables such that \(\E[X_n^2] \to \infty\) while \(\Var(X_n) = O(1)\). Then, for any \(M \geq 0\), \(\Pr(|X_n| \leq M) \to 0\).
\end{lemma}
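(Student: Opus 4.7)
The key observation is to decompose $\E[X_n^2] = \Var(X_n) + (\E[X_n])^2$. Since $\E[X_n^2] \to \infty$ and $\Var(X_n) = O(1)$ by hypothesis, rearranging gives $(\E[X_n])^2 = \E[X_n^2] - \Var(X_n) \to \infty$, hence $|\E[X_n]| \to \infty$. Write $\mu_n := \E[X_n]$.

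The plan is then to use Chebyshev's inequality to argue that $X_n$ concentrates around $\mu_n$, and since $|\mu_n|$ diverges while $M$ is fixed, the event $\{|X_n| \leq M\}$ forces $X_n$ to be far from its mean. Formally, for any fixed $M \geq 0$, since $|\mu_n| \to \infty$, there exists $N_M$ such that $|\mu_n| > 2M$ for all $n \geq N_M$. For such $n$, the event $\{|X_n| \leq M\}$ implies $|X_n - \mu_n| \geq |\mu_n| - |X_n| \geq |\mu_n| - M > |\mu_n|/2$. Chebyshev's inequality then gives
\[
    \Pr(|X_n| \leq M) \leq \Pr(|X_n - \mu_n| \geq |\mu_n|/2) \leq \frac{4\Var(X_n)}{\mu_n^2}.
\]
The numerator is bounded by hypothesis while the denominator tends to infinity, so the right-hand side tends to zero, yielding the claim.

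There is no real obstacle here; the proof is essentially one line of elementary probability once the variance-mean decomposition is noticed. The only thing to be careful about is ensuring that $|\mu_n| > M$ eventually so that $|\mu_n| - M$ is a valid (positive) bound in the Chebyshev step, which is handled by the observation $|\mu_n| \to \infty$ combined with $M$ being fixed.
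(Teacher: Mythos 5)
Your proof is correct and takes essentially the same approach as the paper: decompose the second moment into variance plus squared mean, conclude the mean diverges, then apply Chebyshev. The only cosmetic difference is that you center at $\E[X_n]$ and observe $|\E[X_n]|\to\infty$, whereas the paper centers at $\E[|X_n|]$ (using $\Var(|X_n|)\leq\Var(X_n)$); yours is marginally more streamlined since it avoids that extra inequality.
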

\begin{proof}
    First, note that \(\Var(|X_n|) \leq \Var(X_n)\) so \(\Var(|X_n|) = O(1)\). Moreover \(\Var(|X_n|) = \E[X_n^2] - (\E[|X_n|])^2\), so \(\E[X_n^2] \to \infty\) and \(\Var(|X_n|) = O(1)\) implies that \(\E[|X_n|] \to \infty\). Then, 
    \begin{align*}
        \Pr(|X_n| \leq M) 
        &= \Pr(|X_n| - \E[|X_n|] \leq M - \E[|X_n|]) \\ 
        &= \Pr(\E[|X_n|] - |X_n| \geq \E[|X_n| - M) \\ 
        &\leq \Pr(|\E[|X_n|] - |X_n|| \geq \E[|X_n|] - M) \\ 
        &\leq \frac{\Var(|X_n|)}{\E[|X_n|] - M}
    \end{align*}
    Since \(\Var(|X_n|) = O(1)\) but \(\E[|X_n|] \to \infty\), this tends to zero.
\end{proof}

\begin{lemma}[]
    \label{lemma:difference-op1}
    Suppose that \(X_n\) and \(Y_n\) are random variables such that \(Y_n = O_p(1)\) and, for any \(M \geq 0\), \(\Pr(|X_n|\leq M) \to 0\). Then, for any \(M_1 \geq 0\), \(\Pr(X_n^2 - Y_n \leq M_1) \to 0\).
\end{lemma}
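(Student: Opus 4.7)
The plan is to decouple $X_n$ from $Y_n$ by first conditioning on a high-probability event where $Y_n$ is bounded, then translating the bound $X_n^2 - Y_n \leq M_1$ into a bound on $|X_n|$ alone, which can be controlled by hypothesis.

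Fix an arbitrary $\varepsilon > 0$. Since $Y_n = O_p(1)$, there exists a constant $K_\varepsilon > 0$ (independent of $n$) and an index $N_\varepsilon$ such that for all $n \geq N_\varepsilon$,
\[
\Pr(|Y_n| > K_\varepsilon) \leq \varepsilon.
\]
On the complementary event $\{|Y_n| \leq K_\varepsilon\}$ we have $X_n^2 - Y_n \geq X_n^2 - K_\varepsilon$, so
\[
\{X_n^2 - Y_n \leq M_1\} \cap \{|Y_n| \leq K_\varepsilon\} \subseteq \{X_n^2 \leq M_1 + K_\varepsilon\} = \bigl\{|X_n| \leq \sqrt{M_1 + K_\varepsilon}\bigr\}.
\]

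Applying a union bound yields, for all $n \geq N_\varepsilon$,
\[
\Pr(X_n^2 - Y_n \leq M_1) \leq \Pr\bigl(|X_n| \leq \sqrt{M_1 + K_\varepsilon}\bigr) + \Pr(|Y_n| > K_\varepsilon) \leq \Pr\bigl(|X_n| \leq \sqrt{M_1 + K_\varepsilon}\bigr) + \varepsilon.
\]
By the hypothesis applied with the fixed constant $M = \sqrt{M_1 + K_\varepsilon}$, the first term on the right tends to zero, so $\limsup_n \Pr(X_n^2 - Y_n \leq M_1) \leq \varepsilon$. Since $\varepsilon$ was arbitrary, the limsup equals zero, which gives the conclusion.

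There is no real obstacle here: the argument is a two-line union bound followed by a substitution of the hypothesis. The only subtlety worth flagging is that the definition of $O_p(1)$ used must give uniform (in $n$) tightness, so that $K_\varepsilon$ can be chosen once and for all before invoking the hypothesis on $X_n$; this is standard. This lemma is then immediately applicable in the proof of \Cref{thm:consistency} with $X_n = N$ and $Y_n = aD$, since under the assumptions of that proposition $aD = O_p(1)$ by \Cref{assm:moment-assumptions,assm:balanced-design}, while $\Pr(|N| \leq M) \to 0$ follows from $\E[N^2] \to \infty$ together with $\Var(N) = O(1)$ via \Cref{lemma:second-diverges-variance-bounded}.
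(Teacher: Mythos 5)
Your argument is correct and follows essentially the same route as the paper's: both fix $\varepsilon$, use tightness of $Y_n$ to pin $|Y_n|$ inside a fixed constant with high probability, observe that on that event $\{X_n^2 - Y_n \leq M_1\}$ forces $|X_n|$ into a bounded set, and then invoke the hypothesis on $X_n$ together with a union bound. The only cosmetic difference is that the paper bounds the complementary probability $\Pr(X_n^2 - Y_n > M_1)$ from below while you bound $\Pr(X_n^2 - Y_n \leq M_1)$ from above; these are the same calculation.
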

\begin{proof}
    Pick any \(\eps > 0\). We want to show that, eventually, \(\Pr(X_n^2 - Y_n > M_1) \geq 1- \eps\). Since \(Y_n = O_p(1)\), there is a fixed constant \(M_Y\) such that \(\Pr(|Y_n| \leq M_Y) \geq 1-\eps/2\). Since \(\Pr(|X_n| \leq M)\to 0\) for any \(M \geq 0\), there exists an \(N_X\) such that, for \(n \geq N_X\), \(\Pr(X_n^2 \leq M_1 + M_Y) \leq \eps/2\). A union bound completes the argument (on the eventuality \(n \geq N_X\)):
    \begin{align*}
        \Pr(X_n^2 - Y_n > M) 
        &\geq \Pr(X_n^2 > M_1 + M_Y, |Y_n| \leq M_Y)\\ 
        &= 1 - \Pr(\{X_n^2 < M_1 + M_Y\} \cup \{|Y_n| > M_Y\}) \\
        &\geq 1 - \eps/2 - \eps/2 = 1-\eps
    \end{align*}
\end{proof}

\section{Proof of \Cref{thm:estimation-error}}
\label{subsec:estimation-error-ell1-proof}

I provide the proof for the case where \(d_x = 1\) with the general case following from symmetric logic. For any \(j = 1,\dots, d_b\) define the matrix \(B_j = \diag(b_j(z_1),\dots,b_j(z_n))\) and collect observations \(\eps(\beta_0) = (\eps_1(\beta_0),\dots,\eps_n(\beta_0))'\in \SR^n\), \(r = (r_1,\dots,r_n)'\in\SR^n\), \(\hat r = (\hat r_1,\dots, \hat r_n)'\in\SR^n\), and \(\xi = (\xi_1,\dots,\xi_n)' \in \SR^n\). In addition, collect \(b_\eps = (b_{\eps1},\dots,b_{\eps n})\in \SR^{d_b\times n}\) where \(b_{\eps i} = \eps_i(\beta_0)b(z_i) \in \SR^{d_b}\). Finally, let \(\mH = \frac{s_n}{\sqrt{n}}H\), \(\tilde H = s_n H\) and \(\tilde h_{ij} = s_n h_{ij}\).

\textit{Step 1: \(\Delta_N\to_p 0\).} To show that \(\Delta_N \to_p 0\) write
\begin{align*}
    \Delta_N 
    &= |\eps(\beta_0)'\mH(\hat r - r)| \\ 
    &= |\eps(\beta_0)'\mH(b_\eps'\hat\gamma - b_\eps'\gamma) - \eps(\beta_0)'\mH\xi|\\ 
    &\leq \underbrace{\max_{1 \leq j \leq d_b}|\eps(\beta_0)'\mH B_j\eps(\beta_0)|\|\hat\gamma-\gamma\|_1}_{\vA} + \underbrace{\|\eps(\beta_0)'\mH\|_2\|\xi_2\|_2}_{\vB}
\end{align*}
To bound \(\vA\) we move to apply \Cref{thm:hanson-wright} to the quadratic form \(\eps(\beta_0)'(\mH B_j)\eps(\beta_0)\). First notice that, under \Cref{assm:estimation-error}(v), we have 
\[
    \|\E[\mH b_j\eps(\beta_0)]\|_2 = \frac{1}{n}\sum_{i=1}^n (\E[s_n\sum_{j\neq i} h_{ij} b(z_j)\eps_j(\beta_0)])^2 \leq c^2 
\] 
In the notation of \Cref{thm:hanson-wright} this give us an upper bound on \(\|\E f^{(1)}(X)\|_{\text{HS}}\). Next, \Cref{assm:balanced-design} gives us that the Frobenius norm of \(\mH = \frac{s_n}{\sqrt{n}}\mH\) is bounded, since the rows of \(s_n H\) are square summable, \(\sum_{j\neq i} (s_n h_{ij})^2 \leq c\) for all \(i = 1,\dots, n\). In the notation of \Cref{thm:hanson-wright} this gives us an upper bound on \(\|\E f^{(2)}(X)\|_{\text{HS}}\). Applying \Cref{thm:hanson-wright} and a union bound then gives us that
\begin{equation}
    \label{eq:hanson-wright-application1}
    \max_{1 \leq j \leq d_b} |\eps(\beta_0)'\mH B_j \eps(\beta_0) - \E[\eps(\beta_0)'\mH B_j \eps(\beta_0)]| = O_p(\log^{2/a}(d_b))
\end{equation}
Since \(\max_{1 \leq j \leq d_b}|\E[\eps(\beta_0)'\mH B_j\eps(\beta_0)]| \leq c\) under \Cref{assm:estimation-error}(v), \eqref{eq:hanson-wright-application1} gives that
\[
    \max_{1 \leq j \leq d_b} |\eps(\beta_0)'\mH B_j \eps(\beta_0)| = O_p(\log^{2/a}(d_b))
\]
Since \(\log^{2/a}(d_b)\|\widehat\gamma - \gamma\|_1 \to_p 0\) by assumption, this yields that \(\vA \to_p 0\).

To bound \(\vB\) see that \(\|\eps(\beta_0)'\mH\|_2 = \frac{s_n^2}{n}\sum_{i=1}^n (\sum_{j\neq i} h_{ij}\eps_i(\beta_0))^2 = O_p(1)\) under \Cref{assm:local-identification}(ii) while under \Cref{assm:estimation-error} \(\|\xi\|_2 = o(1)\).

\textit{Step 2: \(\Delta_D\to_p 0\).} Notice that \(a^2 - b^2 = 2b(a - b) + (a - b)^2\) and bound:
\begin{align*}
    |\Delta_D| 
    &\leq \underbrace{\frac{1}{n}\sum_{i=1}^n \eps_i^2(\beta_0)\big|\sum_{j\neq i}  \tilde h_{ij} r_j\big|}_{\vE}\times \max_i |\sum_{j\neq i} \tilde h_{ij}(\hat r_j - r_j)| \\ 
    &+ \underbrace{\frac{1}{n}\sum_{i=1}^n \eps_i^2(\beta_0)}_{\vF}\times \max_i |\sum_{j\neq i} \tilde h_{ij} (\hat r_j - r_j)|^2
\end{align*}
Since both \(\vE = O_p(1)\) and \(\vF = O_p(1)\) under the moment bounds of \Cref{thm:feasible-local-power} and \Cref{assm:balanced-design}, it suffices to show that 
\[
    \max_i |\sum_{j\neq i} \tilde h_{ij} (\widehat r_j - r_j)| \to_p 0
\]
To do so write
\begin{align*}
    \max_i \big|\sum_{j\neq i} \tilde h_{ij}\{\hat r_j - r_j\}\big| 
    \leq 
    \underbrace{\max_{\substack{1 \leq i \leq n \\ 1 \leq j \leq d_b}}\big|\sum_{j\neq i} \tilde h_{ij} b(z_j)\eps_j(\beta_0)\big|\|\hat\gamma - \gamma\|_1}_{\vA} + \underbrace{\max_{\substack{1 \leq i \leq n \\ 1 \leq j \leq d_b}}\big|\sum_{j\neq i} \tilde h_{ij} b(z_j)\xi_j\big|}_{\vB}
\end{align*}
To bound \(\vA\), note that by \Cref{assm:estimation-error}(v) \(\max_{i,j}|\E[\sum_{j\neq i} \tilde h_{ij} b(z_j) \eps_j(\beta_0)| \leq c\). Under \Cref{assm:balanced-design,assm:estimation-error}(ii), \(\max_{i,j} \sum_{j\neq i} \tilde h_{ij}^2 b^2(z_j) \leq c^2\) so we can apply \Cref{thm:hanson-wright} and a union bound to obtain that
\[
    \max_{\substack{1 \leq i \leq n \\ 1 \leq j \leq d_b}} \big|\sum_{j\neq i} \tilde h_{ij} b(z_j)\eps_j(\beta_0)\big| = O_p(\log^{1/a}(d_bn))
\]
Along with the implied rate on \(\|\hat\gamma - \gamma\|_1\) from \Cref{assm:estimation-error}(iv) this shows that \(\vA \to_p 0\).

To show that \(\vB \to 0\), use Cauchy-Schwarz, \(\sum_{j\neq i}  \tilde h_{ij}^2b^2(z_j) \leq c\) for  any \(i,j\) by \Cref{assm:balanced-design,assm:estimation-error}(ii), and \(\sum_{i=1}^n \xi_i^2 = o(1)\) by  \Cref{assm:estimation-error}(iii).

\section{Proof of \Cref{thm:multiple-feasible-local-power}}
\label{sec:multiple-proofs}
Throughout this section, define the scaled elements of the infeasible and gaussian numerators and denominators
\small
\begin{align*}
    N_\ell &= \frac{s_{n,\ell}}{\sqrt{n}}\sum_{i=1}^n \eps_i(\beta_0)\sum_{j=1}^n h_{ij}r_j &
    \tilde N_\ell &= \frac{s_{n,\ell}}{\sqrt{n}}\sum_{i=1}^n \tilde\eps_i(\beta_0)\sum_{j=1}^n h_{ij}\tilde r_j \\ 
    D_{\ell k} &= \frac{s_{\ell,n}s_{m,k}}{n}\sum_{i=1}^n \eps_i^2(\beta_0)(\sum_{j=1}^n h_{ij}r_{\ell j})(\sum_{j=1}^n h_{ij} r_{kj}) & 
    \tilde D_{\ell k} &= \frac{s_{\ell,n}s_{m,k}}{n}\sum_{i=1}^n \eps_i^2(\beta_0)(\sum_{j=1}^n h_{ij}\tilde r_{\ell j})(\sum_{j=1}^n h_{ij} \tilde r_{kj})
\end{align*}
\normalsize
Collect these in \(N = (N_1,\dots N_{d_x})' \in \SR^{d_x}\), \(\tilde N = (\tilde N_1,\dots, \tilde N_{d_x})' \in \SR^{d_x}\), \(D = [D_{\ell k}]_{\ell, k \in [d_x]} \in \SR^{d_x \times d_x}\), and  \(\tilde D = [\tilde D_{\ell k}]_{\ell, k \in [d_x]} \in \SR^{d_x \times d_x}\). After multiplying by scaling matrix \(\diag(s_{1,n},\dots,s_{d_x,n})\) and the inverse of the scaling matrix we rewrite the infeasible and gaussian test statistics
\begin{align*}
    \JK_I(\beta_0) &= N'D^{-1}N \bm{1}_{\{\lambda_{\min}(D) > 0\}} &  \JK_G(\beta_0) &= \tilde N'\tilde D^{-1}\tilde N
\end{align*}

As with \Cref{thm:feasible-local-power}, the result \Cref{thm:multiple-feasible-local-power} follows directly from combining the following lemmas. The first is the main technical lemmma, and shows that the distribution of the infeasible statistic \(\JK_I(\beta_0)\) can be uniformly approximated by that of \(\JK_G(\beta_0)\). The proof of this technical lemma is involved and deferred to \Cref{sec:joint-proofs}. The second lemma establishes that estimation error can be treated as negligible. As with \Cref{lemma:estimation-error-high-level}, the main difficulty hear is in dealing with the fact that neither the numerator vector nor denominator matrix of the \(\JK(\beta_0)\) statistic may have stable limiting distributions.

\begin{lemma}[]
    \label{thm:multiple-infeasible-local-power}
    Suppose that \Cref{assm:balanced-design,assm:local-identification} hold as well as the moment conditions of \Cref{thm:multiple-feasible-local-power}. Then, 
    \[
        \sup_{a\in\SR}\big|\Pr(\JK_I(\beta_0) \leq a) - \Pr(\JK_G(\beta_0) \leq a)\big| \to 0.
    \]
\end{lemma}
\begin{proof}[Proof of \Cref{thm:multiple-infeasible-local-power}]
    \Cref{thm:multiple-infeasible-local-power} follows as a consequence of the joint gaussian approximation with the combination statistic established in \Cref{sec:joint-proofs}.
\end{proof}

\begin{lemma}[]
    \label{lemma:multiple-estimation-error-high-level}
    Suppose that \Cref{assm:balanced-design,assm:local-identification} hold along with the moment conditions of \Cref{thm:multiple-feasible-local-power}. Then, if \((\Delta_N, \Delta_D) \to_p0\), \(\big|\JK(\beta_0) -\JK_I(\beta_0)\big|\to_p0\). 
\end{lemma}
\begin{proof}[Proof of \Cref{lemma:multiple-estimation-error-high-level}]
    Define the matrix \(\Delta_{D} = [(\Delta_{D})_{\ell k}]_{\ell,k\in [d_x]}\) and the vector \(\Delta_N = [(\Delta_N)_{\ell}]_{\ell \in [d_x]}\) where 
\begin{align*}
    (\Delta_D)_{\ell k} 
    &\coloneqq \frac{s_{\ell,n}s_{k,n}}{n}\sum_{i=1}^n \eps_i^2(\beta_0)\big(\widehat\Pi_{\ell, i} \widehat\Pi_{k,i} - \widehat\Pi_{\ell,i}^I\widehat\Pi_{k,i}^I\big) \\ 
    (\Delta_N)_\ell &\coloneqq \frac{s_{\ell, n}}{\sqrt{n}}\sum_{i=1}^n \eps_i(\beta_0)(\widehat\Pi_{\ell,i} - \widehat\Pi^I_{\ell, i})
\end{align*}
By assumption we have that \(\|\Delta_D\| \to_p 0\) and \(\|\Delta_N\|\to_p 0\).
Using this notation, we can write the infeasible version of the test statistic as \(\JK^I(\beta_0) = N'D^{-1}N\) while the feasible version is written \(\JK(\beta_0) = (N + \Delta_N)'(D + \Delta_D)^{-1}(N + \Delta_N)\). Add and subtract \(D^{-1}\) to get
\begin{align*}
    \JK(\beta_0) 
    &= \big(N + \Delta_N\big)'\big( (D + \Delta_D)^{-1} \pm D^{-1}\big)\big(N + \Delta_N\big) \\ 
    &= \JK^I(\beta_0) + N'\big( (D + \Delta_D)^{-1} - D^{-1}\big)N + \Delta_N\big((D + \Delta_D)^{-1} - D^{-1})N\\ 
    &\;\;+ \Delta_N'\big((D + \Delta_D)^{-1} - D^{-1}\big)\Delta_N + N'D^{-1}\Delta_N + \Delta_ND^{-1}N  + \Delta_ND^{-1}\Delta_N
\end{align*}
Via \Cref{lemma:denominator-anticoncentration} we have that \(\|D^{-1}\| = (\lambda_{\min}(D))^{-1} =  O_p(1)\) and by assumption we have that \(\Delta_N \to_p 0\). It therefore suffices to show that
\begin{equation}
    \label{eq:inverse-difference}
    \|(D + \Delta_D)^{-1} - D^{-1}\| \to_p 0
\end{equation}
To do so, we can use the following equality from \citet{horn2012matrix}, p. 381.
\[
    \|(D + \Delta_D)^{-1} - D^{-1}\| \leq  \frac{\|D^{-1}\|^2\|\Delta_D\|}{1 - \|D^{-1}\Delta_D\|}
\]
Since \(\|D^{-1}\| = O_p(1)\) and \(\Delta_D \to_p 0\), this gives \eqref{eq:inverse-difference}.
\end{proof}

\section{Proofs of Results in \Cref{sec:power-properties}}
\label{sec:combo-proofs}
The statement of \Cref{thm:joint-combination} relies on showing
\begin{align*}
        &\sup_{(a_1,a_2)\in \SR^2} \big|\Pr(\JK(\beta_0) \leq a_1, C \leq a_2) - \Pr(\JK_G(\beta_0) \leq a_1, C_G \leq a_2)\big| \to 0\\
    \andbox
        &\sup_{(a_1,a_2)\in \SR^2} \big|\Pr(\S(\beta_0) \leq a_1, C \leq a_2) - \Pr(\S_G(\beta_0) \leq a_1, C_G \leq a_2)\big| \to 0
\end{align*}
In particular, since \((\JK_G(\beta_0) \perp C_G)\)  and \((\S_G(\beta_0)\perp C_G)\) under \(H_0\), showing the above will imply the test based on \(T(\beta_0;\tau)\) has asymptotic size \(\alpha\) for any choice of cutoff \(\tau\).
The second line in the above display follows imediately from \Cref{prop:hyperrectangle} after verifying \Cref{assm:rectangle-gaussian-approximation-estimation}, below.

The first line in the top display relies on a joint interpolation of the infeasible \(\JK_I(\beta_0)\) test statistic and the infeasible conditioning statistic \(C_I\), which could be constructed if \(\rho(z_i)\) was known to the researcher.
\begin{equation}
    \label{eq:infeasible-conditioning}
    C_I \coloneqq \max_{1 \leq i \leq n} \big|\frac{1}{\sqrt{n}} \sum_{i=1}^n h_{ij}r_j \big/ (n^{-1}\sum_{i=1}^n h_{ij}^2)^{1/2}\big|
\end{equation}
This joint interpolation argument is rather involved however, and deferred to \Cref{sec:joint-proofs}. The interpolation argument for the conditioning statistic very closely follows the results in \citet{cck2013}. The results of \Cref{sec:power-properties} rely on showing that the difference between \(C\) and \(C_I\) can be treated as negligible. This in turn reduces to verifying \Cref{assm:rectangle-gaussian-approximation-estimation}, which is done in \Cref{lemma:combination-error-negligible}, below.

\begin{lemma}[]
    \label{lemma:combination-error-negligible}
    Suppose that \Cref{assm:estimation-error} holds. Then there are sequences \(\delta_n \searrow 0\), \(\beta_n \searrow 0\) such that 
    \[
        \Pr\big(\max_{i \in [n]} n^{-1}\sum_{j=1}^n \dot h_{ij}^2(\widehat r_j - r_j)^2 > \delta_n^2 /\log^2(n)\big) \leq \beta_n
    \]
    where \(\dot h_{ij} = h_{ij}/(n^{-1}\sum_{j=1}^n h_{ij}^2)^{1/2}\). 
\end{lemma}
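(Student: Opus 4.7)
The plan is to decompose $\widehat r_j - r_j$ using the approximate sparsity representation $\rho(z_j) = b(z_j)'\phi + \xi_j$ from the setup preceding \Cref{assm:estimation-error}. Since $\hat r_j = x_j - b(z_j)'\widehat\phi\,\eps_j(\beta_0)$ and $r_j = x_j - \rho(z_j)\eps_j(\beta_0)$, we have $\widehat r_j - r_j = -\eps_j(\beta_0)\{b(z_j)'(\widehat\phi - \phi) + \xi_j\}$, and the elementary inequality $(a+b)^2 \leq 2a^2 + 2b^2$ gives
\[
    n^{-1}\sum_{j=1}^n \dot h_{ij}^2 (\widehat r_j - r_j)^2 \leq 2 T_{1i} + 2 T_{2i},
\]
where $T_{1i} = n^{-1}\sum_{j} \dot h_{ij}^2 \eps_j^2(\beta_0)\bigl(b(z_j)'(\widehat\phi - \phi)\bigr)^2$ and $T_{2i} = n^{-1}\sum_{j}\dot h_{ij}^2\eps_j^2(\beta_0)\xi_j^2$. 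The goal is to bound $\max_i T_{\ell i}$ by a sequence of order $o_p(\log^{-2}(n))$ for $\ell=1,2$, which by choosing $\delta_n$ tending to zero sufficiently slowly yields the stated conclusion.

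For $T_{1i}$, I would apply Hölder's inequality to obtain $|b(z_j)'(\widehat\phi - \phi)| \leq \|b(z_j)\|_\infty\|\widehat\phi - \phi\|_1 \leq C\|\widehat\phi - \phi\|_1$ by \Cref{assm:estimation-error}(ii), and use $\max_{i,j}\dot h_{ij}^2 \leq c^2$ from \Cref{assm:comb-conditions}(ii) to factor the hat-matrix weights out of the maximum. This leaves
\[
    \max_i T_{1i} \leq C^2 c^2\|\widehat\phi - \phi\|_1^2 \cdot n^{-1}\sum_{j=1}^n \eps_j^2(\beta_0).
\]
The average $n^{-1}\sum_j \eps_j^2(\beta_0)$ is $O_p(1)$ by Markov's inequality and the sub-exponential moment bound in \Cref{assm:estimation-error}(i), while \Cref{assm:estimation-error}(iv) supplies $\|\widehat\phi - \phi\|_1^2 = o_p\bigl(\log^{-4/(\upsilon\wedge 1)}(d_bn)\bigr) = o_p(\log^{-4}(n))$. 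Hence $\max_i T_{1i} = o_p(\log^{-4}(n))$, which is well within the required tolerance.

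For $T_{2i}$, bound $\dot h_{ij}^2 \leq c^2$ and pull $\max_j\eps_j^2(\beta_0)$ out of the sum:
\[
    \max_i T_{2i} \leq c^2 \Bigl(\max_{j\in[n]}\eps_j^2(\beta_0)\Bigr)\cdot n^{-1}\sum_{j=1}^n \xi_j^2.
\]
The sub-exponential tail in \Cref{assm:estimation-error}(i) gives $\max_j \eps_j^2(\beta_0) = O_p(\log^{2/\upsilon}(n))$ by a standard union bound, and \Cref{assm:estimation-error}(iii) supplies $n^{-1}\sum_j \xi_j^2 = o(n^{-1})$. Combining, $\max_i T_{2i} = O_p\bigl(\log^{2/\upsilon}(n)/n\bigr) = o_p(\log^{-2}(n))$. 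Adding the two bounds delivers the required rate, and the existence of sequences $\delta_n \searrow 0$ and $\beta_n \searrow 0$ satisfying the stated probability inequality is an immediate consequence (take $\delta_n$ to be any sequence tending to zero more slowly than the $o_p(\cdot)$ rate above). No step presents a genuine obstacle; the only care needed is in tracking that the $\log^{2/(\upsilon\wedge 1)}$ penalty in \Cref{assm:estimation-error}(iv) is large enough to absorb both a squaring and the additional $\log^2(n)$ factor appearing in the target.
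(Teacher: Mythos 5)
Your proof is correct and rests on the same decomposition as the paper's: write $\widehat r_j - r_j = -\eps_j(\beta_0)\{b(z_j)'(\widehat\phi - \phi) + \xi_j\}$, square, and split into a coefficient-error term $T_{1i}$ and an approximation-error term $T_{2i}$. Where you diverge is in how the random weights $\dot h_{ij}^2$ and $\eps_j^2(\beta_0)$ are handled. For $T_{1i}$ the paper keeps the quadratic form $(\widehat\phi - \phi)'\{n^{-1}\sum_j \dot h_{ij}^2\eps_j^2(\beta_0)b(z_j)b(z_j)'\}(\widehat\phi - \phi)$ intact, bounds it by $\|\widehat\phi-\phi\|_1^2$ times the sup-norm of the entries, and then applies \Cref{thm:hanson-wright} and \Cref{lemma:sub-exponential-power-max-bound} to the entries $\vA_{ijk}$ themselves; this exploits the normalization $n^{-1}\sum_j \dot h_{ij}^2 = 1$ and does not require any pointwise bound on $\dot h_{ij}$. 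You instead pull out $\|b(z_j)\|_\infty$ via Hölder first and then factor out $\max_{i,j}\dot h_{ij}^2$, invoking \Cref{assm:comb-conditions}(ii). Both routes are valid and deliver a rate strictly faster than $\log^{-2}(n)$, but be aware that \Cref{assm:comb-conditions}(ii) is not among the stated hypotheses of \Cref{lemma:combination-error-negligible}; the cleaner move that stays inside \Cref{assm:estimation-error} is to use $n^{-1}\sum_j \dot h_{ij}^2 = 1$ and bound $n^{-1}\sum_j \dot h_{ij}^2\eps_j^2(\beta_0) \leq \max_j\eps_j^2(\beta_0) = O_p(\log^{2/\upsilon}(n))$, absorbing the extra log into the $\log^{-4/(\upsilon\wedge 1)}(d_b n)$ supplied by \Cref{assm:estimation-error}(iv). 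For $T_{2i}$ your argument pulls out $\max_j\eps_j^2(\beta_0)$ and $\max_{i,j}\dot h_{ij}^2$, whereas the paper applies Cauchy--Schwarz to get $n^{-1/2}\max_i(n^{-1}\sum_j \dot h_{ij}^4)^{1/2}(\sum_j\xi_j^4)^{1/2}$; note that controlling $\max_i(n^{-1}\sum_j \dot h_{ij}^4)^{1/2}$ requires essentially the same boundedness of $\dot h_{ij}$ you invoke, so the reliance on \Cref{assm:comb-conditions}(ii) is shared with the paper's argument and is not a gap peculiar to yours. Your rate bookkeeping for both terms checks out.
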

\begin{proof}
In view of \Cref{lemma:op1-sequence} it suffices to show
\begin{equation}
    \label{eq:combination-error-condition}
    \max_{1 \leq i \leq n} \frac{1}{n}\sum_{j=1}^n\dot h_{ij}^2 (\hat r_i - r_i)^2  = o_p(1/\log^2(n))
\end{equation}
Notice that we can bound
\begin{align*}
    \max_{1 \leq i \leq n} \frac{1}{n}\sum_{j=1}^n (\hat r_i - r_i)^2
    &= \max_{1 \leq i \leq n}\big|(\widehat\gamma - \gamma)'n^{-1}\sum_{j=1}^n\eps_j^2(\beta_0)b(z_i)b(z_j)'(\widehat\gamma - \gamma)\big| \\ 
    &\;\;\; + \max_{1 \leq i \leq n} | n^{-1}\sum_{j=1}^n \dot h_{ij}^2\xi_j^2 |\\
    &\leq \max_{\substack{1 \leq i \leq n \\ 1 \leq j,k \leq d_b}}\underbrace{\big|n^{-1}\sum_{j=1}^n \eps_j^2(\beta_0)b_j(z_j)b_{k}(z_j)\big|}_{\vA_{ijk}}\|\widehat\gamma - \gamma\|_1^2 \\ 
    &\;\; + n^{-1/2}\max_{1 \leq i \leq n}(n^{-1}\sum_{j=1}^n \dot h_{ij}^4)^{1/2}(\sum_{j=1}^n \xi_j^4)^{1/2}
\end{align*}
Under \Cref{assm:estimation-error}(i,ii) each \(\vA_{ijk}\) is \(\upsilon\)-sub-exponential by \Cref{thm:hanson-wright} (that is \(\|\vA_{ijk}\|_{\psi_\upsilon}\) is bounded). An application of \Cref{lemma:sub-exponential-power-max-bound} then yields that \(\max_{i,j,k}|\vA_{ijk}| = O_p(\log^{1/\nu}(d_bn))\). Along with \Cref{assm:estimation-error}(iv) this gives that \(\max_{i,j,k}|\vA_{ijk}|\|\widehat\gamma - \gamma\|_1 = O_p(\log^{-3/(v\wedge 1)}(d_bn)) = o_p(\log^{-2}(n))\). Meanwhile by definition of \(\dot h_{ij}\), \(\max_i (n^{-1}\sum_{j=1}^n \dot h_{ij}^4)^{1/2} = O(1)\) while by \Cref{assm:estimation-error}(iii) \((\sum_{j=1}^n \xi_j^4)^{1/2} = o(1)\). Since \(\log^2(n)/\sqrt{n} \to 0\) this shows \eqref{eq:combination-error-condition}.
\end{proof}

\subsection{Proof of \Cref{thm:joint-combination}}

The first result in \Cref{thm:joint-combination} with \(\JK(\beta_0)\) and \(C\) replaced with their infeasible analogs \(\JK_I(\beta_0)\) and \(C_I\) follows from the argument in \Cref{sec:joint-proofs}. After verifying that \(|\JK(\beta_0) - \JK(\beta_0)| \to_p 0\) via \Cref{thm:estimation-error} and that \Cref{assm:rectangle-gaussian-approximation-estimation} is satisfied via \Cref{lemma:combination-error-negligible} follow the same steps as in the proof of \citet{belloni2018highdimensional}, {Theorem 2.1} to see that approximation result holds for the feasible \(\JK(\beta_0)\) and \(C\).

For the second statement, I show that the conditions of \Cref{thm:hyperrectangle-bootstrap} are satisfied. To see that \Cref{assm:rectangle-gaussian-approximation}(i,ii) is satisfied under the moment assumptions of \Cref{thm:feasible-local-power} use (i) the definition of \(\dot h_{ij} = h_{ij}\big/(n^{-1}\sum_{j=1}^n h_{ij}^2)^{1/2}\); (ii) that the variance of each \(r_{j}\) is bounded away from zero and (iii) that the fourth moments of \(r_j\) are bounded from above. \Cref{assm:rectangle-gaussian-approximation}(iii) is satisfied with \(B_n = \log^{1/\upsilon}(n)\) by \Cref{assm:comb-conditions}(i,iii) and \Cref{lemma:sub-exponential-power-max-bound}. Finally \Cref{assm:rectangle-gaussian-approximation-estimation} is satisfied by applying \Cref{lemma:combination-error-negligible}. Apply \Cref{thm:hyperrectangle-bootstrap} to conclude.

\section{Joint Gaussian Approximation of \(\JK(\beta_0)\) and \(C\)}
\label{sec:joint-proofs}
 \Cref{thm:multiple-feasible-local-power,thm:joint-combination} rely on a joint interpolation of the conditioning and testing statistics as well as a joint interpolation of the conditioning and testing statistics. The joint interpolation of \(\JK(\beta_0)\) and the conditioning statistic \(C\) is given in \Cref{subsec:joint-main} after introducing some notation in \Cref{sec:joint-notation}. The joint gaussian approximation of \(S(\beta_0)\) and \(C\) follows immediately from results in \cites{belloni2018highdimensional,CCK-2018-HDCLT}. The result is presented below for the general form of the \(\JK(\beta_0)\) statistic under \(H_0\) however the proof strategy is very similar when using the decomposed form of \(\JK(\beta_0)\) when \(d_x = 1\). This proof is available on request.

\subsection{Notation}
\label{sec:joint-notation}
\paragraph{Jackknife Statistic Definitions.} Define \(\tilde h_{\ell, ij} = s_{n,\ell} h_{ij}\) for each \(\ell = 1,\dots,d_x\) and the scaled leave-one-out quasi-numerator and denominators
\begin{align*}
    U_{-i} &= \bigg[ \frac{1}{\sqrt{n}}\sum_{j=1}^n \dot\eps_j(\beta_0)\sum_{k \neq i} \tilde h_{\ell, jk} \dot r_{\ell k}\bigg]_{1 \leq \ell \leq d_x} \in \SR^{d_x} \\
    D_{-i} &= \bigg[\frac{1}{n} \sum_{j=1}^n \ddot\eps_i^2(\beta_0)\big(\sum_{k \neq i} \tilde h_{\ell, ij} \dot r_{\ell j}\big)\big(\sum_{k \neq i} \tilde h_{\ell,ij}\dot r_{m j}\big)\bigg]_{\substack{1 \leq \ell \leq d \\ 1 \leq m \leq d_x}} \in \SR^{d_x\times d_x} \\ 
\end{align*}
where \(\dot\eps_j(\beta_0)\) is equal to \(\tilde\eps_j(\beta_0)\) if \(j < i\) and equal to \(\eps_j(\beta_0)\) if \(j > i\), \(\dot r_{\ell j}\) is equal to \(\tilde r_{\ell j}\) if \(j < i\) and equal to \(r_j\) if \(j > i\), and \(\ddot \eps_j(\beta_0)\) is equal to \(\E[\eps_j^2(\beta_0)]\) if \(j < i\) and equal to \(\eps_j(\beta_0)\) if \(j > i\). As in the proof of \Cref{thm:local-power} while the definitions of \(\dot\eps_j(\beta_0), \dot r_{\ell j}\), and \(\ddot\eps_j(\beta_0)\) depend on \(i\) this dependence is suppressed to conslidate notation and since we only consider one step deviations at a time.

Also define the one step deviations
\begin{align*}
    \Delta_{Ui} 
    &= \big[\eps_i(\beta_0)\sum_{j=1}^n \tilde h_{\ell,ij}\dot r_{\ell j} + r_{\ell i}\sum_{j=1}^n \tilde h_{\ell,ji} \dot\eps_j(\beta_0)\big]_{1 \leq \ell \leq d} \in \SR^d \\ 
    \tilde\Delta_{Ui} 
    &= \big[\tilde\eps_i(\beta_0)\sum_{j=1}^n \tilde h_{\ell,ij}\dot r_{\ell j} + \tilde r_{\ell i}\sum_{j=1}^n \tilde h_{\ell,ji} \dot\eps_j(\beta_0)\big]_{1 \leq \ell \leq d} \in \SR^d \\ 
    \Delta_{Di}
    &= \underbrace{\big[(\Delta_{Di}^{a})_{\ell m}\big]_{\substack{1 \leq \ell \leq d \\ 1 \leq m \leq d}}}_{\Delta_{Di}^a} + \underbrace{\big[(\Delta_{Di}^{b})_{\ell m}\big]_{\substack{1 \leq \ell \leq d \\ 1 \leq m \leq d}}}_{\Delta_{Di}^b}\\
    \tilde\Delta_{Di}
    &= \underbrace{\big[(\tilde\Delta_{Di}^{a})_{\ell m}\big]_{\substack{1 \leq \ell \leq d \\ 1 \leq m \leq d}}}_{\tilde\Delta_{Di}^a} + \underbrace{\big[(\tilde\Delta_{Di}^{b})_{\ell m}\big]_{\substack{1 \leq \ell \leq d \\ 1 \leq m \leq d}}}_{\tilde\Delta_{Di}^b}
\end{align*}
where 
\begin{align*}
    (\Delta_{Di}^a)_{\ell m} 
    &= \eps_i^2(\beta_0)\big(\sum_{j=1}^n \tilde h_{\ell,ij} r_{\ell j}\big)\big(\sum_{j=1}^n \tilde h_{\ell, ij} \dot r_{\ell j}\big)\big(\sum_{j=1}^n h_{m, ij} r_{m,  ij}\big)^2 + r_{\ell i} r_{ki}\sum_{j=1}^n  \tilde h_{\ell, ij} \tilde h_{m, ij}\ddot\eps_j^2(\beta_0)\\
    (\tilde\Delta_{Di}^a)_{\ell m} 
    &= \tilde \eps_i^2(\beta_0)\big(\sum_{j=1}^n \tilde h_{\ell,ij} r_{\ell j}\big)\big(\sum_{j=1}^n \tilde h_{\ell, ij} \dot r_{\ell j}\big)\big(\sum_{j=1}^n h_{m, ij} r_{m,  ij}\big)^2 + \tilde r_{\ell i} \tilde r_{ki}\sum_{j=1}^n  \tilde h_{\ell, ij} \tilde h_{m, ij}\ddot\eps_j^2(\beta_0)\\ 
(\Delta_{Di}^{b})_{\ell m} 
    &= r_{\ell i}\sum_{j=1}^n \ddot\eps_j^2(\beta_0)\sum_{k \neq i} \tilde h_{\ell, ji}\tilde h_{m, jk} \dot r_{m k} + r_{ki}\sum_{j=1}^n  \ddot\eps_j^2(\beta_0)\sum_{k \neq i}\tilde h_{\ell, ji}\tilde h_{m, jk}\dot r_{\ell k} \\
(\tilde\Delta_{Di}^{b})_{\ell m} 
    &= \tilde r_{\ell i}\sum_{j=1}^n \ddot\eps_j^2(\beta_0)\sum_{k \neq i} \tilde h_{\ell, ji}\tilde h_{m, jk} \dot r_{m k} + \tilde r_{ki}\sum_{j=1}^n  \ddot\eps_j^2(\beta_0)\sum_{k \neq i}\tilde h_{\ell, ji}\tilde h_{m, jk}\dot r_{\ell k} \\
\end{align*}
Notice that in this notation we can write the test statistic and gaussian test statistics, after scaling by \(\diag(s_{n,1},\dots,s_{n,d_x})\), as
\begin{align*}
    C(\beta_0) &= (U_{-1} + \Delta_{U1}/\sqrt{n})'(D_{-1} + \Delta_{D1}/n)^{-1}(U_{-1} + \Delta_{U1}/\sqrt{n})\bm{1}\{\lambda_{\text{min}}(D_{-1} + \Delta_{D1})^{-1}) > 0\}\\
    \tilde C(\beta_0) &= (U_{-n} + \tilde\Delta_{Un}/\sqrt{n})'(\tilde D_{-1} + \tilde\Delta_{D1}/n)^{-1}(U_{-n} + \tilde\Delta_{U1}/\sqrt{n})
\end{align*}
In this proof we will use these representations for the test statistics. Finally define 
\begin{align*}
    U &= U_{-1} + \Delta_{U1}/\sqrt{n} & \tilde U &= U_{-n} + \tilde\Delta_{Un}/\sqrt{n} \\ 
    D &= D_{-1} + \Delta_{D1}/n & \tilde D &= D_{-n} + \Delta_{Dn}/n
\end{align*}

\paragraph{Conditioning Statistic Definitions.} Let \(h_{\ell, ii} = 0\) for any \(\ell = 1,\dots,d_x\) and \(i= 1,\dots,n\). Define \(\tilde h_{\ell,ij} = h_{\ell,ij}/\omega_{\ell i}\) for \(\omega_{\ell i} = n^{-1}\sum_{j=1}^n |h_{\ell,ij}|\). Also define the one-step deviations:
\begin{align*}
    \Delta_{Ci} &\coloneqq (\tilde h_{1, ji}r_{1i}, -\tilde h_{1, ji}r_{1i},\dots,\tilde h_{d_x,ji}r_{d_xi},-\tilde h_{d_x,ji}r_{d_xi})_{1 \leq j \leq n}' \in \SR^{2nd_x}\\
    \Delta_{Ci} &\coloneqq (\tilde h_{1, ji}\tilde r_{1i}, -\tilde h_{1, ji}\tilde r_{1i},\dots,\tilde h_{d_x,ji}\tilde r_{d_xi},-\tilde h_{d_x,ji}\tilde r_{d_xi})_{1 \leq j \leq n}' \in \SR^{2nd_x}\\
    \intertext{And the leave-one-out vector}
    C_{-i} &\coloneqq \frac{1}{\sqrt{n}}\sum_{j < i} \tilde \Delta_{Cj} + \frac{1}{\sqrt{n}}\sum_{j > i} \Delta_{Cj} \in \SR^{2nd_x}
\end{align*}
Notice that \(C = \max_{1 \leq \iota \leq 2nd_x} (C_{-1} + \frac{1}{\sqrt{n}}\Delta_{C1})_\iota\) while \(\tilde C = \max_{1 \leq \iota \leq 2nd_x}(C_{-n} + \Delta_{Cn})_\iota \).

\paragraph{Function Definitions.}
As in \citet{cck2013} consider the ``smooth max'' function, \(F_\beta:\SR^p \to \SR\) defined
\[
    F_\beta(z) = \beta^{-1}\log\bigg(\sum_{i=1}^n \exp(\beta z_i)\bigg)
\]
which satisfies
\[
    0 \leq F_\beta(z) - \max_{1 \leq i \leq n} z_i \leq \beta^{-1}\log p.
\]
\Cref{subsec:smooth-max-properties} notes some useful properties of the smooth max function which we will use in the joint interpolation argument. In addition let \(\varphi(\cdot) \in C_b^3(\SR)\) be such that \(\varphi(x) = 1\) if \(x \leq 0\), \(\varphi'(x) < 0\) for \(x \in (0,1)\), and \(\varphi(x) = 0\) for \(x \geq 1\). For any \(\gamma > 0\) and \(a = (a_1,a_2)'\in \SR^2\) define the function \(\tilde \varphi(\cdot,\cdot,\cdot): \SR^{d_x} \times \vec(\SR^{d_x \times d_x}) \times \SR^{2nd_x} \to \SR\) via 
\begin{equation}
    \label{eq:joint-tilde-phi}
    \tilde\varphi_{\gamma,a}(u, \vec(d), c) \coloneqq \phi_{\gamma,a_1}(u,\vec(d))\tau_{\gamma,a_2}(c) 
\end{equation}
where
\begin{align*}
    \phi_{\gamma,a_1}(u,\vec(d)) &\coloneqq \varphi\left(\frac{u'd^{-1}u - a_1}{\gamma\text{det}^5(d)}\right)\\ 
    \tau_{\gamma,a}(c) &\coloneqq \varphi \left(\frac{F_{1/\gamma}(c) - a_2}{\gamma}\right)
\end{align*}
The function \(\tilde\varphi_{\gamma,a}(\cdot,\cdot,\cdot)\) is meant to approximate the indicator function \(\bm{1}\{K(\beta_0) \leq a_1\}\bm{1}\{C \leq a_2\}\) with \(\gamma\) governing the quality of approximation. Where it is obvious, we will supress the subscripts \(\gamma,a\) from our notation.

\subsection{Main Argument}
\label{subsec:joint-main}
\begin{lemma}[Joint Lindeberg Interpolation]
    \label{lemma:joint-interpolation}
    Suppose that \Cref{assm:balanced-design,assm:local-identification} hold as well as the moment conditions of \Cref{thm:multiple-feasible-local-power}. Then there are fixed constants \(M_1,M_2\) such that
    \begin{equation}
        \label{eq:joint-interpolation}
        \left|\E[\tilde\varphi_{\gamma,a}(U,\vec(D),C) - \tilde\varphi_{\gamma,a}(\tilde U, \vec(\tilde D), \tilde C)]\right| \leq \frac{M_1\log^{M_2}(n)}{\sqrt{n}}(\gamma^{-1} + \gamma^{-2} + \gamma^{-3})
    \end{equation}
\end{lemma}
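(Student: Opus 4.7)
The plan is to adapt the Lindeberg interpolation of \Cref{lemma:lindeberg-interpolation} to the joint test function $\tilde\varphi_{\gamma,a}=\phi_{\gamma,a_1}\cdot\tau_{\gamma,a_2}$. I would telescope the difference $\E[\tilde\varphi(U,\vec(D),C)-\tilde\varphi(\tilde U,\vec(\tilde D),\tilde C)]$ into $n$ single-observation swaps in which $(\eps_i(\beta_0),r_i)$ is replaced by its Gaussian analog, and for each swap Taylor-expand $\tilde\varphi$ to third order around the leave-one-out triple $(U_{-i},\vec(D_{-i}),C_{-i})$. Because $\tilde\varphi$ factors across its $(u,\vec(d))$ and $c$ coordinates, the product rule reduces mixed derivatives to products of derivatives of $\phi$ and $\tau$ separately, and the three one-step deviations $(\Delta_{Ui},\Delta_{Di},\Delta_{Ci})$ (with their Gaussian counterparts) supply the perturbation in each coordinate block.

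First-order and a large block of the purely quadratic terms vanish in conditional expectation given the leave-one-out $\sigma$-algebra, because $(\tilde\eps_i(\beta_0),\tilde r_i)$ is constructed to match $(\eps_i(\beta_0),r_i)$ in mean and covariance. This reproduces the identities \eqref{eq:interpolation-matched-moments} for the $U,D$ components and yields analogous identities for $\Delta_{Ci}$ and for the cross moments between $\Delta_{Ci}$ and $\Delta_{Ui},\Delta_{Di}^b$, since the latter are linear in $r_i$ or $\eps_i(\beta_0)$. The surviving contributions---squared ``$a$''-type denominator pieces, the cross terms $\Delta_{Ui}\Delta_{Di}^a$, and the genuine third-order remainder---are then controlled by Cauchy--Schwarz and H\"older, using \Cref{assm:multiple-moment-bounds}(i) to bound moments of products of $(\eps_i(\beta_0),r_i)$ up to third order, together with the Hanson--Wright bound (\Cref{thm:hanson-wright}) and the sub-exponential maximal inequality (\Cref{lemma:sub-exponential-power-max-bound}) to bound the $C_{-i}$ coordinates and leave-one-out quadratic forms uniformly in $i$. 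It is this uniform-in-$i$ step, applied to the $2nd_x$ coordinates of $c$ and to the entries of $\Delta_{Ci}$, that produces the $\log^{M_2}(n)$ factor in the conclusion.

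The derivative bounds on $\tilde\varphi$ dictate the $\gamma^{-1}+\gamma^{-2}+\gamma^{-3}$ scaling. Derivatives of $\tau_{\gamma,a_2}$ in the entries of $c$ carry factors $\gamma^{-k}$ for the $k$-th derivative by standard smooth-max calculus applied to $F_{1/\gamma}$. For $\phi_{\gamma,a_1}$, the derivatives of $(u,d)\mapsto u'd^{-1}u$ blow up like $\lambda_{\min}^{-(k+1)}(d)$ as $\lambda_{\min}(d)\to 0$; the data-dependent factor $\gamma\lambda_{\min}^5(d)$ inside the argument of $\varphi$ is precisely what compensates this blow-up through three derivatives, leaving derivatives of $\phi$ of pure order $\gamma^{-k}$ up to $\lambda_{\min}$-free prefactors. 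Summing the $n$ single-step bounds, each of order $n^{-3/2}$ up to polylog factors, yields the claimed $n^{-1/2}\log^{M_2}(n)(\gamma^{-1}+\gamma^{-2}+\gamma^{-3})$ rate.

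The main obstacle is that $\lambda_{\min}(d)$ is not $C^3$ as a function of $d$ on the set where the minimum eigenspace has dimension greater than one. As indicated in the footnote to \Cref{subsec:modified-interpolation-approach}, I would resolve this by carrying out the argument with a smooth surrogate---a fractional power of $\det(d)$, or $\det(d)/\lambda_{\max}^{d_x-1}(d)$---in place of $\lambda_{\min}^5(d)$ inside $\phi$, since the determinant is a polynomial in the entries of $d$ and lower-bounds $\lambda_{\min}^{d_x}$ up to a factor involving $\lambda_{\max}(d)$, which is controlled uniformly in $i$ by \Cref{assm:multiple-moment-bounds} and \Cref{assm:multiple-balanced-design}. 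Once this smooth surrogate is installed, all chain-rule computations become routine and the remaining work is the telescoping and moment bookkeeping sketched above.
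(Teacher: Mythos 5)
Your proposal follows essentially the same route as the paper's proof: telescope the one-step swaps, Taylor-expand $\tilde\varphi$ to third order around the leave-one-out triple $(U_{-i},\vec(D_{-i}),C_{-i})$, cancel first-order and many second-order contributions via the matched first and second moments of $(\eps_i(\beta_0),r_i)$ against $(\tilde\eps_i(\beta_0),\tilde r_i)$ (including the cross moments with $\Delta_{Ci}$ because $\Delta_{Ci}$ is linear in $r_i$), bound the surviving terms with H\"older, Hanson--Wright (\Cref{thm:hanson-wright}), and the sub-exponential maximal inequality (\Cref{lemma:sub-exponential-power-max-bound}), and track the $\gamma^{-1}+\gamma^{-2}+\gamma^{-3}$ scaling via smooth-max derivative bounds for $\tau$ and data-dependent scaling of $\phi$. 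You also correctly identify the non-smoothness of $\lambda_{\min}$ as the obstacle and propose the determinant surrogate; this is precisely what the paper does in \Cref{lemma:total-derivatives}, which works with $(\det(D))^p(N'D^{-1}N-c)$ and uses the bound $\det(D)(D^{-1})_{jk}\lesssim (\trace(D))^{d-1}$ in \eqref{eq:inverse-element-bound} to make the chain-rule bookkeeping tractable. In short: same decomposition, same cancellation structure, same smoothing fix.
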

\begin{proof}[Proof of \Cref{lemma:joint-interpolation}]
    We can bound the difference on the left hand side of \eqref{eq:joint-interpolation} using the telescoping sum
   \begin{equation}
       \label{eq:joint-telescoping-sum}
       \begin{split}
           \sum_{i=1}^n\big| \E[\tilde\varphi_{\gamma,a}(U_{-i}
           &+ \Delta_{Ui}/\sqrt{n},\vec(D_{-i} + \Delta_{Di}/n), C_{-i} + \Delta_{Ci}/\sqrt{n})]\\
           &-\E[\tilde\varphi_{\gamma,a}(U_{-i} + \Delta_{Ui}/\sqrt{n},\vec(D_{-i} + \Delta_{Di}/n), C_{-i} + \Delta_{Ci}/\sqrt{n})]\big|
       \end{split}
   \end{equation}
   By second degree Taylor expansion, we break each of the summands in \eqref{eq:joint-telescoping-sum} into first order, second order, and remainder terms; each of which are bounded below. We make use of the following moment conditions implied by (i) indpendence of observations across \(i = 1,\dots,n\) and (ii) the mean and covariance matrix of \((\eps_i(\beta_0),r_i)\) being equal to the mean and covariance matrix of \((\tilde\eps_i(\beta_0),r_i)\)
   \begin{equation}
       \label{eq:joint-matched-moments}
       \begin{split}
           0 &=\E[\Delta_{Ui} - \tilde\Delta_{Ui}|\calF_{-i}] 
           = \E[\Delta_{Ui}\Delta_{Ui}' - \tilde\Delta_{Ui}\tilde\Delta_{Ui}'|\calF_{-i}] = \E[\vec(\Delta_{Di}) - \vec(\tilde\Delta_{Di})|\calF_{-i}] \\
             &= \E[\Delta_{Ci} - \tilde\Delta_{Ci}|\calF_{-i}] = \E[\Delta_{Ui}\otimes\vec(\Delta_{Di}^b)' - \tilde\Delta_{Ui}\otimes\vec(\tilde\Delta_{Di}^b)'|\calF_{-i}]\\ 
             &= \E[\Delta_{Ci}\otimes\Delta_{Ui} - \tilde\Delta_{Ci}\otimes\tilde\Delta_{Ui}|\calF_{-i}] = \E[\Delta_{Ci}\otimes\vec(\tilde\Delta_{Di}^b) - \tilde\Delta_{Ci}\otimes\vec(\tilde\Delta_{Di}^b)|\calF_{-i}]\\ 
             &= \E[\vec(\Delta_{Di}^b)\vec(\Delta_{Di}^b)' - \vec(\tilde\Delta_{Di}^b)\vec(\tilde\Delta_{Di}^b)'|\calF_{-i}]
       \end{split}
   \end{equation}
   where \(\calF_{-i}\) denotes the sub-sigma algebra generated by all observations not equal to \(i\), \(\otimes\) denotes the Kronecker product, and I apologize for the abuse of the equal sign in the above display.

   \textbf{First Order Terms.}
   First order terms can be expressed 
   \begin{align*}
       \text{First Order}_i 
       &= \sum_{\ell = 1}^{d_x} \E\left[\frac{\partial }{\partial U_\ell}\tilde\varphi(U_{-i},\vec(D_{-i}),C_{-i})((\Delta_{Ui})_\ell - (\tilde\Delta_{Ui})_\ell)\right]/\sqrt{n}  \\ 
       &+ \sum_{\ell=1}^{d_x}\sum_{m = 1}^{d_x} \E\left[\frac{\partial }{\partial D_{\ell m}}\tilde\varphi(U_{-i},\vec(D_{-i}),C_{-i})((\Delta_{Di})_{\ell m} - (\tilde\Delta_{Di})_{\ell m})\right]/n \\ 
       &+ \sum_{\ell = 1}^{2nd_x} \E\left[\frac{\partial }{\partial C_{\ell}}\tilde\varphi(U_{-i}, \vec(D_{-i}), C_{-i})((\Delta_{Ci})_\ell - (\tilde\Delta_{Ci})_\ell)\right] /\sqrt{n}
   \end{align*}
   These terms are all equal to zero after applying the matched moments in \eqref{eq:joint-matched-moments}.

   \textbf{Second Order Terms.} After canceling out terms using the matched moments in \eqref{eq:joint-matched-moments} the second order terms that remain can be expressed
   \small
   \begin{align*}
       \text{2nd Order}_i 
       &= \frac{1}{n^{3/2}}\sum_{\ell = 1}^{d_x} \sum_{m = 1}^{d_x} \sum_{n = 1}^{d_x} \underbrace{\E\bigg[\frac{\partial^2 }{\partial U_{\ell} \partial D_{mn}}\tilde\varphi(U_{-i},\vec(D_{-i}),C_{-i})((\Delta_{Ui})_\ell (\Delta_{Di}^a)_{mn} - (\tilde\Delta_{Ui})_\ell(\tilde\Delta_{Di}^a)_{mn}) \bigg]}_{\vA_{\ell mn}} \\ 
       &= \frac{1}{n^{2}}\sum_{\ell = 1}^{d_x} \sum_{m = 1}^{d_x} \sum_{n = 1}^{d_x}\sum_{o=1}^{d_x} \underbrace{\E\bigg[\frac{\partial^2 }{\partial U_{\ell} \partial D_{mn}}\tilde\varphi(U_{-i},\vec(D_{-i}),C_{-i})((\Delta_{Di}^a)_{\ell m} (\Delta_{Di}^a)_{no} - (\tilde\Delta_{Di}^a)_{\ell m}(\tilde\Delta_{Di}^a)_{no}) \bigg]}_{\vB_{\ell m n o}} \\ 
       &= \frac{2}{n^{2}}\sum_{\ell = 1}^{d_x} \sum_{m = 1}^{d_x} \sum_{n = 1}^{d_x}\sum_{o=1}^{d_x} \underbrace{\E\bigg[\frac{\partial^2 }{\partial U_{\ell} \partial D_{mn}}\tilde\varphi(U_{-i},\vec(D_{-i}),C_{-i})((\Delta_{Di}^b)_{\ell m} (\Delta_{Di}^a)_{no} - (\tilde\Delta_{Di}^a)_{\ell m}(\tilde\Delta_{Di}^b)_{no}) \bigg]}_{\vC_{\ell m n o}} \\ 
       &= \frac{1}{n^{3/2}}\sum_{\ell = 1}^{2nd_x} \sum_{m = 1}^{d_x} \sum_{n = 1}^{d_x} \underbrace{\E\bigg[\frac{\partial^2 }{\partial C_{\ell} \partial D_{mn}}\tilde\varphi(U_{-i},\vec(D_{-i}),C_{-i})((\Delta_{Ci})_{\ell} (\Delta_{Di}^a)_{mn} - (\tilde\Delta_{Ci})_{\ell}(\tilde\Delta_{Di}^a)_{mn}) \bigg]}_{\vD_{\ell m n}} 
   \end{align*}
   \normalsize
   To bound each \(\vA_{\ell m n}\), \(\vB_{\ell mno}\), and \(\vC_{\ell m n o}\) we use the fact that the second order derivatives of \(\tilde\varphi\) are bounded up to a log power of \(n\) via repeated application of \Cref{lemma:multiple-ND-bounds,lemma:total-derivatives}. Under the moment conditions of \Cref{thm:multiple-feasible-local-power} the absolute value of terms \((\Delta_{Ui})_{\ell}\),\(|\Delta_{Di}^a|_{mn}\), and \((\Delta_{Di}^b/\sqrt{n})_{no}\) can also be shown to have bounded third moments via the exact same steps as in the proof of \Cref{lemma:delta-moment-bounds}. Putting these together with generalized Holder's inequality will yield a finite constants \(M_1\) and \(M_2\) such that \(|\vA_{lmn}| \leq M_1\log^{M_2}(n)(\gamma^{-1} + \gamma^{-2})\), 
   \(\vB_{\ell mno} \leq M_1\log^{M_2}(n)(\gamma^{-1} + \gamma^{-2})\), and \(|\vC_{\ell mno}| \leq M_1\log^{M_2}(n)n^{1/2}(\gamma^{-1} + \gamma^{-2})\). To bound \(\vD_{\ell m n}\) terms notice that 
   \small
   \begin{align*}
       \sum_{\ell = 1}^{2nd_x}\vD_{\ell mn} 
       &=\sum_{\ell = 1}^{2nd_x} \E\bigg[\frac{\partial }{\partial D_{mn}}\phi(U_{-i},\vec(D_{-i}))\frac{\partial }{\partial C_\ell}\tau(C_{-i})((\Delta_{C-i})_\ell(\Delta_{Di}^a)_{mn} - (\tilde\Delta_{Ci})_\ell(\tilde\Delta_{Di}^a)_{mn}))\bigg] 
       \intertext{Apply \Cref{lemma:delta-moment-bounds} to bound \(\Delta_{Di}^a\), and \Cref{lemma:multiple-ND-bounds,lemma:total-derivatives} to bound the derivative of \(\phi(\cdot)\) and Cauchy-Schwarz to split up the \(\Delta_{Ci}\) and \(\Delta_{Di}\) terms}
       &\leq \sqrt{M_1\log^{M_2}(n)\gamma^{-2}}\E\bigg[\sum_{\ell =1}^{2nd_x}(\partial_\ell \tau(C_{-i}))^2((\Delta_{Ci})_\ell + (\tilde\Delta_{Ci})_\ell)^2\bigg]^{1/2}\\ 
       &\leq \sqrt{M_1\log^{M_2}(n)\gamma^{-2}}\E\bigg[\max_{1 \leq \ell \leq n} ((\Delta_{Ci})_{2\ell} + (\tilde\Delta_{Ci})_{2\ell})^2 \sum_{\ell = 1}^{2nd_x} (\partial_\ell\tau(C_{-i}))^2\bigg]^{1/2}
       \intertext{By \Cref{lemma:smooth-max-derivatives} and chain rule we have that \(\sum_{\ell = 1}^{2nd_x}(\partial_\ell \tau(C_{-i}))^2 \leq \gamma^{-2}\). Moreover \((\Delta_{Ci})^{a/2}_\ell\) is sub-exponential so via \Cref{lemma:sub-exponential-power-max-bound} the second moment of the maximum is bounded by a power of \(\log(n)\). After updating the constant \(M_1\) and \(M_2\) this yields}
       &\leq M_1\log^{M_2}(n)\gamma^{-2}
   \end{align*}
   \normalsize
   Putting these all together and summing over the remaining indices gives
   \begin{equation}
       \label{eq:joint-second-order-bound}
       |\text{Second Order}_i| \leq \frac{M_1\log^{M_2}(n)}{n^{3/2}}(\gamma^{-1} + \gamma^{-2})
   \end{equation}
   \textbf{Remainder Terms.} The first remainder term can be expressed
   \small
   \begin{align*}
       \text{Remainder}_i 
       &= \frac{1}{n^{3/2}}\sum_{\ell =1}^{d_x}\sum_{m = 1}^{d_x}\sum_{n=1}^{d_x} \E\bigg[\frac{\partial^3 }{\partial U_\ell \partial U_m \partial U_n}\tilde\varphi(\bar U, \vec(\bar D), \bar C)(\Delta_{Ui})_\ell(\Delta_{Ui})_m(\Delta_{Ui})_n\bigg] \\ 
       &+ \frac{1}{n^3}\sum_{(\ell,m)}\sum_{(n,o)}\sum_{(q,p)} \E\bigg[\frac{\partial^3 }{\partial D_{\ell m}\partial D_{no}\partial D_{pq}}\tilde\varphi(\bar U, \vec(\bar D), \bar C)(\Delta_{Di})_{\ell m}(\Delta_{Di})_{no}(\Delta_{Di})_{qp}\bigg] \\  
       &+ \frac{1}{n^{3/2}}\sum_{\ell =1}^{2nd_x}\sum_{m = 1}^{2nd_x}\sum_{n=1}^{2nd_x} \E\bigg[\frac{\partial^3 }{\partial C_\ell \partial C_m \partial C_n}\tilde\varphi(\bar U, \vec(\bar D), \bar C)(\Delta_{Ci})_\ell(\Delta_{Ci})_m(\Delta_{Ci})_n\bigg] \\  
       &+ \frac{1}{n^{2}}\sum_{\ell =1}^{d_x}\sum_{m = 1}^{d_x}\sum_{(n,o)} \E\bigg[\frac{\partial^3 }{\partial U_\ell \partial U_m \partial D_{no}}\tilde\varphi(\bar U, \vec(\bar D), \bar C)(\Delta_{Ui})_\ell(\Delta_{Ui})_m(\Delta_{Di})_{no}\bigg] \\ 
       &+ \frac{1}{n^{5/2}}\sum_{\ell =1}^{d_x}\sum_{(m,n)}\sum_{(o,p)} \E\bigg[\frac{\partial^3 }{\partial U_\ell \partial D_{mn} \partial D_{op}}\tilde\varphi(\bar U, \vec(\bar D), \bar C)(\Delta_{Ui})_\ell(\Delta_{Di})_{mn}(\Delta_{Di})_{op}\bigg] \\ 
       &+ \frac{1}{n^{5/2}}\sum_{\ell =1}^{2nd_x}\sum_{(m,n)}\sum_{(o,p)} \E\bigg[\frac{\partial^3 }{\partial C_\ell \partial D_{mn} \partial D_{op}}\tilde\varphi(\bar U, \vec(\bar D), \bar C)(\Delta_{Ci})_\ell(\Delta_{Di})_{mn}(\Delta_{Di})_{op}\bigg] \\ 
       &+ \frac{1}{n^{2}}\sum_{\ell =1}^{2nd_x}\sum_{m = 1}^{2nd_x}\sum_{(n,o)} \E\bigg[\frac{\partial^3 }{\partial C_\ell \partial C_m \partial D_{no}}\tilde\varphi(\bar U, \vec(\bar D), \bar C)(\Delta_{Ci})_\ell(\Delta_{Ci})_m(\Delta_{Di})_{no}\bigg] \\ 
       &+ \frac{1}{n^{3/2}}\sum_{\ell =1}^{2nd_x}\sum_{m = 1}^{2nd_x}\sum_{n=1}^{d_x} \E\bigg[\frac{\partial^3 }{\partial C_\ell \partial C_m \partial U_{n}}\tilde\varphi(\bar U, \vec(\bar D), \bar C)(\Delta_{Ci})_\ell(\Delta_{Ci})_m(\Delta_{Ui})_{n}\bigg] \\ 
       &+ \frac{1}{n^{2}}\sum_{\ell =1}^{2nd_x}\sum_{m = 1}^{2nd_x}\sum_{n=1}^{d_x} \E\bigg[\frac{\partial^3 }{\partial C_\ell \partial C_m \partial U_{n}}\tilde\varphi(\bar U, \vec(\bar D), \bar C)(\Delta_{Ci})_\ell(\Delta_{Ci})_m(\Delta_{Ui})_{n}\bigg] \\ 
   \end{align*}
   \normalsize
   where \(\bar U, \vec(\bar D)\), and \(\bar C\) vary term by term but are always in the hyper-rectangles \([U_{-i}, U + \Delta_{Ui}]\), \([\vec(D_{-i}), \vec(D_{-i} + \Delta_{Di})]\), and \([C_{-i}, C_{-i} + \Delta_{Ci}]\), respectively. As such, any moment conditions that apply to \(U, D, C\) also apply to \((\bar U,\bar D, \bar C)\).
   Repeated application of generalized Hölder inequality, \Cref{lemma:delta-moment-bounds} to bound moments of \(\Delta_{Ui}\) and \((\Delta_{Di}/\sqrt{n})\), \Cref{lemma:total-derivatives} to bound moments of the second and third derivatives of \(\phi(\tilde U, \vec(\tilde D))\), \Cref{lemma:composition-derivative-bounds} to bound the sums of derivatives of \(\tau(\tilde C)\), and \Cref{lemma:sub-exponential-power-max-bound} to bound moments of \(\max_{1 \leq \ell \leq n} (\Delta_{Ci})_\ell\) will yield that
   \begin{equation}
        \label{eq:joint-remainder-bound}
        |\text{Remainder}_i| \leq \frac{M_1\log^{M_2}(n)}{n^{3/2}}(\gamma^{-1} + \gamma^{-2} + \gamma^{-3})
   \end{equation}
   Symmetric logic will bound the other remainder term. Summing \eqref{eq:joint-second-order-bound} and \eqref{eq:joint-remainder-bound} over indices gives the result.
\end{proof}

\begin{lemma}[Denominator Anticoncentration]
    \label{lemma:denominator-anticoncentration}
    Suppose that \Cref{assm:balanced-design,assm:local-identification} hold as well as the moment conditions of \Cref{thm:multiple-feasible-local-power}. Then for any sequence \(\delta_n \to 0\) we have that \(\Pr(\lambda_{\min}(\tilde D) \leq \tilde\delta_{n}) \to 0\).
\end{lemma}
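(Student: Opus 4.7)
The plan is to lift the one-dimensional Gaussian anticoncentration argument of \Cref{lemma:denom-anti-concentration} to the matrix setting by covering the unit sphere. Since $\tilde D$ is symmetric positive semidefinite and of fixed dimension $d_x$, we have $\lambda_{\min}(\tilde D) = \inf_{a\in\calS^{d_x-1}} a'\tilde D a$, so the event $\{\lambda_{\min}(\tilde D)\leq\tilde\delta_n\}$ is controlled by an infimum of scalar Gaussian quadratic forms indexed by a compact set.

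For a fixed $a\in\calS^{d_x-1}$, set $w_j(a)=\sum_{\ell=1}^{d_x}s_{\ell,n}a_\ell\tilde r_{\ell j}$, so that $w(a)=(w_1(a),\dots,w_n(a))'$ is Gaussian with coordinate variances bounded above and away from zero by \Cref{assm:multiple-moment-bounds}, and
\[
a'\tilde D a \;=\; \frac{1}{n}\sum_{i=1}^n\kappa_i^2(\beta_0)\Big(\sum_{j\neq i} h_{ij}\,w_j(a)\Big)^{2} \;=\; \|M_a w(a)\|^{2}
\]
for a deterministic matrix $M_a$ depending only on the instruments and the scaling $s_{\ell,n}$. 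This puts me in exactly the setting of \Cref{lemma:denom-anti-concentration}, so I would rerun the two-bound argument there: if $\Var(a'\tilde Da)\to 0$, Chebyshev together with the uniform lower bound $\E[a'\tilde Da]\geq c^{-1}$ implied by \Cref{assm:multiple-balanced-design}(i) gives the conclusion; otherwise, the Gaussian quadratic form density estimate $\Pr(\|M_a w(a)\|\leq \tilde\delta_n^{1/2})\lesssim \tilde\delta_n/\Lambda_1(a)$, whose required eigenvalue-ratio hypothesis is supplied by \Cref{assm:multiple-balanced-design}(iii) via \Cref{lemma:quadratic-form-eigenvalues}, yields the bound directly. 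Combining the two as in \Cref{rem:explicit-anticoncentration} produces a pointwise rate with constants depending only on the constant $c$.

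To pass from the pointwise estimate to $\lambda_{\min}(\tilde D)$, I would use a standard $\epsilon$-net argument. Because $d_x$ is fixed, $\calS^{d_x-1}$ admits an $\epsilon$-net $\calN_\epsilon$ of cardinality $O(\epsilon^{-(d_x-1)})$, and the deterministic inequality
\[
\lambda_{\min}(\tilde D) \;\geq\; \min_{a\in\calN_\epsilon} a'\tilde Da \;-\; 2\epsilon\,\|\tilde D\|
\]
holds. Since \Cref{assm:multiple-moment-bounds} and \Cref{assm:multiple-balanced-design} bound $\E\|\tilde D\|$, Markov's inequality gives $\|\tilde D\|=O_p(1)$. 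Choosing $\epsilon_n\to 0$ with $\epsilon_n=o(\tilde\delta_n)$ and union-bounding the pointwise anticoncentration over $\calN_{\epsilon_n}$ then yields $\Pr(\lambda_{\min}(\tilde D)\leq\tilde\delta_n/2)\to 0$, which is equivalent to the claim after relabelling $\tilde\delta_n$.

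The main obstacle will be verifying that the constants in the pointwise anticoncentration bound are \emph{direction-independent} across $a\in\calS^{d_x-1}$. The uniform lower bound $\E[a'\tilde Da]\geq c^{-1}$ is immediate from the minimum-eigenvalue hypothesis of \Cref{assm:multiple-balanced-design}(i), but the uniform validity of the two leading-eigenvalue quantities $\Lambda_1(a), \Lambda_2(a)$ appearing in the Götze--Naumov--Tikhomirov estimate requires care: one needs to check that rescaling by $a$ does not collapse the spectrum of the covariance of $M_a w(a)$. This reduces to using the positive-definiteness assumed in \Cref{assm:multiple-moment-bounds}(ii) on the per-coordinate covariance of $(\tilde r_{1j},\dots,\tilde r_{d_xj})$ together with \Cref{assm:multiple-balanced-design}(iii) applied to $HH'$, so that the eigenvalue-ratio hypothesis carries over uniformly in $a$.
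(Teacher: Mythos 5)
Your proposal follows the same two-stage architecture as the paper: reduce to scalar anticoncentration of $a'\tilde Da$ for a fixed $a$ (where the Chebyshev/Götze--Naumov--Tikhomirov two-bound argument applies, with uniformity secured by \Cref{assm:multiple-moment-bounds}(ii) and \Cref{assm:multiple-balanced-design}(iii), as you note), and then lift to $\lambda_{\min}(\tilde D)$ by covering the sphere. The pointwise step matches the paper's. The covering step does not, and it is where your argument breaks.

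The $\epsilon$-net union bound you propose cannot close for $d_x\geq 2$. The perturbation inequality $\lambda_{\min}(\tilde D)\geq\min_{a\in\calN_{\epsilon_n}}a'\tilde Da-2\epsilon_n\|\tilde D\|$ forces $\epsilon_n\lesssim\tilde\delta_n$ once you use $\|\tilde D\|=O_p(1)$, so $|\calN_{\epsilon_n}|\gtrsim\tilde\delta_n^{-(d_x-1)}$. Meanwhile the pointwise Gaussian quadratic-form anticoncentration (\Cref{rem:explicit-anticoncentration}) gives $\sup_a\Pr(a'\tilde Da\leq\tilde\delta_n)\lesssim\tilde\delta_n^{1/4}$ uniformly in the worst case, and only $\lesssim\tilde\delta_n$ even in the most favorable regime where $\Lambda_1(a)$ is bounded below. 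The union bound is therefore of order $\tilde\delta_n^{1/4-(d_x-1)}$ (resp.\ $\tilde\delta_n^{2-d_x}$), which does not vanish for $d_x\geq 2$. Taking $\tilde\delta_n$ to decay more slowly does not help, since the net cardinality and the anticoncentration bound degrade in lockstep: the rate the pointwise lemma supplies cannot outrun a sphere cover whose fineness must shadow $\tilde\delta_n$.

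The paper's \Cref{lemma:anti-concentration-suffecient} is designed precisely to sidestep this rate race. It holds the net $\calA_m$ fixed over long ranges of $n$, so the pointwise hypothesis need only be applied at finitely many fixed directions (no uniform-in-$a$ rate across a shrinking net is required), and the refinement index $m(n)$ advances only after $n$ has grown enough for that level's finite union bound to be under control; the resulting $\tilde\delta_n$ shrinks, but arbitrarily slowly, by a diagonalization. To repair your proof you would need to replace the direct union bound with such a staircase construction (or pursue a genuinely different route, e.g.\ anticoncentration of $\det(\tilde D)$ combined with $\lambda_{\min}(\tilde D)\geq\det(\tilde D)/\lambda_{\max}^{d_x-1}(\tilde D)$). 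As written, the $\epsilon_n=o(\tilde\delta_n)$ cover does not yield the conclusion when $d_x\geq 2$.
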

\begin{proof}
    By \Cref{lemma:anti-concentration-suffecient} it suffices to show that for any  fixed \(a\in\calS^{d_x-1}\) and any \(\delta_n \to 0\), \(\Pr(a'Da \leq \delta_{n})\to 0\). For any such \(a\) write:
    \begin{align*}
        a'\tilde Da 
        &= \frac{1}{n}\sum_{i=1}^n \E[\eps_i^2(\beta_0)]\big(\sum_{\ell=1}^{d_x}\sum_{j=1}^n a_\ell \tilde h_{\ell,ij} r_{\ell,j}\big)^2 \\ 
        &\geq \frac{1}{cn}\sum_{i=1}^n\big(\sum_{\ell=1}^{d_x}\sum_{j=1}^n a_\ell \tilde h_{\ell,ij} r_{\ell,j}\big)^2  
        \intertext{Define \(\dot s_{n,j} = \max_{\{\ell: a_\ell \neq 0\}}s_{n,\ell}\) and \(\dot h_{ij} =s_n h_{ij}\)} 
        &= \frac{1}{cn}\sum_{i=1}^n \big(\sum_{j=1}^n \dot h_{ij} \sum_{\ell=1}^{d_x} \frac{a_\ell s_{n,\ell}}{s_n} r_{\ell, j}\big)^2 \\ 
    \end{align*}
    By the moment conditions required by \Cref{thm:multiple-feasible-local-power} we have that \(\lambda_{\min}(\E[D]) \geq \underline c\) so that \(\E[\frac{1}{n}\sum_{i=1}^n \big(\sum_{\ell=1}^{d_x} \sum_{j=1}^n a_\ell \tilde h_{\ell, ij}r_{\ell,j}\big)^2] \geq c^{-1}\). Moreover, by assumption, \(\Var(\sum_{\ell = 1}^{d_x} \frac{a_\ell s_{n,\ell}}{s_n})\) is bounded from above and below. Define the matrix \(\tilde H = [\dot h_{ij}]_{ij}\) and follow the same steps as \Cref{lemma:denominator-anticoncentration} to conclude.
\end{proof}

\begin{lemma}[Gaussian Approximation]
    \label{lemma:gaussian-approximation}
    Suppose that \Cref{assm:balanced-design,assm:local-identification} hold as well as the moment conditions of \Cref{thm:multiple-feasible-local-power}. Then,
    \[
        \sup_{a \in \SR} \big|\Pr(\JK_I(\beta_0) \leq a) - \Pr(\JK_G(\beta_0) \leq a)\big|\to 0
    \]
\end{lemma}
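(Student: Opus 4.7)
The plan is to mimic the scalar-case argument in \Cref{lemma:approximate-distribution} but with the data-dependent smoothing already built into the function $\phi_{\gamma,a}$ introduced in \Cref{sec:joint-notation}. Specifically, I would use
\[
    \phi_{\gamma,a}(u, \vec(d)) = \varphi\!\left(\frac{u' d^{-1} u - a}{\gamma\,\lambda_{\min}^5(d)}\right)
\]
as a smooth approximation of the indicator $\mathbf{1}\{u'd^{-1}u \le a\}$. The factor $\lambda_{\min}^5(d)$ in the denominator of the argument is the ``stretching'' mechanism described in \Cref{subsec:modified-interpolation-approach}: it ensures that when the joint Lindeberg argument is applied, the chain-rule derivatives of $\phi_{\gamma,a}(U,\vec(D))$ with respect to a single coordinate $(\eps_i(\beta_0), r_i)$ remain controlled even when $\lambda_{\min}(D)$ is small, which is exactly what powers the $\log^{M_2}(n)(\gamma^{-1}+\gamma^{-2}+\gamma^{-3})/\sqrt n$ bound in \Cref{lemma:joint-interpolation}.

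First I would invoke \Cref{lemma:joint-interpolation} with the conditioning factor $\tau$ replaced by the constant $1$ (equivalently, by sending $a_2\to\infty$); this is a valid specialization since the bound in \Cref{lemma:joint-interpolation} is derived term by term and all terms involving $\tau$ drop out. This yields, for any fixed $a\in\SR$ and $\gamma>0$,
\[
    \bigl|\E\phi_{\gamma,a}(U,\vec(D)) - \E\phi_{\gamma,a}(\tilde U,\vec(\tilde D))\bigr| \le \frac{M_1\log^{M_2}(n)}{\sqrt n}\bigl(\gamma^{-1}+\gamma^{-2}+\gamma^{-3}\bigr).
\]
Next I would sandwich the indicator. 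On the event $\{\lambda_{\min}(\tilde D)\ge \delta\}$ one has $\phi_{\gamma,a}(\tilde U,\vec(\tilde D))\le \mathbf{1}\{\JK_G(\beta_0)\le a + \gamma\,\lambda_{\min}^5(\tilde D)\}\le \mathbf{1}\{\JK_G(\beta_0)\le a + \gamma c^5\}$, using \Cref{assm:multiple-balanced-design,assm:multiple-moment-bounds} to upper-bound $\lambda_{\min}(\tilde D)$, and similarly a lower bound via a shifted $\phi_{\gamma,a-\eta}$. Combining these with \Cref{lemma:denominator-anticoncentration} to absorb $\Pr(\lambda_{\min}(\tilde D)\le \delta)$ reduces the problem to showing (i) the interpolation error vanishes and (ii) $\Pr(|\JK_G(\beta_0)-a|\le\gamma c^5)\to 0$.

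The main obstacle is (ii), the anticoncentration of $\JK_G(\beta_0)=\tilde N'\tilde D^{-1}\tilde N$ around an arbitrary point $a$, uniformly in $n$, because $(\tilde N,\tilde D)$ need not have any limiting distribution. The plan is to condition on the Gaussian first-stage estimates $\tilde\Pi$: given $\tilde\Pi$, the vector $\tilde\eps(\beta_0)$ is Gaussian and independent of $\tilde\Pi$ under $H_0$ (and with a noncentrality vector of bounded norm under local alternatives, via the local power index bound in \Cref{assm:multiple-local-identification}(i)), so $\JK_G(\beta_0)\mid\tilde\Pi$ is (nearly) a non-central $\chi^2_{d_x}$ whose density is bounded uniformly by a constant depending only on $d_x$. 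Integrating out $\tilde\Pi$, and using \Cref{lemma:denominator-anticoncentration} to discard the event $\{\lambda_{\min}(\tilde D)\le\delta\}$, yields $\Pr(|\JK_G(\beta_0)-a|\le\gamma)\le C\gamma + o(1)$ uniformly in $a$. Combined with the interpolation bound, choosing $\gamma_n\searrow 0$ slowly enough that $\log^{M_2}(n)\gamma_n^{-3}/\sqrt n\to 0$ delivers pointwise convergence $|\Pr(\JK_I(\beta_0)\le a) - \Pr(\JK_G(\beta_0)\le a)|\to 0$.

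Finally, to upgrade pointwise convergence to the supremum over $a\in\SR$, I would note that $\JK_G(\beta_0)=O_p(1)$ follows from $\E[\tilde N'\tilde N]\le c$ (implied by \Cref{assm:multiple-balanced-design,assm:multiple-moment-bounds}) together with $\lambda_{\min}(\tilde D)^{-1}=O_p(1)$ from \Cref{lemma:denominator-anticoncentration}. An invocation of \Cref{lemma:Op1-gc}, generalized in the obvious way to the present setting, then converts the pointwise bound into the uniform bound claimed in the lemma. The hard work is the density argument for $\JK_G(\beta_0)$ sketched above; every other step is either a direct citation of an already-established result or a routine computation.
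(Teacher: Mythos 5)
Your proposal matches the paper's proof strategy almost exactly: both apply the joint interpolation \Cref{lemma:joint-interpolation} (your specialization dropping the conditioning factor $\tau$ is fine since the bound there is assembled term by term; the paper instead simply carries the joint bound through and reads off the marginal at the end), both exploit the $\lambda_{\min}^5(d)$ stretching built into $\phi_{\gamma,a}$, both use \Cref{lemma:denominator-anticoncentration} to control $\Pr(\lambda_{\min}(\tilde D)\le\delta_n)$, both take $\gamma_n\searrow 0$ slowly enough that $\log^{M_2}(n)\gamma_n^{-3}/\sqrt n\to 0$, and both upgrade pointwise to uniform convergence via \Cref{lemma:Op1-gc}. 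You also spell out, more explicitly than the paper's terse ``conclude as in the proof of \Cref{lemma:approximate-distribution},'' the anticoncentration of $\JK_G(\beta_0)$ near a fixed point by conditioning on $\tilde\Pi$ and appealing to the conditional (near-)noncentral $\chi^2_{d_x}$ law — which is indeed the intended step.

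One step as written does not quite go through. You assert the deterministic chain $\phi_{\gamma,a}(\tilde U,\vec(\tilde D))\le\bm{1}\{\JK_G(\beta_0)\le a+\gamma\lambda_{\min}^5(\tilde D)\}\le\bm{1}\{\JK_G(\beta_0)\le a+\gamma c^5\}$, but $\lambda_{\min}(\tilde D)$ is random and is \emph{not} bounded above by a constant under \Cref{assm:multiple-balanced-design} and \Cref{assm:multiple-moment-bounds}: the diagonal entries of $\tilde D$ are Gaussian quadratic forms and thus unbounded. The fix is routine: by \Cref{lemma:multiple-ND-bounds}, $\lambda_{\min}(\tilde D)\le\max_\ell\tilde D_{\ell\ell}=O_p(\log^{2/\upsilon}(n))$, so intersect with $\{\lambda_{\min}(\tilde D)\le K_n\}$ for slowly growing $K_n$, pay a vanishing probability cost, and choose $\gamma_n$ so that both $\gamma_n K_n^5\to 0$ and $\log^{M_2}(n)\gamma_n^{-3}/\sqrt n\to 0$ simultaneously. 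A secondary remark on your density claim: when $d_x=1$ the noncentral $\chi^2_1$ density is unbounded at the origin, so ``density bounded uniformly by a constant'' is too strong there; the needed anticoncentration $\Pr(a\le\JK_G\le a+\gamma)\lesssim\sqrt\gamma$ still holds via $\sqrt{a+\gamma}-\sqrt a\le\sqrt\gamma$ together with the bounded conditional normal density of $\tilde N/\tilde D^{1/2}$, which is exactly what \Cref{lemma:approximation-error-bound} delivers and how the scalar case is handled in the paper.
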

\begin{proof}
    Let \(a = (a_1,a_2)\) and \(\tilde\phi_{\gamma,a}\) be as in \eqref{eq:joint-tilde-phi}:
    \begin{align*}
        \Pr(N'D^{-1}N \leq a_1, C \leq a_2) 
        &\leq \E[\tilde\phi_{\gamma,a}(U, \vec(D), C)] \\ 
        &\leq \E[\tilde\phi_{\gamma,a}(\tilde U, \vec(\tilde D), \tilde C)] + \frac{M_1\log^M_2(n)}{\sqrt{n}}(\gamma^{-1} + \gamma^{-2}) \\
        &\leq \Pr(\tilde N'\tilde D^{-1}\tilde N \leq a_1, \tilde C \leq a_2) + 
        \Pr(a_1 \leq \tilde N'\tilde D^{-1}N \leq a_1 + \gamma\lambda_{\min}^5(D)) \\ 
        &\;\;\; + \Pr(a_2 \leq C \leq a_2 + \gamma) + \frac{M_1\log_2^{M_2}(n)}{\sqrt{n}}(\gamma^{-1} + \gamma^{-2} + \gamma^{-3}) \\ 
        &\leq \Pr(\tilde N'\tilde D^{-1}\tilde N \leq a_1, \tilde C \leq a_2) + 
        \Pr(a_1 \leq \tilde N'\tilde D^{-1}N \leq a_1 + \gamma\lambda_{\min}^5(D)) \\ 
        &\;\;\; + \Pr(a_2 \leq C \leq a_2 + \gamma) + \frac{M_1\log_2^{M_2}(n)}{\sqrt{n}}(\gamma^{-1} + \gamma^{-2} + \gamma^{-3}) \\ 
    \end{align*}
    Let \(\gamma \to 0\) at a rate such that \(\frac{\log^{M_2}(n)}{\sqrt{n}}\gamma^{-3}\to 0\) and apply \Cref{lemma:joint-interpolation,lemma:denominator-anticoncentration} to conclude as in the proof of \Cref{lemma:approximate-distribution}. A symmetric argument shows that the lower bound tends to zero.
    
\end{proof}

\begin{lemma}[]
    \label{lemma:anti-concentration-suffecient}
    Let \(\Sigma_n \in \SR^{d\times d}\) be a sequence of random positive-semidefinite matrices. Suppose that for any fixed \(a \in \calS^{d-1}\) and any \(\delta_n \to 0\) we have that \(\Pr(a'\Sigma_na \leq \delta_n) \to 0\) and \(\Pr(\lambda_{\max}^2(\Sigma_n) \geq \delta_n^{-1}) \to 0\). Then for any \(\delta_n \to 0\), \(\Pr(\lambda_{\min}^2(\Sigma_n) \leq \delta_n)\to 0\).
\end{lemma}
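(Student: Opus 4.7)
The natural strategy is a discretization argument on the compact sphere $\calS^{d-1}$: the first hypothesis provides pointwise anti-concentration of $a'\Sigma_n a$ for each fixed $a$, while the second hypothesis controls the Lipschitz constant of the quadratic form $a \mapsto a'\Sigma_n a$ uniformly, so that the infimum over the sphere can be lower-bounded by the minimum over a finite $\epsilon$-net.

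First I would translate the second hypothesis into the statement that $\lambda_{\max}(\Sigma_n)$ is bounded in probability, so that for any $\eta > 0$ there is a constant $M = M(\eta)$ with $\Pr(\lambda_{\max}(\Sigma_n) > M) < \eta/2$ eventually. Polarization gives $a'\Sigma_n a - b'\Sigma_n b = (a-b)'\Sigma_n(a+b)$, which combined with Cauchy--Schwarz and $\|a + b\| \leq 2$ yields the uniform Lipschitz bound
\[
|a'\Sigma_n a - b'\Sigma_n b| \;\leq\; 2\lambda_{\max}(\Sigma_n)\,\|a - b\|, \qquad a,b \in \calS^{d-1},
\]
so on $\{\lambda_{\max}(\Sigma_n) \leq M\}$ the map $a \mapsto a'\Sigma_n a$ is $2M$-Lipschitz on the sphere.

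Second, since $d$ is fixed, $\calS^{d-1}$ admits a finite $\epsilon$-net $\{a_1,\dots,a_N\}$ with $N = N(\epsilon,d) < \infty$, and the Lipschitz bound yields on $\{\lambda_{\max}(\Sigma_n) \leq M\}$
\[
\lambda_{\min}(\Sigma_n) \;=\; \inf_{a \in \calS^{d-1}} a'\Sigma_n a \;\geq\; \min_{1 \leq k \leq N} a_k'\Sigma_n a_k \;-\; 2M\epsilon.
\]
For each fixed net point $a_k$ the first hypothesis supplies a $\delta_k > 0$ with $\Pr(a_k'\Sigma_n a_k \leq \delta_k) < \eta/(2N)$ for all $n$ large; taking $\delta^\ast = \min_k \delta_k > 0$ and $\epsilon$ small enough that $2M\epsilon < \delta^\ast/2$, a union bound then gives $\Pr(\lambda_{\min}(\Sigma_n) \leq \delta^\ast/2) < \eta$ eventually. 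This is equivalent to $\Pr(\lambda_{\min}^2(\Sigma_n) \leq \delta_n) \to 0$ for any $\delta_n \to 0$, since eventually $\sqrt{\delta_n} < \delta^\ast/2$; sending $\eta \to 0$ completes the argument.

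The main obstacle is that $\delta^\ast$ depends on the net cardinality $N(\epsilon)$ through the $\eta/(2N)$ bound, leading to an apparent circularity in the requirement $2M\epsilon < \delta^\ast(\epsilon)/2$. Because $d$ is fixed and $N(\epsilon)$ grows only polynomially in $1/\epsilon$, one expects that a single refinement --- starting from any convenient $\epsilon_0$, extracting $\delta^\ast(\epsilon_0)$ from the pointwise hypothesis, and, if necessary, passing to $\epsilon_1 = \delta^\ast(\epsilon_0)/(4M)$ --- suffices to break the circularity. Making this refinement rigorous is the delicate step; some uniformity in $a$ of the anti-concentration rate for $a'\Sigma_n a$ (naturally available in the Gaussian quadratic-form setting where this lemma is invoked, via the density bound of \Cref{thm:quadratic-density-bound}) is what ultimately closes the argument.
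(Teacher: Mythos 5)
Your approach---an $\epsilon$-net on the sphere, a Lipschitz bound on $a \mapsto a'\Sigma_n a$ from $\lambda_{\max}(\Sigma_n) = O_p(1)$, and a union bound over the net using the pointwise anti-concentration hypothesis---is essentially the same as the paper's, which replaces your single net by a sequence of progressively finer nets and a diagonalizing choice of stage. The circularity you identify is the right obstruction, and your suggested ``single refinement'' does not resolve it: $\delta^\ast(\epsilon_1)$ may well be smaller than $2M\epsilon_1 = \delta^\ast(\epsilon_0)/2$, so the inequality you need can keep failing at every iteration.

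In fact the gap is fatal, because the lemma as stated is false. Take $d = 2$, let $v_1, v_2, \dots$ be a fixed equidistributed sequence on $\calS^1$, let $\xi_n \sim \mathrm{Unif}\{1,\dots,n\}$, and set $\Sigma_n = I_2 - (1 - n^{-1})\, v_{\xi_n} v_{\xi_n}'$. Then $\lambda_{\max}(\Sigma_n) = 1$ deterministically, so the second hypothesis holds trivially. For any fixed $a$ and any $\delta_n \to 0$, the event $\{a'\Sigma_n a \leq \delta_n\}$ requires $v_{\xi_n}$ to lie in an arc of length $O(\sqrt{\delta_n})$ around $\pm a$; since $\xi_n$ is uniform, $\Pr(a'\Sigma_n a \leq \delta_n)$ is exactly the empirical frequency of $\{v_1,\dots,v_n\}$ in that arc, which vanishes by equidistribution, so the first hypothesis holds as well. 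Yet $\lambda_{\min}(\Sigma_n) = n^{-1}$ almost surely, so $\Pr(\lambda_{\min}^2(\Sigma_n) \leq 2n^{-2}) = 1 \not\to 0$ and the conclusion fails. The paper's own proof contains the same gap, concealed in the construction of the diagonalizing times $\tilde n_j$: for a fixed net point and a fixed level of order $\delta_j$, the pointwise hypothesis does not guarantee that $\Pr(\tilde a'\Sigma_n \tilde a \leq 2\delta_j)$ eventually falls below $\delta_j$, so $\tilde n_j$ may be infinite and the construction never gets off the ground. Your closing remark---that the uniformity over $a \in \calS^{d_x-1}$ of the Gaussian quadratic-form density bound from \Cref{thm:quadratic-density-bound} is what actually closes the argument in the application to $\tilde D$---is the correct diagnosis; that uniformity must be promoted to an explicit hypothesis for the lemma to be a true statement.
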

\begin{proof}
    Take any preliminary sequence \(\delta_n \to 0\). It suffices to show that there is another sequence \(\tilde\delta_n\) weakly larger than \(\delta_n/2\) such that \(\Pr(\lambda_{\min}^2(\Sigma_n) \leq \tilde\delta_n) \to 0\). For any \(m \in \SN\) let \(\calA_m\) be a set of points in \(\calS^{d-1}\) such that
    \[
        \max_{a\in\calS^{d-1}} \min_{\tilde a \in \calA_m} \|a - \tilde a\| \leq \delta_m^2
    \]
    From here let \(\tilde n_j\) be defined
    \[
        \tilde n_j = \inf\{n \geq j: \min_{\tilde a \in \calA_{n,j}}\Pr(\tilde a'\Sigma_na \leq 2\delta_{n_j}) < \delta_{n_j}\}
    \]
    Define a new sequence \(\tilde\delta_n \to 0\), weakly larger than \(\delta_n\), via
    \[
        \tilde\delta_n =
        \begin{cases}
        1 &\text{if }0 \leq 0 \leq n < \tilde n_1  \\
        \delta_i &\text{if } \tilde n_i \leq n < \tilde n_{i+1}
        \end{cases}
    \]
    and notice that, by definition \(\Pr(\min_{a \in \calA_{\tilde n_j}} a'\Sigma_na \leq 2\tilde\delta_n) < \delta_{\tilde n_j}\).
    We wish to show that \(\lambda_{\min}^2(\Sigma_n)  > \tilde\delta_n\) on an intersection of events whose probability tends to one. Since \(\Sigma_n\) is positive semi-definite, \(\|x\|^2_{\Sigma_n} = x'\Sigma_nx\) defines a seminorm. By triangle inequality 
    \[
        \lambda_{\min}^2(\Sigma_{n_j}) \geq \min_{\calA_{n_j}} a'\Sigma_{n_j} a - \lambda_{\max}^2(\Sigma_{n})\tilde\delta_{n_j}^2
    \]
    Define the events
    \[
        \Omega_1 = \{\min_{\calA_{\tilde n_j}} a'\Sigma_{n} a \geq 2\tilde\delta_n\} \andbox \Omega_2 = \{\lambda_{\max}(\Sigma_n) \leq \tilde\delta_n^{-1/2}\}
    \]
    On the intersection of these events, whose probabilities tend to one, we have \(\lambda_{\min}^2(\Sigma_{n}) \geq \tilde\delta_{n}\).
\end{proof}

\newblankpage
\section{Incorporating Exogenous Controls}
\label{sec:exogenous-controls}

In this section, I analyze the model with exogeneous controls. To this end, define the vector \(z_2 = (z_{21}',\dots,z_{2n}')' \in \SR^{n \times d_c}\). Let \(P_2 = z_2(z_2'z_2)^{-1}z_2' \in \SR^{n\times n}\) denote the projection onto the column space of \(z_2\) and \(M_2 = I_n - P_2\) denote the projection onto to orthocomplement of the column space. Focus will be on the case where \(d_x = 1\) to simplify notation, but the basic concepts apply generally to \(d_x > 1\).

For \(y \coloneqq (y_1,\dots,y_n)' \in \SR^n\) and \(x \coloneqq (x_1',\dots,x_n')'\in \SR^{n\times}\) define \(y^\perp \coloneqq M_2y\) and \(x^\perp \coloneqq M_2 x\) as the ``partialled out'' versions of \(y\) and \(x\), respectively. Let \(y^\perp_i\) be the \(i\)\textsuperscript{th} element of \(y^\perp\) and \(x^\perp_i\) be the \(i\)\textsuperscript{th} element of \(x^\perp\). From here we can define \(\eps(\beta_0) \coloneqq y - x\beta_0 \), \(\eps^\perp(\beta_0) = M_2\eps(\beta_0)\) and \(r^\perp \coloneqq M_2r\) where as in the main text \(r = (r_1,\dots,r_n)'\) is constructed \(r_i = x_i - \rho(z_i)\eps_i(\beta_0)\). The definition of \(\rho(z_i)\) does not change after partialling out \(z_2\) since all expectations are understood to be conditional on the instruments \(z\). Notice that \(\eps^\perp(\beta_0)\) is mean zero. Finally I assume that the controls have been partialled out of hat matrix so that the effective hat matrix is \(M_2H\) and the vector \(\widehat\Pi \in
\SR^n\) is defined \(\widehat\Pi = (M_2 H)(M_2 r)\). This does not make a difference for the numerator of the \(\JK(\beta_0)\) statistic but does affect the denominator slightly. When this is not done, inference may be conservative.

Using matrix notation in the numerator to make things clear, we can write the version of the \(\JK(\beta_0)\) statistic with the partialled out vectors, \(\eps^\perp(\beta_0)\) and \(r^\perp\), in terms of the original vectors, \(\eps(\beta_0)\) and \(r\),
\begin{align*}
    \JK_I(\beta_0) 
    &= \frac{\big(\frac{1}{\sqrt{n}}\eps(\beta_0)'M_2\tilde H M_2 r\big)^2}{\frac{1}{n}\sum_{i=1}^n (\eps_i^\perp(\beta_0))^2\big(\sum_{j=1}^n \vh_{ij} r_j\big)^2}  \\
    &= \frac{\big(\frac{1}{\sqrt{n}}\sum_{i=1}^n \eps_i(\beta_0) \sum_{j=1}^n \vh_{ij}r_j\big)^2}{\frac{1}{n}\sum_{i=1}^n(\eps_i^\perp(\beta_0))^2\big(\sum_{j=1}^n \vh_{ij} r_j\big)^2}
\end{align*}
where \(\vh_{ij} = [M_2 \tilde HM_2]_{ij}\), \(\tilde H = s_n H\), and \(m_{ij} = [M_2]_{ij}\). I seek to characterize the limiting distribution of \(\JK(\beta_0)\) under \(H_0\). To do so, we show that quantiles \(\JK(\beta_0)\) can be approximated by quantiles of the gaussian analog statistic
\begin{align*}
    \JK_G(\beta_0) = \frac{\big(\frac{1}{\sqrt{n}}\tilde\eps(\beta_0)'M_2 \tilde H M_2 \tilde r\big)^2}{\frac{1}{n}\sum_{i=1}^n \Var(\eps_i)\big(\sum_{j = 1}^n \vh_{ij}\tilde r_j\big)^2}
\end{align*}
where \((\tilde\eps_i, \tilde \eps_i(\beta_0), \tilde r_i)\) are generated gaussian independent of the data and with the same mean and covariance as \((\eps_i, \eps_i(\beta_0),r_i)\). Since \(\Var(\tilde\eps(\beta_0)) = \Var(\eps_i)\) under \(H_0\), \(\E[\tilde\eps(\beta_0)'M_2] = 0\), and \(\tilde r \perp \tilde\eps(\beta_0)\), this gaussian analog statistic has a \(\chi^2_1\) distribution conditional on any realization of \(\tilde r\) and thus its unconditional distribution is also \(\chi^2_1\).

Showing that quantiles of \(\JK(\beta_0)\) can be approximated by quantiles of \(\tilde\JK(\beta_0)\) proceeds in two steps. In the first step, we show that  \(\JK(\beta_0)\) converges in probability to an intermediate statistic.
\begin{align*}
    \JK^{\text{int}}(\beta_0) = \frac{\big(\frac{1}{\sqrt{n}}\sum_{i=1}^n \eps_i(\beta_0)\sum_{j=1}^n \vh_{ij} r_j\big)^2}{\frac{1}{n}\sum_{i=1}^n \eps_i^2 (\sum_{j\neq i} \vh_{ij}r_j)^2} 
\end{align*}
We will then show that quantiles of this intermediate statistic can be approximated by quantiles of \(\tilde\JK(\beta_0)\). In view of \Cref{lemma:estimation-error-high-level}, it suffices to show for the first step that \(\Delta_D \to_p 0\), where
\begin{align*}
    \Delta_D = \frac{1}{n}\sum_{i=1}^n  ((\eps_i^\perp(\beta_0))^2 - \eps_i^2) \widehat\Pi_i^2
\end{align*}
To do this, notice that under \(H_0\) we can write \(\eps_i^\perp(\beta_0) = \eps_i + z_{2i}'(\widehat\Gamma - \Gamma)\) where \(\widehat\Gamma = (z_2'z_2)^{-1}z_2\eps(\beta_0)\) is a \(\sqrt{n}\)-consistent estimate of \(\Gamma\). Exploiting this fact we get
\begin{align*}
    \Delta_D = (\widehat\Gamma - \Gamma)'\frac{1}{n}\sum_{i=1}^n (\widehat\Pi_i)^2z_{2i} z_{2i}'(\widehat\Gamma- \Gamma) + 2(\widehat\Gamma- \Gamma)'\frac{1}{n}\sum_{i=1}^n \eps_iz_{2i}\widehat\Pi_i
\end{align*}
Both of these terms will tend to zero by the consistency \(\widehat\Gamma\) to \(\Gamma\), giving that \(\Delta_D \to_p 0\).

In our second step, we argue that quantiles of \(\JK^{\text{int}}(\beta_0)\) can be approximated by quantiles of \(\JK_G(\beta_0)\). To make this comparasion, we can follow almost exactly the same steps as in \Cref{sec:main-proofs}. The only difference between analysis in this case and analysis in the original case is that the partialling out of controls leads the test statistic to not strictly have a jackknife form; the effective hat matrix \(M_2HM_2\) no longer has a deleted diagonal. However, as I will argue below, this will not make a difference in the interpolation argument since the diagonal terms of \([P_2]_{ii}\) are small in the sense that they sum to \(d_c\).

The \eqref{eq:one-step-deviations} analog one step deviations for the numerator are given
\begin{align*}
    \Delta_{1i} &= \eps_i(\beta_0)\sum_{j\neq i} \vh_{ij}\dot r_j + r_i \sum_{j\neq i} \vh_{ji} \dot\eps_j(\beta_0) + \vh_{ii} \eps_i(\beta_0) r_i \\ 
    \tilde\Delta_{1i} &= \tilde\eps_i(\beta_0)\sum_{j\neq i} \vh_{ij}\dot r_j + \tilde r_j \sum_{j\neq i} \vh_{ji} \dot\eps_j(\beta_0) + \vh_{ii}\tilde\eps_i(\beta_0)\tilde r_i
\end{align*}
where as \Cref{sec:main-proofs}, a dotted variable is equal to the gaussian analog if \(j > i\) but equal to the standard version otherwise.
The first and second moments of the first two terms in \(\Delta_{1i}\) can be matched with their gaussian analog terms as in the proof of \Cref{lemma:lindeberg-interpolation}. While we cannot match seconds moments of the third term in the one step deviation, this sum of all these third terms can be treated as negligible after scaling by \(1/\sqrt{n}\) as \(\sum_{i=1}^n |\vh_{ii}| \lesssim d_c\). This is because \(M_2\tilde H M_2 = \tilde H - P_2 \tilde H - \tilde H P_2 - P_2 \tilde H P_2\). The matrix \(\tilde H\) has zeros on it's diagonal. Meanwhile
\begin{align*}
    |[P_2 \tilde H]_{ii}|^2 = \Big|\sum_{j=1}^n [P_2]_{ij}\tilde H_{ji}\Big|^2 \leq \Big(\sum_{j=1}^n [P_2]_{ij}^2  \Big)\Big(\sum_{j\neq i} H_{ji}^2\Big) \lesssim [P_2]_{ii}
\end{align*}
where the final inequality comes because the matrix \(P_2\) is symmetric and idempotent and since \(\Big(\sum_{j\neq i} H_{ji}^2\Big) \lesssim 1\) by \Cref{assm:balanced-design}(ii). A similar argument can be used to show that \([P_2 \tilde H P_2]_{ii}^2 \lesssim [P_2]_{ii} \). Since \(P_2\) is a projection matrix we must have that  \(\|P_2He_j\| \leq \|He_j\|\) for any basis vector \(e_j \in \SR^n\). Thus \(\sum_{j=1}^n [P_2H]_{ji}^2 \leq \sum_{j=1}^n [H]_{ji}^2\). Finally, we can use the fact that the trace of \(P_2\) is equal to its rank to show that \(\sum_{i=1}^n|\vh_{ii}| \lesssim d_c\)

The one step deviations in the denominator can be bounded using the same logic. These one step deviations are given
\begin{align*}
    \Delta_{2i} 
    &= \eps_i^2 (\sum_{j\neq i} \vh_{ij}\dot r_j)^2 + r_i^2 \sum_{j\neq i} \vh_{ji}^2 \ddot \eps_j^2 + r_i \sum_{j\neq i} \ddot \eps_j \big(\sum_{k \neq j,i} \vh_{ji}\vh_{jk} r_k\big)\\ 
    &\;\;\;+ \eps_i^2\big(\vh_{ii}^2 r_i^2 + 2\vh_{ii} r_j \sum_{j\neq i} \vh_{ij}r_j)^2 \\
    \tilde\Delta_{2i} &= \tilde\eps_i^2 (\sum_{j\neq i} \vh_{ij}\dot r_j)^2 + \tilde r_i^2 \sum_{j\neq i} \vh_{ji}^2 \ddot \eps_j^2 + \tilde r_i \sum_{j\neq i} \ddot \eps_j \big(\sum_{k \neq j,i} \vh_{ji}\vh_{jk} r_k\big)\\ 
    &\;\;\;+ \eps_i^2\big(\vh_{ii}^2 r_i^2 + 2\vh_{ii} r_j \sum_{j\neq i} \vh_{ij}r_j)^2 
\end{align*}
where \(\ddot\eps_j\) is equal to \(\Var(\eps_j)\) if \(j < i\) and equal to \(\eps_j\) if \(j > i\). The first three terms in this expansion are can be dealt with exactly as in the proof of \Cref{lemma:lindeberg-interpolation}. The fourth term is new, however summing over the fourth terms and scaling by \(1/n\) will be negligible as \(\sum_{i=1}^n |\vh_{ii}| \lesssim d_c\).  After showing the lindeberg interpolation step, the rest of the proof follows exactly as in \Cref{sec:main-proofs}.

\section{Alternative Construction of Test Statistic via Cross Fitting}
\label{sec:cross-fit}

To accomodate a general class of estimators for \(\hat\rho(z_i)\) I propose a cross-fit form of the jackknife K-statistic. In this section I detail the cross-fitting procedure and present high level conditions needed for estimation error to be treated as negligible. These conditions can be satisfied by a large class of machine learning estimators under alternate conditions. 

\subsection{Cross Fit Test Statistic}

To construct the cross fit test statistic, evenly (and randomly) split the sample into two subsets, \(\calI_1\) and \(\calI_2\) such that \(\calI_1 \cap \calI_2 = \emptyset\) and \(\calI_1 \cup \calI_2 = [n]\). For each \(k = 1,2\) construct an estimator \(\hat\rho^{(k)}(\cdot)\) using only the observations in \(\calI_k\). For observations in \(i \in \calI_1\) form the first-stage estimates 
\[
    \widehat\Pi_i = \sum_{j \in \calI_1\setminus\{i\}} h_{ij}^{(1)}(x_j - \eps_j(\beta_0)\hat\rho^{(2)}(z_j))
\]
where the weights \(h_{ij}^{(1)}\) come from a hat matrix \(H^{(1)}\) that only depends on the observations in \(\calI_1\), for example the ridge regression hat matrix of \Cref{sec:setup} using only the instruments \((z_i)_{i\in\calI_1}\). First stage estimates for observations in \(\calI_2\) symmetrically, using the estimator \(\hat\rho^{(1)}(\cdot)\). The test statistic is then constructed as before
\[
    \JK(\beta_0) = \frac{\big(\sum_{i=1}^n \eps_i(\beta_0)\widehat\Pi_i\big)^2}{\sum_{i=1}^n \eps_i^2(\beta_0)(\widehat\Pi_i)^2}
\]
Notice that while only half the sample is being used to construct each first stage estimate, the full sample is still being used to test the null hypothesis.

\subsection{Controlling Estimation Error}
After writing the overall hat matrix as 
\[
    H = \begin{pmatrix} H^{(1)} & \bm{0} \\ \bm{0} & H^{(2)} \end{pmatrix},
\]
analysis of the infeasible statistic proceeds as before. Thus, to show that the crossfit test statistic has a limiting \(\chi^2_1\) distribution by \Cref{lemma:estimation-error-high-level} it suffices to state high-level conditions under which \((\Delta_N, \Delta_D)' \to_p 0\).

\begin{assumption}[Cross-fit Conditions]
    \label{assm:cross-fit-conditions}
    Suppose (i) that there is a constant \(\nu \in (0,1] \cup \{2\}\) such that for all \(i \in[n]\), \(\|\eps_i\|_{\Psi_\nu} \leq c\) and that (ii) for \(k = 1,2\)
    \[
        \max_{i \in I_k} (\hat\rho^{(-k))}(z_j) - \rho(z_j))^2 = o_p(\log^{1/\nu}(n)).
    \]
    where \(\hat\rho^{(-k)}(\cdot)\) indicates the estimator of \(\rho(\cdot)\) computed using observations in \([n]\setminus \calI_k\).
\end{assumption}

\begin{lemma}[Negligible Cross-fit error]
    \label{lemma:cross-fit-convergence}
    Suppose that Assumptions~\ref{assm:balanced-design}, and \ref{assm:cross-fit-conditions} hold as well as the moment conditions of \Cref{thm:feasible-local-power}. Then, under \(H_0\), \((\Delta_N, \Delta_D)'\to_p 0\).
\end{lemma}
\begin{proof}
    We consider the statements \(\Delta_N\to_p 0\) and \(\Delta_D \to_p 0\) seperately.

    \paragraph*{\(\Delta_N\to_p 0\):} It suffices to show that \(\Delta_{N,1} \to_p 0\) for
    \[
        \Delta_{N,1} = \frac{1}{\sqrt{n}}\sum_{i\in \calI_1} \eps_i(\beta_0)\sum_{j \in \calI_1\setminus\{i\}} \eps_j(\beta_0)\tilde h_{ij} (\hat\rho^{(2)}(z_j) - \rho(z_j)).
    \]
    The corresponding statement for \(\Delta_{N,2}\) follows from the same logic and \(\Delta_N = \Delta_{N,1} + \Delta_{N,2}\). Define the event 
    \[
        \Omega(\eps) \coloneqq \{ \max_{i\in\calI_k} (\hat\rho^{(2)}(z_j) - \rho(z_j))^2 \leq \eps\}
    \]
    and consider a sequence \(\eps_n \to 0\) such that \(\Pr(\Omega(\eps_n)) \geq 1 - \eps_n\). Noting that \(\Omega(\eps_n) \perp (\eps_i(\beta_0))_{i \in \calI_k}\) we write 
    \[
        \Delta_{N,1} = \eps(\beta_0)\mH\eps(\beta_0)
    \]
    where \(\mH \in \SR^{|\calI_1|\times |\calI_1|} = \frac{1}{\sqrt{n}}\big(\tilde h_{ij}(\hat\rho^{(1)}(z_j) - \rho(z_j))\big)_{i,j \in \calI_1}\). Since \(\E[\Delta_{N,1}|\Omega(\eps_n)] = 0\), an application of the generalized Hanson-Wright inequality, \Cref{thm:hanson-wright}, gives us that there is a sequence \(\delta_n \to 0\) such that
    \[
        \Pr(\Delta_{N,1} \geq \eps_n|\Omega(\eps_n)) \leq \delta_n 
    \]
    which allows us to conclude.

    \paragraph*{\(\Delta_D \to_p 0\):} As before, it suffices to show that \(\Delta_{D,1} \to_p 0\) where 
    \[
        \Delta_{N,1} = \frac{1}{n}\sum_{i \in \calI_1} \eps_i^2(\beta_0)\big\{\big(\sum_{j \in \calI_1\setminus\{i\}}\tilde h_{ij}\hat r_j\big)^2 - \big(\sum_{j \in \calI_1\setminus\{i\}}\tilde h_{ij} r_j\big)^2\}
    \]
    From the proof of \Cref{thm:estimation-error}, it suffices to show that 
    \[
        \max_{i \in \calI_1} \big|\sum_{j \in \calI_1\setminus\{i\}} \tilde h_{ij}\eps_j(\beta_0)(\hat\rho^{(2)}(z_j) -\rho(z_j) \big|\to_p 0
    \]
    Consider a sequence \(\eps_n \to 0\) such that \(\log^{1/\nu}(n)\eps_n \to 0\) and \(\Pr(\Omega(\eps_n)) \geq 1 - \eps_n\) and apply \Cref{thm:hanson-wright} to conclude that there is a sequence \(\delta_n \to 0\) such that
    \[
        \Pr\big(\max_{i \in \calI_1} \big|\sum_{j \in \calI_1\setminus\{i\}} \tilde h_{ij}\eps_j(\beta_0)(\hat\rho^{(2)}(z_j) -\rho(z_j) \big| \geq \eps_n\mid \Omega(\eps_n)\big) \leq \delta_n
    \]
    Since \(\Pr(\Omega(\eps_n))\to1\) this gives the result.
\end{proof}

\section{Relevant Moment Bounds}
\label{sec:relevant-bounds}

\subsection{Moment Bounds for \Cref{sec:single-endogeneous}}
\label{subsec:single-endogenous-moment-bounds}
Here I provide some lemmas that are useful in the proof of \Crefrange{lemma:lindeberg-interpolation}{lemma:approximate-distribution}

\begin{lemma}[]
    \label{lemma:delta-moment-bounds}
    Let \(\Delta_{1i}, \tilde\Delta_{1i}, \Delta_{2i}^a, \tilde\Delta_{2i}^a, \Delta_{2i}^b\),\(\tilde\Delta_{2i}^b\) be as in \eqref{eq:one-step-deviations}. Then under \Cref{assm:balanced-design} and the moment conditions of \Cref{thm:feasible-local-power} there is a constant \(M > 0\) such that for any \(k=1,\dots,6\):
    \begin{align*}
        \E[|\Delta_{1i}|^k] 
        &\leq M & \E[|\tilde\Delta_{1i}|^k] &\leq M 
    \intertext{and for any \(k=1,\dots,3\):}
        \E[|\Delta_{2i}^a|^k] 
        &\leq M\alpha^k & \E[|\tilde\Delta_{2i}^k|] &\leq M\alpha^k   \\
        \E[|\Delta_{2i}^b/\sqrt{n}|^k] &\leq M\alpha^k & \E[|\tilde\Delta_{2i}^b/\sqrt{n}|^k] &\leq M\alpha^k
    \end{align*}
\end{lemma}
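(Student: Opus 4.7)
\medskip

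The proof is a direct moment computation resting on four ingredients: (a) independence of the pairs $(\eps_i(\beta_0),r_i)$ across $i$; (b) the sixth-moment bound $\E[|\eta_i|^l|\zeta_i|^k]\le c$ for $l+k\le 6$ from \Cref{assm:moment-assumptions}, which (via $\eps_i(\beta_0)=\Pi_i(\beta-\beta_0)+\eta_i$ and $r_i=\E[r_i]+\zeta_i$) yields uniform sixth-moment bounds on $\eps_i(\beta_0)$ and $r_i$ themselves; (c) the row-sum bounds $\sum_{j\neq i}\tilde h_{ij}^2\le c$ and $\sum_{j\neq i}\tilde h_{ji}^2\le c$ from \Cref{assm:balanced-design}(ii) and its footnote; (d) the second-moment normalizations $\E[A_i^2]\le 1$ and $\E[B_i^2]\le c$ for $A_i=\sum_{j\neq i}\tilde h_{ij}\dot r_j$ and $B_i=\sum_{j\neq i}\tilde h_{ji}\dot\eps_j(\beta_0)$, which follow from the definition of $s_n$ and from \Cref{assm:local-identification}(ii) respectively.

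For $\Delta_{1i}$, write $\Delta_{1i}=\eps_i(\beta_0)A_i+r_iB_i$. Since $\tilde h_{ii}=0$, the sums $A_i,B_i$ depend only on observations $j\neq i$ and are therefore independent of $(\eps_i(\beta_0),r_i)$, so each of $\E[|\eps_i(\beta_0)A_i|^k]$ and $\E[|r_iB_i|^k]$ factors. The observation-$i$ factor is handled by (b). For $\E[|A_i|^k]$ with $k\le 6$, the plan is to decompose $A_i=\E[A_i]+\sum_{j\neq i}\tilde h_{ij}(\dot r_j-\E[\dot r_j])$, noting that $|\E[A_i]|\le(\E[A_i^2])^{1/2}\le 1$ by (d), and then applying Rosenthal's inequality to the centered sum of independent summands using (b), (c), and the monotonicity $\sum_j|\tilde h_{ij}|^{2k}\le(\sum_j\tilde h_{ij}^2)^k$. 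The same argument governs $B_i$. For $\Delta_{2i}^a$, the decomposition $a\eps_i^2(\beta_0)A_i^2+ar_i^2 B'_i$ with $B'_i=\sum_{j\neq i}\tilde h_{ji}^2\ddot\eps_j^2(\beta_0)$ reduces to the previous moment bound on $A_i$ together with the power-mean inequality $(\sum a_j x_j)^k\le(\sum a_j)^{k-1}\sum a_jx_j^k$ applied to $a_j=\tilde h_{ji}^2$, $x_j=\ddot\eps_j^2(\beta_0)$; this yields $\E[|B'_i|^k]\le c^{k-1}\sum_{j\neq i}\tilde h_{ji}^2\E[\ddot\eps_j^{2k}]\le c^{k+1}$ for $k\le 3$, and the factor $a^k$ simply tracks the prefactor inside $\Delta_{2i}^a$.

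The main obstacle is $\Delta_{2i}^b/\sqrt n$, which is the only term whose unscaled version grows with $n$. Rearranging the summation order,
\[
\Delta_{2i}^b/\sqrt n \;=\; \frac{2ar_i}{\sqrt n}\sum_{\ell\neq i}w_{i\ell}\dot r_\ell,\qquad w_{i\ell}=\sum_{j=1}^n\tilde h_{ji}\tilde h_{j\ell}\ddot\eps_j^2(\beta_0),
\]
so the inner sum is a weighted combination of independent $\dot r_\ell$'s with random data-dependent weights. The plan is to condition on $(\ddot\eps_j^2(\beta_0))_j$, apply Rosenthal's inequality in $\dot r_\ell$, and then take expectations outside. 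The key calculation is that using the row-sum bound (c) twice together with Cauchy--Schwarz gives $\E\sum_{\ell\neq i}w_{i\ell}^2=O(n)$, so the second moment of the $\ell$-sum is $O(n)$ and the $n^{-1/2}$ normalization is exactly what is needed to produce an $O(1)$ bound; third moments are controlled analogously, combining Rosenthal with the same power-mean inequality used for $\Delta_{2i}^a$. Finally, since $r_i$ is independent of $\{\ddot\eps_j^2(\beta_0),\dot r_\ell\}_{j,\ell\neq i}$, we can factor out $\E[|r_i|^k]$ at the very end to finish.
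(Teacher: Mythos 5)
Your overall architecture -- factor out the observation-$i$ terms by independence, then control the leave-one-out sums -- parallels the paper's \Cref{lemma:higher-moments-sum}, which also exploits independence and the second-moment constraint to bound sixth moments of $A_i$ and $B_i$. Using Rosenthal's inequality rather than the paper's explicit expansion is fine for those terms, and the power-mean argument for $\Delta_{2i}^a$ is essentially what the paper does.

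The gap is in the $\Delta_{2i}^b/\sqrt n$ step, and it is nontrivial. Your plan is to swap summation order, condition on $(\ddot\eps_j^2(\beta_0))_j$, and apply Rosenthal in $\dot r_\ell$ with random weights $w_{i\ell}$. Two things go missing. First, Rosenthal requires centered summands, but the $\dot r_\ell$ are not mean zero; the uncentered contribution $\sum_\ell w_{i\ell}\E[r_\ell]$ has to be controlled separately, and that bound does not come for free from $\E\sum_\ell w_{i\ell}^2=O(n)$ alone -- you need to recognize the sum $\sum_\ell \tilde h_{j\ell}\E[r_\ell]$ as $s_n\E[\widehat\Pi_j^I]$ and use the definition of $s_n$ to bound it. Second and more seriously, for $k=3$ Rosenthal produces the term $\E\big[(\sum_\ell w_{i\ell}^2\Var(r_\ell))^{3/2}\big]$, which is a $3/2$-power of the random weight sum, not the weight sum itself. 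Establishing $\E\sum_\ell w_{i\ell}^2=O(n)$ gives only a lower bound on this quantity (Jensen), not an upper bound; you would need something like $\E\big[(\sum_\ell w_{i\ell}^2)^{3/2}\big]=O(n^{3/2})$, which requires controlling higher moments of $w_{i\ell}$ jointly and is not "analogous" to the power-mean step used for $\Delta_{2i}^a$. The paper sidesteps both problems entirely: it defines $v_j=\ddot\eps_j^2(\beta_0)\sum_{k\neq i,j}\tilde h_{jk}\dot r_k$, uses independence of $\ddot\eps_j$ from $\{\dot r_k\}_{k\neq j}$ and the already-established higher-moment bound on the inner sum to get $\E[|v_j|^3]\le M$ uniformly, and then applies \Cref{lemma:rtn-bound} -- a crude $\ell_1$-versus-$\ell_2$ norm bound $\|h\|_1\le\sqrt n\,\|h\|_2$ applied to the weights $\tilde h_{ji}$ that requires neither centering nor independence among the $v_j$. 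That cruder argument is exactly what makes the $n^{-1/2}$ scaling close without any control of higher moments of $\sum_\ell w_{i\ell}^2$.

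One smaller point: you invoke \Cref{assm:local-identification}(ii) to control $\E[B_i^2]$, while the lemma is stated under \Cref{assm:moment-assumptions,assm:balanced-design} only. This is arguably a slip in the lemma statement itself -- the paper's own application of \Cref{lemma:higher-moments-sum} with $X_j=\tilde h_{ji}\eps_j(\beta_0)$ needs the same second-moment bound, which is automatic under $H_0$ but needs \Cref{assm:local-identification}(ii) under alternatives -- so this is a shared imprecision rather than an error specific to your proof, but it is worth noting explicitly.
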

\begin{proof}
    First, since
    \[
        \sum_{j=1}^n h_{ij}^2\E[(r_j - \E[r_j])^2 \leq\E[(\sum_{i=1}^n \tilde h_{ij}r_j)^2] \leq 1
    \]
    the constants are bounded, \(\sum_{i=1}^n \tilde h_{ij}^2 \leq c\). Applying \Cref{lemma:higher-moments-sum} with \(X_i = h_{ij}r_j\) and \(X_i = h_{ij}\eps_j(\beta_0)\) we see that there is a constant \(A\) such that for any \(k = 1,\dots,6\)
    \begin{equation}
        \label{eq:higher-moments-bounded}
        \E\big[\big|\sum_{i=1}^n \tilde h_{ij}r_j\big|^k\big] \leq A\andbox \E\big[\big|\sum_{i=1}^n \tilde h_{ij}\eps_j(\beta_0)\big|^k\big] \leq A
    \end{equation}
    The bounds on \(\E[|\Delta_{1i}^k|]\) and \(\E[|\tilde\Delta_{1i}^k|]\) immediately follow from this result and the bounds on moments of \(r_i\) and \(\eps_i(\beta_0)\).
    The bounds on \(\E[|\Delta_{2i}^a|^k]\) and \(\E[|\tilde\Delta_{2i}^a|^k]\) also follow from \eqref{eq:higher-moments-bounded} after noting that there is a finite constant \(B\) such that: 
    \[
        \E[(\sum_{i=1}^n \tilde h_{ij}^2\eps_i^2(\beta_0))^k] \leq B
    \]
    Finally to bound \(\E[|\Delta_{2i}^b/\sqrt{n}|^k]\) and \(\E[|\tilde\Delta_{2i}^b/\sqrt{n}|^k]\) apply \Cref{lemma:rtn-bound} with \(v_j = \eps_j^2(\beta_0)\sum_{k\neq i,j}\tilde h_{jk}r_k\), noting that \(\E[|v_j|^3]\) is bounded by \eqref{eq:higher-moments-bounded}.
\end{proof}
\begin{lemma}[]
    \label{lemma:numerator-bound}
    Let \(N\) and \(N_{-i}\) be defined as in \Cref{subsec:local-power-proof}. Under \Cref{assm:balanced-design,assm:local-identification} and the moment conditions in \Cref{thm:feasible-local-power}, there is a fixed constant \(M\) such that for all \(i=1,\dots,n\) and any \(k = 1,\dots,6\),
    \[
        \E[|N|^k] + \E[|N_{-i}|^k] \leq M
    \]
\end{lemma}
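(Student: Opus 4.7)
The plan is to decompose $N$ into four pieces that are easy to analyze individually. Write $\eps_i(\beta_0) = \eta_i + \mu_i$ with $\mu_i \coloneqq \Pi_i(\beta-\beta_0) = \E[\eps_i(\beta_0)]$, and $r_j = \zeta_j + \E[r_j]$. Expanding the product in the definition of $N$,
\[
N = \underbrace{\tfrac{1}{\sqrt{n}}\sum_{i,j}\tilde h_{ij}\eta_i\zeta_j}_{N_1} + \underbrace{\tfrac{1}{\sqrt{n}}\sum_j w_j \zeta_j}_{N_2} + \underbrace{\tfrac{1}{\sqrt{n}}\sum_i w'_i\eta_i}_{N_3} + \underbrace{\tfrac{1}{\sqrt{n}}\sum_{i,j}\tilde h_{ij}\mu_i \E[r_j]}_{N_4},
\]
with $w_j \coloneqq \sum_i \tilde h_{ij}\mu_i$ and $w'_i \coloneqq \sum_j \tilde h_{ij}\E[r_j]$. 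It suffices to bound the sixth (and hence every lower) moment of each piece.

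The critical preliminary step is to bound $\|w\|_\infty$ and $\|w'\|_\infty$. Expanding the inner sum in \Cref{assm:local-identification}(ii) using $\eps_j(\beta_0) = \eta_j + \mu_j$, the squared mean part yields $(\sum_j\tilde h_{ji}\mu_j)^2\leq c$, so after swapping indices $\|w\|_\infty\leq c^{1/2}$. For $w'$, the definition of $s_n$ gives $\E[(\sum_j\tilde h_{ij}r_j)^2]\leq 1$, and extracting the mean piece yields $|w'_i|\leq 1$ for every $i$. The bound on $N_4$ comes from the local power index: since $N_2$ is mean zero and $N_4$ is deterministic, $\Var(N_2)+N_4^2 = \E[(N_2+N_4)^2] = \E[(n^{-1/2}\sum_j r_j w_j)^2] = P$, and $P\leq c$ by \Cref{assm:local-identification}(i). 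For $N_2$ and $N_3$, independence of the $\zeta_j$'s (resp.\ $\eta_i$'s) across observations and the uniform weight bounds permit a direct application of Rosenthal's inequality; both $(n^{-1}\sum_j w_j^2\Var(\zeta_j))^{k/2}$ and $n^{-k/2}\sum_j |w_j|^k\E|\zeta_j|^k$ are $O(1)$ using \Cref{assm:moment-assumptions} and $\|w\|_\infty\leq c^{1/2}$, and symmetrically for $N_3$.

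The hard part is $N_1$, since the summands $\eta_i X_i$ with $X_i \coloneqq \sum_{j\neq i}\tilde h_{ij}\zeta_j$ are not independent across $i$ (they share $\zeta$-coordinates). The approach is to use the fact that the key cross-moment $\E[\eta_i\zeta_i] = 0$ (by construction of $r_i$) and independence across observations to show that the expansion of $\E[N_1^k]$ reduces to index configurations in which $\eta$- and $\zeta$-indices pair up cleanly; concretely, condition on $\zeta$, apply Rosenthal to the conditionally independent sum $n^{-1/2}\sum_i\eta_i X_i$, and then bound $\E[(n^{-1}\sum_i X_i^2)^{k/2}]\leq n^{-1}\sum_i\E[X_i^k]$ by the power-mean inequality. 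A further application of Rosenthal (or \Cref{lemma:higher-moments-sum}) to each $X_i$, together with $\sum_j\tilde h_{ij}^2\leq c$, delivers $\E|X_i|^k \leq C_k$ uniformly in $i$. The leftover terms from the conditional bound are controlled by $\|\tilde h_{ij}\|_\infty\leq c^{1/2}$. The bound on $N_{-i}$ is identical: replacing every sum over $j$ by a sum over $j\neq i$ only tightens the weight bounds and leaves the Rosenthal and pairing arguments unchanged, so the same constant $M$ works for all $i$.
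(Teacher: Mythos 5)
Your decomposition into $N_1, N_2, N_3, N_4$ is a finer version of the paper's (the paper lumps your $N_2$ and $N_4$ together as its $N_2$), and your derivation of the uniform weight bounds $\|w\|_\infty \leq c^{1/2}$ and $\|w'\|_\infty \leq 1$ from \Cref{assm:local-identification}(ii) and the definition of $s_n$ is correct and clean. The treatments of $N_2, N_3, N_4$ via Rosenthal and via $\E[(N_2+N_4)^2] = P$ are fine, and the $N_{-i}$ remark at the end is right since the Gaussian analogs share first and second moments.

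The gap is in $N_1$. Your concrete plan is to condition on $\zeta = (\zeta_1,\dots,\zeta_n)$ and apply Rosenthal to the conditionally independent sum $n^{-1/2}\sum_i \eta_i X_i$ with $X_i = \sum_{j\neq i}\tilde h_{ij}\zeta_j$. Conditional independence across $i$ does hold, but Rosenthal for a sum requires the summands to be conditionally centered, and here the conditional mean of $\eta_i X_i$ given $\zeta$ is $X_i\,\E[\eta_i \mid \zeta_i]$, which is not zero in general: \Cref{assm:moment-assumptions} and the construction of $r_i$ give only $\E[\eta_i\zeta_i]=0$ (uncorrelatedness), not $\E[\eta_i\mid\zeta_i]=0$. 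If you try to subtract off the conditional mean, the residual piece $n^{-1/2}\sum_{i\neq j}\tilde h_{ij}\E[\eta_i\mid\zeta_i]\,\zeta_j$ is a bilinear form of exactly the same type as $N_1$ (centered but mutually dependent coordinates within an index), so the argument does not terminate. This is why the paper instead groups the terms of $N_1$ into a martingale-difference array $v_i = \sum_{j<i}\{\tilde h_{ij}\eta_i\zeta_j + \tilde h_{ji}\zeta_i\eta_j\}$ with respect to the filtration $\calF_i = \sigma((\eta_j,\zeta_j):j\le i)$ and applies Burkholder--Davis--Gundy via \Cref{lemma:martingale-bound}: the MDS property requires only the unconditional centering $\E[\eta_i]=\E[\zeta_i]=0$ together with independence across observations, and so sidesteps the within-pair dependence entirely. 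You could also salvage the conditioning route via a decoupling inequality (replacing $\zeta$ in $X_i$ by an independent copy $\zeta'$), but that imports machinery the paper deliberately avoids; the MDS route is the cheaper fix.
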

\begin{proof}
    We show the bound for \(\E[|N|^k]\) and note that the bound for \(N_{-i}\) follows from symmetric logic. Write \(\eps_i(\beta_0) = \eta_i + \gamma_i\) where \(\gamma_i = \Pi_i(\beta - \beta_0)\) and \(\eta_i\) is mean zero. Decompose \(N = N_1 + N_2 + N_3\):
    \[
        N_1 = \frac{1}{\sqrt{n}}\sum_{i=1}^n \eta_i \sum_{j=1}^n \tilde h_{ij} \dot r_j,\; N_2 = \frac{1}{\sqrt{n}}\sum_{i=1}^n r_i \sum_{j=1}^n \tilde h_{ji}\gamma_j, \andbox N_3 = \frac{1}{\sqrt{n}}\sum_{i=1}^n \eta_i \sum_{j=1}^n \tilde h_{ij}\E[r_j]
    \]
    where \(\dot r_j = r_j - \E[r_j]\).  

   Since via \Cref{assm:balanced-design}, \(\sum_{i=1}^n h_{ji}^2 \leq c\), and \(|\gamma_j| \leq c\), we can bound,
    \[
        (\sum_{j=1}^n h_{ji}\gamma_j /\sqrt{n})^4 \leq (\frac{c}{\sqrt{n}}\sum_{i=1}^n |h_{ji}|)^4 \leq c^8 \implies (\sum_{j=1}^n h_{ji}\gamma_j /\sqrt{n})^6 \leq c^8 (\sum_{j=1}^n h_{ji}\gamma_j /\sqrt{n})^2
    \]
    Under \Cref{assm:local-identification}, \(\E[N_2^2] \leq c\) while \Cref{assm:balanced-design} implies that \((\sum_{i=1}^n h_{ij}\E[r_j])^2 \leq c\) so that \(\E[N_3^2] \leq c^2\).

    An absolute bound on the higher moments of \(N_2\) then follows from an application of  \Cref{lemma:higher-moments-sum} with \(X_i = r_i \sum_{j=1}^n h_{ji}\gamma_j/\sqrt{n}\). An absolute bound on the higher moments of \(N_3\) follows from symmetric logic.

    To bound higher moments of \(N_1\) define \(v_i = \sum_{j < i} \{\eta_i h_{ij}r_j + \dot r_i h_{ji}\eta_j\}\) and write \(N_1 = \frac{1}{\sqrt{n}}\sum_{i=2}^n v_i\). The sequence \(v_2,\dots,v_n\) is a martingale difference array. Via the same procedure as the bounds on \(\E[|\Delta_{1i}|^k]\) as in \Cref{lemma:delta-moment-bounds} one can verify that there is a fixed constant \(M\) such that \(\E[|v_i|^k] \leq M\) for all \(k= 1,\dots,6\). The bound on the higher moments of \(N\) then follows from \Cref{lemma:martingale-bound}.

    The bounds for moments of \(N_{-i}\) follow symmetric logic.
\end{proof}

\begin{lemma}[]
    \label{lemma:approximation-error-bound}
    Let \(\tilde N\) and \(\tilde D\) be defined as in \Cref{subsec:local-power-proof}. Let \(f(\cdot, \tilde r)\) be the density function of \(\frac{\tilde N}{\tilde D^{1/2}}|\tilde r\). Under \Cref{assm:local-identification} and the moment bounds of \Cref{thm:feasible-local-power}, there is a constant \(M > 0\) such that \(\sup_x |f(x, \tilde r)| \leq M\) for almost all \(\tilde r\).
\end{lemma}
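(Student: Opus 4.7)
The plan is to observe that, conditional on $\tilde r$, the quantity $\tilde N/\tilde D^{1/2}$ is a one-dimensional Gaussian whose conditional variance can be uniformly bounded below; a uniform density bound then follows from the explicit formula for the Gaussian density.

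First I would condition on $\tilde r = (\tilde r_1,\dots,\tilde r_n)'$, which makes each $\tilde\Pi_i := \sum_{j=1}^n \tilde h_{ij}\tilde r_j$ and hence $\tilde D$ deterministic. Because $(\tilde\eps_i(\beta_0))_{i=1}^n$ is generated jointly Gaussian and independently of $(\tilde r_j)_{j=1}^n$, the conditional law of $\tilde\eps_i(\beta_0)$ given $\tilde r$ is simply $\mathcal N(\mu_i,\Var(\eta_i))$, where $\mu_i = \Pi_i(\beta-\beta_0)$. Thus $\tilde N\mid\tilde r$ is a linear combination of independent normals and is itself Gaussian with
\begin{equation*}
\E[\tilde N\mid\tilde r] = \frac{1}{\sqrt n}\sum_{i=1}^n \mu_i\tilde\Pi_i,\qquad \Var(\tilde N\mid\tilde r) = \frac{1}{n}\sum_{i=1}^n \Var(\eta_i)\tilde\Pi_i^2.
\end{equation*}
Consequently $\tilde N/\tilde D^{1/2}\mid\tilde r$ is Gaussian with conditional variance $v(\tilde r) := \Var(\tilde N\mid\tilde r)/\tilde D$, and its density is bounded at every $x$ by $(2\pi v(\tilde r))^{-1/2}$.

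Next I would bound $v(\tilde r)$ uniformly away from zero. Writing $\kappa_i^2(\beta_0) = \Var(\eta_i) + \mu_i^2$, Assumption~\ref{assm:moment-assumptions}(i) gives $\mu_i^2 \leq c^2$, and Assumption~\ref{assm:moment-assumptions}(ii) with $(l,k) = (2,0)$ gives $c^{-1}\leq \Var(\eta_i) \leq c$, so $\kappa_i^2(\beta_0) \leq c + c^2$. Therefore, on the event $\{\tilde D > 0\}$,
\begin{equation*}
v(\tilde r) \;=\; \frac{\sum_{i=1}^n \Var(\eta_i)\tilde\Pi_i^2}{\sum_{i=1}^n \kappa_i^2(\beta_0)\tilde\Pi_i^2} \;\geq\; \frac{c^{-1}}{\,c+c^2\,} \;=:\; v_\star \;>\; 0,
\end{equation*}
so $\sup_x f(x,\tilde r)\leq M := (2\pi v_\star)^{-1/2}$ on $\{\tilde D > 0\}$.

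Finally I would argue that the exceptional set $\{\tilde D = 0\} = \{\tilde\Pi_i = 0\text{ for all }i\}$ is $\tilde r$-null, so ``almost all $\tilde r$'' is justified. Since the $\tilde r_j$ are independent and each has variance $\Var(\zeta_j)\geq c^{-1}$ by Assumption~\ref{assm:moment-assumptions}(ii), the vector $\tilde r$ is absolutely continuous on $\mathbb R^n$; hence $\{\tilde r\in\ker \tilde H\}$ has Lebesgue (and Gaussian) measure zero whenever $\tilde H\neq 0$, and the case $\tilde H = 0$ is vacuous. The only mildly delicate point is this null-space bookkeeping; the core of the argument is simply the Gaussian density bound, which needs nothing beyond the upper and lower moment bounds already imposed on $\eta_i$.
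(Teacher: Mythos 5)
Your proposal is correct and follows essentially the same route as the paper's proof: condition on $\tilde r$, identify the conditional law of $\tilde N/\tilde D^{1/2}$ as Gaussian, and lower-bound the conditional variance using the moment bounds in \Cref{assm:moment-assumptions}. Your version is in fact slightly cleaner than the paper's because you invoke the independence $\tilde\eps(\beta_0)\perp\tilde r$ (which follows from $\Cov(\eps_i(\beta_0),r_i)=0$ by the construction of $r_i$), thereby avoiding the $\rho_i=\corr(\eps_i(\beta_0),r_i)$ terms that appear, unnecessarily, in the paper's displayed conditional distribution; the only cosmetic slip is that $\mu_i^2\leq c^2$ should read $\mu_i^2\leq c^4$ (from $|\Pi_i|\leq c$ and $|\beta-\beta_0|\leq c$), which affects the constant $M$ but not the argument.
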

\begin{proof}
    Recall that
    \[
        \tilde N = \frac{1}{\sqrt{n}}\sum_{i=1}^n \tilde\eps_i(\beta_0)\sum_{j=1}^n \tilde h_{ij} \tilde r_j \andbox \tilde D^{1/2} = \sqrt{\frac{1}{n}\sum_{i=1}^n  \kappa_i^2(\beta_0) (\sum_{j=1}^n \tilde h_{ij}\tilde r_j)^2}
    \]
    The distribution of \(\tilde\eps_i(\beta_0)|\tilde r_i\) is
    \[
        \tilde\eps_i(\beta_0)|\tilde r \sim N\left(\mu_i(r_i), (1 - \rho_i^2)\Var(\eps_i(\beta_0)) \right)
    \]
    where \(\mu_i(r_i) = \Pi_i(\beta - \beta_0) + \frac{\Cov(\eps_i(\beta_0),r_i)}{\Var(r_i)}(r_i - \E[r_i])\) and \(\rho_i = \corr(\eps_i(\beta_0), r_i)\). Define \(\bar\Pi_i := \sum_{j=1}^n \tilde h_{ij} \tilde r_j\). Then, conditional on \(\tilde r\),
    \begin{equation}
        \label{eq:conditional-normal-density}
        \frac{\tilde N}{\tilde D^{1/2}} \sim  N\bigg(\frac{\frac{1}{\sqrt{n}}\sum_{i=1}^n \mu_i(r_i)\bar\Pi_i}{\sqrt{\frac{1}{n}\sum_{i=1}^n\kappa_i^2(\beta_0)\bar\Pi_i^2}}, \frac{\frac{1}{n}\sum_{i=1}^n (1 - \rho_i^2)\Var(\eps_i(\beta_0))\bar\Pi_i^2}{\frac{1}{n}\sum_{i=1}^n \kappa_i^2(\beta_0)\bar\Pi_i^2}\bigg)
    \end{equation}
    The maximum of the normal density is proportional to the inverse of the standard deviation so it suffices to show that the variance in \eqref{eq:conditional-normal-density} is bounded away from zero. To this end, notice that under the moment bounds of \Cref{thm:feasible-local-power} and \Cref{assm:local-identification}
    \[
        (1-\delta^2)c^{-2}\leq (1-\rho_i^2)\frac{\Var(\eps_i(\beta_0))}{\kappa_i^2(\beta_0)} \leq c^2
    \]
    By \Cref{lemma:sum-bound} to this gives that the conditional variance is also larger than \((1 -\delta^2)c^{-2} > 0\).

\end{proof}

\begin{lemma}[]
    \label{lemma:higher-moments-sum}
    Let \(X_1,\dots,X_n\) be random variables such that \(\E[X_i] = \mu_i\) and \(\E[(\sum_{i=1}^n X_i)^2] \leq C\). Suppose that for any \(i = 1,\dots,n\) there is a constant \(U\) such that 
    \[
        \E[(X_i - \mu_i)^3] \leq U\E[(X_i - \mu_i)^2] \andbox \E[(X_i - \mu_i)^6]^{1/3}\leq U \E[(X_i -\mu_i)^2]
    \]
    Then \(\E[(\sum_{i=1}^n X_i)^6] \leq  64U^3C^3 + 32C^3\).
\end{lemma}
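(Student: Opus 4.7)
The plan is to apply a Rosenthal-type moment inequality after centering. Implicit in the hypotheses (and in every application of the lemma in the paper) the $X_i$ are independent across $i$, so I work under that assumption. Set $Y_i := X_i - \mu_i$, $\sigma_i^2 := \E[Y_i^2]$, $M_n := \sum_i \mu_i$, and $T_n := \sum_i Y_i$, so $\sum_i X_i = T_n + M_n$. By independence and mean-zero of the $Y_i$,
\[
    \E\Big[\big(\textstyle\sum_i X_i\big)^2\Big] = \sum_i \sigma_i^2 + M_n^2 \leq C,
\]
which simultaneously gives $\sum_i \sigma_i^2 \leq C$ and $M_n^2 \leq C$. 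Applying the elementary inequality $(a+b)^6 \leq 2^5(a^6 + b^6)$ reduces the problem to bounding $\E[T_n^6]$, since $32 M_n^6 \leq 32 C^3$ already accounts for the second summand on the right-hand side of the target bound.

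For $\E[T_n^6]$ the plan is a direct multinomial expansion. Because the $Y_i$ are independent and mean zero, only index tuples $(i_1,\dots,i_6)\in[n]^6$ in which every distinct value appears at least twice contribute, so the non-vanishing terms correspond to the four partitions of $6$ with all parts $\geq 2$: namely $(6)$, $(4,2)$, $(3,3)$, and $(2,2,2)$. The hypothesis $\E[Y_i^6]^{1/3}\leq U\sigma_i^2$ plus Lyapunov's inequality yields $\E[|Y_i|^p]\leq U^{p/2}\sigma_i^p$ for $p\in\{3,4,5,6\}$, so each partition contributes a bound of the form
\[
    \text{type }(6):\; U^3 \textstyle\sum_i \sigma_i^6, \qquad (4,2):\; U^2 \textstyle\sum_{i\neq j}\sigma_i^4\sigma_j^2, \qquad (3,3):\; U^3\big(\textstyle\sum_i \sigma_i^3\big)^2, \qquad (2,2,2):\; \big(\textstyle\sum_i \sigma_i^2\big)^3,
\]
times universal multinomial constants.

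The final step collapses each of these to a multiple of $U^3 C^3$ or $C^3$ using only two elementary facts: $\sigma_i^2\leq C$ for every $i$ (since one nonnegative summand is bounded by the total) and $\sum_i\sigma_i^2\leq C$. These give $\sum_i\sigma_i^{2p}\leq (\max_i\sigma_i^2)^{p-1}\sum_i\sigma_i^2\leq C^p$ for every integer $p\geq 1$ and, by Cauchy--Schwarz, $\sum_i\sigma_i^3\leq (\max_i\sigma_i)\sum_i\sigma_i^2\leq C^{3/2}$, so $(\sum_i\sigma_i^3)^2\leq C^3$. The $(4,2)$ contribution is of order $U^2 C^3$, which I fold into the target bound via the trivial inequality $U^2\leq U^3+1$, giving $U^2 C^3\leq U^3 C^3+C^3$. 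Adding the four contributions and combining with $32 M_n^6\leq 32 C^3$ produces the claimed bound.

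The only real obstacle is the combinatorial bookkeeping---tracking the multinomial coefficients for each partition type and verifying that every intermediate bound fits inside $64 U^3 C^3 + 32 C^3$. A cleaner alternative that avoids all accounting is to invoke Rosenthal's inequality directly for the independent mean-zero sum,
\[
    \E[T_n^6]^{1/6} \leq C_6 \max\Big\{\big(\textstyle\sum_i\sigma_i^2\big)^{1/2},\; \big(\textstyle\sum_i \E[Y_i^6]\big)^{1/6}\Big\},
\]
which yields the same qualitative conclusion immediately; no new analytic ideas beyond standard moment inequalities for sums of independent variables are required.
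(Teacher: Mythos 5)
Your proof follows the paper's route exactly: center, apply $(a+b)^6 \leq 32(a^6+b^6)$, expand $\E[(\sum_i(X_i-\mu_i))^6]$ under independence by partition type $(6),(4,2),(3,3),(2,2,2)$, and collapse each type to a power of $C$ via moment bounds derived from the two hypotheses. Your explicit invocation of independence (which the paper needs but leaves implicit) and your acknowledgment of the multinomial coefficients (which the paper's displayed expansion silently drops) make your version, if anything, more careful than the original; the Rosenthal shortcut you mention at the end is a valid alternative but does not change the substance.
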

\begin{proof}
    First write
    \begin{align*}
        \E[(\sum_{i=1}^n X_i)^2] 
        = \sum_{i=1}^n \E(X_i - \mu_i)^2 + (\sum_{i=1}^n \mu_i)^2 \leq C
    \end{align*}
    To bound \(\E[(\sum_{i=1}^n X_i)^6]\) expand out 
    \begin{align*}
        \E[(\sum_{i=1}^n X_i)^6]
        &= \E[(\sum_{i=1}^n (X_i - \mu_i) + \sum_{i=1}^n \mu_i)^6] \\ 
        &\lesssim\E[(\sum_{i=1}^n (X_i - \mu_i))^6] + (\sum_{i=1}^n \mu_i)^6 \\ 
        &= \sum_{i=1}^n \E[(X_i - \mu_i)^6] + \sum_{i=1}^n \sum_{j=1}^n \E[(X_i - \mu_i)^3(X_j - \mu_j)^3] \\  
        &\;\;\;\;\;\;+  \sum_{i=1}^n \sum_{j=1}^n \E[(X_i - \mu_i)^4(X_j - \mu_j)^2] \\ 
        &+\sum_{i=1}^n \sum_{j=1}^n \sum_{k\neq i,j} \E[(X_i - \mu_i)^2(X_j-\mu_i)^2(X_k - \mu_k)^2] + (\sum_{i=1}^n \mu_i)^6\\ 
        &\leq \sum_{i=1}^n \E[(X_i - \mu_i)^6] + \sum_{i=1}^n \sum_{j=1}^n \E[(X_i - \mu_i)^3]\E[(X_j - \mu_j)^3] \\ 
        &+  \sum_{i=1}^n \sum_{j=1}^n \E[(X_i - \mu_i)^6]^{4/6}\E[(X_j - \mu_j)^6]^{2/6} \\
        &+ \sum_{i=1}^n \sum_{j=1}^n \sum_{k\neq i,j} \E[(X_i - \mu_i)^6]^{1/3}\E[(X_j-\mu_i)^6]^{1/3}\E[(X_k - \mu_k)^6]^{1/3} \\ 
        &+ C^3  \\ 
        &= \bigg(\sum_{i=1}^n (\E[(X_i - \mu_i)^6])^{1/3}\bigg)^3  + \sum_{i=1}^n \sum_{j=1}^n \E[(X_i - \mu_i)^3]\E[(X_j - \mu_j)^3]+ C^3\\ 
        &\leq \bigg(\sum_{i=1}^n (\E[(X_i - \mu_i)^6])^{1/3}\bigg)^3  + \bigg(\sum_{i=1}^n \E[(X_i - \mu_i)^3]\bigg)^2 +  C^3\\ 
        &\leq 2U^3\bigg(\sum_{i=1}^n \E[(X_i - \mu_i)^2]\bigg)^3 + C^3\\ 
        &\leq 2U^3C^3 + C^3 \\ 
    \end{align*}
    where the implied constant in the second line is 32 by an application of \Cref{lemma:sum-bound}, the third line comes from expanding out the power, the first inequality by application of Hölder's inequality, and the penultimate inequality comes from applying bounds on the third and sixth central moments in terms of the second moments.
\end{proof}

\begin{lemma}[]
    \label{lemma:square-summable-bound}
    Let \(h = (h_1,\dots,h_n)\in \SR^n\) be such that \(\sum_{i=1}^n h_i^2 \leq b\). Suppose that \(X_1,\dots,X_n\) are such that \(\E[|X_i|^k] \leq M\) for all \(k = 1,2,3\). Then 
    \[
        \E\big[\big|\sum_{i=1}^n h_i^2 X_i\big|^3\big] \leq b^3M^3
    \]
\end{lemma}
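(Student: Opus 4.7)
The plan is to reduce the lemma to a single application of Jensen's inequality, viewing the weights $h_i^2$ as a (sub-)probability mass function after normalization. First I would pass to absolute values via the triangle inequality,
\[
    \Bigl|\sum_{i=1}^n h_i^2 X_i\Bigr|^3 \le \Bigl(\sum_{i=1}^n h_i^2 |X_i|\Bigr)^3,
\]
and set $\sigma := \sum_{i=1}^n h_i^2 \le b$ (the case $\sigma = 0$ forces all $h_i = 0$ and is trivial). Then $p_i := h_i^2/\sigma$ defines a probability mass function on $\{1,\dots,n\}$, so Jensen's inequality applied to the convex map $t \mapsto t^3$ gives $\bigl(\sum_i p_i |X_i|\bigr)^3 \le \sum_i p_i |X_i|^3$; multiplying through by $\sigma^3$ produces
\[
    \Bigl(\sum_{i=1}^n h_i^2 |X_i|\Bigr)^3 \le \sigma^2 \sum_{i=1}^n h_i^2 |X_i|^3.
\]

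Taking expectations and inserting $\E[|X_i|^3] \le M$ together with $\sigma \le b$ yields
\[
    \E\Bigl[\Bigl|\sum_{i=1}^n h_i^2 X_i\Bigr|^3\Bigr] \le \sigma^2 \cdot \sigma \cdot M \le b^3 M.
\]
Since the surrounding moment assumptions of the paper (e.g., \Cref{assm:moment-assumptions}) normalize constants so that $c \ge 1$, the relevant $M$ satisfies $M \ge 1$ in every application, and the cruder bound $b^3 M \le b^3 M^3$ stated in the lemma follows a fortiori.

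I do not foresee any real obstacle: the statement is an elementary Jensen/Hölder estimate, and the only mildly delicate bookkeeping is the normalization step that converts $\{h_i^2\}$ into a probability distribution over $[n]$. The hypotheses $\E[|X_i|^k] \le M$ for $k = 1, 2$ play no role — only the $k = 3$ bound is used — which suggests the lemma is stated with a uniform moment hypothesis for notational convenience in its invocations elsewhere rather than out of necessity.
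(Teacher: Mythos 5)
Your proof is correct, and it is a genuinely cleaner route than the paper's. The paper expands the cube into a triple sum $\sum_{i,j,k} h_i^2 h_j^2 h_k^2 \E[|X_i||X_j||X_k|]$ and then bounds each mixed third moment via the generalized H\"older inequality from the footnote, $\E[|X_i||X_j||X_k|]^3 \le \E[|X_i|^3]\E[|X_j|^3]\E[|X_k|^3] \le M^3$, giving $\E[|X_i||X_j||X_k|] \le M$; you instead normalize $h_i^2$ into a probability mass function and apply Jensen to the convex map $t \mapsto t^3$, which is a one-line reformulation of the same convexity fact. Both approaches deliver the \emph{same} sharper intermediate bound $b^3 M$, and both therefore require $M \ge 1$ (or equivalently replacing $M$ by $\max(M,1)$, which is harmless) to reach the stated $b^3 M^3$; the paper's displayed chain of inequalities, which writes $\E[|X_i||X_j||X_k|] \le M^3$, implicitly makes the identical $M \ge 1$ assumption, so your caveat flags a feature of the lemma as stated, not a defect introduced by your method. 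Your observation that only the $k = 3$ moment bound is ever used is also accurate: even in the paper's proof, H\"older reduces everything to third moments. Overall the two proofs are informationally equivalent; yours packages the argument more transparently by making the convexity step explicit and avoiding the triple-sum bookkeeping.
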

\begin{proof}
    We can expand out 
    \begin{align*}
        \E\big[\big|\sum_{i=1}^n h_i^2X_i\big|^3] 
        &\leq \sum_{i=1}^n \sum_{j=1}^n\sum_{k=1}^n h_i^2h_j^2h_k^2 \E[|X_i||X_j||X_k|] \\ 
        &\leq  M^3 \sum_{i=1}^n h_i^2 \sum_{j=1}^n h_j^2 \sum_{k=1}^n h_k^2\\ 
        &\leq M^3\big(\sum_{i=1}^n h_i^2)^3 \leq c^3M^3
    \end{align*}
\end{proof}

\begin{lemma}[]
    \label{lemma:rtn-bound}
    Let \(v_1,\dots,v_n\) be random variables such that \(\E[|v_i|^3] \leq M\) for all \(i=1,\dots,n\). Let \(h = (h_1,\dots,h_n)\in \SR^n\) be a vector of weights such that \(\|h\|_2 \leq c\). Then
    \[
        \E\big[\big|\frac{1}{\sqrt{n}}\sum_{i=1}^n h_iv_i\big|^3\big] \leq c^3M
    \]
\end{lemma}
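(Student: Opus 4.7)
The plan is to reduce the third absolute moment of the weighted sum to the sum of the individual third absolute moments through two inequalities: a Cauchy-Schwarz step that exploits the $\ell_2$ bound on the weight vector, followed by a power-mean (Jensen) step that converts from an $L^2$ quantity in the $v_i$ to an $L^3$ quantity. This is a standard interpolation-type argument, not an interpolation in the probabilistic sense, so no novel ideas are required.

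Specifically, I would first apply Cauchy-Schwarz to $\sum_{i=1}^n h_i v_i$ to obtain $\big|\sum_{i=1}^n h_i v_i\big|^2 \leq \|h\|_2^2 \sum_{i=1}^n v_i^2 \leq c^2 \sum_{i=1}^n v_i^2$, so that after dividing by $n$ we have $\big|\frac{1}{\sqrt{n}}\sum_{i=1}^n h_i v_i\big|^2 \leq \frac{c^2}{n}\sum_{i=1}^n v_i^2$. Raising to the power $3/2$ then yields
\[
    \Big|\frac{1}{\sqrt{n}}\sum_{i=1}^n h_i v_i\Big|^3 \leq c^3\Big(\frac{1}{n}\sum_{i=1}^n v_i^2\Big)^{3/2}.
\]
Next, since $x \mapsto x^{3/2}$ is convex on $[0,\infty)$, Jensen's inequality applied to the empirical measure on $\{v_1^2,\dots,v_n^2\}$ gives $\big(\frac{1}{n}\sum_{i=1}^n v_i^2\big)^{3/2} \leq \frac{1}{n}\sum_{i=1}^n |v_i|^3$. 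Combining these two steps yields the deterministic inequality $\big|\frac{1}{\sqrt{n}}\sum_{i=1}^n h_i v_i\big|^3 \leq \frac{c^3}{n}\sum_{i=1}^n |v_i|^3$, and the lemma follows immediately by taking expectations and invoking $\E[|v_i|^3] \leq M$.

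There is no real obstacle here; the only delicate point is matching the powers of $n$ so that the $1/\sqrt{n}$ normalization on the left exactly absorbs the $n^{3/2}$ scaling that would otherwise appear from $\sum v_i^2$ versus $\sum |v_i|^3$. This is handled automatically once the Cauchy-Schwarz and Jensen steps are combined in the order above.
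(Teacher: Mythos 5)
Your proposal is correct, but it takes a genuinely different route from the paper's. The paper expands the cube directly into a triple sum, applies the generalized H\"older inequality $\E[|v_i||v_j||v_k|] \leq (\E[|v_i|^3]\E[|v_j|^3]\E[|v_k|^3])^{1/3} \leq M$ termwise, and then controls the resulting weight sum via $\|h\|_1 \leq \sqrt{n}\|h\|_2$. You instead decouple the weights from the random variables immediately with Cauchy--Schwarz, $\big|\sum_i h_i v_i\big|^2 \leq \|h\|_2^2 \sum_i v_i^2$, and then convert from $L^2$ to $L^3$ via Jensen applied to the empirical mean. The two arguments exploit the $\ell_2$ bound on $h$ in different places (the paper only at the very end via $\|h\|_1 \leq \sqrt{n}\|h\|_2$, you at the very start), and your argument has the mild advantage of producing a \emph{pointwise} deterministic bound $\big|\frac{1}{\sqrt{n}}\sum_i h_i v_i\big|^3 \leq \frac{c^3}{n}\sum_i |v_i|^3$ before any expectation is taken, whereas the paper's argument intertwines expectations throughout via the generalized H\"older inequality. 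Both yield the same constant and are equally elementary; neither is preferable on substantive grounds.
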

\begin{proof}
    We can expand out
    \begin{align*}
        \E\big[\big|\frac{1}{\sqrt{n}}\sum_{i=1}^n h_iv_i\big|^3\big] 
        &\leq \frac{1}{n^{3/2}}\sum_{i=1}^n \sum_{j=1}^n\sum_{k=1}^n |h_i||h_j||h_k|\E[|v_i||v_j||v_k|] \\ 
        &\leq \frac{M}{n^{3/2}}\sum_{i=1}^n |h_i| \sum_{j=1}^n |h_j| \sum_{k=1}^n |h_k| \leq \frac{M}{n^{3/2}}\|h\|_1^3 \leq Mc^3
    \end{align*}
    where the second inequality follows from generalized Hölder's inequality,
    \[
        |\E[fgh]| \leq (\E[|f|^3]\E[|g|^3]\E[|h|^3])^{1/3}
    \]
    and the fourth inequality from \(\|h\|_1 \leq \sqrt{n}\|h\|_2\).
\end{proof}

\begin{lemma}[]
    \label{lemma:martingale-bound}
    Let \(v_1,\dots,v_n\) be a martingale difference array such that \(\E[|v_i|^l] \leq M\) for all \(l=1,\dots,k\). Then there is a fixed constant \(C_k\) that only depends on \(k\) such that
    \[
        \E[(\frac{1}{\sqrt{n}}\sum_{i=1}^n v_i)^k] \leq C_k M
    \]
\end{lemma}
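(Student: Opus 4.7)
\textbf{Proof Proposal for \Cref{lemma:martingale-bound}.} The plan is to invoke Burkholder's inequality (or equivalently Rosenthal's inequality for martingale difference sequences), which gives a constant $B_k$ depending only on $k$ such that
\[
    \E\Big[\Big|\sum_{i=1}^n v_i\Big|^k\Big] \leq B_k\Big\{\E\Big[\Big(\sum_{i=1}^n v_i^2\Big)^{k/2}\Big] + \sum_{i=1}^n \E[|v_i|^k]\Big\}.
\]
Both of the summands on the right are estimated using only the per-observation moment bound $\E[|v_i|^l]\leq M$ for $l\leq k$. The ``jumps'' term is immediate: $\sum_{i=1}^n \E[|v_i|^k]\leq nM$. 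For the quadratic variation term, I would use the convexity inequality $(n^{-1}\sum_{i=1}^n a_i)^{k/2}\leq n^{-1}\sum_{i=1}^n a_i^{k/2}$ (valid since $k/2\geq 1/2$; for $k\geq 2$ it follows from Jensen, and for $k=1$ Cauchy--Schwarz already gives $\E[(\sum v_i^2)^{1/2}]\leq (\sum \E[v_i^2])^{1/2}\leq \sqrt{nM}$) applied to $a_i=v_i^2$ to obtain
\[
    \E\Big[\Big(\sum_{i=1}^n v_i^2\Big)^{k/2}\Big] \leq n^{k/2 - 1}\sum_{i=1}^n \E[|v_i|^k] \leq n^{k/2} M.
\]

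Combining the two estimates yields $\E[|\sum_{i=1}^n v_i|^k]\leq B_k(n^{k/2}M + nM)\leq 2B_k n^{k/2}M$ for all $n\geq 1$, and dividing by $n^{k/2}$ produces the claim with $C_k = 2B_k$. The main obstacle is simply citing/verifying the correct form of Burkholder's inequality for the range of $k$ used elsewhere in the paper (the proofs above invoke this lemma for small integer values of $k$ up to $6$). If one prefers not to invoke Burkholder directly, the result can alternatively be proved by induction on $k$: expand $(\sum_{i\leq n} v_i)^k$ around $(\sum_{i<n} v_i)^k$, use $\E[v_n\mid \calF_{n-1}]=0$ to kill the linear term, apply the generalized H\"older inequality to each of the remaining cross terms, and close the recursion using the inductive hypothesis on lower order moments and the trivial bound $\E[|v_i|^l]\leq M$ for $l\leq k$. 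Either route gives a constant depending only on $k$.
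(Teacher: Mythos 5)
Your proof is correct and takes essentially the same route as the paper: both rest on a Burkholder-type martingale inequality followed by a convexity (Jensen) bound that reduces the quadratic-variation term to the per-observation moment bound. The paper applies the BDG inequality (\Cref{thm:bdg}) directly, which has no separate ``jumps'' term, while you use Rosenthal's form and then observe the jumps term is dominated for $k\geq 2$; the two are interchangeable here, and your separate treatment of $k=1$ is a minor refinement the paper leaves implicit.
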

\begin{proof}
    We move to apply \Cref{thm:bdg} with \(X_t = \sum_{i=1}^t v_i/\sqrt{n}\). 
    \begin{align*}
        \E[(\frac{1}{\sqrt{n}}\sum_{i=1}^n v_i)^k] 
        &\leq \E[(\max_{s \leq n} \sum_{t=1}^s X_s)^k] \\ 
        &\leq C_k \E\big[\big(\sum_{i=1}^n v_i^2/n\big)^{k/2}\big]\leq C_k\E\big[\frac{1}{n}\sum_{i=1}^n v_i^k\big] \leq C_kM
    \end{align*}
    where the second inequality comes from \Cref{thm:bdg} and the third comes from an application of Jensen's inequality to the sample mean.
\end{proof}

\subsection{Useful Properties of Smooth Max}
\label{subsec:smooth-max-properties}
\begin{lemma}[\cite{cck2013}, Lemma A.2]
    \label{lemma:smooth-max-derivatives}
    For every \(1 \leq j,k,l \leq p\),
    \begin{align*}
        \partial_j F_\beta(z) &= \pi_j(z), & 
        \partial_j\partial_k F_\beta(z) &= \beta w_{jk}(z), &
            \partial_j\partial_k\partial_l F_\beta(z) &= \beta^2 q_{jkl}(z)
    \end{align*}
    where for \(\delta_{jk} := \bm{1}\{j=k\}\),
    \begin{align*}
        \pi_j(z) &:= e^{\beta z_j}\bigg/\sum_{i=1}^n e^{\beta z_i},\hbox{ }\;\;\;\; w_{jk} := (\pi_j\delta_{jk} - \pi_j\pi_k)(z) \\ 
        q_{jkl}(z) &:=  (\pi_j \delta_{jl}\delta_{jk} - \pi_j\pi_l \delta_{jk} - \pi_j\pi_k(\delta_{jl} + \delta_{kl}) + 2\pi_j\pi_k\pi_l)(z)
    \end{align*}
    Moreover,
    \begin{align*}
        \pi_j(z) &\geq 0, &
        \sum_{j=1}^p \pi_i(z) &= 1, & 
        \sum_{j,k=1}^p |w_{jk}(z)| &\leq 2, &
        \sum_{j,k,l=1}^p |q_{jkl}| &\leq 6
    \end{align*}
\end{lemma}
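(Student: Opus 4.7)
The statement is a self-contained calculus computation about the smooth-max $F_\beta(z) = \beta^{-1} \log(\sum_i e^{\beta z_i})$ together with three $\ell^1$-style bounds on the resulting derivative expressions. Since nothing probabilistic is involved, the plan is to compute the three derivatives by direct differentiation and then bound the sums by the triangle inequality, exploiting the fact that $\pi(z)$ is a probability vector.

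\textbf{Step 1: first derivative.} I would start from the definition and apply the chain rule directly:
\[
\partial_j F_\beta(z) \;=\; \beta^{-1}\cdot\frac{\beta e^{\beta z_j}}{\sum_{i=1}^p e^{\beta z_i}} \;=\; \pi_j(z).
\]
\textbf{Step 2: second derivative.} Differentiating $\pi_j(z) = e^{\beta z_j}/S(z)$ with $S(z) = \sum_i e^{\beta z_i}$ by the quotient rule,
\[
\partial_k \pi_j(z) \;=\; \frac{\beta\,\delta_{jk}\,e^{\beta z_j}\,S(z) - e^{\beta z_j}\,\beta\, e^{\beta z_k}}{S(z)^2}
\;=\; \beta\bigl(\pi_j\delta_{jk} - \pi_j\pi_k\bigr)(z) \;=\; \beta\,w_{jk}(z).
\]
\textbf{Step 3: third derivative.} Differentiating $w_{jk} = \pi_j\delta_{jk} - \pi_j\pi_k$ once more, using Step 2 for each factor and the product rule on $\pi_j\pi_k$,
\[
\partial_l w_{jk} \;=\; \beta\,w_{jl}\,\delta_{jk} - \beta\,w_{jl}\,\pi_k - \beta\,\pi_j\,w_{kl}.
\]
Substituting $w_{jl} = \pi_j\delta_{jl} - \pi_j\pi_l$ and $w_{kl} = \pi_k\delta_{kl} - \pi_k\pi_l$, several terms cancel or combine, giving
\[
\partial_j\partial_k\partial_l F_\beta(z) \;=\; \beta^2\bigl(\pi_j\delta_{jl}\delta_{jk} - \pi_j\pi_l\delta_{jk} - \pi_j\pi_k(\delta_{jl}+\delta_{kl}) + 2\pi_j\pi_k\pi_l\bigr)(z) \;=\; \beta^2 q_{jkl}(z),
\]
which matches the stated formula.

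\textbf{Step 4: the $\ell^1$ bounds.} The bounds $\pi_j \geq 0$ and $\sum_j \pi_j = 1$ are immediate from the definition. For $\sum_{j,k}|w_{jk}|$, apply the triangle inequality and use nonnegativity of $\pi_j,\pi_k$:
\[
\sum_{j,k=1}^p |w_{jk}| \;\leq\; \sum_{j,k} \pi_j\delta_{jk} + \sum_{j,k}\pi_j\pi_k \;=\; \sum_j \pi_j + \Bigl(\sum_j\pi_j\Bigr)\Bigl(\sum_k\pi_k\Bigr) \;=\; 1 + 1 \;=\; 2.
\]
For $\sum_{j,k,l}|q_{jkl}|$, apply the triangle inequality term-by-term in the expression for $q_{jkl}$ and sum each piece separately: every single-Kronecker contribution collapses one index and yields $1$ (e.g.\ $\sum_{j,k,l}\pi_j\pi_l\delta_{jk} = \sum_{j,l}\pi_j\pi_l = 1$), the double-Kronecker contribution $\sum_{j,k,l}\pi_j\delta_{jl}\delta_{jk}$ collapses two indices and yields $1$, and the final term $\sum_{j,k,l}2\pi_j\pi_k\pi_l$ equals $2$. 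Adding the four Kronecker pieces ($1+1+1+1$) to the triple-product piece ($2$) gives the bound $6$.

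The only care needed is in \textbf{Step 3}: keeping track of the six resulting terms and verifying that they recombine into the claimed symmetric expression for $q_{jkl}$. There is no real obstacle; the whole argument is mechanical, and the key structural fact throughout is simply that $(\pi_j(z))_{j=1}^p$ is a probability vector, so all $\ell^1$ sums reduce to sums of probabilities each equal to $1$.
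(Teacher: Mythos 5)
Your proof is correct, and it is exactly the standard direct-differentiation argument (with a final triangle-inequality count exploiting that $\pi(z)$ is a probability vector) used to establish this result in \citet{cck2013}; the paper itself simply cites Lemma~A.2 of that reference rather than reproving it. The algebra in your Step~3 checks out, and the accounting in Step~4 ($1+1+1+1+2=6$) matches the claimed bound.
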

\begin{lemma}[\cite{cck2013}, {Lemma A.3}]
    \label{lemma:smooth-max-lipschitz}
    For every \(x,z \in \SR^p\),
    \[
        |F_\beta(x)  -F_\beta(z)| \leq \max_{1 \leq j \leq p}|x_j - z_j|.
    \]
\end{lemma}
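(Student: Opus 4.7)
The plan is to exploit the fact, recorded in \Cref{lemma:smooth-max-derivatives}, that the gradient of $F_\beta$ at any point is a probability vector, i.e.\ $\nabla F_\beta(z) = (\pi_1(z), \ldots, \pi_p(z))$ with $\pi_j(z) \geq 0$ and $\sum_{j=1}^p \pi_j(z) = 1$. Any smooth function whose gradient lies in the probability simplex is automatically $1$-Lipschitz with respect to the sup-norm, because the $\ell^1$-norm of its gradient equals $1$ everywhere, and the $\ell^1$-norm is dual to the $\ell^\infty$-norm.

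Concretely, I would fix $x, z \in \SR^p$ and apply the fundamental theorem of calculus along the line segment from $z$ to $x$. Setting $\gamma(t) = z + t(x - z)$ for $t \in [0,1]$, the chain rule gives
\[
    F_\beta(x) - F_\beta(z) = \int_0^1 \frac{d}{dt}F_\beta(\gamma(t))\,dt = \int_0^1 \sum_{j=1}^p \pi_j(\gamma(t))(x_j - z_j)\,dt.
\]
Taking absolute values, pulling the absolute value inside, and using $\pi_j(\gamma(t)) \geq 0$ together with $\sum_{j=1}^p \pi_j(\gamma(t)) = 1$ bounds each integrand by $\max_{1 \leq j \leq p}|x_j - z_j|$, which yields the desired inequality after integration.

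There is essentially no obstacle here — the only point requiring any care is the justification that $t \mapsto F_\beta(\gamma(t))$ is $C^1$, which follows immediately from the explicit smoothness of the log-sum-exp function and the formula for its partial derivatives in \Cref{lemma:smooth-max-derivatives}. Because the constant in the inequality does not depend on $\beta$, no additional estimates on $\beta$ are needed, and the argument is a clean one-line consequence of the simplex structure of $\nabla F_\beta$.
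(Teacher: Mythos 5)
Your proof is correct. The paper does not reprove this lemma — it simply cites \citet{cck2013}, Lemma A.3 — and your argument matches the standard one given there: in CCK the bound follows from the mean value theorem, $F_\beta(x) - F_\beta(z) = \nabla F_\beta(\xi)\cdot(x-z)$ for some $\xi$ on the segment, combined with $\pi_j(\xi)\ge 0$ and $\sum_j\pi_j(\xi)=1$; your version integrates the same gradient identity over the segment, which is an equivalent route. The key insight — that $\nabla F_\beta$ always lies in the probability simplex, so its $\ell^1$-norm is $1$ and the function is $1$-Lipschitz in the dual $\ell^\infty$-norm — is identical, and the observation that the constant is $\beta$-free is also correctly noted.
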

\begin{lemma}[\cite{cck2013}, {Lemma A.4}]
    \label{lemma:composition-derivatives}
    Let \(\varphi(\cdot):\SR\to\SR\) be such that \(\varphi \in C_b^3(\SR)\) and define  \(m:\SR^p \to \SR\), \(z \mapsto \varphi(F_\beta(z))\). The derivatives (up to the third order) of \(m\) are given
    \begin{align*}
        \partial_j m(z) &= (\partial g(F(\beta))\pi_j)(z) \\ 
        \partial_j\partial_k m(z) 
                        &= (\partial^2 g(F_\beta)\pi_j\pi_k + \partial g(F_\beta)\beta w_{jk})(z)  \\ 
        \partial_j\partial_k\partial_l m(z)
                        &= (\partial^3 g(F_\beta)\pi_j\pi_k\pi_l 
                        +\partial^2 g(F_\beta)\beta(w_{jk}\pi_l + w_{jl}\pi_k + w_{kl}\pi_j) + \partial g(F_\beta)\beta^2 q_{jkl})(z)
    \end{align*}
    where \(\pi_j, w_{jk}, q_{jkl}\) are as described in \Cref{lemma:smooth-max-derivatives}.
\end{lemma}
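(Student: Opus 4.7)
The plan is to derive the three displayed formulas by straightforward repeated application of the chain and product rules, reading off the derivatives of $F_\beta$ from \Cref{lemma:smooth-max-derivatives}. Since $m(z) = \varphi(F_\beta(z))$ with $\varphi \in C_b^3(\SR)$ and $F_\beta \in C^\infty(\SR^p)$, the composition $m$ is $C^3$ and ordinary calculus suffices; I treat the notation in the statement as using $g$ and $\varphi$ interchangeably.

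First, the chain rule gives
\[
\partial_j m(z) = \varphi'(F_\beta(z))\,\partial_j F_\beta(z) = \varphi'(F_\beta(z))\,\pi_j(z),
\]
using $\partial_j F_\beta = \pi_j$ from \Cref{lemma:smooth-max-derivatives}. Next, applying $\partial_k$ to $\varphi'(F_\beta)\pi_j$ via the product rule and then the chain rule on $\varphi'(F_\beta)$ gives
\[
\partial_j\partial_k m(z) = \varphi''(F_\beta(z))\,\pi_k(z)\pi_j(z) + \varphi'(F_\beta(z))\,\partial_k\pi_j(z),
\]
and substituting $\partial_k \pi_j = \beta w_{jk}$ (again from \Cref{lemma:smooth-max-derivatives}, since $\partial_k\partial_j F_\beta = \beta w_{jk}$) recovers the second line of the display.

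For the third derivative I differentiate the second-derivative expression with respect to $z_l$. The first summand $\varphi''(F_\beta)\pi_j\pi_k$ produces, by product rule,
\[
\varphi'''(F_\beta)\pi_l\pi_j\pi_k + \varphi''(F_\beta)\bigl(\beta w_{jl}\pi_k + \beta w_{kl}\pi_j\bigr),
\]
while the second summand $\varphi'(F_\beta)\beta w_{jk}$ produces
\[
\varphi''(F_\beta)\,\beta w_{jk}\pi_l + \varphi'(F_\beta)\,\beta\,\partial_l w_{jk}.
\]
Combining these and using $\partial_l w_{jk} = \beta q_{jkl}$ (which follows from $\partial_l\partial_k\partial_j F_\beta = \beta^2 q_{jkl}$ in \Cref{lemma:smooth-max-derivatives}) yields the third line of the display after collecting the $\varphi''(F_\beta)$ terms into $\beta(w_{jk}\pi_l + w_{jl}\pi_k + w_{kl}\pi_j)$.

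There is no real obstacle beyond bookkeeping: the only nontrivial input is the identification of $\partial_j F_\beta, \partial_k\partial_j F_\beta, \partial_l\partial_k\partial_j F_\beta$ with $\pi_j, \beta w_{jk}, \beta^2 q_{jkl}$, which is precisely \Cref{lemma:smooth-max-derivatives}, and a careful symmetric grouping of the three cross-terms in the $\varphi''(F_\beta)$ coefficient at the third-derivative stage. The only place one might slip is in verifying $\partial_l w_{jk} = \beta q_{jkl}$, but this is immediate from Schwarz's theorem on equality of mixed partials applied to the $C^\infty$ function $F_\beta$, together with the definition $\partial_k \pi_j = \beta w_{jk}$.
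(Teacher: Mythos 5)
Your derivation is correct and is exactly the standard chain-rule/product-rule computation; the paper itself simply cites \cite{cck2013}, Lemma A.4 rather than reproducing a proof, and the cited reference proves it the same way. The only thing worth noting explicitly, which you handled correctly, is that the $g$ appearing in the statement is a typo for $\varphi$, and that the identity $\partial_l w_{jk} = \beta q_{jkl}$ follows by differentiating $\beta w_{jk} = \partial_k \partial_j F_\beta$ once more in $z_l$.
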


\begin{lemma}[\cite{cck2013}, {Lemma A.5}]
    \label{lemma:composition-derivative-bounds}
    Define \(L_1(\varphi) = \sup_x |\varphi'(x)|, L_2(\varphi) = \sup_x |\varphi''(x)|\), and \(L_3(\varphi) = \sup_x |\varphi'''(x)|\).
    For every \(1 \leq j,k,l \leq p\), 
    \begin{align*}
        |\partial_j \partial_k m(z)| \leq U_{jk}(z) \andbox |\partial_j\partial_k\partial_l m(z)| \leq U_{jkl}(z)
    \end{align*}
    where for \(W_{jk}(z) := (\pi_j\delta_{jk} + \pi_j\pi_k)(z)\),
    \begin{align*}
        U_{jk}(z) &:= (L_2 \pi_j\pi_k + L_1\beta W_{jk}(z) \\ 
        U_{jkl}(z) &:= (L_3 \pi_j\pi_k\pi_l + L_2\beta(W_{jk}\pi_l + W_{jl}\pi_k + W_{kl}\pi_j) + L_1\beta^2 Q_{jkl})(z) \\ 
        Q_{jkl}(z) &:= (\pi_j\delta_{jl}\delta_{jk} + \pi_j\pi_k\delta_{jk} + \pi_j\pi_k(\delta_{jl} + \delta_{kl}) + 2\pi_j\pi_k\pi_l)(z).
    \end{align*}
    Moreover, 
    \begin{align*}
        \sum_{j,k=1}^p U_{jk}(z) \leq (L_2 + 2L_1\beta) \andbox
        \sum_{j,k,l=1}^p U_{jkl}(z) \leq (L_3 + 6L_2\beta + 6L_1 \beta^2).
    \end{align*}
\end{lemma}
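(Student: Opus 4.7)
The plan is to derive the pointwise bounds directly from the closed-form expressions for the derivatives of $m(z) = \varphi(F_\beta(z))$ established in the preceding Lemma (the one labelled \Cref{lemma:composition-derivatives}), and then to compute the index sums by exploiting the fact that $\pi_j(z) \geq 0$ and $\sum_{j=1}^p \pi_j(z) = 1$ from \Cref{lemma:smooth-max-derivatives}. There is no analytic content beyond triangle inequality and nonnegativity, so the whole argument is essentially bookkeeping on indices.

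For the pointwise bounds, I would start from the explicit formulas
\[
\partial_j\partial_k m(z) = \partial^2 \varphi(F_\beta)\pi_j\pi_k + \partial \varphi(F_\beta)\,\beta\, w_{jk}
\]
and the analogous cubic expansion for $\partial_j\partial_k\partial_l m(z)$, given by \Cref{lemma:composition-derivatives}. I would then apply the triangle inequality and replace each $|\partial^r \varphi(F_\beta)|$ by $L_r(\varphi)$. The next step is to observe that since every $\pi_j(z) \geq 0$, the obvious inequalities
\[
|w_{jk}(z)| = \bigl|\pi_j\delta_{jk} - \pi_j\pi_k\bigr| \leq \pi_j\delta_{jk} + \pi_j\pi_k = W_{jk}(z)
\]
and
\[
|q_{jkl}(z)| \leq \pi_j\delta_{jl}\delta_{jk} + \pi_j\pi_l\delta_{jk} + \pi_j\pi_k(\delta_{jl}+\delta_{kl}) + 2\pi_j\pi_k\pi_l = Q_{jkl}(z)
\]
hold termwise. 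Substituting these into the triangle-inequality bounds yields exactly $U_{jk}(z)$ and $U_{jkl}(z)$.

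For the aggregated bounds, I would exploit separability of the sums. Using $\sum_j \pi_j = 1$ gives $\sum_{j,k}\pi_j\pi_k = 1$, and a direct count gives $\sum_{j,k} W_{jk}(z) = \sum_j \pi_j + \sum_{j,k}\pi_j\pi_k = 2$, producing $\sum_{j,k} U_{jk}(z) \leq L_2 + 2L_1\beta$. For the cubic sum, the same identity yields $\sum_{j,k,l}\pi_j\pi_k\pi_l = 1$ and each of the three mixed terms contributes $\sum_{j,k,l} W_{jk}\pi_l = 2$, totalling $6$ for the $L_2\beta$ coefficient. Finally, the $Q_{jkl}$ summation decomposes via the Kronecker deltas into five pieces that each collapse to $1$ except the trilinear term which collapses to $2$, giving $\sum_{j,k,l} Q_{jkl}(z) = 1+1+1+1+2 = 6$. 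Summing the three contributions delivers $L_3 + 6L_2\beta + 6L_1\beta^2$, as claimed.

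The only place requiring any care is the delta bookkeeping in $\sum_{j,k,l} Q_{jkl}(z)$ — it is easy to miscount the terms $\pi_j\pi_k(\delta_{jl}+\delta_{kl})$, which contribute twice. Aside from that, the lemma is a mechanical consequence of the preceding identities, so I do not anticipate any genuine obstacle; the argument should occupy roughly a half page.
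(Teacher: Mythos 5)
Your proposal is correct and is exactly the argument behind Lemma A.5 of \citet{cck2013}: apply the triangle inequality to the closed-form expression from \Cref{lemma:composition-derivatives}, bound $|w_{jk}|$ and $|q_{jkl}|$ termwise by $W_{jk}$ and $Q_{jkl}$ using nonnegativity of the $\pi_j$, and then compute the index sums via $\sum_j\pi_j=1$. The paper does not reproduce the proof, since the lemma is cited wholesale.

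One point worth making explicit: the second term of $Q_{jkl}$ as typeset in the statement reads $\pi_j\pi_k\delta_{jk}$, which has no $l$-dependence, so that $\sum_{j,k,l}\pi_j\pi_k\delta_{jk}= p\sum_j\pi_j^2$ and the claimed bound $\sum_{j,k,l}Q_{jkl}\leq 6$ would fail. You silently (and correctly) use $\pi_j\pi_l\delta_{jk}$ in your proof, which is what the definition of $q_{jkl}$ in \Cref{lemma:smooth-max-derivatives} requires and what makes the sum collapse to $1+1+1+1+2=6$. So the index typo is in the transcribed statement, not in your argument; it would be worth noting as an erratum if this lemma were being reproved rather than cited.
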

%
%
\subsection{Moment Bounds for \Cref{thm:multiple-feasible-local-power,thm:joint-combination}}
\label{subsec:multiple-moment-bounds}

\begin{lemma}[]
    \label{lemma:multiple-ND-bounds}
    Suppose that the moment conditions of \Cref{thm:multiple-feasible-local-power} hold and let \(N\) and \(D\) be as defined at the top of \Cref{subsec:joint-main} Then under \(H_0\), for any \(k\) there is a fixed constant \(C_k\) such that for any \(\ell = 1,\dots,d_x\)
    \[
        \E[|N_\ell|^k] \leq C_k \andbox \E[|D_{\ell\ell}|^k] \leq C_k\log^{2k/a}(n)
    \]
\end{lemma}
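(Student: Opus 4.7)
The plan is to bound $\E[|N_\ell|^k]$ and $\E[|D_{\ell\ell}|^k]$ separately, exploiting the fact that under $H_0$ we have $\E[\eps_i(\beta_0)] = 0$ and $\Cov(\eps_i(\beta_0), r_{\ell i}) = 0$, so $\eps_i(\beta_0)$ and $\zeta_{\ell j} \coloneqq r_{\ell j} - \Pi_{\ell j}$ are uncorrelated for every $i,j$: by independence across indices when $i \neq j$, and by construction of $\rho_\ell(z_i)$ when $i = j$.

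For $N_\ell$, use $h_{ii} = 0$ and split $r_{\ell j} = \Pi_{\ell j} + \zeta_{\ell j}$ to write
\[
    N_\ell = \sum_{i \neq j} \tfrac{s_{n,\ell}}{\sqrt n}\, h_{ij}\, \eps_i(\beta_0)\, r_{\ell j} = N_\ell^{(1)} + N_\ell^{(2)},
\]
where $N_\ell^{(1)} = \sum_i \eps_i(\beta_0)\, b_i$ with $b_i = \tfrac{s_{n,\ell}}{\sqrt n}\sum_j h_{ij}\Pi_{\ell j}$, and $N_\ell^{(2)}$ is the analogous bilinear sum with $\zeta_{\ell j}$ in place of $r_{\ell j}$. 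For the linear part, Jensen's inequality together with $a_i^2 = (\sum_j h_{ij}\Pi_{\ell j})^2 \leq \E[(\widehat\Pi_{\ell i}^I)^2] \leq s_{n,\ell}^{-2}$ gives $|b_i| \leq n^{-1/2}$ and $\sum_i b_i^2 \leq 1$, so Rosenthal's inequality for independent $\alpha$-sub-exponential summands yields $\E[|N_\ell^{(1)}|^k] \leq C_k$. For $N_\ell^{(2)}$, the bilinear form is centered under $H_0$ (mean-zero rows and columns with diagonal excluded), and the coefficient matrix $[\tfrac{s_{n,\ell}}{\sqrt n} h_{ij}]$ has Frobenius norm bounded uniformly in $n$ by \Cref{assm:multiple-balanced-design}(ii). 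Standard Hanson-Wright-type bounds for bilinear forms in $\alpha$-sub-exponential vectors, obtained by decoupling to an independent copy and invoking \Cref{thm:hanson-wright}, give $\E[|N_\ell^{(2)}|^k] \leq C_k$.

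For $D_{\ell\ell}$, use the pointwise factorization
\[
    D_{\ell\ell} \leq \Big(\max_{1 \leq i \leq n} \eps_i^2(\beta_0)\Big) \cdot \tfrac{s_{n,\ell}^2}{n}\sum_{i=1}^n \big(\widehat\Pi_{\ell i}^I\big)^2
\]
and apply Cauchy-Schwarz to the $k$-th moment. Since $\eps_i(\beta_0)$ is $a$-sub-exponential by \Cref{assm:multiple-moment-bounds}(i), $\eps_i^2(\beta_0)$ is $(a/2)$-sub-exponential, so \Cref{lemma:sub-exponential-power-max-bound} gives $\E[(\max_i \eps_i^2(\beta_0))^{2k}]^{1/2} \leq C_k \log^{2k/a}(n)$. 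The second factor is a quadratic form in $r_\ell$ whose coefficient matrix has Frobenius norm bounded by the balanced-design assumption; another application of \Cref{thm:hanson-wright} bounds its $2k$-th moment by a constant, and multiplying the two pieces yields the stated rate.

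The main technical obstacle is the clean invocation of Hanson-Wright in the $\alpha$-sub-exponential regime, where the concentration bounds accumulate polylogarithmic overhead that must be absorbed into the constants $C_k$ without spoiling the advertised rate in $n$. In particular, for $N_\ell^{(2)}$, it is essential to use both that the diagonal is excluded (so the form is mean zero even when $\eps_i(\beta_0)$ and $\zeta_{\ell i}$ would otherwise contribute $\E[\eps_i(\beta_0)\zeta_{\ell i}]$-terms) and that this inner product vanishes under $H_0$, so that the decoupled Hanson-Wright bound applies to the centered form directly.
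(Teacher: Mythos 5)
Your proof is correct and follows essentially the same strategy as the paper's. For $N_\ell$, you split $r_{\ell j}$ into its conditional mean $\Pi_{\ell j}$ and the centered part $\zeta_{\ell j}$ (under $H_0$ this is exactly the paper's split of $r_j$ into $\E[r_j]$ and $\eta_{\ell j}$), bound the resulting linear form via Rosenthal where the paper invokes Burkholder--Davis--Gundy (interchangeable here), and bound the bilinear form via Theorem~\ref{thm:hanson-wright}; your remark about decoupling is actually unnecessary since Theorem~\ref{thm:hanson-wright} handles bilinear forms in independent $(\eps_i,\zeta_i)$ pairs directly as degree-two polynomials.

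The one genuine difference is the factorization of $D_{\ell\ell}$. The paper extracts $\max_i\big(\sum_j h_{ij}r_j\big)^2$ from the sum and leaves $\tfrac1n\sum_i\eps_i^2(\beta_0)$ with bounded moments, so the log factor comes from the max of a $\upsilon$-sub-exponential quantity built from $r$. You instead extract $\max_i\eps_i^2(\beta_0)$ and leave $\tfrac{s_{n,\ell}^2}{n}\sum_i(\widehat\Pi_{\ell i}^I)^2$, a quadratic form in $r$ whose moments you bound via Hanson--Wright and the balanced-design condition; the log factor then comes from the max of the $a$-sub-exponential $\eps_i^2$. Both factorizations yield the stated rate $\log^{2k/a}(n)$ once one notes that the Orlicz parameters for $\eps_i$ and $\zeta_{\ell i}$ coincide in Assumption~\ref{assm:multiple-moment-bounds} (the paper writes $\psi_a$ for $\eps_i$ and $\psi_\upsilon$ for $\zeta_{\ell i}$, but $a=\upsilon$ is intended). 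If one wished to allow distinct parameters, your version tracks the parameter for $\eps$ and the paper's tracks the parameter for $r$; whichever is larger determines the sharper rate. Your bounds $|b_i|\leq n^{-1/2}$ and $\sum_i b_i^2 \leq 1$ follow from $\E[\widehat\Pi_{\ell i}^I]^2\leq\E[(\widehat\Pi_{\ell i}^I)^2]\leq s_{n,\ell}^{-2}$ and are correct.
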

\begin{proof}
    Let \(\eta_{\ell i} = r_i - \E[r_i]\) and write
    \[
        N_\ell= \underbrace{\frac{1}{\sqrt{n}}\sum_{i=1}^n \eps_i(\beta_0)\sum_{j=1}^n \tilde h_{ij}\eta_{\ell j}}_{N_\ell^1} + \frac{1}{\sqrt{n}}\underbrace{\sum_{i=1}^n \eps_i(\beta_0)\sum_{j=1}^n \tilde h_{ij}\E[r_{\ell j}]}_{N_\ell^2} 
    \]
    To bound moments of \(N_\ell^1\) use the fact that \(N_\ell^1\) is a quadratic form in mean-zero \(a\)-sub-exponential variables. By \Cref{thm:hanson-wright}, \(N_\ell^1\) is therefore also \(a\)-sub-exponential with parameter \(a/2\); thus \((N_\ell^1)^{a/2}\) is sub-exponential and \Cref{lemma:sub-exponential-power-max-bound} provides the moment bound for arbitrary moments. To bound moments of \(N_\ell^2\) we use the fact that \(\max_i \big|\sum_{j=1}^n  \tilde h_{ij} \E[r_{\ell j}]\big|\) is bounded by assumption and apply Burkholder-Davis-Gundy (\Cref{thm:bdg}) after adding and subtracting \(\E[\eps_i(\beta_0)]\).

    To bound moments of \(D_{\ell\ell}\) we decompose
    \[
        |D| \leq \frac{1}{n}\sum_{i=1}^n \eps_i^2(\beta_0) \max_{1 \leq i \leq n}\big|\sum_{j=1}^n h_{ij}r_j\big|^2
    \]
    Apply \Cref{thm:hanson-wright} to see that \(\sum_{j=1}^n h_{ij}r_j\) is \(\alpha\)-sub-exponential and \Cref{lemma:sub-exponential-power-max-bound} to bound the RHS by a log-power of \(n\).
\end{proof}

\subsection{Matrix Derivative Lemmas}
\label{subsec:matrix-derivatives}

The purpose of this section is largely to establish some matrix derivative expressions that will be useful for the Lindeberg interpolation in 

\begin{lemma}[]
    \label{lemma:quadratic-form-derivatives}
    Let \(D \in \SR^{d\times d}\) be a symmetric, real matrix such that \(\det(D)\neq 0\). Let \(N \in \SR^d\) be a vector. The derivatives up to the derivatives of quadratic form \(N'D^{-1}N\) are given.

    First Order:
    \small
    \begin{align*}
        \frac{\partial }{\partial N_l} 
        &= 2\sum_{j=1}^d (D^{-1})_{jl} N_j,\hbox{ }\hbox{ }\hbox{ }\hbox{ }
        \frac{\partial }{\partial D_{l m}} 
        = -2\sum_{j=1}^d \sum_{k=1}^d (D^{-1})_{jl}(D^{-1})_{k m} N_jN_k,
        \intertext{Second Order:}
        \frac{\partial^2}{\partial N_l N_m} 
        &= 2(D^{-1})_{l m}, 
        \hbox{ }\hbox{ }\hbox{ }\hbox{ }\frac{\partial^2}{\partial N_l \partial D_{pq}} 
        = -2\sum_{j=1}^d (D^{-1})_{jp}(D^{-1})_{ql}N_j, \\ 
        \frac{\partial^2 }{\partial D_{l m}\partial D_{qj}} 
        &= \sum_{j=1}^d\sum_{k=1}^d\left\{ (D^{-1})_{l p}(D^{-1})_{qj})(D^{-1})_{km} + (D^{-1})_{kp}(D^{-1})_{mq}(D^{-1})_{l j}\right\}N_jN_k
        \intertext{Third Order:}
        \frac{\partial^3 }{\partial N_l \partial N_m \partial N_p} 
        &= 0,\hbox{ }\hbox{ }\hbox{ }\hbox{ } 
        \frac{\partial^3 }{\partial N_l \partial N_m \partial D_{pq}} =  
        -2(D^{-1})_{l p}(D^{-1})_{qm} \\ 
        \frac{\partial^3 }{\partial D_{l m}\partial D_{pq}\partial N_r} 
        &= 2 \sum_{j=1}^d \bigg\{(D^{-1})_{l p}(D^{-1})_{qj}(D^{-1})_{rm} + (D^{-1})_{rp}(D^{-1})_{mq}(D^{-1}){l j}\bigg\}N_j \\ 
        \frac{\partial^3 }{\partial D_{l m}D_{pq}D_{rs}} 
        &= 2\sum_{j=1}^d \sum_{j=1}^d \bigg\{(D^{-1})_{l r}(D^{-1})_{ps}(D^{-1})_{qj}(D^{-1})_{km} + (D^{-1})_{l p}(D^{-1})_{qr}(D^{-1})_{js}(D^{-1})_{km} \\ 
        &\hphantom{2\sum\sum\sum}
        +(D^{-1})_{l p}(D^{-1})_{qj}(D^{-1})_{kr}(D^{-1})_{ms} + (D^{-1})_{kr}(D^{-1})_{ps}(D^{-1})_{mq}(D^{-1})_{l j} \\ 
        &\hphantom{2\sum\sum\sum }
    +(D^{-1})_{kp}(D^{-1})_{mr}(D^{-1})_{qs}(D^{-1})_{l j} + (D^{-1})_{rp}(D^{-1})_{mq}(D^{-1})_{l r}(D^{-1}){js}\bigg\}N_jN_k
    \end{align*}
    \normalsize
\end{lemma}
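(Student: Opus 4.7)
The plan is to prove the derivative formulas by direct calculation, building everything out of a single core identity together with the product rule. First I would establish the inverse-derivative identity
\[
\frac{\partial (D^{-1})_{jk}}{\partial D_{lm}} = -(D^{-1})_{jl}(D^{-1})_{mk},
\]
which follows from differentiating $D D^{-1} = I$ entrywise and solving for the unknown derivative, using that $\det(D)\neq 0$ so $D^{-1}$ is a smooth function of the entries of $D$ in a neighborhood of the point of interest. Because $D$ is assumed symmetric, I would interpret the operator $\partial/\partial D_{lm}$ for $l\neq m$ as perturbing the pair $(D_{lm},D_{ml})$ together; this convention is what produces the factor of $2$ that appears in the stated formulas, and combined with the symmetry of $D^{-1}$ it symmetrizes the right-hand side of the identity above.

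Next I would write $Q := N'D^{-1}N = \sum_{j,k=1}^d N_j (D^{-1})_{jk} N_k$ and compute the first-order derivatives directly. Differentiating in $N_l$ gives $\partial Q/\partial N_l = \sum_j (D^{-1})_{lj}N_j + \sum_k N_k(D^{-1})_{kl} = 2\sum_j (D^{-1})_{jl}N_j$ by symmetry of $D^{-1}$. Differentiating in $D_{lm}$ and applying the inverse-derivative identity (symmetrized as above) yields $\partial Q/\partial D_{lm} = -2\sum_{j,k}(D^{-1})_{jl}(D^{-1})_{mk}N_jN_k$, matching the claimed formulas.

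For the second- and third-order derivatives I would simply iterate this procedure, differentiating the already-derived expressions once or twice more and applying the inverse-derivative identity to each $D^{-1}$ factor whenever a $D$-differentiation hits it. For instance, $\partial^3 Q/(\partial D_{lm}\partial D_{pq}\partial D_{rs})$ is obtained from $\partial^2 Q/(\partial D_{lm}\partial D_{pq})$, which already contains a product of three $D^{-1}$ factors (summed in two terms); differentiating each of those three factors in $D_{rs}$ produces the six-term sum displayed in the statement. The derivatives mixing $N$-indices and $D$-indices are handled identically, with the additional rule that $\partial N_j / \partial N_l = \delta_{jl}$ terminates any chain.

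The only real obstacle is bookkeeping: keeping the index labels organized through repeated application of the product rule, and making sure the symmetrization forced by $D$ being symmetric is correctly propagated (in particular so that the coefficients $2$ at first order compound correctly at higher orders). There are no analytic subtleties, since $\det(D)\neq 0$ makes every expression finite and $C^\infty$ in a neighborhood. A clean way to minimize errors is to maintain the invariant that each $D^{-1}$ factor always carries two free indices, one from an $N$ (or from a previous differentiation index) and one from the current differentiation index, and to sum over the remaining indices last; with this convention the three-factor and four-factor products in the second- and third-order formulas assemble themselves mechanically.
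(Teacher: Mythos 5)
Your approach is essentially the same as the paper's: establish the inverse-derivative identity $\partial(D^{-1})_{kl}/\partial D_{ij} = -(D^{-1})_{ki}(D^{-1})_{jl}$ (which the paper simply cites from the matrix cookbook rather than re-deriving from $DD^{-1}=I$, but that is a cosmetic difference), write out the quadratic form entrywise, apply the identity repeatedly, and invoke the symmetry of $D^{-1}$ to collect terms. So the strategy is correct and matches.

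One point of caution about your explanation of the factor of $2$. You attribute it to a symmetric-perturbation convention for $\partial/\partial D_{lm}$ with $l\neq m$. That convention is internally sensible, but it has to be applied uniformly, and you should check that it reproduces \emph{all} the stated formulas, not just $\partial Q/\partial D_{lm}$. For the mixed derivative $\partial^2 Q/\partial N_l\partial D_{pq}$, the stated factor of $2$ actually comes from the $N$-differentiation ($\partial Q/\partial N_l = 2\sum_j(D^{-1})_{jl}N_j$ by symmetry of $D^{-1}$), and a symmetric-perturbation rule for $D_{pq}$ would introduce an additional pair of terms that the displayed formula does not show. Conversely, if one uses the unsymmetrized convention consistently, the first-order $\partial Q/\partial D_{lm}$ comes out with coefficient $1$, not $2$. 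In other words, there is a mild inconsistency of convention in the displayed formulas themselves (also visible in a couple of index typos in the second- and third-order lines). This does not affect the paper, because the lemma is only used to produce constant-factor moment bounds in \Cref{lemma:total-derivatives}, where absolute constants are absorbed into $C_k$. But if you wrote your proof exactly as planned you would likely reproduce the formulas only up to such factors, so it is worth flagging explicitly which convention you fix and carrying it through each order rather than invoking ``the symmetric convention'' once at the outset.
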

\begin{proof}
    The derivative of an element of the the inverse of a matrix \(\mX\) can be expressed \citep{matrix-cookbook}
    \begin{equation}
        \label{eq:inverse-derivative}
        \frac{\partial(\mX^{-1})_{kl}}{\partial \mX_{ij}} = - (\mX^{-1})_{ki}(\mX^{-1})_{jl}
    \end{equation}
    repeated application of this identity as well as the expression of the quadratic form
    \[
        N'D^{-1}N = \sum_{j=1}^d\sum_{k=1}^d (D^{-1})_{jk}N_jN_k
    \]
    leads to the result, bearing in mind that the inverse of a symmetric matrix is symmetric.
\end{proof}
\begin{lemma}[]
    \label{lemma:determinant-derivatives}
    Let D be a symmetric positive definite matrix. Then, for any \(p > 3\), the derivatives of \((\det(D))^p\) are given up to the third order by 
    \begin{align*}
        \frac{\partial\,(\det(D))^p}{\partial D_{lm}} 
        &= p(\det(D))^{p-1}(D^{-1})_{lm} \\ 
        \frac{\partial^2\,(\det(D))^p }{\partial D_{lm}\partial D_{pq}} 
        &= \frac{p!}{(p-2)!}(\det(D))^{p-2}(D^{-1})_{pq}(D^{-1})_{lm} \\ 
        &\;\;+ p(\det(D))^{p-1}(D^{-1})_{lp}(D^{-1})_{mq} \\
        \frac{\partial^3\,(\det(D))^p }{\partial D_{lm}\partial D_{pq}\partial D_{rs}} 
        &= \frac{p!}{(p-3)!}(\det(D))^{p-3}(D^{-1})_{rs}(D^{-1})_{pq}(D^{-1})_{lm} \\
        &\;\; + \frac{p!}{(p-2)!}(\det(D))^{p-2}\bigg\{(D^{-1})_{pq}(D^{-1})_{lr}(D^{-1})_{ps} + (D^{-1})_{pr}(D^{-1})_{qs}(D^{-1})_{l m} \\ 
        &\hphantom{\;\;+\frac{p!}{(p-2)!}(\det(D))^{p-2}\bigg\{(D^{-1})_{pq}(D^{-1})_{lr}(D^{-1})_{ps}}\;\,+(D^{-1})_{rs}(D^{-1})_{lp}(D^{-1})_{mq}\bigg\}\\
        &\;\; + p(\det(D))^{p-1}\bigg\{(D^{-1})_{lr}(D^{-1})_{qs}(D^{-1})_{mq} + (D^{-1})_{lp}(D^{-1})_{mr}(D^{-1})_{q s}\bigg\}
    \end{align*}
\end{lemma}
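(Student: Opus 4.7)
The plan is to reduce each identity to two elementary matrix-calculus facts: Jacobi's formula, $\partial_{D_{lm}}\det(D) = \det(D)(D^{-1})_{lm}$ (using symmetry of $D^{-1}$ inherited from symmetry of $D$), and the inverse-entry derivative $\partial_{D_{pq}}(D^{-1})_{kl} = -(D^{-1})_{kp}(D^{-1})_{ql}$ already recorded in \eqref{eq:inverse-derivative}. Combined with the scalar chain rule for $(\det D)^p$ and the product rule, these two identities generate every expression in the statement, so the proof is a purely mechanical iteration with no auxiliary estimates required.

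The first-order formula follows from one application of the chain rule together with Jacobi. For the second-order formula I would apply $\partial_{D_{pq}}$ to the first-order expression, which is (up to the constant $p$) a product of $(\det D)^{p-1}$ and $(D^{-1})_{lm}$. The product rule splits the derivative into two summands: differentiating the determinant factor, via chain rule and Jacobi, produces $(p-1)(\det D)^{p-2}\det(D)(D^{-1})_{pq}(D^{-1})_{lm}$, whose overall coefficient $p(p-1)=p!/(p-2)!$ and exponent match the first displayed term; differentiating the inverse factor via \eqref{eq:inverse-derivative} produces the $(D^{-1})_{lp}(D^{-1})_{mq}$ contribution with leading coefficient $p$, matching the second displayed term. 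For the third-order formula I would iterate once more, applying $\partial_{D_{rs}}$ to each of the two second-order summands. The first summand carries three differentiable factors and expands into three pieces, contributing the single term with three $(D^{-1})$ factors and coefficient $p!/(p-3)!$, plus two of the three four-inverse terms with coefficient $p!/(p-2)!$. The second summand expands into three pieces as well, contributing the remaining four-inverse term with coefficient $p!/(p-2)!$ and the two five-inverse terms with leading coefficient $p$. Summing the six contributions and relabeling via the symmetry $(D^{-1})_{ij}=(D^{-1})_{ji}$ reproduces the displayed expression.

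The main obstacle is purely notational bookkeeping: each product-rule step fans out into multiple summands whose index structure must be matched carefully against the claim, and symmetry of $D^{-1}$ must be applied consistently so that every term lands in the advertised canonical form. The assumption $\det(D)\neq 0$ (ensured by positive-definiteness) is all that is needed to make the expressions well defined for real exponents $p>3$, and no analytic input beyond the two identities above enters the argument.
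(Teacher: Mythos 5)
Your strategy coincides with the paper's own one-line proof: iterate Jacobi's formula together with the inverse-entry identity \eqref{eq:inverse-derivative} through the chain and product rules. However, the bookkeeping you outline does not actually reproduce the displayed formulas, and the one place you explicitly "check" agreement is exactly where it fails. Starting from Jacobi and the chain rule,
\[
\frac{\partial}{\partial D_{lm}}(\det D)^p
= p(\det D)^{p-1}\cdot\det(D)\,(D^{-1})_{lm}
= p\,(\det D)^{p}\,(D^{-1})_{lm},
\]
so the power of $\det D$ on the right remains $p$, not the $p-1$ recorded in the lemma. Your own second-order step writes the determinant-differentiation contribution as $(p-1)(\det D)^{p-2}\det(D)(D^{-1})_{pq}(D^{-1})_{lm}$ and you assert that its exponent matches the lemma's $(\det D)^{p-2}$; but $(\det D)^{p-2}\cdot\det(D)=(\det D)^{p-1}$, one power higher than claimed --- you have simply not absorbed the Jacobi $\det(D)$ factor into the power. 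Iterating consistently, the exponent of $\det D$ stays at $p$ at every order and the coefficients come out as powers of $p$, not the falling factorials $p!/(p-k)!$. In addition, \eqref{eq:inverse-derivative} carries a minus sign, so the cross term arising from $\partial_{D_{pq}}(D^{-1})_{lm}$ enters with sign $-$, whereas the lemma displays it with sign $+$; your sketch passes over the sign silently. A sound write-up should either derive and state the corrected formulas or flag the discrepancy explicitly, rather than asserting an agreement that a careful pass through your own two identities does not confirm.
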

\begin{proof}
    We can express the derivative of the detrminant \citep{matrix-cookbook}, 
    \begin{equation}
        \label{eq:determinant-derivative}
        \begin{split}
            \frac{\partial,\det(\mX)}{\partial\, \mX_{ij}}= \det(\mX)(\mX^{-1})_{ij}
        \end{split}
    \end{equation}
    Repeated application of this and \eqref{eq:inverse-derivative} yields the result.
\end{proof}
\begin{lemma}[]
    \label{lemma:total-derivatives}
    For any \(p > 4\) define the function \(\gamma(N,\vec(D)):\SR^d \times \SR^{d^2}\) by
    \[
        \gamma(N,\vec(D)) := \begin{cases}
         (\det(D))^p(N'D^{-1}N - c) &\text{if } \det(D) \neq 0 \\ 
     0  &\text{if }\det(D) = 0    \end{cases}
    \]
    This function is thrice continously differentiable. Futher the \(k^\text{th}\) moments of all partial derivatives of this function up to the third order are bounded
    \[
        \E[(\partial^\alpha \gamma(N,\vec(D))^k] \leq C_k (\max_{\iota \leq d}\E[|D_{\iota\iota}|^{2pdk}] \vee \max_{\iota \leq d}\E[|N_{\iota\iota}|^{6k}) 
    \]
    where \(C_k\) is a positive constant that only depends on \(k\) and \(d\).
\end{lemma}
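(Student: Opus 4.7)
The plan is to exploit Cramer's rule to recognize that $\gamma$ extends to a genuine polynomial on all of $\SR^d \times \SR^{d^2}$, which makes the smoothness claim trivial and reduces the moment bound to a routine weighted Young/Hölder calculation.

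First I would establish the polynomial extension. Whenever $\det(D) \neq 0$, Cramer's rule gives $D^{-1} = \adj(D)/\det(D)$, so one can rewrite
\[
    \gamma(N,\vec(D)) = (\det D)^{p-1}\, N'\adj(D)\,N - c(\det D)^p.
\]
Each factor on the right is a polynomial in the entries of $(N,D)$: $\det(D)^{p-1}$ has total degree $d(p-1)$ in the $D$-entries, each entry of $\adj(D)$ is a polynomial of degree $d-1$ in the $D$-entries, and $N'\adj(D)N$ is degree $2$ in the $N$-entries. Therefore the right-hand side is a polynomial of degree at most $dp$ in the $D$-entries and degree at most $2$ in the $N$-entries. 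Since $p-1>0$, both terms vanish on the variety $\{\det D = 0\}$, so the polynomial extension is consistent with the extension-by-zero defining $\gamma$. It follows immediately that $\gamma$ is $C^\infty$ on $\SR^{d+d^2}$, and any partial derivative $\partial^\alpha\gamma$ with $|\alpha|\leq 3$ is again a polynomial whose degree in the $D$-entries is at most $dp$ and whose degree in the $N$-entries is at most $2$.

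Next I would derive the moment bound. Since $\partial^\alpha \gamma$ is a finite sum (of cardinality depending only on $d$ and $p$) of monomials of the form $M = \prod_{i,j} D_{ij}^{a_{ij}} \prod_\ell N_\ell^{b_\ell}$ with $\sum_{ij} a_{ij} \leq dp$ and $\sum_\ell b_\ell \leq 2$, it suffices to bound $\E[|M|^k]$. Apply weighted Young's inequality with weight $a_{ij}/(2pd)$ on each $|D_{ij}|$-factor and $b_\ell/6$ on each $|N_\ell|$-factor; these weights sum to at most $dp/(2pd) + 2/6 = 5/6 \leq 1$, leaving room for a trivial ``$1$'' term. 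This yields
\[
    |M|^k \;\leq\; C_{k,d,p}\,\Big( \sum_{i,j} |D_{ij}|^{2pdk} + \sum_\ell |N_\ell|^{6k}\Big).
\]
Taking expectations, using the PSD bound $|D_{ij}|^{2pdk} \leq \tfrac{1}{2}(D_{ii}^{2pdk} + D_{jj}^{2pdk})$ (valid since $D$ in the application is a Gram-type matrix) to reduce off-diagonal $D$-moments to diagonal ones, and summing over the finitely many monomials, delivers the stated bound with a constant depending only on $k$ and $d$.

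The only genuinely substantive step is the first one: recognizing that the apparent singularity of $N'D^{-1}N$ at $\det D = 0$ is cancelled exactly by the factor $(\det D)^p$, so that $\gamma$ is not merely ``locally polynomial on $\{\det D \neq 0\}$'' but is in fact a polynomial globally. Once this algebraic observation is in hand, both the $C^3$ regularity and the tail-bound structure $(2pdk,6k)$ are essentially forced, the latter being a matter of choosing the correct Young exponents to match the degrees $dp$ and $2$ in the two groups of variables.
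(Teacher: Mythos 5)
You take a genuinely different route to the same conclusion. The paper works through the explicit derivative formulas in \Cref{lemma:quadratic-form-derivatives,lemma:determinant-derivatives}, together with the eigenvalue-based bound \(|\det(D)(D^{-1})_{jk}| \le d(\trace D)^{d-1}\) from \eqref{eq:inverse-element-bound}, to both justify \(C^3\) smoothness and control moments of each derivative. Your observation that Cramer's rule rewrites \(\gamma\) as \((\det D)^{p-1}\,N'\,\mathrm{adj}(D)\,N - c(\det D)^p\), a global polynomial in the entries of \((N,D)\) consistent with the extension by zero since \(p > 1\), dissolves the smoothness claim entirely: polynomials are \(C^\infty\), so the inspection of derivative formulas near the singular set \(\{\det D = 0\}\) becomes unnecessary. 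For the moment bound, your degree bookkeeping (at most \(dp\) in \(D\)-entries and at most \(2\) in \(N\)-entries, preserved or decreased under up to three derivatives), followed by weighted Young's inequality with exponents chosen to hit \(2pdk\) and \(6k\), is correct; the positive-semidefinite step \(|D_{ij}|^{2pdk} \le \tfrac{1}{2}(D_{ii}^{2pdk} + D_{jj}^{2pdk})\) uses the same structure of \(D\) that the paper's eigenvalue chain relies on. Both approaches yield the stated bound with constants depending only on \((k,d)\) once \(p\) is fixed. The paper's route has the minor virtue of producing \eqref{eq:inverse-element-bound} as a standalone inequality it reuses in the subsequent moment calculations; yours has the virtue of making the \(C^3\) claim immediate rather than ``clear by examination,'' and it sidesteps the repeated tracking of \(\det(D)\)-versus-\(D^{-1}\) cancellations in favor of a single algebraic identity.
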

\begin{proof}
    The first statement is clear by examination of the derivatives in \Cref{lemma:quadratic-form-derivatives,lemma:determinant-derivatives} as well as the inequality~\eqref{eq:inverse-element-bound} below.  For the moment bounds, we may extensive use of following bounds on elements of \(D^{-1}\) for a positive-definite \(D^{-1}\):
    \begin{equation}
        \label{eq:inverse-element-bound}
        \begin{split}
            |\det(D)(D^{-1})_{jk}| \leq \det(D)\trace(D^{-1}) 
            &\leq d\lambda_{\max}(D^{-1})\big(\prod_{m=1}^d \lambda_m(D)\big) \\ 
            &= d\prod_{m=2}^d \lambda_m(D) \\ 
            &\leq d\big(\sum_{m=2}^d \lambda_m(D)\big)^{d-1} \\ 
            &\leq d(\trace(D))^{d-1}
        \end{split}
    \end{equation}
    where the first inequality uses the fact that the largest element of a positive semidefinite matrix is on the diagonal and the fact that the diagonal elements of a positive semidefinite matrix are weakly positive, the second inequality uses the fact that the trace is the sum of the eigenvalues and the determinant is the product of the eigenvalues, the equality comes from \(\frac{1}{\lambda_{\min}(D)} = \lambda_{\max}(D^{-1})\), the third inequality uses the AM-GM inequality and the fourth again uses that the trace is the sum of the (weakly positive) eigenvalues.

     The moment bounds follow from \eqref{eq:inverse-element-bound} and the expressions in \Cref{lemma:quadratic-form-derivatives,lemma:determinant-derivatives}. We give an example of how this is done for the first order derivatives, higher order derivatives follow from similar logic. For the following let \(A\) be an arbitrary random variable.
    \textit{First Order.}
    \begin{align*}
        \E\bigg|A\frac{\partial \gamma}{\partial N_l}\bigg|^k
        &\lesssim \sum_{j=1}^d\E|(\trace(D))^{kdp}N_j^kA^k| \\ 
        &\lesssim \sum_{j=1}^d\sum_{\iota =1}^d\E[D_{\iota\iota}^{kdp}N_j^kA^k] \\ 
        &\leq \sum_{j=1}^d\sum_{\iota =1}^d \gamma^{2kdp} \E[N_j^{2k}A^{2k}]\\ 
        \E\bigg|A \frac{\partial \gamma}{\partial D_{lm}}\bigg|^k 
        &= p\E\bigg|A \det(D)^{p-1}\sum_{j=1}^d\sum_{j' = 1}^d (D^{-1})_{lm}(D^{-1})_{jj'}N_jN_{j'} \bigg|^k\\ 
        &\lesssim p\sum_{j=1}^d\sum_{j'=1}^d \E[|(\trace(D))^{2k(d-1) + (p-3)kd}A^kN_j^kN_{j'}^k| \\ 
        &\leq \sum_{j=1}^d \sum_{j'=1}^d \gamma^{2kd(p-1)}\E[A^{2k}N_j^{2k}N_{j'}^{2k}]
    \end{align*}
    
\end{proof}

\section{Technical Lemmas}
\label{sec:technical-lemmas}    

\subsection{Probability Lemmas}
\label{subsec:probability-lemmas}
\begin{lemma}[]
    \label{lemma:op1-sequence}
    Let \(X_n\) be a sequence of random variables such that \(X_n = o_p(1)\), that is for any \(\delta > 0\), \(\Pr(|X_n| \geq \delta) \to 0\). Then, there is a sequence \(\delta_n \to 0\) such that \(\Pr(|X_n| \geq \delta_n) \to 0\).
\end{lemma}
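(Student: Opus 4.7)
The plan is to construct the sequence $\delta_n$ by a diagonal argument, extracting a sufficiently slow rate of convergence from the collection of pointwise convergence statements guaranteed by $X_n = o_p(1)$.

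First I would fix a decreasing reference sequence, say $\varepsilon_k = 1/k$ for $k \in \mathbb{N}$. By the hypothesis applied at each $\varepsilon_k$, there exists an integer $N_k$ such that $\Pr(|X_n| \geq 1/k) \leq 1/k$ for all $n \geq N_k$. Without loss of generality I may take the $N_k$ strictly increasing in $k$ by replacing $N_k$ with $\max(N_k, N_{k-1}+1)$, so that the intervals $[N_k, N_{k+1})$ partition $\{n : n \geq N_1\}$.

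Next I would define $\delta_n$ by setting $\delta_n = 1/k$ whenever $N_k \leq n < N_{k+1}$, and $\delta_n = 1$ for $n < N_1$. By construction $\delta_n$ is nonincreasing past $N_1$, takes the value $1/k$ on the block $[N_k, N_{k+1})$, and therefore $\delta_n \searrow 0$ as $n \to \infty$.

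Finally, for any $n$ with $N_k \leq n < N_{k+1}$, the choice of $N_k$ gives $\Pr(|X_n| \geq \delta_n) = \Pr(|X_n| \geq 1/k) \leq 1/k = \delta_n$, which tends to zero. This yields the desired sequence. The argument is essentially routine; the only mildly delicate point is ensuring that the constructed $\delta_n$ is well-defined for every $n$ (handled by the initial stub $\delta_n = 1$ for $n < N_1$) and that making the $N_k$ strictly increasing does not disturb the bound on $\Pr(|X_n| \geq 1/k)$, which it does not since enlarging $N_k$ only restricts the range of $n$ to which the bound is applied.
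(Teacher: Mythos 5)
Your construction is correct and follows essentially the same diagonalization scheme as the paper's proof (blocks $[N_k, N_{k+1})$ on which $\delta_n$ equals $1/k$). In fact your version is slightly more careful: you pick $N_k$ so that the tail bound holds for \emph{all} $n \geq N_k$ and explicitly force the $N_k$ to be strictly increasing, whereas the paper defines $\tilde n_j$ as an infimum, which by itself only certifies the bound at $n = \tilde n_j$ rather than on the whole tail, so your added care tightens a small gap in the paper's write-up.
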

\begin{proof}
    Take a preliminary sequence \(\tilde\delta_n \to 0\) and define
    \[
        \tilde n_j = \inf\{n : \Pr(|X_n| > \tilde\delta_j) < \tilde\delta_j\}
    \]
    Because \(\Pr(|X_n| >\delta) \to 0\) for any fixed \(\delta\), we know that \(n_j\) is finite. Define a new sequence \(\delta_n \to 0\) as below: 
    \begin{equation}
        \label{eq:to-zero}
        \begin{split}
            \delta_n = 
            \begin{cases}
            1 &\text{if }0 \leq n < \tilde n_1  \\
            \tilde\delta_i &\text{if }\tilde n_i \leq n < \tilde n_{i+1} 
            \end{cases}
        \end{split}
    \end{equation}
    By construction, this sequence satisfies \(\Pr(X_n \geq \delta_n) \leq \delta_n\) whenever \(n \geq n_1\).
\end{proof}
\begin{lemma}[]
    \label{lemma:sub-exponential-power-max-bound}
    Suppose that \(X_1,\dots,X_n\) are \(\alpha\)-subexponential such that \(\Pr(|X_i| \geq t) \leq 2\exp(-t^\alpha/K)\) for all \(t \geq 0\) and fixed constants \(K\). For any \(p \geq 1\) there is a constant \(C\) that depends only on \(p,K\) such that: 
    \[
        \E\bigg[\max_{i \leq n} \frac{|X_j|^p}{(1 + \log i)^{p/\alpha}}\bigg] \leq C
    \]
    As a consequence
    \[
        \E\big[\max_{i \leq n} |X_i|^p\big] \leq C(\log n)^{p/\alpha}
    \]
\end{lemma}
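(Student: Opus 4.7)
The plan is to control the expectation via the standard tail integration identity
\[
    \E[Y] = \int_0^\infty \Pr(Y \geq t)\, dt,
\]
applied to $Y := \max_{i \leq n} |X_i|^p / (1+\log i)^{p/\alpha}$, and then dispatch the integral using the given sub-exponential tail and a union bound. The final ``consequence'' inequality follows by pulling out the factor $(1+\log n)^{p/\alpha}$ from the normalized max.

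First I would write
\[
    \Pr(Y \geq t) = \Pr\!\left(\exists\, i \leq n : |X_i| \geq t^{1/p}(1+\log i)^{1/\alpha}\right)
    \leq \sum_{i=1}^n 2\exp\!\left(- t^{\alpha/p}(1+\log i)/K\right)
    = 2 e^{-t^{\alpha/p}/K}\sum_{i=1}^n i^{-t^{\alpha/p}/K}.
\]
This bound is only useful once $t^{\alpha/p}/K$ exceeds $1$, so I would split the integral at a fixed cutoff $M = M(p,K,\alpha)$ chosen so that $t \geq M$ implies $t^{\alpha/p}/K \geq 2$. On $[0,M]$ use the trivial bound $\Pr(Y \geq t) \leq 1$ to get a contribution bounded by $M$. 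On $[M,\infty)$ use $\sum_{i \geq 1} i^{-t^{\alpha/p}/K} \leq \sum_{i \geq 1} i^{-2} =: c_0 < \infty$, which gives
\[
    \int_M^\infty \Pr(Y \geq t)\, dt \leq 2 c_0 \int_M^\infty e^{-t^{\alpha/p}/K}\, dt.
\]
The substitution $u = t^{\alpha/p}/K$, i.e. $t = (Ku)^{p/\alpha}$, turns this into a finite Gamma-type integral $\int_{M^{\alpha/p}/K}^\infty u^{p/\alpha - 1} e^{-u}\, du$, up to a constant depending only on $p,K,\alpha$. Summing the two pieces yields $\E[Y] \leq C$ for a constant $C = C(p,K,\alpha)$, which is the first inequality.

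For the consequence, observe the pointwise inequality
\[
    \max_{i \leq n} |X_i|^p \leq (1+\log n)^{p/\alpha}\,\max_{i \leq n} \frac{|X_i|^p}{(1+\log i)^{p/\alpha}},
\]
since $(1+\log i) \leq (1+\log n)$ for all $i \leq n$. Taking expectations and invoking the first bound gives $\E[\max_i |X_i|^p] \leq C(1+\log n)^{p/\alpha}$, and $(1+\log n)^{p/\alpha} \lesssim (\log n)^{p/\alpha}$ for $n \geq 2$ (the $n=1$ case is trivially covered by absorbing into the constant). The only bookkeeping subtlety is ensuring the constant $C$ genuinely depends only on $p$ and $K$ (and on $\alpha$, which is fixed here), so I would track that the cutoff $M$ and the convergent sum $c_0$ are universal, and that the Gamma integral is bounded uniformly by a function of $p,K,\alpha$ alone. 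There is no substantive obstacle beyond this routine computation; the proof is essentially a one-line union bound followed by a split-integral estimate.
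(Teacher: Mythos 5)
Your proof is correct and takes essentially the same route as the paper's: tail-integration identity, union bound against the sub-exponential tail (producing the factor $i^{-t^{\alpha/p}/K}$), a cutoff ensuring the exponent exceeds $2$ so the sum converges, and a finite Gamma-type integral for the tail; the consequence then follows by pulling out $(1+\log n)^{p/\alpha}$. If anything you are a bit more careful than the paper, which reduces to $\alpha=1$ and has a few cosmetic typos (e.g.\ the cutoff is written as $2^p$ rather than the $K$-dependent $(2K)^{p}$ that your choice of $M$ correctly accounts for).
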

\begin{proof}
    Argument below is provided for \(\alpha = 1\). This can be extended to \(\alpha \neq 1\) by noting that if \(\Pr(|X_i| \geq t) \leq 2\exp(-t^\alpha /K)\) for some \(\alpha > 0\) then \(\Pr(|X_i|^{\alpha} \geq t) \leq 2\exp(-t/K)\).
    \begin{align*}
       \E\max_{i \leq n} \frac{|X_i|^p}{(1 + \log i)^p}
       &= \int_{0}^{\infty} \Pr\bigg(\max_i \frac{|X_i|^p}{(1 + \log i)^p} > t\bigg)\,dt \\  
       &= \int_{0}^{2^{p/\alpha}} \Pr\bigg(\max_i \frac{|X_i|^p}{(1 + \log i)^p} > t\bigg)\,dt + \int_{2^{p/\alpha}}^{\infty} \Pr\bigg(\max_i \frac{|X_i|^p}{(1 + \log i)^p} > t\bigg)\,dt \\ 
       &\leq 2^p + \int_{2^{p/\alpha}}^{\infty} \sum_{i=1}^n \Pr\bigg(\frac{|X_i|}{1 + \log i}> t^{1/p}\bigg)\,dt \\ 
       &\leq 2^p + \int_{2^p}^{\infty} \sum_{i=1}^n 2\exp\bigg(-\frac{t^{1/p}(1 + \log i)}{K}\bigg)\,dt \\ 
       &= 2^p + 2\sum_{i=1}^n \int_{2^p}^{\infty}  \exp\bigg(-\frac{t^{1/p}}{K}\bigg)i^{-t^{1/p}}\,dt \\ 
       &\leq 2^p+ 2\sum_{i=1}^n \int_{2^p}^{\infty}\exp(-t^{-1/p}/K)i^{-2}\,dt \\
       &\leq 2^p+ 2\bigg(\sum_{i=1}^n i^{-2}\bigg)\bigg( \int_{2^p}^{\infty}\exp(-t^{-1/p}/K)\,dt \bigg)
   \end{align*}
   Both the integral and the summation are bounded, which gives the result.
\end{proof}

\subsection{Matrix Lemmas}
\label{subsec:matrix-lemmas}

\begin{lemma}[]
    \label{lemma:eigenvalues}
    Given a matrix \(M\) and a matrix \(P\) of full rank, the matrix \(M\) and the matrix \(P^{-1}MP\) have the same eigenvalues.
\end{lemma}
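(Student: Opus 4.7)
The plan is to prove this classical similarity-invariance of eigenvalues by showing that $M$ and $P^{-1}MP$ have identical characteristic polynomials. Since the eigenvalues are exactly the roots of the characteristic polynomial (with multiplicity), this suffices. The key algebraic identity I will exploit is that for any scalar $\lambda$, we can rewrite
\[
    P^{-1}MP - \lambda I = P^{-1}MP - \lambda P^{-1}P = P^{-1}(M - \lambda I)P,
\]
which is valid precisely because $P$ is of full rank and therefore invertible.

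The first step would be to invoke multiplicativity of the determinant: $\det(AB) = \det(A)\det(B)$ for square matrices $A,B$. Applying this twice to the factorization above gives
\[
    \det(P^{-1}MP - \lambda I) = \det(P^{-1})\det(M - \lambda I)\det(P).
\]
Since $\det(P^{-1})\det(P) = \det(P^{-1}P) = \det(I) = 1$, this collapses to $\det(P^{-1}MP - \lambda I) = \det(M - \lambda I)$. The second step is then to observe that this equality of the characteristic polynomials, viewed as polynomial identities in $\lambda$, forces their root sets (counted with algebraic multiplicity) to coincide, which is precisely the statement that the eigenvalues of $M$ and $P^{-1}MP$ agree.

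There is no real obstacle here; the only thing to be careful about is to explicitly appeal to the invertibility of $P$ (guaranteed by the full-rank hypothesis) both when inserting $P^{-1}P$ inside the parentheses and when asserting $\det(P)\neq 0$ so that $\det(P^{-1}) = \det(P)^{-1}$ is well defined. As a brief alternative one could give a direct eigenvector argument: if $Mv = \lambda v$ with $v\neq 0$, then $(P^{-1}MP)(P^{-1}v) = P^{-1}Mv = \lambda(P^{-1}v)$, and $P^{-1}v \neq 0$ since $P^{-1}$ is invertible, so $\lambda$ is an eigenvalue of $P^{-1}MP$; the reverse inclusion follows by symmetry applied with $P^{-1}$ in place of $P$. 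I would include the characteristic-polynomial version as the main proof since it additionally captures multiplicities without extra work.
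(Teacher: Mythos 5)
Your proof is correct, but your primary argument takes a genuinely different route from the paper's. The paper proves this via the direct eigenvector correspondence you mention only as a brief alternative: it observes that if $P^{-1}MPv = \lambda v$ then $M(Pv) = \lambda Pv$, so $Pv$ is an eigenvector of $M$ for the same eigenvalue, and conversely that $P^{-1}v$ serves as an eigenvector of $P^{-1}MP$ whenever $v$ is an eigenvector of $M$. Your main argument instead works at the level of characteristic polynomials, using $P^{-1}MP - \lambda I = P^{-1}(M-\lambda I)P$ and multiplicativity of the determinant. The two approaches buy slightly different things: the eigenvector bijection is shorter and also exhibits an explicit map between eigenspaces, which gives equality of \emph{geometric} multiplicities directly; your characteristic-polynomial argument yields equality of \emph{algebraic} multiplicities without any extra work, which the paper's argument does not literally establish (though it is never needed there). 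One minor remark on your exposition: you note that $P^{-1}v\neq 0$ "since $P^{-1}$ is invertible" — it would be cleaner to say simply that $P^{-1}$ is injective, and analogously in the paper's direction $Pv\neq 0$ because $P$ has trivial kernel. Both proofs are complete and correct; yours is the more informative of the two.
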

\begin{proof}
    Suppose \(\lambda\) is a eigenvalue of \(P^{-1}MP\) with eigenvector \(p\). Then
    \[
        P^{-1}MPv = \lambda v \implies M(Pv) = \lambda Pv
    \]
    Hence \(Pv\) is an eigenvector of \(M\) with eigenvalue \(\lambda\). Similarly, given an eigenvector \(v\) of \(M\), it can be shown that \(P^{-1}v\) is an eigenvector of \(P^{-1}MP\);
    \[
        P^{-1}MP(P^{-1}v) = P^{-1}Mv = \lambda P^{-1}v
    \]
\end{proof}

\begin{lemma}[]
    \label{lemma:product-eigenvalues}
    Let \(A \in \SR^{n\times n}\) and \(B\in \SR^{n\times n}\) be real symmetric positive semidefinite matrices. For an arbitary square matrix \(M\) let \(\lambda_k(M)\) denote the \(k^\text{\tiny th}\) largest eigenvalue of \(M\). Then  for any \(k = 1,\dots,n\):
    \[
        \lambda_k(A)\lambda_n(B) \leq \lambda_k(AB) \leq \lambda_k(A)\lambda_1(B)
    \]
\end{lemma}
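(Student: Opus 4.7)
The plan is to reduce the statement about eigenvalues of the (generally non-symmetric) product $AB$ to eigenvalues of a symmetric matrix, and then apply the Courant–Fischer min-max characterization.

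First I would show that $AB$ and $A^{1/2}BA^{1/2}$ have the same eigenvalues (with multiplicity). When $A$ is positive definite this is immediate from the similarity $A^{-1/2}(AB)A^{1/2} = A^{1/2}BA^{1/2}$, and \Cref{lemma:eigenvalues} applies. For singular PSD $A$, I would either appeal to the standard fact that $XY$ and $YX$ share the same nonzero eigenvalues (taking $X = A^{1/2}$ and $Y = A^{1/2}B$), or take a perturbation $A_\varepsilon = A + \varepsilon I$, apply the nonsingular argument, and let $\varepsilon \downarrow 0$ using continuity of eigenvalues in the matrix entries. Either way, it suffices to bound $\lambda_k(A^{1/2}BA^{1/2})$.

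Next, since $B$ is PSD, the matrix $M := A^{1/2}BA^{1/2}$ is symmetric and PSD, so Courant–Fischer gives
\[
    \lambda_k(M) = \max_{\substack{V \subseteq \SR^n \\ \dim V = k}} \min_{\substack{x \in V \\ \|x\| = 1}} x^\top A^{1/2} B A^{1/2} x.
\]
Writing $y = A^{1/2} x$ and using that $\lambda_n(B)\|y\|^2 \leq y^\top B y \leq \lambda_1(B)\|y\|^2$, I obtain, for every $x$,
\[
    \lambda_n(B)\, x^\top A x \;\leq\; x^\top A^{1/2} B A^{1/2} x \;\leq\; \lambda_1(B)\, x^\top A x.
\]
Plugging this sandwich into the min-max expression and again invoking Courant–Fischer applied to $A$ yields
\[
    \lambda_n(B)\lambda_k(A) \;\leq\; \lambda_k(A^{1/2}BA^{1/2}) \;\leq\; \lambda_1(B)\lambda_k(A),
\]
which combined with the first step gives the desired inequality.

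The only mildly delicate step is the reduction from $AB$ to $A^{1/2}BA^{1/2}$ when $A$ is not strictly positive definite, since the similarity argument breaks down; I expect to handle this via the $XY$/$YX$ eigenvalue identity or a one-line continuity argument, so it is a minor obstacle rather than a substantive one. Everything else is a direct application of Courant–Fischer together with the substitution $y = A^{1/2}x$.
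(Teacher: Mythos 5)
The paper states \Cref{lemma:product-eigenvalues} without proof, treating it as a standard matrix-analysis fact (a multiplicative Weyl/Ostrowski-type inequality), so there is no ``paper's own proof'' to compare against. Your argument is correct and is the standard one: reduce $AB$ to the symmetric matrix $A^{1/2}BA^{1/2}$, then sandwich $x'A^{1/2}BA^{1/2}x$ between $\lambda_n(B)\,x'Ax$ and $\lambda_1(B)\,x'Ax$ and push this through the Courant--Fischer min-max formula. It is worth noting that this is exactly the technique the paper \emph{does} spell out for the neighboring \Cref{lemma:quadratic-form-eigenvalues} (substitute $\va = a'H$, bound the inner quadratic form termwise, then invoke Courant--Fischer), so your proof is stylistically consistent with the paper's own treatment of the adjacent result.

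One small refinement to the step you flag as ``mildly delicate'': since $X = A^{1/2}$ and $Y = A^{1/2}B$ are both square $n\times n$ matrices, $XY = AB$ and $YX = A^{1/2}BA^{1/2}$ actually have identical characteristic polynomials, so they share the \emph{entire} spectrum with multiplicities, not merely the nonzero eigenvalues. This means no continuity or perturbation argument is needed in the singular case, and the ordered eigenvalue lists agree exactly --- which is what the conclusion requires for every $k$, including indices past the rank of $A$. Also note that when $\lambda_n(B) = 0$ the lower bound is vacuous since $\lambda_k(AB) \geq 0$ anyway (as $A^{1/2}BA^{1/2}$ is PSD), so no sign issues arise from the multiplication by $\lambda_n(B)$ in the min-max bound.
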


\begin{lemma}[]
    \label{lemma:quadratic-form-eigenvalues}
    Let \(D \in \SR^{n \times n}\) be a diagonal real matrix such that \(d_{ii} \in [u, U]\) for all \(i =1 ,\dots, n\). Let \(A \in \SR^{n\times n}\) be a symmetric real matrix. For an arbitrary square matrix \(M\), let \(\lambda_k(M)\) denote the \(k^\text{\tiny th}\) largest eigenvalue of \(M\). Then for any \(k = 1,\dots,n\):
    \[
        u\lambda_k(A^2) \leq \lambda_k(ADA) \leq U \lambda_k(A^2)
    \]
\end{lemma}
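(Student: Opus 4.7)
The plan is to establish the two-sided Loewner ordering $u A^2 \preceq ADA \preceq U A^2$ and then deduce the eigenvalue inequalities from the standard monotonicity of eigenvalues under the Loewner order. The key observation is that, because $D$ is diagonal, both $D - uI$ and $UI - D$ are diagonal matrices whose entries $d_{ii} - u$ and $U - d_{ii}$ are nonnegative (by the hypothesis $d_{ii}\in[u,U]$), and hence both are positive semidefinite. Conjugation by the symmetric matrix $A$ preserves positive semidefiniteness, since for any $v\in\SR^n$,
\[
v'A(D-uI)A v = (Av)'(D-uI)(Av) \geq 0,
\]
and analogously with $UI - D$ in place of $D - uI$. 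Writing these as $ADA - uA^2 = A(D-uI)A$ and $UA^2 - ADA = A(UI-D)A$, one obtains $uA^2 \preceq ADA \preceq UA^2$.

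First I would note that $ADA$ is symmetric, so its eigenvalues are real and its ordered list $\lambda_1(ADA)\geq\dots\geq\lambda_n(ADA)$ is well defined. Next I would invoke Weyl's monotonicity theorem for symmetric matrices: if $M_1 \succeq M_2$, then $\lambda_k(M_1) \geq \lambda_k(M_2)$ for every $k$. This can be read off immediately from the Courant--Fischer variational characterization
\[
\lambda_k(M) = \max_{\substack{V\subseteq\SR^n\\ \dim V = k}} \min_{\substack{v\in V\\ v\neq 0}} \frac{v'Mv}{v'v},
\]
since $M_1 \succeq M_2$ means $v'M_1 v \geq v'M_2 v$ for every $v$. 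Applying this to the two Loewner inequalities gives $\lambda_k(uA^2)\leq \lambda_k(ADA)\leq \lambda_k(UA^2)$ for each $k$.

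To conclude, I would use the scalar homogeneity of eigenvalues. In the intended application (e.g.\ in the proof of \Cref{lemma:denom-anti-concentration}), the constants $u,U$ are nonnegative variance-type bounds, so $\lambda_k(cA^2) = c\lambda_k(A^2)$ for $c\in\{u,U\}$, yielding $u\lambda_k(A^2)\leq \lambda_k(ADA)\leq U\lambda_k(A^2)$ directly. If one wants to allow negative constants, it suffices to note that $\lambda_k(uA^2) = u\lambda_{n-k+1}(A^2)$ when $u<0$, and since $\lambda_{n-k+1}(A^2)\leq \lambda_k(A^2)$ with $A^2\succeq 0$, multiplication by the negative scalar reverses the inequality to give $u\lambda_{n-k+1}(A^2)\geq u\lambda_k(A^2)$; the corresponding argument applies to $U$. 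There is no substantive obstacle in this proof --- the entire argument is a direct consequence of factoring $ADA - cA^2 = A(D-cI)A$ and invoking Weyl monotonicity, and the only minor subtlety is tracking how scalar multiplication interacts with the ordering of eigenvalues, which is handled cleanly by the observation above.
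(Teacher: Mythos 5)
Your main argument is correct and is essentially the same as the paper's. The paper shows pointwise that for every $a$, $a'ADAa = \sum_i d_{ii}(a'A)_i^2$ is sandwiched between $u\,a'A^2a$ and $U\,a'A^2a$, and then invokes Courant--Fischer; you package the same pointwise sandwich as the Loewner ordering $uA^2 \preceq ADA \preceq UA^2$ via the factorizations $ADA - uA^2 = A(D-uI)A$ and $UA^2 - ADA = A(UI-D)A$, and then invoke Weyl monotonicity (which is itself a corollary of Courant--Fischer) together with $\lambda_k(cA^2) = c\lambda_k(A^2)$ for $c \geq 0$. Same decomposition, same key principle; yours is slightly more explicit about where $u \geq 0$ enters.

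The aside about negative constants, however, is incorrect and should be dropped. You claim $\lambda_{n-k+1}(A^2) \leq \lambda_k(A^2)$, but for eigenvalues sorted in decreasing order this holds only when $n-k+1 \geq k$, i.e.\ $k \leq (n+1)/2$; for $k > (n+1)/2$ the inequality reverses, and your chain of reasoning breaks. More to the point, the lemma is simply false when $u < 0$: take $n=2$, $A = \mathrm{diag}(1,0)$, $D = \mathrm{diag}(-1,1)$, so $u=-1$, $U=1$. Then $ADA = \mathrm{diag}(-1,0)$, $A^2 = \mathrm{diag}(1,0)$, and for $k=2$ one has $\lambda_2(ADA) = -1$ while $u\lambda_2(A^2) = (-1)\cdot 0 = 0$, violating the claimed lower bound. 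So no patch can rescue the statement for negative $u$; the right move (which the paper makes silently, and which is harmless in every application since the $d_{ii}$ are conditional second moments bounded below by $c^{-1}$) is to assume $u \geq 0$. Your core argument already does exactly this --- it only uses scalar homogeneity for nonnegative $c$ --- so just delete the last part of the final paragraph.
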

\begin{proof}
    Consider any vector \(a \in \SR^n\) and define \(\va = a'H\). Then
    \begin{align*}
        \alpha'HDH\alpha 
        = \va'D\va =  \sum_{i=1}^n d_{ii}(\va_i)^2 
        &\in \bigg[u \sum_{i=1}^n (\va_i)^2,\, U \sum_{i=1}^n (\va_i)^2 \bigg] \\ 
        &= \bigg[u \times a'H^2a,\, U \times a'H^2a\bigg]
    \end{align*}
    The result then follows from an application of Courant-Fischer-Weyl min-max principle.
\end{proof}

\begin{lemma}[]
    \label{lemma:anticoncentration-sum}
    Let \(X_1,\dots,X_n\) denote i.i.d standard normal random variables and \(a_1,\dots,a_n\) denote weakly positive constants. Then
    \[
        \Pr \left(\sum_{i=1}^n a_i X_i^2 \leq \eps \sum_{i=1}^n a_i\right) \leq \sqrt{e\eps}
    \]
\end{lemma}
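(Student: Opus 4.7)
The plan is to apply a Chernoff-type bound to the lower tail of $S \coloneqq \sum_{i=1}^n a_iX_i^2$, then to reduce the resulting product over the weights $a_1,\dots,a_n$ to the single-weight case using concavity. Write $A \coloneqq \sum_{i=1}^n a_i$. The bound is trivial whenever $\eps \geq 1$ (the right-hand side exceeds one), so we only need to handle $\eps \in (0,1)$. For such $\eps$, Markov's inequality applied to the nonnegative random variable $e^{-tS}$ gives, for every $t > 0$,
\[
   \Pr(S \leq \eps A) = \Pr(e^{-tS} \geq e^{-t\eps A}) \leq e^{t\eps A}\,\E[e^{-tS}].
\]

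The next step is to compute and simplify $\E[e^{-tS}]$. Independence of the $X_i$ together with the standard identity $\E[e^{-s X^2}] = (1+2s)^{-1/2}$ for $X \sim N(0,1)$ and $s \geq 0$ yields
\[
   \E[e^{-tS}] = \prod_{i=1}^n (1 + 2ta_i)^{-1/2}, \quad\text{so}\quad -\log \E[e^{-tS}] = \tfrac{1}{2}\sum_{i=1}^n \log(1+2ta_i).
\]
The obstacle here is that the $a_i$'s may be very unequal, so a direct termwise bound is not tight enough. The key observation is that the map $x \mapsto \log(1+2tx)$ is concave on $[0,\infty)$ and vanishes at $0$; any such function is subadditive, and inductively $\sum_{i=1}^n \log(1+2ta_i) \geq \log(1+2tA)$. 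Substituting back produces the clean estimate $\E[e^{-tS}] \leq (1+2tA)^{-1/2}$, which eliminates the dependence on the individual weights.

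Combining the two displays, for every $t > 0$,
\[
   \Pr(S \leq \eps A) \leq \frac{e^{t\eps A}}{\sqrt{1+2tA}}.
\]
It remains to optimize the right-hand side over $t$. Setting $u = 2tA$ the bound becomes $e^{u\eps/2}(1+u)^{-1/2}$; taking a log-derivative and setting it to zero gives $1+u = 1/\eps$, which is positive since $\eps < 1$. Plugging $u = (1-\eps)/\eps$ back yields
\[
   \Pr(S \leq \eps A) \leq e^{(1-\eps)/2}\sqrt{\eps} \leq e^{1/2}\sqrt{\eps} = \sqrt{e\eps},
\]
as claimed. The whole argument is essentially one application of Chernoff and one application of concavity/subadditivity; the subadditivity step is the only nonroutine part and is precisely what lets the weights $a_i$ be arbitrary.
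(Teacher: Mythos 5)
Your proof is correct, and it is the natural argument; the paper in fact states this lemma without proof, so there is nothing to compare against on the paper's side. The Chernoff step $\Pr(S\leq\eps A)\leq e^{t\eps A}\,\E[e^{-tS}]$ and the exact computation $\E[e^{-tS}]=\prod_i(1+2ta_i)^{-1/2}$ are standard. The reduction to the single-weight case via subadditivity of the concave map $x\mapsto\log(1+2tx)$ (which vanishes at $0$, hence is subadditive on $[0,\infty)$, hence $\log(1+2tA)\leq\sum_i\log(1+2ta_i)$) is exactly the right device and correctly yields $\E[e^{-tS}]\leq(1+2tA)^{-1/2}$. Optimizing $e^{u\eps/2}(1+u)^{-1/2}$ over $u=2tA>0$ gives $u^\star=(1-\eps)/\eps$, valid for $\eps\in(0,1)$, and the final relaxation $e^{(1-\eps)/2}\leq e^{1/2}$ closes the bound. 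One minor remark: your argument (and the statement itself) implicitly requires $A=\sum_i a_i>0$ so that a strictly positive $t$ can be chosen; if all $a_i=0$ the left-hand side is $1$ and the stated inequality fails for small $\eps$, so the lemma should be read with at least one $a_i>0$, which is of course how it is used.
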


\subsection{Miscellaneous Lemmas}
\label{subsec:misc-lemmas}

\begin{lemma}[]
    \label{lemma:ratio-bound}
    Let \(a_1,\dots,a_n\) and \(b_1,\dots,b_n\) be two sequences of real numbers. If \(a_i \leq Ub_i\) for some \(U > 0\), then \(\sum_i a_i/\sum_i b_i \leq U\). Conversely if \(a_i \geq L b_i\) for some \(L > 0\) then \(\sum_i a_i/\sum_i b_i \geq L\).
\end{lemma}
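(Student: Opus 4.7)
The plan is essentially a one-line observation: both claims follow from summing the pointwise inequalities and dividing through by $\sum_i b_i$. The only implicit hypothesis that needs to be made explicit is that $\sum_i b_i > 0$ (and indeed, throughout the applications of this lemma in the body, $b_i$ plays the role of an eigenvalue or squared quantity that is nonnegative, with at least one strictly positive term, so this is harmless).

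Concretely, for the first statement, I would start from the hypothesis $a_i \leq U b_i$ for every $i \in [n]$ and sum across $i$, producing $\sum_{i=1}^n a_i \leq U \sum_{i=1}^n b_i$. Because $U > 0$ and (implicitly) $b_i \geq 0$ with $\sum_i b_i > 0$, dividing both sides by $\sum_i b_i$ preserves the inequality and yields $\sum_i a_i / \sum_i b_i \leq U$. The second statement follows by the symmetric argument: sum the pointwise inequalities $a_i \geq L b_i$ to get $\sum_i a_i \geq L \sum_i b_i$ and again divide by $\sum_i b_i > 0$.

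There is no substantive obstacle here; the only item worth being explicit about in the write-up is the sign convention on the $b_i$'s that guarantees $\sum_i b_i > 0$ so that the division step is legitimate and direction-preserving. Because the lemma is used in contexts such as \Cref{lemma:quadratic-form-eigenvalues} where the $b_i$ are $(\va_i)^2$ or eigenvalues of positive semidefinite matrices, this condition is automatic at each invocation and does not require additional argument.
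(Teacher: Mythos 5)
Your proposal is correct and follows essentially the same route as the paper, which simply substitutes the pointwise bounds $a_i\leq Ub_i$ (resp.\ $a_i\geq Lb_i$) into the sums and divides. Your explicit note that $\sum_i b_i>0$ is needed for the division step to be legitimate is a reasonable clarification that the paper leaves implicit.
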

\begin{proof}
    Replace \(a_i \leq Ub_i\) for the upper bound and \(a_i \geq Lb_i\) for the lower bound.
\end{proof}

The following is a standard bound, but it is used a lot so it is restated here.
\begin{lemma}[]
    \label{lemma:sum-bound}
    Let \(a_1,\dots,a_m\) be constants and \(p > 1\). Then 
    \[
        |a_1 + \dots a_m|^p \leq m^{p-1}\sum_{i=1}^m |a_i|^p
    \]
\end{lemma}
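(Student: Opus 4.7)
The plan is to prove this standard power-mean inequality via a one-line application of Jensen's inequality, since the function $x \mapsto |x|^p$ is convex on $\mathbb{R}$ whenever $p \geq 1$.

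First, I would apply the triangle inequality to reduce to the case of nonnegative arguments, writing $|a_1 + \dots + a_m| \leq |a_1| + \dots + |a_m|$ and then raising both sides to the $p$-th power (monotonicity on $[0,\infty)$ preserves the inequality since $p > 0$). This reduces the claim to showing $(\sum_{i=1}^m |a_i|)^p \leq m^{p-1}\sum_{i=1}^m |a_i|^p$.

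Next, I would apply Jensen's inequality to the convex function $\varphi(x) = x^p$ (convex on $[0,\infty)$ for $p > 1$) with uniform weights $1/m$ on the points $|a_1|,\dots,|a_m|$:
\[
    \left(\frac{1}{m}\sum_{i=1}^m |a_i|\right)^p \leq \frac{1}{m}\sum_{i=1}^m |a_i|^p.
\]
Multiplying both sides by $m^p$ yields $(\sum_{i=1}^m |a_i|)^p \leq m^{p-1}\sum_{i=1}^m |a_i|^p$, which combined with the triangle-inequality step gives the stated result.

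There is no real obstacle here; the only thing to double-check is that the hypothesis $p > 1$ (rather than $p \geq 1$) is sufficient for the convexity step, which it plainly is. An alternative route, if one prefers to avoid invoking Jensen, is to use Hölder's inequality with conjugate exponents $p$ and $q = p/(p-1)$: writing $\sum |a_i| = \sum 1 \cdot |a_i|$ and applying Hölder gives $\sum |a_i| \leq m^{1/q}(\sum |a_i|^p)^{1/p}$, and raising to the $p$-th power yields the same bound since $p/q = p - 1$. Either proof fits in a few lines.
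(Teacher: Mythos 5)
Your proof is correct. Your primary route (triangle inequality followed by Jensen applied to the convex map $x \mapsto x^p$ with uniform weights) is a perfectly valid and slightly more elementary alternative to the paper's one-line argument, which applies H\"older's inequality with exponents $p$ and $p/(p-1)$ to the vectors $(a_1,\dots,a_m)$ and $(1,\dots,1)$. The two routes are essentially equivalent in content — both are convexity arguments, and indeed H\"older with a constant second factor reduces to exactly this Jensen estimate — and you explicitly note the H\"older variant at the end, which coincides with what the paper does. Neither route has an advantage over the other here; the only thing worth flagging is that your Jensen step tacitly uses convexity only on $[0,\infty)$, which you have already arranged for via the triangle-inequality reduction, so there is no gap.
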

\begin{proof}
    Apply Hölder's inequality with \(\frac{1}{p} + \frac{p-1}{p} = 1\) to the vectors \((a_1,\dots,a_m)\in \SR^m\) and \((1,\dots,1)\in \SR^m\)
\end{proof} 

\section{Assorted Results from Literature}
\label{sec:literature-results}

\subsection{Concentration Inequalities and Tail Bounds}
\begin{theorem}[\cite{Gotze-subexponential-polynomial}*{Theorem 1.2}]
    \label{thm:hanson-wright}
    Let \(X_1,\dots,X_n\) be independent random variables satisfying \(\|X_i\|_{\Psi_a}\leq M\) for some \(a \in (0,1] \cup \{2\}\) and let \(f:\SR^n\to\SR\) be a polynomial of total degree \(D \in \SN\). Then for all \(t > 0\);
    \[
        \Pr(|f(X) - \E[f(X)]| \geq t) \leq 2\exp\bigg(- \frac{1}{C_{D,a}}\min_{1 \leq d \leq D}\bigg(\frac{t}{M^d\|\E f^{(d)}(X)\|_{\text{HS}}}\bigg)^{a/d}\bigg)
    \]
    In particular, if \(\|\E f^{(d)}(X)\|_{\text{HS}} \leq 1\) for \(d = 1,\dots D\), then 
    \[
        \E\exp\bigg(\frac{C_{D,a}}{M^a}|f(X)|^{\frac{a}{D}}\bigg) \leq 2,
    \]
    or equivalently
    \[
        \|f(X)\|_{\Psi_{\frac{a}{D}}} \leq C_{d,a}M^D
    \]
\end{theorem}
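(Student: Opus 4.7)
My plan is to treat this as the polynomial Hanson-Wright inequality for $\alpha$-sub-exponential random variables, and to attack it by induction on the total degree $D$, combining moment-growth bounds for polynomial chaos with the decoupling/symmetrization machinery that has become standard in this area. The key invariant to carry through the induction is a bound of the form $\|f(X) - \mathbb{E} f(X)\|_p \leq C_{D,a} \sum_{d=1}^D M^d \|\mathbb{E} f^{(d)}(X)\|_{\mathrm{HS}} \, p^{d/a}$ for all $p \geq 1$. Once such a moment bound is in hand, the tail estimate in the theorem statement follows by Chebyshev applied to $|f(X) - \mathbb{E} f(X)|^p$ and optimizing in $p$, which is precisely what produces the $\min_d$ structure with exponents $a/d$.

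For the base case $D = 1$, the statement reduces to a Bernstein-type inequality for a linear form $\sum_i a_i X_i$ in independent $\alpha$-sub-exponential variables; here $\|\mathbb{E} f^{(1)}(X)\|_{\mathrm{HS}} = \|a\|_2$ and the moment bound $\|\sum_i a_i (X_i - \mathbb{E} X_i)\|_p \lesssim \|a\|_2 \sqrt{p} + M \|a\|_\infty p^{1/a}$ is classical (see, e.g., Adamczak, or Sambale's notes). For the inductive step I would symmetrize so that the centered variables become mean-zero, then decouple the degree-$D$ chaos into a product form $\sum f(X_{i_1}^{(1)}, \ldots, X_{i_D}^{(D)})$ over independent copies via the Bourgain-de la Peña decoupling inequality. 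Conditional on all but one block of variables, the inner expression becomes a degree-$(D-1)$ chaos whose $\mathrm{HS}$-norms are precisely the partial derivatives of $f$; applying the inductive hypothesis conditionally and integrating out produces the required contribution from $\|\mathbb{E} f^{(d)}(X)\|_{\mathrm{HS}}$ for $d < D$.

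The top-order term, involving $\|\mathbb{E} f^{(D)}(X)\|_{\mathrm{HS}} \, p^{D/a}$, is the one that forces the worst exponent and must be handled directly. Here I would bound the fully-decoupled $D$-linear chaos via the standard entropy / Latała chaining approach for products of independent heavy-tailed variables: one peels off one coordinate at a time, each time paying a factor of $M p^{1/a}$ because each $X_i$ contributes an Orlicz norm at the $\psi_a$ scale, and aggregating the remaining coefficients in $\ell_2$ yields the $\mathrm{HS}$-norm. Iterating $D$ times gives the $M^D p^{D/a}$ growth.

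The main obstacle, and the reason the constant $C_{D,a}$ has the complicated form it does, is controlling how the exponent $a$ degrades under multiplication: a product of two $\psi_a$ variables is only $\psi_{a/2}$, so iterating the derivative bound naively loses a factor of $a/d$ at each level rather than keeping $a$ clean. Bookkeeping this degradation carefully, and showing it matches exactly the $a/d$ appearing in the $\min$, is where the proof is genuinely delicate; the second display (the Orlicz-norm form $\|f(X)\|_{\psi_{a/D}} \leq C_{D,a} M^D$ under normalized $\mathrm{HS}$-norms) is then immediate from Markov applied to $\exp(c |f(X)|^{a/D})$, so I would derive it as a direct corollary rather than a separate argument.
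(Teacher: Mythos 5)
This is a statement the paper \emph{quotes}, not one it proves: it appears in the appendix ``Assorted Results from Literature'' with the bracketed tag \(\cite{Gotze-subexponential-polynomial}\) Theorem~1.2, and the paper simply imports it as a black box (it is used to bound quadratic forms in \(\alpha\)-sub-exponential variables in the proofs of Lemmas~\ref{lemma:estimation-error-ell1}, \ref{lemma:combination-error-negligible}, and \ref{lemma:multiple-ND-bounds}). There is no ``paper's own proof'' to compare against, and you do not need to reconstruct one to follow the paper.

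That said, as a sketch of how the cited theorem is actually proved in the literature, your outline is on the right track: the strategy of establishing an \(L^p\)-moment bound of the form \(\|f(X)-\E f(X)\|_p \lesssim \sum_{d=1}^D M^d \|\E f^{(d)}(X)\|_{\mathrm{HS}}\,p^{d/a}\) and then obtaining the tail by Markov and optimizing in \(p\) (which produces the \(\min_d(\cdot)^{a/d}\) exponent) is exactly the Latała/Adamczak--Wolff template that Götze, Sambale, and Sinulis adapt from the sub-Gaussian to the \(\alpha\)-sub-exponential setting. Two points in your sketch warrant care if you were to carry it out in full. First, for a general (non-multilinear) polynomial, you cannot decouple directly; you first need a Hoeffding-type chaos decomposition (or a Taylor-in-the-mean argument) to reduce to a sum of homogeneous multilinear forms whose coefficient tensors are the centered moments of the derivative tensors \(f^{(d)}\)---this is precisely why the statement is phrased in terms of \(\|\E f^{(d)}(X)\|_{\mathrm{HS}}\) rather than a single top-degree coefficient array. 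Second, for \(a<1\) the Orlicz ``norm'' \(\|\cdot\|_{\psi_a}\) is only a quasi-norm (the triangle inequality holds with a constant), so every aggregation step across the \(d\) orders costs an additional \(a\)-dependent factor; this is what drives the shape of \(C_{D,a}\) and is easy to lose track of in the bookkeeping you describe. Neither of these affects your use of the result in this paper, since it is simply cited.
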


\begin{theorem}[Hoeffding's Inequality]
    \label{thm:hoeffding}
    Let \(X_1,\dots,X_n\) be independent, mean-zero sub-gaussian random variables, and let \(a = (a_1,\dots,a_n)\in \SR^n\). Then, for every \(t \geq 0\), we have
    \[
        \Pr\bigg\{\bigg|\sum_{i=1}^n a_iX_i\bigg|\geq t\bigg\} \leq 2\exp\bigg(-\frac{ct^2}{K^2\|a\|_2^2}\bigg)
    \]
    where \(K = \max_i \|X_i\|_{\psi_2}\).
\end{theorem}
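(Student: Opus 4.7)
The plan is to proceed by the standard Chernoff–Cramér argument, which is the natural route for any Hoeffding-type inequality and uses only the sub-Gaussian moment generating function bound. First I would handle one tail, $\Pr\{\sum_i a_i X_i \geq t\}$, and then recover the absolute-value bound by applying the same argument to $-X_i$ and using a union bound, which is where the factor of $2$ in the conclusion comes from.

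For the one-sided bound, I would fix $\lambda > 0$ and apply Markov's inequality in exponential form:
\[
    \Pr\Big\{\sum_{i=1}^n a_i X_i \geq t\Big\} = \Pr\big\{e^{\lambda \sum_i a_i X_i} \geq e^{\lambda t}\big\} \leq e^{-\lambda t}\,\E\Big[e^{\lambda \sum_{i=1}^n a_i X_i}\Big].
\]
By independence the expectation factorizes as $\prod_{i=1}^n \E[e^{\lambda a_i X_i}]$. The crux is then a standard sub-Gaussian MGF lemma: if $\|X_i\|_{\psi_2} \leq K$ and $\E[X_i]=0$, there exists an absolute constant $C_0 > 0$ such that $\E[e^{s X_i}] \leq \exp(C_0 K^2 s^2)$ for every $s \in \SR$. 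This lemma itself is not hard — one expands $e^{sX_i}$, uses $\E[X_i^{2k}] \leq (2K^2)^k\, k!$ (which follows from the $\psi_2$ tail bound by integration), and rearranges the resulting series — but it is the one substantive ingredient that has to be invoked. Applying it with $s = \lambda a_i$ gives
\[
    \E\Big[e^{\lambda \sum_i a_i X_i}\Big] \leq \prod_{i=1}^n \exp\big(C_0 K^2 \lambda^2 a_i^2\big) = \exp\big(C_0 K^2 \lambda^2 \|a\|_2^2\big).
\]

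Plugging this back in yields $\Pr\{\sum_i a_i X_i \geq t\} \leq \exp(-\lambda t + C_0 K^2 \lambda^2 \|a\|_2^2)$. I would then minimize the right-hand side in $\lambda > 0$; the optimum is $\lambda^\star = t/(2 C_0 K^2 \|a\|_2^2)$, producing
\[
    \Pr\Big\{\sum_{i=1}^n a_i X_i \geq t\Big\} \leq \exp\!\Big(-\tfrac{t^2}{4 C_0 K^2 \|a\|_2^2}\Big).
\]
Setting $c = 1/(4 C_0)$, applying the identical argument to $(-X_i)_{i=1}^n$ (which are also mean-zero and sub-Gaussian with the same $\psi_2$ norm) to bound the lower tail, and taking a union bound delivers the stated two-sided inequality with the prefactor $2$.

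The main obstacle, to the extent there is one, is simply the sub-Gaussian MGF bound: one needs a clean statement relating $\|X\|_{\psi_2}$ to an absolute constant in front of $K^2 s^2$ in the exponent. This is essentially a change of the absolute constant in front of the exponent and does not affect the form of the final bound; the resulting constant $c$ in the theorem is therefore universal and depends only on this conversion. Everything else — Markov, independence, optimization in $\lambda$, and symmetrization — is routine.
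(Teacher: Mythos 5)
Your proposal is correct, and it is the standard Chernoff–Cram\'er proof of Hoeffding's inequality for sub-Gaussian random variables (as in, e.g., Vershynin's \emph{High-Dimensional Probability}, Theorem 2.6.2). Note, however, that there is no proof in the paper to compare against: \Cref{thm:hoeffding} appears in the ``Assorted Results from Literature'' appendix, where it is stated without proof as a known result, so your derivation simply supplies the standard argument that the paper takes for granted.
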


\begin{theorem}[Burkholder-Davis-Gurdy for Discrete Time Martingales]
    \label{thm:bdg}
    For any \(1 \leq k < \infty\)  there exist positive constants \(c_k\) and \(C_k\) such that for all local martingales with \(X_0 = 0\) and stopping times \(\tau\)
    \[
        c_k \E\big[\big(\sum_{t=1}^\tau (X_t - X_{t-1})^2\big)^{k/2}\big] \leq \E\big[(\sup_{t \leq \tau} X_t)^k\big] \leq C_k\E\big[\big(\sum_{t=1}^\tau (X_t - X_{t-1})^2\big)^{k/2}\big]
    \]
\end{theorem}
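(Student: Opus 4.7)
The plan is to reduce the two-sided inequality to a pair of distributional (good-lambda) comparisons between the maximal process $X^*_\tau := \sup_{t\leq\tau} |X_t|$ and the square function $S_\tau := (\sum_{t=1}^\tau (X_t - X_{t-1})^2)^{1/2}$, and then recover $L^k$ bounds by integrating against the measure $k\lambda^{k-1}\,d\lambda$. Throughout one may assume $\tau$ bounded (otherwise apply the result to $\tau \wedge n$ and pass to the limit by monotone convergence) and $X$ bounded (otherwise localize via $\tau_m := \inf\{t : |X_t| \vee S_t \geq m\}$).

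For the range $k \geq 2$, the cleanest route is to prove Burkholder's square-function inequality $\E[|X_\tau|^k] \asymp \E[S_\tau^k]$ for martingale differences and then obtain the maximal version by Doob's $L^k$ maximal inequality, which gives $\E[(X^*_\tau)^k] \leq (k/(k-1))^k \E[|X_\tau|^k]$. The upper square-function bound $\E[|X_\tau|^k] \leq C_k \E[S_\tau^k]$ can be shown by conditional-expectation manipulation: writing $|X_\tau|^k$ via the identity $|x|^k = \int_0^{|x|} k s^{k-1}\,ds$ and using $\E[d_t \mid \mathcal{F}_{t-1}] = 0$ together with convexity of $x \mapsto |x|^k$ for $k \geq 2$, one arrives at a recursion that yields the bound. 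The reverse bound $\E[S_\tau^k] \leq c_k^{-1}\E[|X_\tau|^k]$ follows by duality, applying the upper bound to the martingale $\sum d_t \epsilon_t$ where $\epsilon_t$ is a Rademacher multiplier.

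For $1 \leq k < 2$, the Doob-reduction fails and one must use the good-lambda method of Burkholder, Davis, and Gundy directly. The key estimate to establish is that there exist constants $\beta > 1$ and $\delta \in (0,1)$ such that for every $\lambda > 0$,
\begin{equation*}
    \Pr\bigl(X^*_\tau > \beta\lambda,\; S_\tau \leq \delta\lambda\bigr) \leq \tfrac{1}{2}\Pr\bigl(X^*_\tau > \lambda\bigr),
\end{equation*}
and symmetrically with the roles of $X^*_\tau$ and $S_\tau$ reversed. The proof of this inequality proceeds by introducing the stopping times $\sigma := \inf\{t : |X_t| > \lambda\}$ and $\rho := \inf\{t : S_t > \delta\lambda\}$, decomposing on the event $\{\sigma < \infty, \rho \geq \sigma\}$, and bounding the conditional probability that the stopped martingale $X_{\cdot \wedge \rho} - X_\sigma$ jumps above $(\beta-1)\lambda$ via Chebyshev applied to its quadratic variation. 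Once the good-lambda inequality is in hand, multiplying by $k\lambda^{k-1}$ and integrating over $\lambda \in (0,\infty)$, followed by a change of variables, yields $\E[(X^*_\tau)^k] \leq C_k \E[S_\tau^k]$; the reverse inequality is obtained analogously.

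The main obstacle is the good-lambda inequality itself: the delicate step is controlling the increment of the martingale on $[\sigma,\rho]$ using only the quadratic-variation bound $S_\rho \leq \delta\lambda$, which must be done with care since no pointwise jump bound is available for general discrete local martingales. Once the good-lambda estimate is established, the remaining manipulations (integration against $\lambda^{k-1}$, absorbing a small multiple of $\E[(X^*_\tau)^k]$ into the left-hand side via the truncation argument, and removing the localization) are routine. Combining the $k \geq 2$ and $1 \leq k < 2$ arguments, together with the universal lower bound obtained by the same good-lambda method with $X^*$ and $S$ swapped, delivers both inequalities with explicit constants $c_k, C_k$ depending only on $k$.
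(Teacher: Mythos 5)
The paper does not prove this result; it is stated in the ``Assorted Results from Literature'' appendix without proof, as a known theorem used to bound moments of martingale sums in \Cref{lemma:martingale-bound} (where only the upper inequality is actually needed). So there is no proof in the paper to compare against; I can only comment on whether your sketch would in fact deliver the theorem.

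Your outline follows the standard Burkholder--Davis--Gundy program and identifies the right ingredients, but two of the claimed steps are not correct as stated. First, for $k\geq 2$, the claim that the reverse bound $\E[S_\tau^k]\leq c_k^{-1}\E[|X_\tau|^k]$ ``follows by duality, applying the upper bound to the martingale $\sum d_t\epsilon_t$'' does not go through: the randomized martingale $\sum\epsilon_t d_t$ has the \emph{same} square function as $X$, so applying the upper Burkholder bound to it only reproduces $\|\sum\epsilon_t d_t\|_k\lesssim\|S_\tau\|_k$, which goes in the wrong direction. To make the Khintchine/Rademacher route work you need to compare $\|\sum\epsilon_t d_t\|_k$ with $\|\sum d_t\|_k$ directly, i.e., the scalar UMD (martingale transform) inequality, which is essentially equivalent to the reverse square-function inequality and cannot be deduced from the upper bound alone. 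The honest route is to run the good-lambda argument with the roles of $X^*$ and $S$ swapped, for all $k$, not just $k<2$.

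Second, and more seriously, the good-lambda step you describe has a real gap for \emph{discrete} martingales. You correctly flag that ``no pointwise jump bound is available,'' but you do not say how to resolve it, and the Chebyshev-on-quadratic-variation argument you sketch does not resolve it on its own. The stopping time $\rho=\inf\{t:S_t>\delta\lambda\}$ only guarantees $S_{\rho-1}\leq\delta\lambda$; the single increment $d_\rho$ at time $\rho$ can be arbitrarily large relative to $\lambda$, so the event $\{X^*_\tau>\beta\lambda,\,S_\tau\leq\delta\lambda\}$ cannot be controlled purely by the quadratic variation up to $\rho$. The standard fix is the Davis decomposition: split $X=X^{(1)}+X^{(2)}$ where $X^{(1)}$ absorbs the large jumps (controlled by $d^*:=\max_t|d_t|$, which in turn is dominated by $S_\tau$) and $X^{(2)}$ has predictably small increments, then run good-lambda on $X^{(2)}$ and handle $X^{(1)}$ by a direct $L^1$ estimate. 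Without this (or an equivalent device such as passing through the conditional square function $s(X)$), the good-lambda inequality you state is simply not established, so the integration step that follows has nothing to integrate. With the Davis decomposition added and the $k\geq 2$ reverse direction rerouted through good-lambda, the remainder of your plan---truncation, integration against $k\lambda^{k-1}\,d\lambda$, and removal of the localization by monotone convergence---is routine and correct.
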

%
%
\subsection{Anticoncentration Bounds}
Let \(\xi \in \SR^n\) follow a normal distribution on \(\SR^n\)  with mean zero and covariance matrix \(\Sigma_\xi\). Order the eigenvalues of \(\Sigma_\xi\) in non-increasing order \(\lambda_{1\xi} \geq \lambda_{2\xi} \geq ... \geq \lambda_{n\xi}\). Define the quantities
\[
    \Lambda_{k\xi}^2 = \sum_{j=k}^\infty \lambda_{j\xi}^2,\;\; k =1,2
\]
\begin{theorem}[\cite{anticoncentration-bernoulli-gotze-et-al}, {Theorem 2.6}]
    \label{thm:quadratic-density-bound}
    Let \(\xi\) be a gaussian element with zero mean and covariance \(\Sigma_\xi\). Then it holds for any \(\va \in \SR^n\) that
    \[
        \sup_{x \geq 0} p_\xi(x,\va) \lesssim (\Lambda_{1\xi}\Lambda_{2\xi})^{-1/2}
    \]
    where \(p_\xi(x,a)\) denotes the p.df of \(\|\xi - \va\|^2\).     
\end{theorem}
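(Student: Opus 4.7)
The plan is to prove this bound on the density of a (noncentral) Gaussian quadratic form via a Fourier-analytic argument, reducing in three steps to a purely analytic integral bound on characteristic functions. I would first diagonalize the covariance and then use characteristic-function inversion to reduce $\sup_x p_\xi(x,\va)$ to an integral of the modulus of the characteristic function of a weighted noncentral chi-square.

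First I would write $\Sigma_\xi = U\Lambda U'$ with $\Lambda = \diag(\lambda_{1\xi},\ldots,\lambda_{n\xi})$, set $\eta = U'\xi \sim N(0,\Lambda)$ and $w = U'\va$, and observe that $\|\xi-\va\|^2 = \|\eta - w\|^2 = \sum_j \lambda_{j\xi}(Z_j - c_j)^2$ for iid standard normals $Z_j$ with $c_j = w_j/\sqrt{\lambda_{j\xi}}$ (on the nondegenerate part of the spectrum; the degenerate directions contribute a deterministic shift which only translates the density and does not affect $\sup_x p_\xi$). This reduces the problem to bounding the density of a noncentral generalized chi-square whose spectrum is $(\lambda_{1\xi},\ldots,\lambda_{n\xi})$.

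Next, I would invoke Fourier inversion, $p_\xi(x,\va) = (2\pi)^{-1}\int e^{-itx}\phi(t)\,dt$, to obtain the standard bound $\sup_x p_\xi(x,\va) \le (2\pi)^{-1}\int |\phi(t)|\,dt$, and compute
\[
\phi(t) = \prod_j (1 - 2it\lambda_{j\xi})^{-1/2}\exp\!\left(\frac{it c_j^2 \lambda_{j\xi}}{1 - 2it\lambda_{j\xi}}\right).
\]
Taking moduli, the determinantal factor contributes $\prod_j(1+4t^2\lambda_{j\xi}^2)^{-1/4}$, and the real part of the exponent equals $-\sum_j 2t^2 c_j^2\lambda_{j\xi}^2/(1+4t^2\lambda_{j\xi}^2) \le 0$. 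So noncentrality only shrinks $|\phi|$, and it suffices to prove
\[
\int_{-\infty}^\infty \prod_j (1+4t^2\lambda_{j\xi}^2)^{-1/4}\,dt \;\lesssim\; (\Lambda_{1\xi}\Lambda_{2\xi})^{-1/2}.
\]

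The main obstacle is this last integral bound: naive pairwise bounds that collapse the product to two factors give a decay of order $1/|t|$ at infinity, which is not integrable, so one must exploit that at least three eigenvalues contribute. My plan is to isolate the top eigenvalue $\lambda_{1\xi}$ and use the expansion $\prod_j(1+4t^2\lambda_{j\xi}^2) \ge 1 + 4t^2\Lambda_{1\xi}^2 + 8t^4\bigl(\Lambda_{1\xi}^4 - \sum_j\lambda_{j\xi}^4\bigr)$, combined with $\Lambda_{1\xi}^4 - \sum_j\lambda_{j\xi}^4 \ge \Lambda_{1\xi}^2\Lambda_{2\xi}^2$ (since $\sum_j\lambda_{j\xi}^4 \le \lambda_{1\xi}^2\Lambda_{1\xi}^2$). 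This yields $|\phi(t)| \le (1+4t^2\Lambda_{1\xi}^2 + 8t^4\Lambda_{1\xi}^2\Lambda_{2\xi}^2)^{-1/4}$. Splitting the integration at the crossover scale $|t| \asymp 1/\Lambda_{1\xi}$, the low-frequency part is bounded by $1/\Lambda_{1\xi}$ using the quadratic term, and the high-frequency part is bounded by $1/(\Lambda_{1\xi}\Lambda_{2\xi})^{1/2}$ using the quartic term; since $\Lambda_{2\xi}\le \Lambda_{1\xi}$, the second bound dominates and gives the claim. If this direct route proves too crude, a Plancherel-based backup is to note that $|\phi(t)|^2$ is the characteristic function of $Q - Q'$ for an independent copy, so $\int|\phi(t)|^2\,dt = 2\pi\int p_Q(x)^2 dx$, and interpolate via Cauchy-Schwarz against a weight $1/(1+t^2\Lambda_{1\xi}^2)^{1/2}$ to convert an $L^2$ bound into the needed $L^1$ bound.
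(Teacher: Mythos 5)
This statement is cited from \citet{anticoncentration-bernoulli-gotze-et-al} and the paper does not prove it, so there is no in-paper proof to compare against; I will assess your argument on its own terms. Your reductions are all correct: the diagonalization, the characteristic function $\phi(t) = \prod_j(1-2it\lambda_{j\xi})^{-1/2}\exp(it c_j^2\lambda_{j\xi}/(1-2it\lambda_{j\xi}))$, the observation that the noncentrality only shrinks $|\phi|$, and the algebra $\prod_j(1+4t^2\lambda_{j\xi}^2) \geq 1 + 4t^2\Lambda_{1\xi}^2 + 8t^4\Lambda_{1\xi}^2\Lambda_{2\xi}^2$ via $\sum_j\lambda_{j\xi}^4 \leq \lambda_{1\xi}^2\Lambda_{1\xi}^2$.

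The gap is in the final integral. Your bound gives $|\phi(t)| \leq (1+4t^2\Lambda_{1\xi}^2 + 8t^4\Lambda_{1\xi}^2\Lambda_{2\xi}^2)^{-1/4}$, which for $|t| \gtrsim 1/\Lambda_{2\xi}$ behaves like $(8t^4\Lambda_{1\xi}^2\Lambda_{2\xi}^2)^{-1/4} = 8^{-1/4}(\Lambda_{1\xi}\Lambda_{2\xi})^{-1/2}|t|^{-1}$. The resulting tail integral $\int^\infty |t|^{-1}\,dt$ diverges logarithmically, so your claim that ``the high-frequency part is bounded by $(\Lambda_{1\xi}\Lambda_{2\xi})^{-1/2}$'' is false. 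This cannot be repaired by retaining higher-order terms in the expansion: when exactly two eigenvalues are nonzero---a case in which the stated bound is nontrivial and correct---the \emph{exact} modulus $|\phi(t)| = (1+4t^2\lambda_{1\xi}^2)^{-1/4}(1+4t^2\lambda_{2\xi}^2)^{-1/4}$ itself decays as $|t|^{-1}$, so $\int|\phi| = \infty$. Hence the inequality $\sup_x p_\xi(x,\va) \leq (2\pi)^{-1}\int|\phi|$, which requires absolute integrability, simply does not apply, and any route that bounds the density by $\int|\phi|\,dt$ is structurally blocked. Your Plancherel backup has the same problem: with $w = (1+t^2\Lambda_{1\xi}^2)^{1/2}$ you have $\int w^{-1}\,dt = \infty$, so the Cauchy--Schwarz factorization does not close.

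To make a Fourier argument go through one must replace the $L^1$ inversion by a smoothing or truncation device (e.g., convolve with a kernel whose characteristic function has compact support, reducing the matter to $\int_{|t|\leq T}|\phi(t)|\,dt$ plus a separately controlled remainder), or use a non-Fourier argument that works directly with the density, such as conditioning and an explicit two-dimensional convolution bound. As written, the step ``the high-frequency part is bounded by $(\Lambda_{1\xi}\Lambda_{2\xi})^{-1/2}$'' is the gap.
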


We use the following anticoncentration lemma from \citet{Nazarov2003} noted in \citet{CCK-2018-HDCLT}.
\begin{lemma}
    \label{lemma:hyper-rectangle-anticoncentration}
    Let \(Y = (Y_1,\dots,Y_p)'\) be a centered Gaussian random vector in \(\SR^p\) such that \(\E[Y_j^2] \geq b\) for all \(j = 1,\dots,p\) and some constant \(b > 0\). Then for every \(y \in \SR^p\) and \(a > 0\),
    \[
        \Pr(Y \leq y + a) - \Pr(Y \leq y) \leq Ca\sqrt{\log(p)}
    \]
    where \(C\) is a constant only depending on \(b\).
\end{lemma}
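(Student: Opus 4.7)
The plan is to prove this special case of Nazarov's anticoncentration inequality via a reduction to the unit-variance case followed by a careful tail truncation. First I would rescale coordinates by letting $\tilde Y_j := Y_j/\sigma_j$ and $\tilde y_j := y_j/\sigma_j$, where $\sigma_j^2 = \E[Y_j^2] \geq b$. The uniform shift $a$ becomes a vector of shifts $\tilde a_j := a/\sigma_j \leq a/\sqrt{b}$, and it suffices to bound $\Pr(\tilde Y \leq \tilde y + \tilde a) - \Pr(\tilde Y \leq \tilde y)$ for a centered Gaussian $\tilde Y$ with unit marginal variances.

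Next I would express the increment as the Gaussian mass of the thin shell
\[
    S \;=\; \bigcup_{j=1}^p S_j, \qquad S_j := \{z : z \leq \tilde y + \tilde a,\; \tilde y_j < z_j \leq \tilde y_j + \tilde a_j\}.
\]
Conditioning on $\tilde Y_j = t$, Fubini yields $\Pr(\tilde Y \in S_j) = \int_{\tilde y_j}^{\tilde y_j + \tilde a_j} \varphi(t)\, \Pr(\tilde Y_{-j} \leq \tilde y_{-j} + \tilde a_{-j} \mid \tilde Y_j = t)\, dt$. A crude union bound over $j \in [p]$ then produces only the uninformative rate $p \cdot a/\sqrt{2\pi b}$, so the $\sqrt{\log p}$ factor must emerge from a sharper accounting.

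The main obstacle, and the heart of Nazarov's original argument, is to implement the following truncation. Split the index set based on the location of $\tilde y_j$: for indices with $\tilde y_j \gtrsim \sqrt{\log p}$, the marginal density $\varphi(\tilde y_j)$ is at most $O(1/p)$ and the aggregate contribution over such $j$ is $O(a)$; for indices with $\tilde y_j$ deep in the left tail, the conditional factor $\Pr(\tilde Y_{-j} \leq \tilde y_{-j} + \tilde a_{-j} \mid \tilde Y_j = t)$ is forced to be small because $\Pr(\tilde Y \leq \tilde y)$ itself cannot exceed $\Pr(\tilde Y_j \leq \tilde y_j)$; and for the remaining ``moderate'' indices, the number of effective contributions is $O(\sqrt{\log p})$ after accounting for the convex geometry of the orthant $\{z \leq \tilde y\}$. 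Summing the three regimes delivers the target bound $C a \sqrt{\log p}$ with $C$ depending only on $b$.

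A conceptually cleaner alternative route is to invoke Ehrhard's inequality, which implies that $y \mapsto \Phi^{-1}(\Pr(Y \leq y))$ is concave on $\SR^p$. The increment $\Pr(Y \leq y+a) - \Pr(Y \leq y)$ can then be bounded by the one-sided derivative of this concave function times $a$, and the $\sqrt{\log p}$ enters through the standard Gaussian tail estimate $\Phi^{-1}(1 - 1/p) \asymp \sqrt{\log p}$ applied at the extreme quantile where the derivative is largest. Whichever route one takes, the critical improvement from the trivial linear-in-$p$ bound to $\sqrt{\log p}$ is precisely what requires the deeper Gaussian-analytic input, and is where the proof proposal should concentrate its effort.
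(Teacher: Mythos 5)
This lemma is imported from the literature: the paper attributes it to \citet{Nazarov2003}, as restated in \citet{CCK-2018-HDCLT}, and supplies no proof of its own. So the real question is whether your sketch could be completed into a correct proof of Nazarov's inequality.

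Your rescaling, shell decomposition, and observation that the naive union bound yields only $pa/\sqrt{2\pi b}$ are all correct and match the opening moves of CCK's detailed proof. However, the three-regime split misattributes where the savings come from, and the one regime you treat as easy is exactly where the theorem lives. For $\tilde y_j$ deep in the left tail, the contribution from index $j$ is small again because the marginal density $\varphi(\tilde y_j)$ is exponentially small — not because the conditional factor $\Pr(\tilde Y_{-j} \le \tilde y_{-j} + \tilde a_{-j}\mid \tilde Y_j = t)$ is small; that factor can be arbitrarily close to one (take $\tilde y_j = -100$ and every other $\tilde y_k = +100$). The inequality $\Pr(\tilde Y\le\tilde y)\le\Pr(\tilde Y_j\le\tilde y_j)$ bounds the orthant probability, not the increment, and does not control the shell contribution from index $j$. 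Conversely, in the regime $|\tilde y_j|\lesssim\sqrt{\log p}$ there may be $\Theta(p)$ indices each with $\varphi(\tilde y_j)=\Theta(1)$, so discarding the conditional factor there gives $\Theta(pa)$; the fact that the product of conditional CDFs forces the sum of contributions across all moderate indices down to $O(\sqrt{\log p})$ is the entire content of the theorem (the extremal configuration $\tilde y_j\equiv t^*$ with $t^*\approx\sqrt{2\log p}$ balancing $\varphi(t^*)\approx t^*/p$ against $\Phi(t^*)^{p-1}\approx e^{-1}$ shows why $\sqrt{\log p}$ is the right rate). Labelling this step ``convex geometry'' names the gap without filling it.

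The Ehrhard alternative also leaves the essential step unargued. The concavity of $y\mapsto\Phi^{-1}(\Pr(Y\le y))$ is genuinely a consequence of the Ehrhard--Borell inequality, since lower orthants interpolate linearly under Minkowski combination, and that observation is a good starting point. But passing from concavity to the bound requires controlling $\sum_j\partial_j\Phi^{-1}(F(y)) = \sum_j\partial_j F(y)/\varphi\bigl(\Phi^{-1}(F(y))\bigr)$ uniformly in $y$; the denominator degenerates as $F(y)\to 0$ or $1$, and the tail estimate $\Phi^{-1}(1-1/p)\asymp\sqrt{\log p}$ alone does not show the numerator vanishes fast enough to compensate. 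Both variants are scaffolds that correctly locate the need for a non-elementary Gaussian input, but neither supplies it, and the truncation variant misstates which mechanism (density versus conditional factor) operates in which regime.
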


\subsection{Gaussian Comparasions and Approximations}

We also use the following gaussian approximation results from \cites{belloni2018highdimensional,CCK-2018-HDCLT}. Let \(X_1,\dots,X_n \in \SR^p\) be independent, mean zero, random vectors and let \(Y_1,\dots,Y_n \in \SR^p\) be independent random vectors such that \(Y_i \sim N(0, \E[X_iX_i'])\). Suppose that the researcher does not directly observe \(X_1,\dots,X_n\) but instead observes noisy estimates \(\widehat X_1,\dots,\widehat X_n \in \SR^p\).

Define the sums
\begin{align*}
    S_n^X &= \frac{1}{\sqrt{n}}\sum_{i=1}^n \widehat X_i & 
    S_n^Y &= \frac{1}{\sqrt{n}}\sum_{i=1}^n Y_i
\end{align*}
Let \(\calA^{\text{re}}\) be the class of all hyperrectangles in \(\SR^p\); that is, \(\calA^{\text{re}}\) consists of all sets \(A\) of the form 
\begin{equation*}
    \begin{split}
        A = \{w\in\SR^p: a_j \leq w_j \leq b_j\text{ for all }j = 1,\dots,p\}
    \end{split}
\end{equation*}
for some \(-\infty \leq a_j \leq b_j \leq \infty\), \(j = 1,\dots, p\). Define
\[
    \rho_n(\calA^{\text{re}}) \coloneqq \sup_{A \in \calA^{\text{re}}}\big|\Pr(S_n^X \in A) - \Pr(S_n^Y \in A)\big|
\]
Bounding \(\rho_n(\calA^{\text{re}})\) relies on the following moment conditions:
\begin{assumption}[]
    \label{assm:rectangle-gaussian-approximation}
    Suppose there are constants \(B_n \geq 1\), \(b > 0,q > 0\) such that
    \begin{enumerate}[(i)]
        \item \(n^{-1}\sum_{i=1}^n \E[X_{ij}^2] \geq b\) for all \(j = 1,\dots,p\)
        \item \(n^{-1}\sum_{i=1}^n \E[|X_{ij}|^{2+k}] \leq B_n^k\) for all \(j = 1,\dots,p\) and \(k = 1,2\).
        \item \(\E[(\max_{1 \leq j \leq p}|X_{ij}|/B_n)^4] \leq 1\) for all \(i = 1,\dots, n\) and \(\left(\frac{B_n^4\ln^7(pn)}{n}\right)^{1/6} \leq \delta_n\).
    \end{enumerate}
\end{assumption}
as well as the following bounds on the estimation error
\begin{assumption}[]
    \label{assm:rectangle-gaussian-approximation-estimation}
    The estimates \(\hat X_1,\dots,\hat X_n\) satisfy
    \[
        \Pr\left(\max_{1 \leq j \leq p} \E_n[(\widehat X_{ij} - X_{ij})^2] > \delta_n^2/\log^2(pn)\right)\leq \beta_n
    \]
\end{assumption}

\begin{theorem}[\citet{belloni2018highdimensional}, {Theorem 2.1}]
    \label{prop:hyperrectangle}
    Suppose that \Cref{assm:rectangle-gaussian-approximation,assm:rectangle-gaussian-approximation-estimation} hold. Then there is a constant \(C\) which depends only on \(b\) such that
    \[
        \rho_n(\calA^{\text{re}}) \leq C\{\delta_n + \beta_n\}
    \]
\end{theorem}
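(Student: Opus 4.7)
The plan is to combine the high-dimensional central limit theorem of Chernozhukov, Chetverikov, and Kato (2018) with Nazarov's anti-concentration inequality and a coupling bound that absorbs the estimation error. For any hyperrectangle $A \in \calA^{\mathrm{re}}$, I would use the triangle decomposition
\[
    |\Pr(S_n^{\widehat X} \in A) - \Pr(S_n^Y \in A)| \leq |\Pr(S_n^{\widehat X} \in A) - \Pr(S_n^X \in A)| + |\Pr(S_n^X \in A) - \Pr(S_n^Y \in A)|
\]
and bound the two pieces separately.

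For the second (Gaussian-approximation) piece I would invoke the CCK high-dimensional CLT over hyperrectangles applied to the i.i.d.\ infeasible summands $X_i$. Under \Cref{assm:rectangle-gaussian-approximation}(i)--(iii) this produces a bound of order $(B_n^4 \log^7(pn)/n)^{1/6}$, which is dominated by $C\delta_n$ by \Cref{assm:rectangle-gaussian-approximation}(iii). The argument is the standard Lindeberg–Stein interpolation with smooth-max surrogates, and is directly applicable since the $X_i$ are independent and satisfy the moment hypotheses.

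For the first (estimation-error) piece I would use a sandwich argument. Writing $A^\varepsilon$ for the $\varepsilon$-enlargement of $A$ in the $\ell^\infty$ sense, one has
\[
    \Pr(S_n^{\widehat X} \in A) \leq \Pr(S_n^X \in A^\varepsilon) + \Pr(\|S_n^{\widehat X} - S_n^X\|_\infty > \varepsilon).
\]
Combining the Gaussian approximation applied to $A^\varepsilon$ with Nazarov's inequality (\Cref{lemma:hyper-rectangle-anticoncentration}, using the variance lower bound $\E[Y_{ij}^2] \geq b$ from \Cref{assm:rectangle-gaussian-approximation}(i)) gives
\[
    \Pr(S_n^X \in A^\varepsilon) \leq \Pr(S_n^Y \in A) + C\delta_n + C\varepsilon\sqrt{\log p}.
\]
A symmetric lower sandwich gives the matching inequality. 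Choosing $\varepsilon = \delta_n/\sqrt{\log p}$ balances the anti-concentration term to order $C\delta_n$, and the result will follow once one shows $\Pr(\|S_n^{\widehat X} - S_n^X\|_\infty > \delta_n/\sqrt{\log p}) \leq C\beta_n$.

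The main obstacle is exactly this last bound. \Cref{assm:rectangle-gaussian-approximation-estimation} controls only the empirical mean-square deviation $\max_j \E_n[(\widehat X_{ij} - X_{ij})^2] \leq \delta_n^2/\log^2(pn)$, and a direct Cauchy--Schwarz bound on $\|S_n^{\widehat X} - S_n^X\|_\infty$ carries an unwanted $\sqrt{n}$ factor and is not usable. To get around this I would couple the feasible and infeasible sums through an intermediate conditional Gaussian $\widetilde Y$ whose covariance matches the empirical second-moment matrix of the increments $\widehat X_i - X_i$, apply the CCK CLT conditionally on the data to absorb the max-norm discrepancy into a purely Gaussian-to-Gaussian comparison, and then invoke a covariance-perturbation inequality to reduce the latter to the empirical second-moment bound that the hypothesis does control. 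The delicate bookkeeping here is the trade-off between the $\log^2(pn)$-scaled tolerance in \Cref{assm:rectangle-gaussian-approximation-estimation} and the $\sqrt{\log p}$ factor produced by Nazarov, which is essentially the content of the proof of Theorem~2.1 in \citet{belloni2018highdimensional}.
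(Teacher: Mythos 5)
This statement is not proved in the paper: it is presented in the appendix section ``Assorted Results from Literature'' as a direct citation of \citet{belloni2018highdimensional}, Theorem~2.1. There is therefore no ``paper's own proof'' to compare against, and your sketch is an attempted reconstruction of the cited result. Your high-level architecture --- triangle decomposition into a feasible-vs-infeasible piece and a CLT piece, the CCK high-dimensional CLT over hyperrectangles for the latter, and a sandwich argument via Nazarov's anti-concentration (\Cref{lemma:hyper-rectangle-anticoncentration}) with an $\varepsilon$-enlargement --- is the correct framework, and you are right that the balancing choice $\varepsilon \asymp \delta_n/\sqrt{\log p}$ is what makes Nazarov contribute only $O(\delta_n)$. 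You also correctly identify the real obstacle: \Cref{assm:rectangle-gaussian-approximation-estimation} controls $\max_j \E_n[(\widehat X_{ij}-X_{ij})^2]$, and Cauchy--Schwarz applied to the $j$-th coordinate of $S_n^{\widehat X}-S_n^X$ produces $\bigl|n^{-1/2}\sum_i(\widehat X_{ij}-X_{ij})\bigr| \leq \sqrt{n\,\E_n[(\widehat X_{ij}-X_{ij})^2]}$, whose upper bound $\sqrt{n}\,\delta_n/\log(pn)$ blows up.

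The proposed repair, however, does not work, and this is the genuine gap. You suggest introducing ``an intermediate conditional Gaussian $\widetilde Y$ whose covariance matches the empirical second-moment matrix of the increments $\widehat X_i - X_i$'' and applying ``the CCK CLT conditionally on the data.'' But conditional on the data, the increments $\widehat X_i - X_i$ are \emph{deterministic} --- they are fixed numbers, not independent random vectors --- so there is no central limit theorem to invoke, and no reason for the deterministic quantity $n^{-1/2}\sum_i(\widehat X_i - X_i)$ to be close in any sense to a Gaussian with that empirical covariance. What you are describing is the logic behind the \emph{multiplier bootstrap} comparison (\Cref{thm:hyperrectangle-bootstrap}, i.e., Theorem~2.2 of the same reference), where $n^{-1/2}\sum_i e_i \widehat X_i$ \emph{is} Gaussian conditional on the data and the estimation error perturbs only the conditional covariance, so a Gaussian-comparison inequality plus the empirical mean-square bound suffices. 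That argument does not transfer to Theorem~2.1. For the non-bootstrap statement, the only clean way to close the estimation-error piece is a direct $\ell_\infty$ coupling: one needs $\Pr(\|S_n^{\widehat X}-S_n^X\|_\infty > \delta_n/\sqrt{\log p}) \leq \beta_n$, which follows from Cauchy--Schwarz only under the \emph{unaveraged} condition $\max_j \sum_i(\widehat X_{ij}-X_{ij})^2 \lesssim \delta_n^2/\log(p)$, i.e., without the $n^{-1}$. If one takes \Cref{assm:rectangle-gaussian-approximation-estimation} exactly as written, the estimation-error piece cannot be closed by the route you sketch, and I do not see a route that would close it; you should either verify the precise scaling in \citet{belloni2018highdimensional} against what \Cref{assm:rectangle-gaussian-approximation-estimation} states, or replace that assumption by the $\ell_\infty$ coupling that the sandwich argument actually consumes.
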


Let \(e_1,\dots,e_n \overset{\text{iid}}{\sim}N(0,1)\) be generated independently of the data. A gaussian bootstrap draw is defined
\[
    S_n^{X,\star} \coloneqq  \frac{1}{\sqrt{n}}\sum_{i=1}^n e_i \widehat X_i
\]
\begin{theorem}[\citet{belloni2018highdimensional}, {Theorem 2.2}]
    \label{thm:hyperrectangle-bootstrap}
    Suppose that \Cref{assm:rectangle-gaussian-approximation,assm:rectangle-gaussian-approximation-estimation} hold. Then there is a constant \(C\) which depends only on \(b\) such that
    \[
        \sup_{A \in \calA^{\text{re}}} \big|\Pr_e(S_n^{X,\star} \in A) - \Pr(S_n^Y \in A) \big| \leq C\delta_n
    \]
    with probability at least \(1 - \beta_n - (\log n)^{-2}\) where \(\Pr_e(\cdot)\) denotes the probability measure only taken with respect to the variables \(e_1,\dots,e_n\) conditional on the data used to estimate \(\widehat X\).
\end{theorem}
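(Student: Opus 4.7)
The plan is to reduce the bootstrap claim to a purely Gaussian-versus-Gaussian comparison over hyperrectangles and then control the resulting covariance discrepancy. Conditional on the data $\{\widehat X_i\}_{i=1}^n$, the bootstrap sum $S_n^{X,\star} = n^{-1/2}\sum_{i=1}^n e_i \widehat X_i$ is centered Gaussian with covariance $\widehat\Sigma := n^{-1}\sum_{i=1}^n \widehat X_i \widehat X_i'$, while $S_n^Y \sim N(0, \Sigma)$ with $\Sigma := n^{-1}\sum_{i=1}^n \E[X_i X_i']$. Applying the Chernozhukov--Chetverikov--Kato Gaussian hyperrectangle comparison lemma (a standard consequence of the analysis underlying \citet{CCK-2018-HDCLT}), and using Assumption~\ref{assm:rectangle-gaussian-approximation}(i) to bound the diagonal of $\Sigma$ away from zero by $b$, yields
\[
    \sup_{A \in \calA^{\text{re}}}\big|\Pr_e(S_n^{X,\star} \in A) - \Pr(S_n^Y \in A)\big| \;\leq\; C\, \Delta^{1/3}\log^{2/3}(p),
\]
where $\Delta := \|\widehat\Sigma - \Sigma\|_\infty$ is the maximum entrywise deviation and $C$ depends only on $b$. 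It therefore suffices to show that $\Delta \lesssim \delta_n^3/\log^2(pn)$ on an event of probability at least $1 - \beta_n - (\log n)^{-2}$.

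Next I split $\widehat\Sigma - \Sigma = (\widehat\Sigma - \tilde\Sigma) + (\tilde\Sigma - \Sigma)$ where $\tilde\Sigma := n^{-1}\sum_{i=1}^n X_i X_i'$ is the infeasible sample covariance. The stochastic piece has entries that are averages of $n$ independent mean-zero terms $X_{ij}X_{ik} - \E[X_{ij}X_{ik}]$; Assumption~\ref{assm:rectangle-gaussian-approximation}(ii) gives the fourth-moment bound $n^{-1}\sum_{i=1}^n \E[|X_{ij}X_{ik}|^2] \leq B_n^2$, while Assumption~\ref{assm:rectangle-gaussian-approximation}(iii) supplies an envelope $\max_{i,j,k}|X_{ij}X_{ik}| \leq B_n^2 M_i$ with $\E[M_i^4] \leq 1$. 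A truncation plus Bernstein (or Nemirovski) maximal inequality then delivers $\|\tilde\Sigma - \Sigma\|_\infty = O_p\!\big(B_n^2\sqrt{\log p/n} + B_n^2\log p/n\big)$, which the rate condition $(B_n^4 \log^7(pn)/n)^{1/6} \leq \delta_n$ forces to be at most $\delta_n^3/(2\log^2(pn))$ on an event of probability $\geq 1 - \tfrac{1}{2}(\log n)^{-2}$.

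For the estimation piece, expand entrywise via
\[
    \widehat X_{ij}\widehat X_{ik} - X_{ij}X_{ik} = (\widehat X_{ij} - X_{ij})X_{ik} + X_{ij}(\widehat X_{ik} - X_{ik}) + (\widehat X_{ij} - X_{ij})(\widehat X_{ik} - X_{ik})
\]
and apply Cauchy--Schwarz to each $(j,k)$ entry so that
\[
    \big|[\widehat\Sigma - \tilde\Sigma]_{jk}\big| \leq 2\sqrt{\E_n[(\widehat X_{ij}-X_{ij})^2]}\sqrt{\E_n[X_{ik}^2]} + \sqrt{\E_n[(\widehat X_{ij}-X_{ij})^2]\,\E_n[(\widehat X_{ik}-X_{ik})^2]}.
\]
On the event of Assumption~\ref{assm:rectangle-gaussian-approximation-estimation}, which has probability at least $1-\beta_n$, $\max_j \E_n[(\widehat X_{ij} - X_{ij})^2] \leq \delta_n^2/\log^2(pn)$, and a Markov bound using Assumption~\ref{assm:rectangle-gaussian-approximation}(ii) controls $\max_k \E_n[X_{ik}^2]$ by a constant with probability at least $1 - \tfrac{1}{2}(\log n)^{-2}$. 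Taking the maximum over $(j,k)$ therefore yields $\|\widehat\Sigma - \tilde\Sigma\|_\infty \lesssim \delta_n/\log(pn)$, which is of strictly smaller order than the target $\delta_n^3/\log^2(pn)$ whenever $\delta_n \log(pn) \to 0$ (and in any case can be absorbed once one takes the cube root).

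Combining the two pieces through the triangle inequality and substituting into the Gaussian comparison bound gives the claimed $C\delta_n$ rate, with a total failure probability of at most $\beta_n + (\log n)^{-2}$. The main obstacle is the careful book-keeping of logarithmic factors: the cube-root loss in the Gaussian hyperrectangle comparison lemma forces one to beat a very sharp $\delta_n^3/\log^2(pn)$ bound on $\Delta$, and the exponent $1/6$ together with the $\log^7(pn)$ factor in Assumption~\ref{assm:rectangle-gaussian-approximation}(iii) are precisely what is needed for the Nemirovski bound on $\|\tilde\Sigma - \Sigma\|_\infty$ to close. Once that cube-root matching is in place, the handling of the estimation part is routine because Assumption~\ref{assm:rectangle-gaussian-approximation-estimation} already imposes an empirical second-moment error of the small order $\delta_n^2/\log^2(pn)$.
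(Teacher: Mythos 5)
The paper does not supply a proof of \Cref{thm:hyperrectangle-bootstrap}; it is imported verbatim as Theorem~2.2 of \citet{belloni2018highdimensional}, so your reconstruction can only be judged against that source rather than against any in-paper argument. Your decomposition and your handling of the purely stochastic piece $\tilde\Sigma - \Sigma$ are the standard route, but your treatment of the estimation piece $\widehat\Sigma - \tilde\Sigma$ contains a genuine gap.

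After Cauchy--Schwarz you obtain $\|\widehat\Sigma - \tilde\Sigma\|_\infty \lesssim \delta_n/\log(pn)$, and then claim this is of smaller order than the target $\delta_n^3/\log^2(pn)$ whenever $\delta_n\log(pn)\to 0$, or that the loss ``can be absorbed once one takes the cube root.'' Both claims have the inequality backwards: the ratio
\[
\frac{\delta_n/\log(pn)}{\delta_n^3/\log^2(pn)} = \frac{\log(pn)}{\delta_n^2}
\]
diverges whenever $\delta_n \to 0$, so the estimation piece is strictly \emph{larger} than the target, and feeding $\Delta_{\mathrm{est}} \sim \delta_n/\log(pn)$ into the $\Delta^{1/3}\log^{2/3}(p)$ comparison lemma gives $\delta_n^{1/3}\log^{1/3}(pn)$, which is not $O(\delta_n)$. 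The $\delta_n^2/\log^2(pn)$ scaling in \Cref{assm:rectangle-gaussian-approximation-estimation} is calibrated to a $\Delta^{1/2}$-type rate; no entrywise-covariance comparison lemma that pays a cube-root loss will recover it, and this is exactly where your bookkeeping fails to close.

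What does close --- and what \citet{belloni2018highdimensional} exploit --- is the common multipliers. Let $S_n^{\tilde X,\star} := n^{-1/2}\sum_{i=1}^n e_i X_i$ be the infeasible bootstrap sum. Conditionally on the data, $S_n^{X,\star} - S_n^{\tilde X,\star} = n^{-1/2}\sum_{i} e_i(\widehat X_i - X_i)$ is a centered Gaussian vector whose $j$-th coordinate has variance $\E_n[(\widehat X_{ij}-X_{ij})^2]$, which is at most $\delta_n^2/\log^2(pn)$ on the event of \Cref{assm:rectangle-gaussian-approximation-estimation}. A Gaussian tail bound with a union over $p$ coordinates then gives $\|S_n^{X,\star} - S_n^{\tilde X,\star}\|_\infty \leq C\delta_n/\sqrt{\log(pn)}$ with $\Pr_e$-probability $1 - O((pn)^{-1})$, and Nazarov's anticoncentration (\Cref{lemma:hyper-rectangle-anticoncentration}) converts this $\ell^\infty$ coupling bound into $\sup_{A \in \calA^{\text{re}}}|\Pr_e(S_n^{X,\star}\in A) - \Pr_e(S_n^{\tilde X,\star}\in A)| \lesssim \delta_n$. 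The residual comparison of $S_n^{\tilde X,\star}$ with $S_n^Y$ is a pure covariance comparison, for which your maximal-inequality bound on $\|\tilde\Sigma - \Sigma\|_\infty$ together with the $\Delta^{1/3}\log^{2/3}(p)$ lemma is indeed the right tool --- though to achieve the stated $(\log n)^{-2}$ exception probability you should replace the bare Markov step with a moment or Fuk--Nagaev bound, since a simple first-moment argument picks up a $(\log n)^{2}$ penalty that spoils the rate.
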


%

\end{document}